\definecolor{MyBlue}{cmyk}{1,0.13,0,0.63}
\definecolor{MyGreen}{cmyk}{0.91,0,0.88,0.52}
\newcommand{\mylinkcolor}{MyBlue}
\newcommand{\mycitecolor}{MyGreen}
\newcommand{\myurlcolor}{black}
\newtheorem{thm}{Theorem}[section]
\newtheorem*{thm*}{Theorem}
\newtheorem{cor}[thm]{Corollary}
\newtheorem{lemma}[thm]{Lemma}
\newtheorem{prop}[thm]{Proposition}
\theoremstyle{definition}
\newtheorem{defn}[thm]{Definition}
\theoremstyle{remark}
\newtheorem{remark}[thm]{Remark}
\newtheorem{example}[thm]{Example}
\newtheorem{remarks}[thm]{Remarks}
\newcommand{\End}{\ensuremath{\mathrm{End}}}
\newcommand{\wt}{\ensuremath{\widetilde}}
\newcommand{\R}{\ensuremath{\mathbb{R}}}
\newcommand{\N}{\ensuremath{\mathbb{N}}}
\newcommand{\Z}{\ensuremath{\mathbb{Z}}}
\newcommand{\C}{\ensuremath{\mathbb{C}}}
\newcommand{\T}{\ensuremath{\mathbb{T}}}
\def\calT{\mathcal{T}}
\def\calC{\mathcal{C}}
\def\calL{\mathcal{L}}
\def\calK{\mathcal{K}}
\def\calB{\mathcal{B}}
\def\calH{\mathcal{H}}
\def\calA{\mathcal{A}}
\def\calN{\mathcal{N}}
\def\calP{\mathcal{P}}
\def\calM{\mathcal{M}}
\def\calE{\mathcal{E}}
\def\calQ{\mathcal{Q}}
\def\calW{\mathcal{W}}
\def\bP{\mathbf{P}}
\newcommand{\ol}{\overline}
\theoremstyle{definition}
\DeclareMathOperator{\Dom}{Dom}
\DeclareMathOperator{\Index}{Index}
\DeclareMathOperator{\Ker}{Ker}
\DeclareMathOperator{\Tr}{Tr}
\DeclareMathOperator*{\res}{res}
\newcommand{\A}{\mathcal{A}}
\newcommand{\rst}[1]{\ensuremath{{\mathbin\upharpoonright}%
\raise-.5ex\hbox{$#1$}}}
\newcommand{\Rmnum}[1]{\expandafter\@sl217--242owromancap\romannumeral #1@}
\author{C. Bourne}
\address{Department Mathematik, Friedrich-Alexander-Universit\"{a}t Erlangen-N\"{u}rnberg, 
Cauerstr. 11, 91058 Erlangen, Germany; \emph{and}
WPI-Advanced Institute for Materials Research (WPI-AIMR), Tohoku University,
2-1-1 Katahira, Aoba-ku, Sendai, 980-8577, Japan}
\email{chris.bourne@tohoku.ac.jp}
\author{A. Rennie}
\address{School of Mathematics and Applied Statistics, University of Wollongong, Wollongong, NSW 2522, Australia}
\email{renniea@uow.edu.au}
\date{\today}
\begin{document}

\begin{abstract}
In order to study continuous models of 
disordered topological phases,
we construct an unbounded 
Kasparov module and a semifinite 
spectral triple for the crossed product 
of a separable $C^*$-algebra 
by a twisted $\R^d$-action. 
The spectral triple allows us to employ
the non-unital local index formula to obtain
the higher Chern numbers in the 
continuous setting with complex observable algebra. 
In the case of the crossed product of a compact disorder space, 
the pairing can be extended to a larger algebra closely related to dynamical localisation,  
as in the 
tight-binding approximation. 
The Kasparov module allows us to exploit the Wiener--Hopf 
extension and the Kasparov product to obtain a bulk-boundary
correspondence for continuous models of disordered topological phases.
\end{abstract}

\title{Chern numbers, localisation and the bulk-edge correspondence for continuous
models of topological phases}
\maketitle

Keywords: Crossed product,  Kasparov theory, 
topological states of matter
 
Subject classification: Primary: 81R60, secondary: 19K35, 19K56

\tableofcontents


\section{Introduction}
This paper examines the noncommutative index theory of twisted crossed 
products of a separable $C^*$-algebra $B$ by 
$\R^d$. Our motivation for studying such algebras 
comes from its application to continuous models of 
disordered quantum systems, where the algebra of observables 
can be described by the
twisted crossed product $C(\Omega)\rtimes_\theta\R^d$
\cite{BelGapLabel, Bellissard94}.  Numerous results in condensed
matter physics which can be proved in the tight-binding approximation have
not been addressed for continuum models. Here we study 
higher Chern numbers, the bulk-edge
correspondence and stability of phases in the strongly disordered/dynamically localised 
regime for continuum models. Because of the anti-linear symmetries 
that appear in topological insulator systems, we will consider  
both complex and real $C^*$-algebras and crossed products. 

The key to our approach is the construction of a Kasparov module and 
a semifinite spectral triple modelling the geometry of
the noncommutative disordered Brillouin zone. The spectral triple satisfies the 
strongest summability conditions
of \cite{CGRS2}, allowing us to employ the local index formula for complex 
algebras. 

The local index formula yields the higher Chern numbers directly, 
in complete analogy with the formula for the higher Chern
numbers in the tight-binding approximation~\cite{PLB13, PSB14,PSBbook,PSBKK}.

In Section \ref{sec:localisation_general}
we extend the formulae for the higher Chern numbers to a larger 
Sobolev algebra that is constructed using the non-commutative $L^p$-spaces and 
closely related to regions of dynamical localisation.

Kellendonk and Richard~\cite{KR06} use the Wiener--Hopf extension
to model the relationship between bulk and edge observables,
\begin{equation} 
\label{eq:bulkedge_SES_intro}
  0 \to \calK \otimes (B\rtimes_\theta\R^{d-1}) 
  \to \big(C_0(\R\cup \{+\infty\})\otimes (B\rtimes_\theta\R^{d-1}) \big)\rtimes \R 
     \to (B\rtimes_\theta\R^{d-1})\rtimes\R\to 0.
\end{equation}
We prove, in Section \ref{sec:bulkedge}, that our Kasparov module for a twisted $\R^d$-action 
factorises (up to a sign) into the product of a Kasparov module for a twisted $\R^{d-1}$-action 
with the extension class from Equation \eqref{eq:bulkedge_SES_intro} linking the 
bulk and edge algebras.
This factorisation implies a bulk-edge correspondence for the semifinite index pairing as well 
as more general pairings of our Kasparov module with 
real or complex $K$-theory classes.

We return to our initial motivation in Section \ref{sec:top_phases_application} 
and include a
case-study of how our theory applies to disordered quantum systems and their 
topological properties. 
The example of disordered magnetic Schr\"odinger operators on 
$L^2(\R^d)$ also allows us to consider the connection of our Sobolev algebra to the 
localised states studied in~\cite{AENSS,GK13, GT13}. 
We compare our results and those in~\cite{AENSS}, 
where we show that if the Fermi energy lies in a region of dynamical localisation 
and the disorder space has an ergodic 
probability measure, then 
our $\Z$ or $\Z_2$-valued bulk indices are still well-defined. Furthermore, in the complex case, the 
Chern number formulas are constant throughout the mobility gap. 
We are also able to extend our 
results on the bulk-edge correspondence for strong complex topological phases 
and show that non-trivial 
bulk invariants imply delocalised edge states on the boundary, analogous 
to the discrete case in~\cite[Section 6.6]{PSBbook}.

Finally, Appendix \ref{Sec:Non_unital_prelims} gives a brief summary of non-unital 
index theory and the tools from Kasparov theory we require.


\section{Kasparov modules for twisted crossed products by $\R^d$} 
\label{sec:Crossed_prod_Kas_mod}
In this section we construct a Kasparov module for twisted crossed products
$B\rtimes_\theta\R^d$ where $B$ is a real or complex separable $C^*$-algebra; 
see Appendix \ref{Sec:Non_unital_prelims} for the definition of an 
unbounded Kasparov module.
This Kasparov module is closely related to the Connes--Thom class in Kasparov theory
when the crossed product is untwisted. The inverse class was 
studied in \cite{AnderssonThesis,Andersson14} for a different class of twisted crossed products.

\subsection{Preliminaries on twisted dynamical systems}
\label{Sec:twisted_systems_prelim}

Let $B$ be a  $C^*$-algebra with $(B,\R^d,\alpha,\theta)$ 
a twisted dynamical system~\cite{PR89}. We consider the $\ast$-algebra 
$C_c(\R^d, B)$ with operations,
\begin{align*}
 &(f_1\ast f_2)(x) = \int_{\R^d}\! \alpha_{-x}(\theta(y,x-y)) \alpha_{y-x}(f_1(y))f_2(x-y) 
   \,\mathrm{d}y,   &&f^*(x) = \alpha_{-x}(f(-x)^*).
\end{align*}
The unitary-valued function $\theta:\R^d\times\R^d \to \mathcal{U}(\calM(B))$ 
encodes the twist and takes values in the unitaries of the multiplier 
algebra of $B$. The twist $\theta$ is required to satisfy the cocycle identities
\begin{equation}
  \theta(x,y)\theta(x+y,z) =  \alpha_x(\theta(y,z))\theta(x,y+z),  
  \qquad \theta(x,0)=\theta(0,x)=1\quad \mbox{for all }x,\,y,\,z\in\R^d,
\label{eq:cocycle}
\end{equation}
and the following relationship with the action:
\begin{equation}
   \alpha_x \circ \alpha_y(b) = \theta(x,y)\, \alpha_{x+y}(b)\, \theta(x,y)^*, 
     \quad x,y\in\R^d,\,\, b\in B.
\label{eq:twist}
\end{equation}
We denote the crossed product completion $B \rtimes_{\theta}\R^d$ by $A$.

We do not consider crossed products with arbitrary twists $\theta$, 
but
restrict to the case that $\theta(x,-x)=1$ for all $x\in\R^d$. 
This simplifies many of 
our arguments and still encompasses the examples of interest (e.g. a 
disordered quantum system with continuously changing magnetic field). 

If for every $x,y \in \R^d$, $\theta(x,y)$ is constant in $B$ (e.g. $\theta$ comes 
from a magnetic field with constant strength), then the twist reduces to a 
map $\theta: \R^d\times \R^d \to \mathcal{U}(\mathbb{K})$ for $\mathbb{K}= \R$ or $\C$. 
Thus for complex algebras $\theta$ is a group cocycle $\R^d\times \R^d \to \T$ 
and therefore is related to the Moore cohomology group $H^2(\R^d,\T)$, which 
is constructed from Borel multipliers of $\R^d$. In the real case, we are interested 
in $H^2(\R^d, \{\pm 1\})$.
For complex group $2$-cocycles we have the following.
\begin{prop}[\cite{SimonMultipliers}, Lemma 8.3]
If $\theta:\R^d\times \R^d \to \T$  is a Borel multiplier and 
its class $[\theta]\in H^2(\R^d,\T)$  is non-torsion, 
then $\theta$ is cohomologous to $\wt{\theta}$ with $\wt{\theta}(x,-x) = 1$ for all $x\in\R^d$.
\end{prop}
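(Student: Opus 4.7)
The plan is to make the cohomological equivalence explicit: I would construct a Borel function $\lambda:\R^d\to\T$, normalised by $\lambda(0)=1$, such that the modified cocycle
\[
\tilde\theta(x,y) \;:=\; \lambda(x)\lambda(y)\lambda(x+y)^{-1}\,\theta(x,y)
\]
satisfies $\tilde\theta(x,-x)=1$. Evaluating this formula at $y=-x$, the desired condition reduces to the single requirement
\[
\lambda(x)\lambda(-x) \;=\; \theta(x,-x)^{-1} \;=:\; \mu(x), \qquad x\in\R^d,
\]
so the whole problem boils down to producing a Borel $\lambda$ that realises this one relation.

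Before building $\lambda$, I would verify that the relation is self-consistent under $x\leftrightarrow -x$. Applying the $2$-cocycle identity $\theta(x,y)\theta(x+y,z)=\theta(y,z)\theta(x,y+z)$ with $(y,z)=(-x,x)$ and using the normalisations $\theta(x,0)=\theta(0,x)=1$ gives $\theta(x,-x)=\theta(-x,x)$; hence $\mu$ is even and $\mu(0)=1$, so both sides of the preceding displayed equation behave identically under the antipodal involution.

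To produce $\lambda$, I would pick any Borel fundamental domain $P$ for the involution $x\mapsto -x$ on $\R^d\setminus\{0\}$---for instance the set of lexicographically positive vectors---so that $\R^d = P\sqcup\{0\}\sqcup(-P)$. I would then declare $\lambda\equiv 1$ on $P\cup\{0\}$ and $\lambda(x):=\mu(-x)$ on $-P$. Such $\lambda$ is manifestly Borel, and the required identity $\lambda(x)\lambda(-x)=\mu(x)$ is checked directly on each of the three strata using the evenness of $\mu$.

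The only non-algebraic ingredient is the existence of a Borel fundamental domain for the antipodal involution on $\R^d$, but this is elementary (the lex-positive half-space works), so the substantive part of the argument is really the preparatory consistency check that $\mu$ is even. I would also remark that this construction does not appear to use the non-torsion hypothesis on $[\theta]$; that hypothesis is inherited from the classification of $H^2(\R^d,\T)$ in Simon's treatment, but it is not logically needed to achieve the normalisation $\tilde\theta(x,-x)=1$.
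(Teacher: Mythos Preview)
Your argument is correct but follows a different route from the paper. Both proofs begin identically by verifying $\theta(x,-x)=\theta(-x,x)$ from the cocycle identity, but then diverge in how they produce $\lambda$. The paper makes a \emph{symmetric} choice: it sets $\lambda(x)=[\theta(x,-x)]^{1/2}$ using the branch of the square root with argument in $[0,\pi)$, so that $\lambda$ is automatically even and $\lambda(x)\lambda(-x)=\theta(x,-x)$. You instead make an \emph{asymmetric} choice via a Borel fundamental domain for the antipodal involution, putting all of the correction on one half-space. Both constructions give a Borel $\lambda$ solving the same functional equation, so both are valid.

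Your approach has the minor advantage of making transparent that no branch-cut issues arise and, as you correctly observe, that the non-torsion hypothesis plays no role in the normalisation argument itself. The paper invokes non-torsion when introducing the square root, but in fact the branch-chosen square root is a perfectly good Borel function regardless; the hypothesis is inherited from the context of Simon's classification rather than being logically required here. The paper's square-root construction is slightly more canonical (no arbitrary choice of fundamental domain) but otherwise the two arguments are of comparable length and difficulty.
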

\begin{proof}
By the cocycle property of $\theta$, we first note that
$$
  \theta(x,-x)\theta(0,x) = \theta(x,0)\theta(-x,x)
$$
so $\theta(x,-x) = \theta(-x,x)$. Next, provided $[\theta]$ is non-torsion, we can define
$\lambda(x) = [\theta(x,-x)]^{1/2}$ where we take the square root with argument in $[0,\pi)$.
Then we have that
$$
  \partial \lambda(x,-x) = \lambda(0) \lambda(x)^{-1} \lambda(-x)^{-1} = \theta(x,-x)^{-1}.
$$
Lastly, we define $\wt{\theta} = \theta \partial\lambda$ which by construction is such that 
$\wt{\theta}(x,-x) = 1$.
\end{proof}

For the case that $B=C(\Omega)$ for some  compact and second 
countable space $\Omega$ with twisted action, the assumption $\theta(x,-x)=1$ 
means that there is an explicit isomorphism
$$
   C(\Omega) \rtimes_\theta \R^d \cong \big( C(\Omega)\rtimes_\theta \R^{d-1} \big) \rtimes \R
$$
where the crossed product by $\R$ is untwisted,
see~\cite{KR06}. This decomposition allows us to relate the twisted crossed 
product $C(\Omega) \rtimes_\theta \R^d$ to the Wiener--Hopf extension
$$
  0 \to (C(\Omega)\rtimes_\theta \R^{d-1})\otimes\calK[L^2(\R)] \to C_0(\R\cup\{+\infty\}, C(\Omega)\rtimes_\theta \R^{d-1}) \rtimes \R
    \to C(\Omega) \rtimes_\theta \R^d \to 0.
$$ 
Such an extension plays a crucial role in the bulk-edge correspondence 
for disordered topological phases with a boundary  in
Section \ref{sec:bulkedge}.

For more general twisted actions, we first use~\cite[Theorem 4.1]{PR89} to 
decompose
$$
   B\rtimes_\theta \R^d \cong \big(B\rtimes_{\theta_{e}} \R^{d-1} \big) \rtimes_\sigma \R
$$
with $\theta_{e}$ the restriction of $\theta$ to $\R^{d-1}\times\{0\}$.
Then, letting $C = B\rtimes_{\theta_e} \R^{d-1}$ and using the Packer--Raeburn stabilisation trick~\cite[Section 3]{PR89}, 
$$
  K_\ast ( C\rtimes_\sigma \R) \cong K_\ast \big( (C\otimes \calK)\rtimes \R \big) \cong K_{\ast - 1}( C\otimes \calK) 
    \cong K_{\ast - 1}(C) \cong K_\ast( C\rtimes \R ).
$$
Therefore the Packer--Raeburn stabilisation isomorphism gives us an invertible element in $KK(C\rtimes_\sigma \R, C\rtimes \R)$ 
which allows us to relate the twisted crossed product $C\rtimes_\sigma \R$ to the Wiener--Hopf extension
$$
  0 \to C \otimes \calK \to (C_0(\R\cup\{+\infty\})\otimes C)\rtimes \R \to C\rtimes \R \to 0
$$
and corresponding class in $KK^1(C\rtimes \R, C)$.
Hence, from the perspective of Kasparov theory, we can assume that our twisted action $A=B\rtimes_\theta\R^d$
is such that $A \cong (B \rtimes_\theta \R^{d-1})\rtimes \R$ without losing any index-theoretic information. 
This unwinding of the crossed product will be important for 
boundary maps under the Wiener--Hopf extension and the
bulk-edge correspondence in Section \ref{sec:bulkedge}.

\begin{example}[Magnetic twists, \cite{BLM13, LMR10, MPR07}] 
\label{ex:mag_twist}
Let $B=C(\Omega)$ with $\Omega$ the compact space of 
disorder configurations with a
(twisted) action by $\R^d$ of magnetic translations. 
Consider a magnetic field in $\R^d$ 
with components $\{B^\omega_{jk}\}_{j,k=1}^d$ 
that continuously depend on $\omega\in\Omega$.
We then regard the cocycle $\theta$ as a function of $\omega$, where
$$
\theta(x,y)(\omega) = \exp\!\left( -i \Gamma^{B^\omega} \langle 0, x, x+y \rangle \right)
$$
with $\Gamma^{B^\omega} \langle 0, x, x+y \rangle$ the flux of the magnetic field through 
the triangle defined by the points $0$, $x$ and $x+y$. 
We see that in this case $\theta(x,-x)=1$ for all $\omega\in\Omega$ as required. 
The algebra $C(\Omega)\rtimes_{\theta}\R^d$ 
models continuous and disordered quantum systems with a (not necessarily constant) 
magnetic field. 

Let us extend this example slightly 
by considering the case when $B=C(\Omega)\rtimes_\phi \R^k$ for $1\leq k < d$.
Following~\cite[Theorem 4.1]{PR89}, 
there is a decomposition
$$
   C(\Omega)\rtimes_\theta \R^{d} \cong \left( C(\Omega)\rtimes_\phi \R^k\right) \rtimes_\sigma \R^{d-k},
$$
where, because the subgroup and quotient of $\R^{d}$ we consider is easy, the action and 
twist of $\R^k$ and $\R^{d-k}$ is simply the restriction of the action and twist of 
$\R^{d}$ to $\R^k\times\{0\}^d$ and $\{0\}^k\times \R^{d-k}$ respectively. 
Hence we retain that both $\theta(x,-x)(\omega)=1$ and $\sigma(z,-z)(\omega)=1$ 
for $x\in\R^{d}$ and $z\in\R^{d-k}$. Such a decomposition of twisted crossed products 
has applications to so-called weak topological insulators, where we may use this decomposition 
to extract $(d-k)$-dimensional invariants from $d$-dimensional systems. We will not 
emphasise this application here, though the interested reader can consult~\cite[Section 7, 8]{PSBKK} 
for results in the discrete setting.

We also remark that magnetic twists for real algebras and real crossed products are 
less interesting as we require $\theta(x,y)$ to be an orthogonal operator in 
$\calM(C(\Omega,\R))$. This puts large constraints on the type of magnetic 
field we can consider and will often mean that the magnetic field vanishes.
We will return to crossed products twisted by a magnetic 
field in Section \ref{sec:top_phases_application}.
\end{example}

\emph{We will now restrict to twisted dynamical 
systems $(B,\R^d,\alpha,\theta)$ such that $\theta(x,-x)=1$ for all $x\in\R^d$.}

\subsection{An unbounded Kasparov module}

We consider the Hilbert $C^*$-module $L^2(\R^d,B) \cong L^2(\R^d)\otimes B$ with right 
action by right-multiplication and inner-product
$$
   \left( f_1 \mid f_2 \right)_B = \int_{\R^d}\! f_1(x)^* f_2(x)\,\mathrm{d}x.
$$

\begin{lemma} 
\label{lemma:Crossed_prod_module_is_std_module}
If the twist $\theta$ is such that $\theta(x,-x)=1$ for all $x\in\R^d$, 
then the Hilbert module $L^2(\R^d,B)$ is isometrically isomorphic to 
the $C^*$-module $E_B$ given by the completion of $C_c(\R^d,B)$ with 
respect to the inner product $(f_1\mid f_2)_B = (f_1^{*}\ast f_2)(0)$.
\end{lemma}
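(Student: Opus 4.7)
The plan is to show that the two candidate inner products agree on the common dense subspace $C_c(\R^d,B)$, and then conclude that the identity map extends to the desired isometric isomorphism of Hilbert $C^*$-modules. So I would simply compute $(f_1^{*}\ast f_2)(0)$ using the definitions of the convolution and involution, and match it with $\int_{\R^d} f_1(x)^{*}f_2(x)\,\mathrm{d}x$.

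Unwinding the convolution at $x=0$ gives
\[
  (f_1^{*}\ast f_2)(0) = \int_{\R^d}\! \alpha_0\bigl(\theta(y,-y)\bigr)\,\alpha_{y}\bigl(f_1^{*}(y)\bigr)\,f_2(-y)\,\mathrm{d}y.
\]
Here the hypothesis $\theta(x,-x)=1$ instantly kills the twist factor. For the remaining piece I substitute $f_1^{*}(y)=\alpha_{-y}(f_1(-y)^{*})$ and apply the twist identity \eqref{eq:twist}, which reads $\alpha_y\circ\alpha_{-y}(b)=\theta(y,-y)\,b\,\theta(y,-y)^{*}=b$ under the same assumption. Thus $\alpha_y(f_1^{*}(y))=f_1(-y)^{*}$, and after the change of variable $y\mapsto -y$ (whose Jacobian is $1$) I obtain exactly $(f_1\mid f_2)_B=\int_{\R^d}f_1(x)^{*}f_2(x)\,\mathrm{d}x$.

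With the inner products identified on $C_c(\R^d,B)$, the identity map on this subspace is isometric with respect to both pre-Hilbert $B$-module structures. Since $C_c(\R^d,B)$ is dense in $L^2(\R^d,B)$ (in the $C^*$-module norm, which on $C_c$ coincides with the standard $L^2$-norm for $B$-valued functions) and dense in $E_B$ by definition, the identity extends uniquely to a $B$-linear isometric bijection between the two completions. Right $B$-linearity of the identity on $C_c(\R^d,B)$ is immediate, so the extension is a morphism of Hilbert $C^*$-modules.

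The only slightly delicate step is verifying that the action of $\alpha$ and the twist $\theta$ do combine to make the convolution inner product collapse to the naive one; this is where the standing assumption $\theta(x,-x)=1$ is essential, and without it an extra unitary would appear that could not be absorbed. Beyond this point the argument is purely formal extension by continuity.
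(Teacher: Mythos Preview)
Your proof is correct and follows essentially the same approach as the paper: compute $(f_1^*\ast f_2)(0)$ directly, use the assumption $\theta(y,-y)=1$ together with the involution formula and the relation $\alpha_y\circ\alpha_{-y}=\mathrm{id}$ to reduce to $\int f_1(x)^*f_2(x)\,\mathrm{d}x$. Your added remarks on extending the identity by continuity and $B$-linearity are fine but go slightly beyond what the paper records.
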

\begin{proof}
The inner-product on $E_B$ takes the form
\begin{align*}
 (f_1\mid f_2)_B &= \int_{\R^d}\! \theta(y,-y)\alpha_{y}(f_1^*(y))f_2(-y)\,\mathrm{d}y 
   = \int_{\R^d} \theta(-y,y)f_1(y)^* f_2(y)\,\mathrm{d}y.
\end{align*}
If $\theta(y,-y)=1$ then the inner products coincide and the right-action of $B$ by 
right-multiplication is compatible with the inner product on $E_B$. 
Hence the two spaces are isomorphic as $C^*$-modules.
\end{proof}

\begin{prop} 
\label{prop:adjointable_repn_of_crossed_prod}
Let $C_c(\R^d,B)$ act on $E_B$ by left convolution multiplication. Then this action 
extends to an adjointable representation of 
$A=B\rtimes_\theta\R^d$.
\end{prop}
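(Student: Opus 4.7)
The plan is to realise left convolution as the integrated form of a covariant representation of the twisted dynamical system $(B,\R^d,\alpha,\theta)$ on the Hilbert $B$-module $E_B$, from which both adjointability and extension to all of $A$ follow by the universal property of the twisted crossed product.

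First, I would identify $E_B$ with $L^2(\R^d,B)$ via Lemma~\ref{lemma:Crossed_prod_module_is_std_module}, so that the module inner product and right $B$-action take their standard forms. I would then define a covariant pair $(\pi, U)$ consisting of an adjointable $*$-representation $\pi\colon B \to \End_B^*(E_B)$ acting by fibre-wise twisted multiplication, together with a $\theta$-projective unitary family $U\colon \R^d \to \End_B^*(E_B)$ acting by twisted translation. The explicit formulas for $\pi$ and $U$ are to be chosen precisely so that the integrated form $\int_{\R^d}\pi(f(y))\,U_y\,dy$ reproduces the convolution formula defined in Section~\ref{Sec:twisted_systems_prelim} with its particular placement of $\alpha_{-x}$ and $\alpha_{y-x}$ factors.

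Next, verification that $(\pi, U)$ is genuinely a covariant pair --- meaning the covariance relation $U_y\pi(b)U_y^{*} = \pi(\alpha_y(b))$ and the cocycle relation $U_xU_y = \pi(\theta(x,y))\,U_{x+y}$ both hold --- reduces to direct manipulations using the cocycle identity \eqref{eq:cocycle} and the twist relation \eqref{eq:twist}. The adjointability $U_y^{*} = U_{-y}$, and hence the unitarity of $U_y$ on $E_B$, is precisely where the standing assumption $\theta(x,-x) = 1$ is used; without this assumption $U_y$ would only be a partial isometry and left convolution would fail to be adjointable.

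Having established the covariant pair, the universal property of $A = B\rtimes_\theta\R^d$ produces a unique $*$-homomorphism $\Lambda\colon A \to \End_B^*(E_B)$ whose restriction to $C_c(\R^d,B)$ is left convolution; amenability of $\R^d$ eliminates any ambiguity between the full and reduced crossed products. The main obstacle is calibrating the precise formulas for $\pi$ and $U$ so that their integrated form reproduces the convolution in Section~\ref{Sec:twisted_systems_prelim}: once these formulas are correct, the remaining verifications are routine applications of \eqref{eq:cocycle} and \eqref{eq:twist}.
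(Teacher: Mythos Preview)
Your approach is correct but considerably more elaborate than the paper's. The paper exploits the fact that the inner product on $E_B$ is $(f_1\mid f_2)_B = (f_1^*\ast f_2)(0)$, so adjointability of left convolution by $f_1$ is literally associativity of the convolution product: $(f_1\ast f_2\mid f_3)_B = (f_2^*\ast f_1^*\ast f_3)(0) = (f_2\mid f_1^*\ast f_3)_B$. Boundedness then follows in one line from the $C^*$-inequality $f_1^*\ast f_1 \leq \Vert f_1^*\ast f_1\Vert$ applied inside the inner product, and the extension to $A$ is by density. No covariant pair needs to be written down, and the assumption $\theta(x,-x)=1$ is not invoked at this stage (it was already absorbed into the identification of $E_B$ with $L^2(\R^d,B)$ in Lemma~\ref{lemma:Crossed_prod_module_is_std_module}). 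Your route via the universal property is perfectly sound and has the advantage of explaining conceptually where the representation comes from --- and it would be the natural choice if one later needed the explicit covariant pair --- but here it introduces formulas and verifications that the algebraic structure of $E_B$ renders unnecessary.
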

\begin{proof}
The action is adjointable on a dense subspace as 
$$
( f_1\ast f_2 \mid f_3)_B = (f_2^*\ast f_1^*\ast f_3)(0) = (f_2\mid f_1^*\ast f_3)_B, 
\quad f_1,f_2,f_3 \in C_c(\R^d,B)
$$
Furthermore, the action is  bounded since
$$
( f_1\ast f_2 \mid f_1\ast f_2)_B = (f_2^*\ast f_1^*\ast f_1\ast f_2)(0) \leq \Vert f_1^*\ast f_1\Vert (f_2\mid  f_2)_B, 
\quad f_1,f_2 \in C_c(\R^d,B),
$$
and so it extends to an adjointable 
action on the whole space by the completion $B\rtimes_\theta\R^d$.
\end{proof}

Using the identification of $E_B$ with $L^2(\R^d,B)$, we can define 
an adjointable action of $C_c(\R^d,B)$ (which extends to an action of
$B\rtimes_\theta\R^d)$ on $L^2(\R^d,B)$ by

\begin{align} 
\label{eq:Left_action_on_module}
  (\pi(f)\psi)(x) &= \int_{\R^d}\!\alpha_{-x}(\theta(y,x-y)) \alpha_{y-x}(f(y))\psi(x-y) \,\mathrm{d}y \nonumber \\
    &= \int_{\R^d}\! \alpha_{-x}(\theta(x-u,u)) \alpha_{-u}(f(x-u))\psi(u)\,\mathrm{d}u  \nonumber \\
    &= \int_{\R^d}\!\theta(-x,x-u) \alpha_{-u}(f(x-u))\psi(u) \,\mathrm{d}u,
\end{align}
where we have made the substitution $u=x-y$ and used the identity 
from Equation \eqref{eq:cocycle},
$$
 \alpha_{-x}(\theta(x-u,u))\theta(-x,x) = \theta(-x,x-u)\theta(-u,u),
$$
which together with the assumption $\theta(x,-x)=1$ implies that 
$\alpha_{-x}(\theta(x-u,u))=\theta(-x,x-u)$.

\begin{remark}
The two presentations of the right-$B$ $C^*$-module are useful in different 
contexts. The module $E_B$ allows us to easily define a left-action 
of $A$, 
while 
$\End^0_B(L^2(\R^d,B)) \cong \calK[L^2(\R^d)]\otimes B$, and so the 
presentation $L^2(\R^d,B)$ is useful for more analytic arguments.
\end{remark}

The algebra $C_c(\R^d,B)$ comes with the derivations $(\partial_j f)(x) = x_j f(x)$ 
(where $x_j$ is the $j$-th component of $x\in \R^d$) and we observe that 
$\partial_j(C_c(\R^d,B))\subset C_c(\R^d,B)$. A brief computation
relates the derivations $\{\partial_j\}_{j=1}^d$ to the unbounded position operators 
$\{X_j\}_{j=1}^d$ on $L^2(\R^d,B)$, where for $f\in C_c(\R^d,B)$,
\begin{equation}
\label{lemma:derivation_lines_up_with_commutator_with_dirac}
\pi(\partial_j f) = [X_j, \pi(f)].
\end{equation}

To construct the unbounded operator for our Kasparov module, we 
use the $\Z_2$-graded exterior algebra $\bigwedge^*\R^d$ and 
Clifford representations on this space. We first establish our 
notation and conventions for the Clifford algebras $C\ell_{p,q}$, 
namely 
$$   
 C\ell_{p,q} = \text{span}_\R \left\{\left.\gamma^1,\ldots,\gamma^p,\, \rho^1,\ldots,\rho^q\,\right\vert\,(\gamma^i)^2 
= 1,\,(\gamma^i)^* = \gamma^i,\,(\rho^i)^2=-1,
\,(\rho^i)^*=-\rho^i \right\},
$$
where $\mathrm{span}_\R$ means the algebraic span of the generators 
over the field $\R$,
and all the various $\gamma^j,\,\rho^k$ are odd and 
mutually anti-commute.
The exterior algebra $\bigwedge^*\R^d$ has
representations of $C\ell_{0,d}$ and $C\ell_{d,0}$ 
with generators
\begin{align*}
  &\rho^j(\omega) = e_j\wedge \omega - \iota(e_j)\omega, 
  &\gamma^j(\omega) = e_j \wedge \omega + \iota(e_j)\omega,
\end{align*}
where $\{e_j\}_{j=1}^d$ is the standard basis of $\R^d$ 
and $\iota(\nu)\omega$ is the contraction of $\omega$ along $\nu$. 
One readily checks that $\rho^j$ and $\gamma^j$ mutually anti-commute 
and generate representations of $C\ell_{0,d}$ and $C\ell_{d,0}$ 
respectively.

\begin{prop} 
\label{prop:crossed_prod_Kas_mod}
The triple
$$
 \lambda_d = \bigg( C_c(\R^d,B) \hat\otimes C\ell_{0,d}, \, L^2(\R^d,B)_B \hat\otimes 
   \bigwedge\nolimits^{\!*}\R^d, \, X= \sum_{j=1}^d X_j \hat\otimes \gamma^j \bigg)
$$
is a real or complex unbounded Kasparov module.
\end{prop}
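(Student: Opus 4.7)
The plan is to verify the four defining properties of an unbounded Kasparov module: (i) the left action of $C_c(\R^d,B)\hat\otimes C\ell_{0,d}$ on the graded module $L^2(\R^d,B)_B\hat\otimes\bigwedge^{\!*}\R^d$ is by bounded adjointable operators; (ii) $X$ is odd, self-adjoint and regular; (iii) the graded commutators $[X,a]$ extend to bounded adjointable operators for $a$ in the dense subalgebra; and (iv) $a(1+X^2)^{-1/2}$ is a compact endomorphism for each such $a$.

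Conditions (i) and (ii) are essentially bookkeeping. The left action is Proposition \ref{prop:adjointable_repn_of_crossed_prod} tensored with the bounded Clifford representation. The grading on $\bigwedge^{\!*}\R^d$ by exterior-degree parity makes all $\rho^j,\gamma^j$ odd, so $X$ is odd. Each coordinate operator $X_j$ on $L^2(\R^d,B)$ is self-adjoint and regular with core $C_c(\R^d,B)$, and because the $\gamma^j$ anticommute while the $X_i$ commute, the mixed terms cancel and
\[
X^2 \;=\; \sum_{j=1}^d X_j^2\,\hat\otimes\,1 \;=\; |X|^2\,\hat\otimes\,1.
\]
Since $|X|^2$ is positive, self-adjoint and regular, the bounded adjointable operator $(1+X^2)^{-1/2}$ is defined by functional calculus, and self-adjointness/regularity of $X$ itself follows by the standard criterion for anticommuting families of regular self-adjoint operators (alternatively, by verifying directly that $X\pm i$ has dense range on the core).

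For the commutator in (iii), take $a=\pi(f)\,\hat\otimes\,c$ with $f\in C_c(\R^d,B)$ and $c\in C\ell_{0,d}$ homogeneous. Combining equation \eqref{lemma:derivation_lines_up_with_commutator_with_dirac}, the commutativity of $X_j$ with $c$, and the graded anticommutation $c\,\gamma^j = (-1)^{|c|}\gamma^j c$, a short computation gives
\[
[X,\, \pi(f)\,\hat\otimes\,c]_{\mathrm{gr}} \;=\; \sum_{j=1}^d \pi(\partial_j f)\,\hat\otimes\,\gamma^j c,
\]
which is bounded and adjointable since $\partial_j f\in C_c(\R^d,B)$ and $\gamma^j c$ is a fixed Clifford element acting on a finite-dimensional space.

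The main obstacle is (iv). Writing $(1+X^2)^{-1/2}=g(X)\,\hat\otimes\,1$ with $g(u)=(1+|u|^2)^{-1/2}$, and using that $\bigwedge^{\!*}\R^d$ is finite-dimensional, it suffices to show that $\pi(f)\,g(X)$ is a compact endomorphism of $L^2(\R^d,B)_B$, i.e.\ lies in $\End^0_B(L^2(\R^d,B))\cong\calK[L^2(\R^d)]\otimes B$. From \eqref{eq:Left_action_on_module} this operator has $B$-valued integral kernel
\[
K(x,u) \;=\; \theta(-x, x-u)\,\alpha_{-u}\!\bigl(f(x-u)\bigr)\,(1+|u|^2)^{-1/2},
\]
which, since $f$ has compact support, is supported in the strip $\{|x-u|\leq\mathrm{diam}(\supp f)\}$. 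Truncating via $g_R(u)=g(u)\chi_{|u|\leq R}$ produces an operator $\pi(f)\,g_R(X)$ whose kernel is compactly supported in $\R^d\times\R^d$; such a continuous compactly supported $B$-valued kernel is uniformly approximable by $B$-valued step functions of the form $\sum_{i,j}\chi_{A_i}(x)\chi_{B_j}(u)\,b_{ij}$, each yielding a rank-one endomorphism, so the truncated operator is compact. Because $\|g-g_R\|_\infty=(1+R^2)^{-1/2}\to 0$, one obtains $\pi(f)\,g(X)=\lim_R \pi(f)\,g_R(X)$ in operator norm, and compactness follows. The subtlety here is that for $d\geq 2$ the naive Hilbert--Schmidt bound fails ($g\notin L^2(\R^d)$), which forces the combination of compact-kernel approximation on bounded regions with a $C_0$-at-infinity tail estimate.
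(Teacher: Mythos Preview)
Your proof is correct and follows essentially the same route as the paper: identify the $B$-valued integral kernel of $\pi(f)(1+X^2)^{-1/2}$, observe that it lies in (or is norm-approximable by elements of) $C_0(\R^d\times\R^d,B)$, and then approximate by finite sums of simple tensors, each of which is a rank-one endomorphism of $L^2(\R^d,B)_B$. The paper skips the truncation step by noting directly that continuity of $f$ and $\theta$ already forces $k_f\in C_0(\R^d\times\R^d)\otimes B$, whence a density argument in $C_c\otimes C_c\otimes B^2$ suffices.

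Two small cosmetic points. First, with the sharp cutoff $\chi_{|u|\le R}$ the truncated kernel is no longer continuous, so ``uniformly approximable by step functions'' is not literally available; replacing $\chi$ by a continuous bump $\phi_R$ with $0\le\phi_R\le 1$ fixes this without changing the tail estimate. Second, to identify $\chi_{A}(x)\chi_{B}(u)\,b$ with a rank-one endomorphism $\Theta_{\chi_A\otimes b_1,\,\chi_B\otimes b_2^*}$ you need a factorisation $b=b_1b_2^*$; this is harmless (Cohen factorisation in $B$, or just decompose $b$ as a combination of four positives and take square roots), and the paper handles it by explicitly approximating in $C_c(\R^d)\otimes C_c(\R^d)\otimes B^2$.
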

\begin{proof}
The first thing to observe is that $X$ is self-adjoint and regular. This can be proved directly,
or by using the local-global principle  \cite{KL-LG,P-LG} 
and the fact that (up to Clifford variables) we
have a multiplication operator.
For $f\in C_c(\R^d,B)$, 
Equation \eqref{lemma:derivation_lines_up_with_commutator_with_dirac} says that
$$
  [X, \pi(f)] = \sum_{j=1}^d [X_j, \pi(f)]\hat\otimes \gamma^j 
    = \sum_{j=1}^d  \pi(\partial_j f) \hat\otimes \gamma^j,
$$
and $\pi(\partial_j f) \in \End_B(L^2(\R^d,B))$ for all $j\in\{1,\ldots,d\}$. 
For $\rho^k\in C\ell_{0,d}$ we know that $\rho^k\gamma^j=-\gamma^j\rho^k$
for $j\in\{1,\ldots,d\}$, so $C\ell_{0,d}$ graded commutes with $X$. Thus 
(graded) commutators of $X$ with elements of $C_c(\R^d,B)\hat\otimes C\ell_{0,d}$
are defined on $\Dom(X)$ and extend to adjointable operators.
Therefore all that we need to show is that $\pi(f)(1+X^2)^{-1/2}$ is compact 
in $L^2(\R^d,B)$ for $f\in C_c(\R^d,B)$. 
Using Equation \eqref{eq:Left_action_on_module}, we note that 
$\pi(f)(1+X^2)^{-1/2}$ has the $B$-valued 
integral kernel
\begin{equation} 
  k_f(x,y) = \theta(-x,x-y) \alpha_{-y}(f(x-y))(1+|y|^2)^{-1/2} \,
    \hat\otimes \,{\rm Id}_{\bigwedge^*\R^d}.
  \label{eq:kernel}
\end{equation}
The continuity of $f$ and $\theta$ shows that $k_f\in C_0(\R^d\times\R^d)\otimes B$.
This now allows us to find a sequence in $C_c(\R^d)\otimes C_c(\R^d)\otimes B^2$
such that
$$
k_f=\lim_{n\to\infty}\sum_{j=1}^{N_n}f_{n,j}\otimes g_{n,j}\otimes b_{n,j}c_{n,j}.
$$
Then computing shows that the sum of rank one operators (see the Appendix)
$$
\pi(f)(1+X^2)^{-1/2}
=\lim_{n\to\infty}\sum_{j=1}^{N_n}
\Theta_{f_{n,j}\otimes b_{n,j}, \ol{g_{n,j}}\otimes c_{n,j}^*}
$$
converges in the operator norm topology of operators on $L^2(\R^d,B)$. Hence
$\pi(f)(1+X^2)^{-1/2}$ is the norm limit of compact operators, and 
thus is compact.
\end{proof}

We have used the orientation of $\R^d$ to construct the Kasparov module using the
operator $X= \sum_{j=1}^d X_j\hat\otimes \gamma^j$ and left-Clifford 
multiplication on $\bigwedge^*\R^d$, \cite[Section 4]{LRV}. 
The exterior algebra construction has the benefit that 
the differences 
between the real and complex cases are minimal, there is no dependence on spin or spin$^c$ structure,
 and the Kasparov modules we 
construct behave well under Kasparov products (see Section \ref{sec:bulkedge}).


\section{Traces and a semifinite spectral triple} 
\label{sec:semifinite_spec_trip}
If the algebra $B$ has a faithful, semifinite and norm lower-semicontinous tracial weight, 
$\tau_B$, that is invariant under the twisted $\R^d$-action, there is a general 
method by which we can obtain a semifinite spectral 
triple, \cite{LN04,PR06,KNR,CGPRS}. Again, a summary of the relevant definitions
and results is contained in the Appendix.

The existence of such a trace on $B$ is satisfied in the physically 
interesting case of $B=C(\Omega,M_N(\C))$ (or $M_N(\R)$), 
where the disorder space of configurations $\Omega$ (typically compact)
is equipped with a probability measure $\bP$ such that $\mathrm{supp}(\bP)=\Omega$. 
In examples from aperiodic media, the measure $\bP$ is often 
invariant and ergodic under the $\R^d$-action by translations,
though many of our results only require that
$\tau_B$ is invariant under the group action.

In our examples, the semifinite spectral triple we obtain is also smoothly summable 
in the sense of Definition \ref{def:smoothly_summable}, 
which allows us to employ the local index formula, 
Theorem \ref{thm:local_index_formula_odd} and 
\ref{thm:local_index_even}~\cite[Theorem 3.33]{CGRS2}.
In turn, the local index formula gives us the 
higher Chern numbers and an approach to understanding localisation.

We let $\Dom(\tau_B)$ be the domain 
of the trace $\tau_B$ and write $\Dom(\tau_B)^{1/2}$ as the set of operators 
$b\in B$ such that $\tau_B(b^*b) < \infty$.
Given the $C^*$-module $L^2(\R^d,B)$ and trace $\tau_B$, we complete 
$C_c(\R^d)\otimes \Dom(\tau_B)$ in the norm coming from the
inner-product
\begin{equation}
  \langle \lambda_1\otimes b_1,\lambda_2 \otimes b_2 \rangle = 
    \tau_B\!\left( (\lambda_1\otimes b_1 \mid \lambda_2 \otimes b_2)_B \right) 
    = \langle \lambda_1, \lambda_2 \rangle_{L^2(\R^d)} \, \tau_B(b_1^*b_2),
\label{eq:inn-prod}
\end{equation}
which defines the Hilbert space $L^2(\R^d)\otimes L^2(B, \tau_B)$ where
$L^2(B, \tau_B)$ is the GNS space.

\begin{lemma}
\label{lem:rep}
The algebra $A=B\rtimes_\theta \R^d$ acts on $L^2(\R^d)\otimes L^2(B, \tau_B)$.
\end{lemma}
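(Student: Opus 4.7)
The plan is to transfer the adjointable representation of $A$ on the Hilbert $C^*$-module $L^2(\R^d,B)$ from Proposition \ref{prop:adjointable_repn_of_crossed_prod} to the Hilbert space obtained via the tracial weight $\tau_B$. This uses the general principle that an adjointable operator on a right-$B$ module descends to a bounded operator on the associated GNS-type Hilbert space built from a faithful tracial weight on $B$.

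First I would identify $L^2(\R^d) \otimes L^2(B, \tau_B)$ with the Hilbert space completion of the dense subspace $C_c(\R^d) \otimes \Dom(\tau_B)^{1/2}$ of $L^2(\R^d,B)$ under the inner product \eqref{eq:inn-prod}. Positive definiteness follows from faithfulness of $\tau_B$ and positivity of the $B$-valued inner product, while the factorisation of \eqref{eq:inn-prod} on elementary tensors identifies the completion with the external Hilbert space tensor product $L^2(\R^d) \otimes L^2(B, \tau_B)$.

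Next, for $a \in A$ with representation $\pi(a)$ supplied by Proposition \ref{prop:adjointable_repn_of_crossed_prod}, adjointability yields the $C^*$-module operator inequality $\pi(a)^* \pi(a) \leq \|a\|^2 \cdot 1$. Evaluating on $\xi \in C_c(\R^d) \otimes \Dom(\tau_B)^{1/2}$ gives the pointwise inequality $(\pi(a)\xi \mid \pi(a)\xi)_B \leq \|a\|^2 (\xi \mid \xi)_B$ as positive elements of $B$. Applying the positive linear functional $\tau_B$, which is finite on both sides because $\tau_B((\xi \mid \xi)_B) < \infty$ by construction of the dense subspace, produces
\begin{equation*}
\|\pi(a)\xi\|^2 \leq \|a\|^2 \, \|\xi\|^2
\end{equation*}
in the Hilbert space norm, so $\pi(a)$ extends by continuity to a bounded operator on $L^2(\R^d) \otimes L^2(B, \tau_B)$ of norm at most $\|a\|$.

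The extension is a $*$-homomorphism: multiplicativity passes to the completion from the dense subspace, and the identity $\pi(a)^* = \pi(a^*)$ on the Hilbert space follows from $\tau_B((\pi(a)\xi \mid \eta)_B) = \tau_B((\xi \mid \pi(a^*)\eta)_B)$, which is inherited from module-adjointability. The only delicate point is that $\tau_B$ is a semifinite weight rather than a finite trace, so one must avoid evaluating $\tau_B$ outside its domain; this is handled by working on $\Dom(\tau_B)^{1/2}$ throughout and exploiting norm lower-semicontinuity to propagate the estimates to the completion. Beyond this bookkeeping, the argument is a direct translation of the adjointable action to the Hilbert space setting.
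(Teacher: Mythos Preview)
Your argument is correct and is essentially the same approach as the paper's: both transfer the adjointable representation of Proposition~\ref{prop:adjointable_repn_of_crossed_prod} to the Hilbert space via the trace $\tau_B$. The paper compresses your explicit computation into the single identification $L^2(\R^d)\otimes L^2(B,\tau_B) \cong L^2(\R^d,B)\otimes_B L^2(B,\tau_B)$, after which $\pi(a)\otimes 1$ is automatically bounded; your version simply unpacks what this internal tensor product construction means in terms of the operator inequality and the weight $\tau_B$.
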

\begin{proof}
This follows from the identification $L^2(\R^d)\otimes L^2(B,\tau_B) \cong 
L^2(\R^d,B)\otimes_B L^2(B,\tau_B)$ and Proposition \ref{prop:adjointable_repn_of_crossed_prod}.
\end{proof}

\begin{prop}[\cite{LN04}, Theorem 1.1]
Given $T\in\End_{B}(L^2(\R^d,B))$ with $T\geq 0$, define
$$  
\Tr_\tau(T) = \sup_{I} \sum_{\xi\in I}\tau_B\!\left[( \xi\mid T\xi)_{B}\right], 
$$
where the supremum is taken over all finite subsets $I\subset L^2(\R^d,B)$ with
$\sum_{\xi\in I}\Theta_{\xi,\xi}\leq 1$. \\
1) Then $\Tr_\tau$ is a semifinite norm lower-semicontinuous 
trace on the compact endomorphisms $\End_{B}^0(L^2(\R^d,B))$
with the property $\Tr_{\tau}(\Theta_{\xi_1,\xi_2}) = \tau_B\big((\xi_2\mid \xi_1)_{B}\big)$.\\
2) Let $\calN$ be the von Neumann algebra $\End_{B}^{00}(L^2(\R^d,B))''\subset 
\calB(L^2(\R^d)\otimes L^2(B,\tau_B))$. 
Then the trace $\Tr_\tau$ extends to a faithful semifinite trace on the positive cone $\calN_+$.
\end{prop}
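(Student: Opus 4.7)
The plan is to build $\Tr_\tau$ in three stages: first define it on finite-rank (``elementary'') endomorphisms via the stated rank-one formula; then show that the supremum formula extends this to a norm lower semicontinuous semifinite trace on the compacts $\End_B^0(L^2(\R^d,B))$; and finally extend by normality to the positive cone of the von Neumann closure $\calN$. The essential inputs are the trace property and faithfulness of $\tau_B$, together with the identification $\End_B^0(L^2(\R^d,B))\cong\calK[L^2(\R^d)]\otimes B$ noted earlier.

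For the first stage I would define $\Tr_\tau$ on the algebraic span $\End_B^{00}(L^2(\R^d,B))$ of rank-ones by
$$\Tr_\tau\!\Big(\sum_i \Theta_{\xi_i,\eta_i}\Big) = \sum_i \tau_B\!\big((\eta_i\mid\xi_i)_B\big),$$
and check well-definedness by pairing both sides with arbitrary $\zeta\in L^2(\R^d,B)$ and appealing to cyclicity of $\tau_B$ on products of the form $(\zeta\mid\xi_i)_B(\eta_i\mid\zeta)_B$. The trace identity $\Tr_\tau(ST)=\Tr_\tau(TS)$ on $\End_B^{00}$ then reduces by bilinearity to a single rank-one calculation that is itself cyclicity of $\tau_B$.

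For the second stage, a fixed finite index set $I$ makes $T\mapsto \sum_{\xi\in I}\tau_B((\xi\mid T\xi)_B)$ norm continuous on the positive part of $\End_B^0$, so the pointwise supremum over such $I$ is automatically norm lower semicontinuous. To identify the supremum with the finite-rank value, I would decompose a positive compact endomorphism as a norm convergent sum $T = \sum_n \Theta_{\eta_n,\eta_n}$ by a spectral argument for compact operators on Hilbert $C^*$-modules, and verify that the supremum returns $\sum_n \tau_B((\eta_n\mid\eta_n)_B)$. Semifiniteness is then immediate from the dense supply of elementary rank-ones $\Theta_{\xi,\xi}$ with $\xi=\lambda\otimes b$, $\lambda\in C_c(\R^d)$, $b\in\Dom(\tau_B)^{1/2}$, each of which has $\Tr_\tau(\Theta_{\xi,\xi}) = \|\lambda\|_2^2\, \tau_B(b^*b) < \infty$.

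The third stage is where the real work lies. Using Lemma \ref{lem:rep} to realise $\End_B^0(L^2(\R^d,B))$ inside $\calB(L^2(\R^d)\otimes L^2(B,\tau_B))$ via the representation in \eqref{eq:inn-prod}, $\Tr_\tau$ on the positive compacts appears as a dual weight on this Hilbert space. The key step is to show this weight is normal; once that is in hand, standard von Neumann weight theory gives a unique extension to a normal semifinite trace on $\calN_+ = \End_B^{00}(L^2(\R^d,B))''_+$. Normality reduces to monotone convergence for $\tau_B$, and faithfulness descends from that of $\tau_B$: if $\Tr_\tau(T)=0$ for $T\geq 0$ then $\tau_B((\xi\mid T\xi)_B)=0$ for every elementary $\xi$, and cyclicity plus faithfulness force $T=0$. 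The main obstacle is precisely normality of the dual weight; this is the heart of the Laca--Neshveyev theorem \cite{LN04} cited in the statement, and I would invoke it rather than reprove it.
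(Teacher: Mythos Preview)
The paper does not give its own proof of this proposition: it is stated as a direct citation of \cite[Theorem 1.1]{LN04} and used as a black box. Your proposal is a reasonable outline of how the Laca--Neshveyev argument proceeds, and you yourself conclude by invoking \cite{LN04} for the crucial normality step; so in the end your approach and the paper's coincide, the only difference being that you sketch the elementary reductions whereas the paper simply points to the reference.
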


The semifinite trace $\Tr_\tau$ on $\calN$ gives a semifinite 
trace $\Tr_\tau \hat\otimes \Tr_{\bigwedge^*\R^d}$ on 
$\calN \hat\otimes \End(\bigwedge^*\R^d)$. To simplify our notation, we will 
often suppress the finite-trace and finite-dimensional von Neumann 
algebra $\End(\bigwedge^*\R^d)$.

\begin{lemma} 
\label{lemma:Hilbert_schmidt_estimate}
If $f\in C_c(\R^d,\Dom(\tau_B)^{1/2})$, then 
$\pi(f)(1+X^2)^{-s/4}$ is Hilbert-Schmidt with respect to $\Tr_\tau$ for 
$s>d$.
\end{lemma}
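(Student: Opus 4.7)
My plan is to compute the Hilbert--Schmidt norm $\Tr_\tau\!\left([\pi(f)(1+X^2)^{-s/4}]^*\,\pi(f)(1+X^2)^{-s/4}\right)$ directly from the integral kernel representation that was already isolated in the proof of Proposition \ref{prop:crossed_prod_Kas_mod}. Indeed, the same substitution $u = x-y$ that produced Equation \eqref{eq:kernel} shows that $T_s := \pi(f)(1+X^2)^{-s/4}$ has the $B$-valued kernel
\[
k_f^s(x,y) = \theta(-x,x-y)\,\alpha_{-y}(f(x-y))\,(1+|y|^2)^{-s/4},
\]
tensored with the identity on $\bigwedge^{\!*}\R^d$ (which, being finite-dimensional, will only contribute a harmless multiplicative constant to the trace).

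The first step is to establish the kernel formula for the semifinite Hilbert--Schmidt norm: for $T\in\End^0_B(L^2(\R^d,B))$ with sufficiently nice $B$-valued kernel $k(x,y)$, one has
\[
\Tr_\tau(T^*T) = \iint_{\R^d\times\R^d}\tau_B\!\bigl(k(x,y)^* k(x,y)\bigr)\,dx\,dy.
\]
This I would verify on rank-one endomorphisms $\Theta_{\xi,\eta}$, whose kernel is $\xi(x)\eta(y)^*$, and then invoke the formula $\Tr_\tau(\Theta_{\xi_1,\xi_2}) = \tau_B\bigl((\xi_2\mid\xi_1)_B\bigr)$ from the proposition quoted above Lemma \ref{lemma:Hilbert_schmidt_estimate}, together with the cyclicity of $\tau_B$. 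The lower-semicontinuity of $\Tr_\tau$ then extends this identity to limits of finite sums of rank-one operators, which is enough because $T_s$ is the norm limit of such sums (as the proof of Proposition \ref{prop:crossed_prod_Kas_mod} already shows the kernel lies in $C_0$, hence may be approximated by elementary tensors).

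Next I would substitute the explicit form of $k_f^s$. Since $\theta$ is unitary-valued in $\calM(B)$, $\theta(-x,x-y)^*\theta(-x,x-y) = 1$, so
\[
k_f^s(x,y)^* k_f^s(x,y) = (1+|y|^2)^{-s/2}\,\alpha_{-y}\bigl(f(x-y)^*f(x-y)\bigr).
\]
The invariance of $\tau_B$ under the $\R^d$-action removes $\alpha_{-y}$, and after substituting $u = x-y$ the double integral factorises as
\[
\Tr_\tau(T_s^*T_s) = \Bigl(\int_{\R^d}(1+|y|^2)^{-s/2}\,dy\Bigr)\cdot\Bigl(\int_{\R^d}\tau_B\bigl(f(u)^*f(u)\bigr)\,du\Bigr).
\]
The first factor is finite precisely when $s>d$; the second is finite because $f$ has compact support and takes values in $\Dom(\tau_B)^{1/2}$, so $u\mapsto\tau_B(f(u)^*f(u))$ is bounded on a compact set by the norm-lower-semicontinuity of $\tau_B$ and the continuity of $f$.

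The only step I expect to require any real care is the first, namely ensuring that the kernel-to-trace formula is correctly normalised in the Hilbert $C^*$-module/semifinite setting and that $T_s$ is in fact approximable in the appropriate topology by rank-one endomorphisms with known kernels. Once that identification is in hand, the invariance of $\tau_B$ and the unitarity of $\theta$ strip the integral of all noncommutative content and the conclusion reduces to the elementary observation that $(1+|y|^2)^{-s/2}$ is integrable on $\R^d$ iff $s>d$.
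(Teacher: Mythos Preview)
Your proposal is correct and follows essentially the same route as the paper: write down the kernel of $\pi(f)(1+X^2)^{-s/4}$, compute the Hilbert--Schmidt norm as a double integral of $\tau_B$ applied to a pointwise product of kernels, use invariance of $\tau_B$ under the $\R^d$-action, and change variables so that the integral factorises into $\int(1+|y|^2)^{-s/2}\,dy$ times $\int\tau_B(f(u)^*f(u))\,du$.

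The one difference worth noting is that the paper first computes the kernel of the adjoint $(1+X^2)^{-s/4}\pi(f^*)$ explicitly, which requires the cocycle identity together with $\theta(u,-u)=1$ to simplify, and then multiplies the two kernels as a composition. You instead compute $k_f^s(x,y)^*k_f^s(x,y)$ directly and cancel $\theta(-x,x-y)^*\theta(-x,x-y)=1$ by unitarity. Your route is slightly shorter and avoids the cocycle manipulation altogether; the paper's route has the side benefit of exhibiting the adjoint kernel in a form that is reused verbatim in the proof of the next lemma on $\delta^m(\pi(f))(1+X^2)^{-s/4}$. Both arrive at the same integrand $(1+|y|^2)^{-s/2}\tau_B(f(x-y)^*f(x-y))$, so the conclusions and the threshold $s>d$ coincide.
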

\begin{proof}
The operator $\pi(f)(1+X^2)^{-s/4}$ has the integral kernel
$$
  k_f(x,y) = \theta(-x,x-y) \alpha_{-y}(f(x-y))(1+|y|^2)^{-s/4} \,\hat\otimes\, {\rm Id}_{\bigwedge^*\R^d}.
$$
Ignoring the factor ${\rm Id}_{\bigwedge^*\R^d}$,
the kernel of $(\pi(f)(1+X^2)^{-s/4})^*=(1+X^2)^{-s/4}\pi(f^*)$ is then 
\begin{align*}
  \wt{k}_{f^*}(x,y) &=(1+|x|^2)^{-s/4} \theta(-x,x-y) \alpha_{-y}(f^*(x-y)) \\
    &= (1+|x|^2)^{-s/4} \theta(-x,x-y) \alpha_{-y}\circ \alpha_{y-x}(f(y-x)^*) \\
    &= (1+|x|^2)^{-s/4} \theta(-x,x-y) \theta(-y,y-x) \alpha_{-x}(f(y-x)^*) \theta(-y,y-x)^* \\
    &= (1+|x|^2)^{-s/4}  \alpha_{-x}(f(y-x)^*) \theta(-y,y-x)^*,
\end{align*}
where we used the definition of $f^*$, Equation \eqref{eq:twist} on the twisting 
of $\alpha$, and the cocycle identity
$$
  \theta(-x,x-y)\theta(-y,y-x) = \alpha_{-x}(\theta(x-y,y-x)) \theta(-x,0) \\
    = \alpha_{-x}(1) 1 = 1
$$
under the added assumption $\theta(u,-u)=1$.

Because $\tau_B$ is a faithful, semifinite and 
norm lower-semicontinuous tracial weight on $B$, the trace-class 
operators $\calL^1(\calN, \Tr_\tau)$ 
contains  $\calL^1(L^2(\R^d)) \otimes \Dom(\tau_B)$ (algebraic tensor product),
and the trace restricted to this set is $\Tr_{L^2(\R^d)}\otimes \tau_B$. 
Ignoring the trace over $\bigwedge^*\R^d$, we compute directly
\begin{align*}
  \Tr_\tau\Big( (1+X^2)^{-s/4} \pi(f^* f) (1+X^2)^{-s/4} \Big) &= 
     \int_{\R^{2d}}\! \tau_B\big( \wt{k}_{f^*}(x,y) k_f(y,x) \big) \mathrm{d}x\,\mathrm{d}y \\
   &\hspace{-4.5cm}= \int_{\R^{2d}}\! \tau_B\Big( (1+|x|^2)^{-s/4} \alpha_{-x}(f(y-x)^*) \theta(-y,y-x)^*  \\
   &\times \theta(-y,y-x)\alpha_{-x}(f(y-x)) (1+|x|^2)^{-s/4} \Big) \,\mathrm{d}x\,\mathrm{d}y \\
   &\hspace{-4.5cm}= \int_{\R^{2d}}\! \tau_B\Big(\alpha_{-x}(f(y-x)^*) \alpha_{-x}(f(y-x))  \Big)
       (1+|x|^2)^{-s/2} \,\mathrm{d}x\,\mathrm{d}y \\
   &\hspace{-4.5cm}= \int_{\R^{2d}}\! \tau_B\Big(f(y-x)^*f(y-x)\Big) 
         (1+|x|^2)^{-s/2}\,\mathrm{d}x\,\mathrm{d}y ,
\end{align*}
where we have used the invariance of $\tau_B$ under 
the $\R^d$-action. Next we make the substitution $ u=y-x$, $v=x$ and use the 
compact support of $f$ on $u$ to estimate, for $s>d$,
\begin{align*}  
  \left\| \pi(f)(1+X^2)^{-s/4}\right\|_2^2
  &=  \int_{\R^{2d}}\!\! \tau_B\big( f(u)^*f(u) \big)  
          (1+|v|^2)^{-s/2} \,\mathrm{d}u\,\mathrm{d}v 
   \hspace{-0cm}= C_s \int_{\R^{d}}\! \tau_B\big( f(u)^*f(u) \big) 
    \,\mathrm{d}u<\infty.
\end{align*}
The trace over $\bigwedge^*\R^d$ does not change the argument, only adding a factor of $2^d$, and so we are done.
\end{proof}

In the language of semifinite spectral triples (summarised in the Appendix),
the Lemma says that $C_c(\R^d,\Dom(\tau_B)^{1/2})$ is contained in $\calB_2(X,d)$,
the `square integrable' operators. In fact $C_c(\R^d,\Dom(\tau_B)^{1/2})$ is contained in 
$\calB_2^\infty(X,d)$, the `smooth square integrable' operators.

\begin{lemma}
\label{lem:square-smo}
For $f\in C_c(\R^d,B)$,
let $\delta(\pi(f))=[|X|,\pi(f)]$, defined initially on $\Dom(X)$. Then for all $m=1,2,3,\dots$, 
and all $f\in C_c(\R^d,\Dom(\tau_B)^{1/2})$, the operator
$\delta^m(\pi(f))(1+X^2)^{-s/4}$ is Hilbert-Schmidt with respect to $\Tr_\tau$.
\end{lemma}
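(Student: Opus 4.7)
The plan is to reduce the claim to a direct Hilbert-Schmidt kernel estimate, completely analogous to the proof of Lemma \ref{lemma:Hilbert_schmidt_estimate}, with an extra polynomial factor that is tamed by the compact support of $f$.

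First, I would identify $|X|$ explicitly. Because the $\gamma^j$ anticommute and square to $1$, and the $X_j$ mutually commute as multiplication operators, one computes $X^2 = \sum_j X_j^2 \hat\otimes 1_{\bigwedge^*\R^d}$, which is multiplication by $|x|^2$. Hence $|X|$ is simply multiplication by $|x|$ on $L^2(\R^d,B)$ tensored with the identity on $\bigwedge^*\R^d$. In particular $|X|$ graded-commutes with the Clifford factor, which reduces everything to a computation on $L^2(\R^d,B)$ up to the usual factor of $2^d$.

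Second, I would compute the integral kernel of $\delta^m(\pi(f))$. The kernel of $\pi(f)$ is $k_f(x,u)=\theta(-x,x-u)\alpha_{-u}(f(x-u))$ (see Equation \eqref{eq:Left_action_on_module}), so as integral operators
$$
 [|X|,\pi(f)] \text{ has kernel } (|x|-|u|)\,k_f(x,u).
$$
Since $f\in C_c(\R^d,B)$, the kernel is supported where $x-u\in\mathrm{supp}(f)$, and by $||x|-|u||\le|x-u|$ the prefactor is uniformly bounded; hence $\delta(\pi(f))$ is a bounded adjointable operator preserving $\Dom(|X|)$, and by induction $\delta^m(\pi(f))$ has kernel $(|x|-|u|)^m k_f(x,u)$.

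Third, I would run the computation of Lemma \ref{lemma:Hilbert_schmidt_estimate} verbatim. The operator $\delta^m(\pi(f))(1+X^2)^{-s/4}$ has kernel $K(x,u)=(|x|-|u|)^m \theta(-x,x-u)\alpha_{-u}(f(x-u))(1+|u|^2)^{-s/4}$, and since $\theta$ is unitary and $\tau_B$ is $\alpha$-invariant,
$$
 \tau_B\bigl(K(x,u)^*K(x,u)\bigr) = (1+|u|^2)^{-s/2}(|x|-|u|)^{2m}\,\tau_B\bigl(f(x-u)^*f(x-u)\bigr).
$$
After the substitution $v=x-u$, $w=u$ and the inequality $||v+w|-|w||\le|v|$,
$$
 \bigl\|\delta^m(\pi(f))(1+X^2)^{-s/4}\bigr\|_2^2 \le 2^d \int_{\R^d}\! |v|^{2m}\tau_B\bigl(f(v)^*f(v)\bigr)\,\mathrm{d}v \; \int_{\R^d}\!(1+|w|^2)^{-s/2}\,\mathrm{d}w.
$$
The first integral is finite because $f$ has compact support and $f(v)\in\Dom(\tau_B)^{1/2}$, and the second is finite for $s>d$.

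No serious obstacle is expected; the only mild subtlety is justifying that the iterated commutator $\delta^m(\pi(f))$ is a well-defined bounded operator with the kernel claimed above. This is handled uniformly by the compact support of $f$, which forces the factor $|x|-|u|$ to be bounded on the support of the kernel, so that each successive commutator remains bounded and the kernel identification is rigorous.
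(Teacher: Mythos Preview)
Your proposal is correct and follows essentially the same approach as the paper: identify the integral kernel of $\delta^m(\pi(f))(1+X^2)^{-s/4}$ as the kernel of $\pi(f)(1+X^2)^{-s/4}$ multiplied by the factor $(|x|-|u|)^m$, use the reverse triangle inequality $||x|-|u||\le|x-u|$ together with the compact support of $f$, and then repeat the Hilbert--Schmidt computation of Lemma~\ref{lemma:Hilbert_schmidt_estimate} verbatim to arrive at the factorised bound $\int |v|^{2m}\tau_B(f(v)^*f(v))\,dv\cdot\int(1+|w|^2)^{-s/2}\,dw$. The only cosmetic difference is that the paper routes the Hilbert--Schmidt norm through the adjoint kernel (as in Lemma~\ref{lemma:Hilbert_schmidt_estimate}) whereas you compute $\int\!\!\int\tau_B(K^*K)$ directly; the substance is identical.
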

\begin{proof}
The proof is much like that of the previous lemma. 
We just note that the operator $\delta^m(\pi(f))(1+X^2)^{-s/4}$
has $B$-valued integral kernel
$$
  k_{f,m}(x,y) = \theta(-x,x-y) (|x|-|y|)^m\alpha_{-y}(f(x-y))(1+|y|^2)^{-s/4} 
  \,\hat\otimes\, {\rm Id}_{\bigwedge^*\R^d}.
$$
Then just as in Lemma \ref{lemma:Hilbert_schmidt_estimate}, 
the kernel of 
$(\delta^m(\pi(f))(1+X^2)^{-s/4})^*=(1+X^2)^{-s/4}\delta^m(\pi(f^*))(-1)^m$ is then 
$$
  \wt{k}_{f^*,m}(x,y) 
  =(1+|x|^2)^{-s/4} (|x|-|y|)^m(-1)^m \alpha_{-x}(f(y-x)^*)\theta(-y,y-x)^*
  \hat\otimes\, {\rm Id}_{\bigwedge^*\R^d}.
$$
Then we compute as before,
\begin{align*}
 & (-1)^m\Tr_\tau\Big( (1+X^2)^{-s/4} \delta^m(\pi(f^*))\delta^m(\pi(f)) (1+X^2)^{-s/4} \Big)\\
   &\hspace{2cm}= \int_{\R^{2d}}\! \tau_B\Big(f(y-x)^*f(y-x)\Big) (|x|-|y|)^{2m}(-1)^m
         (1+|x|^2)^{-s/2}\,\mathrm{d}x\,\mathrm{dy}.
\end{align*}
Now taking absolute values, using $(|x|-|y|)^{2m}\leq |x-y|^{2m}$, and changing variables as in the last lemma we find that for $s>d$,
\begin{align*}  
  \left\| \delta^m(\pi(f))(1+X^2)^{-s/4}\right\|_2^2
  &\leq  \int_{\R^{2d}}\!\! \tau_B\big( f(u)^*f(u) \big) |u|^{2m} 
          (1+|v|^2)^{-s/2} \,\mathrm{d}u\,\mathrm{d}v \\
&   \hspace{-0cm}= C_s \int_{\R^{d}}\! \tau_B\big( f(u)^*f(u) \big) |u|^{2m}
    \,\mathrm{d}u<\infty.\qedhere
\end{align*}
\end{proof}

The next theorem is the main result of this section. The analogous 
result in the tight-binding approximation can be proved much more simply.
While the proof here is quite short, it relies on 
quite substantial machinery, which we summarise in the Appendix.
The result justifies the use of this extra machinery, because 
once we have shown that
our spectral triple satisfies the additional
requirement of smooth summability, we 
can employ the local index formula, at least in the case of
complex $C^*$-algebras. Ultimately
the local index formula yields the higher Chern numbers
and the extension of the index pairing 
to the localised regime.

\begin{thm} 
\label{prop:semifinite_spec_trip}
Let 
$\calA=C_c(\R^d,\Dom(\tau_B))$. Then
$$
\bigg( \calA \hat\otimes C\ell_{0,d}, \, L^2(\R^d)\otimes L^2(B,\tau_B) \hat\otimes 
   \bigwedge\nolimits^{\!*}\R^d, \, X=\sum_{j=1}^d X_j\otimes 1 \hat\otimes \gamma^j, \, 
   (\calN, \Tr_\tau) \bigg)
$$
is a ($\Z_2$-graded) 
smoothly summable semifinite spectral triple with spectral dimension 
$d$.
\end{thm}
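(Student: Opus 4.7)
The plan is to verify the four defining properties of a smoothly summable semifinite spectral triple: self-adjointness and regularity of $X$, boundedness of commutators $[X,\pi(a)]$, $\calN$-compactness of $\pi(a)(1+X^2)^{-1/2}$, and smooth summability with spectral dimension exactly $d$. All four will be assembled from Proposition \ref{prop:crossed_prod_Kas_mod}, Lemma \ref{lem:rep}, and the two analytic Lemmas \ref{lemma:Hilbert_schmidt_estimate} and \ref{lem:square-smo}, with the underlying algebraic data recycled from the Kasparov module of Section \ref{sec:Crossed_prod_Kas_mod}.

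For the structural axioms, the representation $\pi$ of $A=B\rtimes_\theta\R^d$ on $\calH$ is provided by Lemma \ref{lem:rep}. The operator $X=\sum_j X_j\otimes 1\hat\otimes\gamma^j$ is a sum of commuting self-adjoint operators---self-adjoint position multiplications on $L^2(\R^d)$ tensored with bounded self-adjoint Clifford generators---hence self-adjoint on its natural domain, with regularity inherited from the module setting of Proposition \ref{prop:crossed_prod_Kas_mod}. Boundedness of $[X,\pi(f)]=\sum_j \pi(\partial_j f)\hat\otimes\gamma^j$ for $f\in\calA$ follows since $\partial_j f\in C_c(\R^d,B)$, and the generators $\rho^k\in C\ell_{0,d}$ graded-commute with $X$. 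For $\calN$-compactness of $\pi(a)(1+X^2)^{-1/2}$: Lemma \ref{lemma:Hilbert_schmidt_estimate} shows $\pi(f)(1+X^2)^{-s/4}\in\mathcal{L}^2(\calN,\Tr_\tau)\subset\calK(\calN,\Tr_\tau)$ for $s>d$, and the Calkin/spectral-projection criterion $\Tr_\tau\bigl(\chi_{(\lambda,\infty)}(|\pi(f)(1+X^2)^{-1/2}|)\bigr)<\infty$ for every $\lambda>0$ is then obtained by a functional-calculus comparison of $(1+X^2)^{-1/2}$ with a sufficiently high power $(1+X^2)^{-s/4}$.

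Smooth summability at spectral dimension $d$ is the content of Lemma \ref{lem:square-smo}, which gives $\delta^m(\pi(f))(1+X^2)^{-s/4}$ Hilbert--Schmidt for every $m\geq 0$ and $s>d$, placing $\calA$ in the square-integrable class $\calB_2(X,d)$. To conclude $\calA\subset\calB_2^\infty(X,d)$ I additionally verify that each $\delta^m(\pi(f))$ is bounded on $\calH$: the explicit kernel $\theta(-x,x-y)(|x|-|y|)^m\alpha_{-y}(f(x-y))$ together with the estimate $(|x|-|y|)^m\leq |x-y|^m$ and the compact support of $f$ gives an $L^1$-bound on the associated convolution kernel, and hence boundedness on $L^2$ by the same convolution argument as in Proposition \ref{prop:adjointable_repn_of_crossed_prod}. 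The upper bound on the spectral dimension is immediate from Lemma \ref{lem:square-smo}; the matching lower bound follows from the explicit identity in Lemma \ref{lemma:Hilbert_schmidt_estimate}, where the HS norm factorises as $C_s\int_{\R^d}\tau_B(f(u)^*f(u))\,\mathrm{d}u$ with $C_s=\int_{\R^d}(1+|v|^2)^{-s/2}\,\mathrm{d}v$ diverging precisely for $s\leq d$, ruling out any smaller dimension once one selects $f\in\calA$ with $\tau_B(f(u)^*f(u))$ positive on a set of positive measure. Finite-dimensionality of $C\ell_{0,d}$ ensures the Clifford tensor factor contributes no summability obstruction.

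The $\Z_2$-grading is induced by the parity decomposition $\bigwedge\nolimits^{\!*}\R^d = \bigwedge\nolimits^{\!\mathrm{even}}\R^d \oplus \bigwedge\nolimits^{\!\mathrm{odd}}\R^d$ together with Clifford word length: $X$ is odd (each $\gamma^j$ swaps parities), $\pi$ is even, and $C\ell_{0,d}$ is compatibly graded. The main obstacle is the alignment of the concrete kernel-based estimates with the precise Definition \ref{def:smoothly_summable} of smooth summability from the Appendix; while Lemmas \ref{lemma:Hilbert_schmidt_estimate}--\ref{lem:square-smo} package the essential HS bounds, checking every iterated condition involving mixed derivations $\delta^{k}$ and commutators $[X,\,\cdot\,]$ is a routine book-keeping task that ultimately reduces to the same compact-support convolution estimate, with invariance of $\tau_B$ under the twisted $\R^d$-action as the key algebraic input.
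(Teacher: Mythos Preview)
Your overall strategy matches the paper's: assemble the spectral-triple axioms from Proposition~\ref{prop:crossed_prod_Kas_mod} and Lemmas~\ref{lemma:Hilbert_schmidt_estimate}--\ref{lem:square-smo}, then read off smooth summability. Two points deserve comment.

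First, a genuine gap: the definition of smooth summability (Definition~\ref{def:smoothly_summable}) requires $\calA\cup[X,\calA]\subset\calB_1^\infty(X,d)$, not $\calB_2^\infty(X,d)$. You establish $\calA\subset\calB_2^\infty(X,d)$ and stop there. The passage to $\calB_1^\infty$ is not automatic for an arbitrary subalgebra of $\calB_2^\infty$: here it uses the factorisation $\Dom(\tau_B)=(\Dom(\tau_B)^{1/2})^2$, so that every $a\in\calA=C_c(\R^d,\Dom(\tau_B))$ is a finite sum of products $fg$ with $f,g\in C_c(\R^d,\Dom(\tau_B)^{1/2})\subset\calB_2^\infty(X,d)$. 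The Leibniz rule $\delta^m(fg)=\sum_k\binom{m}{k}\delta^k(f)\delta^{m-k}(g)$ then lands each $\delta^m(a)$ in $\calB_2(X,d)^2\subset\calB_1(X,d)$, and the same argument applied to $\partial_j(fg)\in\calA$ handles $[X,\calA]$. This is exactly how the paper proceeds, and without it your argument does not reach the required conclusion; Proposition~\ref{prop:finite_summ_condition}, which gives the spectral dimension, is also stated for $\calB_1^\infty$.

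Second, a minor simplification for $\tau$-compactness of $\pi(f)(1+X^2)^{-1/2}$: the paper avoids your spectral-projection criterion by simply observing that $\pi(f)(1+X^2)^{-s/4}\to\pi(f)(1+X^2)^{-1/2}$ in operator norm as $s\to 2$, so a norm limit of $\tau$-Hilbert--Schmidt (hence $\tau$-compact) operators is $\tau$-compact. Your explicit argument for the \emph{lower} bound on the spectral dimension is, on the other hand, more detailed than what the paper writes at this point.
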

\begin{proof}
The boundedness of commutators $[X,\pi(f)]$ 
is the same as in the Kasparov module
case and the self-adjointness of $X$ is clear.
By Lemma \ref{lemma:Hilbert_schmidt_estimate}, 
$\pi(f)(1+X^2)^{-s/4}$ is $\Tr_\tau$-Hilbert-Schmidt for $s>d$ 
and therefore compact in $(\calN,\Tr_\tau)$~\cite{FK}. 
As $s\to 2$, $\pi(f)(1+X^2)^{-s/4} \to \pi(f)(1+X^2)^{-1/2}$ 
in operator norm, whence $\pi(f)(1+X^2)^{-1/2}$ is a norm-limit of 
compact operators and so is compact. In all these statements, and below, we
write $\pi$ instead of $\pi\,\hat\otimes\, 1_{\bigwedge^*\R^d}$.

Using the notation from Section \ref{subsec:cgrs2_preliminaries}, our 
spectral triple will be smoothly summable if we can 
show that $\pi(\calA)\cup[X,\pi(\calA)]\subset \calB_1^\infty(X,d)$. 
Since $\delta$ is a derivation, for any $m$ 
and any $f,\,g\in C_c(\R^d,\Dom(\tau_B)^{1/2})$
$$
\delta^m(\pi(f)\pi(g))=\sum_{k=0}^m \binom{m}{k}\delta^k(\pi(f))\delta^{m-k}(\pi(g))
$$
and this is an element of $\calB_1(X,d)$ by Lemma \ref{lem:square-smo}.
Hence 
$$
C_c(\R^d,\Dom(\tau_B)^{1/2})^2=C_c(\R^d,\Dom(\tau_B))\subset \calB_1^\infty(X,d).
$$
Proposition \ref{prop:finite_summ_condition} then implies that the spectral triple is 
finitely summable with spectral dimension $d$.

Next we consider $\delta^m([X,\,\pi(fg)])$ and note that
$$  
[X,\,\pi(fg)] = \sum_{j=1}^d [X_j, \,\pi(fg)]\,\hat\otimes\, \gamma^j = \sum_{j=1}^d \partial_j(fg) \,\hat\otimes\, \gamma^j 
$$
by Equation \eqref{lemma:derivation_lines_up_with_commutator_with_dirac}. 
Because $\partial_j (fg)\in \calA$ and $|X|$ commutes with the $\gamma^j$,
$[X,\,\pi(\calA)] \subset \calB_1^\infty(X,d)$ by the same 
argument as $\pi(\calA)$ and we are done.
\end{proof}

Because we have a smoothly-summable spectral triple, we may complete $\calA$ in the $\delta$-$\varphi$ 
topology (see Equation \eqref{eq:delta_phi_topology_seminorms} in the appendix) to obtain an algebra $\calA_{\delta,\varphi}$ 
that is Fr\'{e}chet and stable under the holomorphic 
functional calculus~\cite[Proposition 2.20]{CGRS2}, 
so that 
$K_*(\calA_{\delta,\varphi})\cong K_*(A)$. Thus any $K$-theory class for
$B \rtimes_\theta \R^d$ has a representative in a matrix algebra over $\calA_{\delta,\varphi}$. In addition, the spectral triple
$$
  \left( \calA_{\delta,\varphi} \hat\otimes C\ell_{0,d}, \, L^2(\R^d)\otimes L^2(B,\tau_B) \hat\otimes \bigwedge\nolimits^{\! *}\R^d, \, 
  X, \, (\calN, \Tr_\tau ) \right)
$$
is smoothly summable with spectral dimension $d$~\cite[Proposition 2.20]{CGRS2},
and so our analytic formulae extend to pairings with projections or unitaries over
$\calA_{\delta,\varphi}$.

\section{Continuous Chern numbers for complex systems} \label{sec:Chern_nos}
Now that we have a semifinite spectral triple satisfying the regularity 
properties required for the local index formula, we restrict to 
complex algebras and Hilbert spaces to consider the semifinite index pairing 
with $K$-theory classes in $K_\ast(B\rtimes_\theta\R^d)$. The 
limitations of this approach
for real algebras will be discussed below.

Our main aim is to obtain the higher Chern numbers of continuum systems. Various tight-binding 
versions of these results were obtained in \cite{PLB13,PSB14,PSBbook,PSBKK}.

To better align our notation with the other literature 
on the topic, we consider the unbounded trace $\calT$ 
on $B\rtimes_\theta \R^d$ by the formula $\calT(f) = \tau_B(f(0))$ 
for $f\in C_c(\R^d,\Dom(\tau_B))$. We note that $\calT(f) = \Tr_\tau(f)$ for 
$f\in C_c(\R^d,\Dom(\tau_B))$ by an argument analogous to the proof of
Lemma \ref{lemma:Crossed_prod_module_is_std_module}.

The first observation we make is that
the semifinite local index formula is currently only valid
for ungraded and complex algebras (acting on possibly graded spaces),\footnote{The proofs of the 
local index formula given in
\cite{CGRS2,CPRS1} can naturally be recast for graded algebras, but the validity
of the result needs to be checked. For real (graded) algebras this will be necessary.}
while our semifinite spectral triple is defined over 
a graded algebra $\calA\hat\otimes C\ell_{0,d}$.

For complex algebras we can 
work with the semifinite spectral triple coming from the 
spin${}^c$ structure on $\R^d$. This is also what is used in \cite{PLB13,PSB14,PSBbook,PSBKK}.
Namely, we let $\nu = 2^{\lceil (d-1)/2 \rceil}$. Then the triple
\begin{equation}  \label{eq:spin_semfinite_spec_trip}
\bigg( \calA=C_c(\R^d,\Dom(\tau_B)), \, L^2(\R^d)\otimes L^2(B,\tau_B) \hat\otimes \C^\nu, 
    \, X=\sum_{j=1}^d X_j\otimes 1 \hat\otimes \gamma^j  \bigg)
\end{equation}
is a complex and smoothly summable semifinite spectral triple of 
spectral dimension $d$ and
relative to $(\calN \hat\otimes\End(\C^\nu), \Tr_\tau\hat\otimes \Tr_{\C^\nu})$. 
The spectral triple is odd (ungraded) if $d$ is odd and is even for $d$ even 
with grading operator $\gamma = (-i)^{d/2}\gamma^1\cdots\gamma^d$.

For $d$ even, the semifinite spectral triple from 
Equation \eqref{eq:spin_semfinite_spec_trip} is easily 
related to our original semifinite spectral triple from Theorem 
\ref{prop:semifinite_spec_trip} by the external product with the Morita equivalence 
bimodule $\big( \C\ell_d, \C^{2^{d/2}}_\C, 0 \big)$, which gives an invertible class 
in $KK(\C\ell_d, \C)$. For $d$ odd, we first turn our ungraded triple into a graded 
triple over $\calA \hat\otimes \C\ell_1$, then the Morita equivalence between 
$\C\ell_{d-1}$ and $\C$ recovers the original spectral triple. In both even and odd cases we
do not lose any information.

\begin{prop} 
\label{prop:residue_trace_formula}
Let $f\in C_c(\R^d,\Dom(\tau_B))$. If $\tau_B$ is invariant under the action 
of $\R^d$, then the complex function
$$
  s\mapsto \zeta_{f}(s) = \Tr_\tau(\pi(f)(1+|X|^2)^{-s/2})
$$
is holomorphic for $\Re(s)>d$ with at worst a simple pole at $s=d$ with residue
$$
 \res_{s=d} \Tr_\tau(\pi(f)(1+|X|^2)^{-s/2}) = \mathrm{Vol}_{d-1}(S^{d-1})\, \calT(f).
$$
\end{prop}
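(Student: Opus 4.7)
The plan is to compute $\zeta_f(s)$ explicitly by reducing it to a scalar integral over $\R^d$ and then invoking classical properties of the Beta function. Trace-class behaviour for $\Re(s)>d$ is already guaranteed by Lemma \ref{lemma:Hilbert_schmidt_estimate}, since $\pi(f)(1+X^2)^{-s/4}$ is Hilbert--Schmidt with respect to $\Tr_\tau$ in that range, and consequently its square $\pi(f)(1+|X|^2)^{-s/2}$ lies in $\calL^1(\calN,\Tr_\tau)$.

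First I would write down the $B$-valued integral kernel of $\pi(f)(1+|X|^2)^{-s/2}$ using Equation \eqref{eq:Left_action_on_module}:
$$
k_f^{(s)}(x,y) = \theta(-x,x-y)\,\alpha_{-y}(f(x-y))\,(1+|y|^2)^{-s/2}.
$$
Then the trace, computed just as in the proof of Lemma \ref{lemma:Hilbert_schmidt_estimate} using the construction of $\Tr_\tau$ on compact endomorphisms (so $\Tr_\tau = \Tr_{L^2(\R^d)}\otimes\tau_B$ on integrable kernels), reduces to the diagonal:
$$
\Tr_\tau\!\left(\pi(f)(1+|X|^2)^{-s/2}\right) = \int_{\R^d}\!\tau_B\!\big(\theta(-x,0)\,\alpha_{-x}(f(0))\big)(1+|x|^2)^{-s/2}\,\mathrm{d}x.
$$
Invoking the cocycle normalisation $\theta(-x,0)=1$ from Equation \eqref{eq:cocycle} and the $\R^d$-invariance of $\tau_B$, the integrand factors to give
$$
\zeta_f(s) = \calT(f)\int_{\R^d}(1+|x|^2)^{-s/2}\,\mathrm{d}x.
$$

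At this point the argument becomes purely analytic. Passing to polar coordinates and substituting $u=r^2$, the scalar integral becomes a Beta function,
$$
\int_{\R^d}(1+|x|^2)^{-s/2}\,\mathrm{d}x = \mathrm{Vol}_{d-1}(S^{d-1})\cdot\frac{1}{2}\,\frac{\Gamma(d/2)\,\Gamma((s-d)/2)}{\Gamma(s/2)},
$$
which is holomorphic for $\Re(s)>d$ and admits a meromorphic continuation whose only singularity near $s=d$ is the simple pole of $\Gamma((s-d)/2)$ at the origin. Since $\Gamma((s-d)/2)\sim 2/(s-d)$ as $s\to d$, the residue of the scalar integral at $s=d$ equals $\mathrm{Vol}_{d-1}(S^{d-1})$, and the claimed formula follows.

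The only step that requires any subtlety is the identification of $\Tr_\tau$ with the iterated trace $\Tr_{L^2(\R^d)}\otimes\tau_B$ on operators with sufficiently regular kernels, but this is exactly the content of the trace-class half of Lemma \ref{lemma:Hilbert_schmidt_estimate} (applied to $\pi(f)(1+X^2)^{-s/2}$ for $\Re(s)>d$) and does not require any new argument. Everything else is a bookkeeping exercise involving the cocycle identities, invariance of $\tau_B$, and the standard Beta integral.
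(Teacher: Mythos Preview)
Your proposal is correct and follows essentially the same route as the paper's proof: compute the kernel on the diagonal, use $\theta(-x,0)=1$ and the $\R^d$-invariance of $\tau_B$ to factor out $\calT(f)$, then evaluate the remaining scalar integral via polar coordinates and the Gamma function. One small slip: $\pi(f)(1+|X|^2)^{-s/2}$ is not literally the square of $\pi(f)(1+X^2)^{-s/4}$, so the trace-class justification should instead appeal to the factorisation $\calB_2(X,d)^2\subset\calB_1(X,d)$ established in Theorem~\ref{prop:semifinite_spec_trip} (or equivalently write $f=g\ast h$ with $g,h\in C_c(\R^d,\Dom(\tau_B)^{1/2})$); this does not affect the argument.
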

\begin{proof}
Because $\pi(f)(1+|X|^2)^{-s/2}$ is trace-class for $\Re(s)>d$, we can compute directly using 
that $\Tr_\tau = \Tr_{L^2(\R^d)}\otimes \tau_B$ (on the algebraic tensor product of 
$\mathcal{L}^1(L^2(\R^d))$ and $\Dom(\tau)\subset B$).
Using the formula in Equation \eqref{eq:kernel} for the integral kernel, we
find that for $\Re(s)>d$,
\begin{align*}
  \Tr_\tau\big(\pi(f)(1+|X|^2)^{-s/2}\big) &=  \int_{\R^d}\! \tau_B\big( k_{f}(x,x)\big) \mathrm{d}x \\
   &= \int_{\R^d}\! \tau_B\Big(\theta(-x,0) \alpha_{-x}(f(0)) \Big) 
        (1+|x|^2)^{-s/2}\,\mathrm{d}x  \\
 &  =  \int_{\R^d}\! \tau_B\Big( \alpha_{-x}(f(0)) \Big) (1+|x|^2)^{-s/2}\,\mathrm{d}x \\
   &= \tau_B(f(0)) \int_{\R^d}\!(1+|x|^2)^{-s/2}\,\mathrm{d}x,
\end{align*}
where we have used the invariance of the $\R^d$-action on the fourth line. 
Using polar coordinates we can compute explicitly for $\Re(s)>d$, 
\begin{align} \label{eq:residuetrace_comp1}
   \Tr_\tau\big(\pi(f)(1+|X|^2)^{-s/2}\big) 
   &=  \tau_B(f(0))\, \mathrm{Vol}_{d-1}(S^{d-1}) \int_0^\infty (1+r^2)^{-s/2} r^{d-1}\,\mathrm{d}r \nonumber \\
  &= \calT(f)\,\mathrm{Vol}_{d-1}(S^{d-1})  \frac{\Gamma\!\left(\frac{d}{2}\right) \Gamma\!\left(\frac{s-d}{2}\right)}{2\Gamma\!\left(\frac{s}{2}\right)}.
\end{align}
The right hand 
side of Equation \eqref{eq:residuetrace_comp1} has an analytic 
continuation to the complex plane that is holomorphic for $\Re(s)>d$ and 
with a simple pole at $\Re(s)=d$. 
Taking the residue yields
$$
 \res_{s=d} \Tr_\tau(\pi(f)(1+|X|^2)^{-s/2}) = \calT(f)\,\mathrm{Vol}_{d-1}(S^{d-1})
$$
as required.
\end{proof}

In the case of complex algebras and Kasparov modules, 
the semifinite spectral triple from Equation \eqref{eq:spin_semfinite_spec_trip} 
and tracial weight $\tau_B$ give a well-defined map
$$
  K_\ast(B\rtimes_\theta \R^d) \times KK^\ast(B\rtimes_\theta\R^d, B) \to KK(\C,B) \xrightarrow{(\tau_B)_*} \R.
$$
The semifinite local index 
formula~\cite[Theorem 3.33]{CGRS2} gives us computable expressions for this $K$-theoretic 
composition, which we now present.

\subsection{Odd formula}

Because the spectral triple of Equation \eqref{eq:spin_semfinite_spec_trip} 
 is smoothly summable with spectral dimension $d$, 
 the odd local index formula gives that 
$$ 
\langle [u],[(\calA,\calH,X_\mathrm{odd})]\rangle = \frac{-1}{\sqrt{2\pi i}} 
\res_{r=(1-d)/2} \sum_{m=1,\text{odd}}^{d} \! \phi_m^r(\mathrm{Ch}^m(u)), 
$$
where $u$ is a unitary in $M_q(\calA^\sim)$ and 
$$  
\mathrm{Ch}^{2n+1}(u) = (-1)^n n!\,u^*\otimes u \otimes u^*\otimes \cdots \otimes u \hspace{0.5cm}(2n+2\text{ entries}). 
$$
The functional $\phi_m^r$ is the resolvent cocycle from Definition \ref{def:resolvent_cocycle_defn}. 
Using notation from Section \ref{subsec:Index_pairing_defn}, we can write the 
pairing $\langle [u],[(\calA,\calH,X_\mathrm{odd})]\rangle$ as a semifinite 
Fredholm index, 
$$
\langle [u],[(\calA,\calH,X_\mathrm{odd})]\rangle = \Index_{{\tau}\otimes \Tr_{\C^{2q}}}\!\left( (P \otimes 1_{2q})\hat{u}(P \otimes 1_{2q}) \right), 
\qquad \hat{u} = \begin{pmatrix} u & 0 \\ 0 & 1_u \end{pmatrix}, 
$$
with $1_u = \pi^n(u)$ for $\pi^n:M_q(\calA)\to M_q(\C)$ the quotient map from the unitisation and 
$P = \frac{1}{2}(1+F_X)$ given as in Proposition \ref{prop:semifinite_index_is_kas_prod}. We also 
write $\Index_\tau$ to refer to the semifinite Fredholm index with respect to $\Tr_\tau$.

\begin{thm}[Odd index formula] 
\label{thm:higher_dim_chern_number_odd}
Let $d$ be odd and $u$ a complex unitary in $M_q(\calA^\sim)$
where $\calA^\sim$ is the minimal unitisation of $\calA$. 
If the trace $\tau_B$ on $B$ is invariant under the action of $\R^d$, then
the semifinite index pairing with the semifinite spectral triple 
from Equation \eqref{eq:spin_semfinite_spec_trip} with $d$ odd is given by the formula
$$  
\Index_{\tau\otimes \Tr_{\C^{2q}}}\!\left( (P_X \otimes 1_{2q})\hat{u}(P_X \otimes 1_{2q}) \right) = C_d \sum_{\sigma\in S_d}(-1)^\sigma\, 
 (\Tr_{\C^q}\otimes\calT) \bigg(\prod_{i=1}^d u^* \partial_{\sigma(i)}u \bigg), 
 $$
where $C_{2n+1} = \frac{ 2(2\pi i)^n n!}{(2n+1)!}$, 
$\Tr_{\C^q}$ is the matrix trace on $\C^q$ and $S_d$ is the permutation group on $d$ letters.
\end{thm}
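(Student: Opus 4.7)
The plan is to apply the odd semifinite local index formula displayed just before the theorem, and then reduce the resolvent cocycle computation to the residue trace formula of Proposition \ref{prop:residue_trace_formula}.

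\textbf{Step 1: Reduce to the top-degree term.} First I would observe that for the odd spectral triple $(\calA, L^2(\R^d)\otimes L^2(B,\tau_B)\hat\otimes \C^\nu, X)$ of spectral dimension $d = 2n+1$, the resolvent cocycle $\phi_m^r$ evaluated on $\mathrm{Ch}^m(u)$ can only have a pole at $r = (1-d)/2$ when $m = d$; the lower odd degree components $m = 1, 3, \ldots, d-2$ give holomorphic contributions at this point. This is a standard feature of the local index formula with integer isolated spectral dimension: only the top-degree Hochschild class carries the residue, and the lower-order terms, which amount to $(d-m)$-fold iterated integrals of resolvents against strictly lower powers of $|X|^2$, have residue zero at $r = (1-d)/2$ (since the integrand is smoother than needed for a pole). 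I would cite this reduction from the local index formula machinery summarised in the appendix.

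\textbf{Step 2: Unfold the top-degree cocycle.} Specialising to $m = d$, the formula reduces to
$$
\langle [u], [(\calA,\calH,X)] \rangle = \frac{-1}{\sqrt{2\pi i}} \res_{r=(1-d)/2} \phi_d^r(\mathrm{Ch}^d(u)),
$$
and $\phi_d^r$ is a constant multiple (depending on $\Gamma$-factors of $d$) of an integral over a vertical line of
$$
\Tr_\tau\!\left( \pi(u^*) R_s(X) [X, \pi(u)] R_s(X) [X, \pi(u^*)] R_s(X) \cdots [X, \pi(u)] R_s(X) \right),
$$
with $R_s(X) = (s - (1+X^2))^{-1}$. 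Using Equation \eqref{lemma:derivation_lines_up_with_commutator_with_dirac} one has $[X, \pi(v)] = \sum_{j=1}^d \pi(\partial_j v) \hat\otimes \gamma^j$ for each unitary letter $v \in \{u, u^*\}$. Expanding all $d$ commutators produces a sum indexed by $\mathbf{j} = (j_1, \ldots, j_d) \in \{1, \ldots, d\}^d$ of terms of the form $\pi(\partial_{j_1} u^*\, \partial_{j_2} u \cdots \partial_{j_d} u)$ (after collecting the $\pi$'s using that each $R_s(X)$ is a scalar function of $X$ that commutes past the algebra modulo terms which, after taking the residue at $r = (1-d)/2$, contribute nothing beyond the leading symbol), tensored with the Clifford monomial $\gamma^{j_1} \cdots \gamma^{j_d}$.

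\textbf{Step 3: Apply the Clifford trace and the residue trace formula.} For odd $d = 2n+1$ the irreducible Clifford trace satisfies
$$
\Tr_{\C^\nu}(\gamma^{j_1} \cdots \gamma^{j_d}) = \nu\, i^{-n}\, \epsilon^{j_1 \cdots j_d},
$$
so the sum over $\mathbf{j}$ collapses to a sum over permutations $\sigma \in S_d$ with sign $(-1)^\sigma$. Proposition \ref{prop:residue_trace_formula} then evaluates the residue of the remaining operator trace as $\mathrm{Vol}_{d-1}(S^{d-1}) \cdot \calT$ applied to the corresponding product of $\partial_{\sigma(i)}$-derivatives of $u^*$ and $u$. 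Because each $u^* u = 1$ one may (after a cyclic/trace rearrangement using $u \partial_j u^* = -(\partial_j u) u^*$) reorganise
$$
\Tr_{\C^q} \otimes \calT\!\left(\partial_{\sigma(1)} u^* \cdot \partial_{\sigma(2)} u \cdots \partial_{\sigma(d)} u\right) \;\longleftrightarrow\; (-1)^{?}\, \Tr_{\C^q}\otimes\calT\!\left(\prod_{i=1}^d u^* \partial_{\sigma(i)} u \right),
$$
which is the form appearing in the statement; the sign change is absorbed into an overall sign convention and into the sum over $\sigma$.

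\textbf{Step 4: Assemble the constant.} Finally I would collect the scalar factors: the $\nu\, i^{-n}$ from the Clifford trace, the $\mathrm{Vol}_{d-1}(S^{d-1}) \Gamma(d/2) / (2 \Gamma(d/2))$ style factors from Proposition \ref{prop:residue_trace_formula}, the $1/\Gamma(d/2 + 1)$ and $\sqrt{2\pi i}$ normalisations built into $\phi_d^r$, and the combinatorial coefficient $(-1)^n n!$ from $\mathrm{Ch}^d(u)$. These combine, after a direct computation, into $C_d = \frac{2(2\pi i)^n n!}{(2n+1)!}$. The identification with $\Index_{\tau \otimes \Tr_{\C^{2q}}}\!\big((P_X \otimes 1_{2q}) \hat{u} (P_X \otimes 1_{2q})\big)$ on the left-hand side follows from Proposition \ref{prop:semifinite_index_is_kas_prod} exactly as in the discrete case of \cite{PSBbook}.

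\textbf{Main obstacle.} The technically annoying step is Step 4: keeping consistent track of the factors of $i$, $2\pi$, Clifford orientation signs, the $(-1)^n n!$ from $\mathrm{Ch}^d(u)$, and the $\Gamma$-factors coming from both the resolvent integral and the radial integration in Proposition \ref{prop:residue_trace_formula}, so as to land precisely on the closed-form constant $C_{2n+1} = 2(2\pi i)^n n!/(2n+1)!$. The conceptual content, by contrast, is entirely contained in Steps 1--3 and is essentially dictated by the local index formula together with the residue trace formula already established.
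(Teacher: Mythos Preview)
Your overall strategy matches the paper's approach, but your justification in Step 1 is wrong, and this matters because it is precisely where the flatness of the geometry enters.

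You claim the lower-degree terms $\phi_m^r(\mathrm{Ch}^m(u))$ for $m<d$ are holomorphic at $r=(1-d)/2$ as a ``standard feature of the local index formula with integer isolated spectral dimension,'' the lower terms being ``smoother than needed for a pole.'' This is false in general: on a curved manifold the lower-degree terms contribute curvature integrals to the index, and nothing in the abstract local index machinery forces them to vanish. The actual mechanism here (the paper's Lemma \ref{prop:only_need_top_term_in_local_index_odd_case}) is purely Clifford-algebraic: each commutator $[X,a_j] = \sum_k \partial_k a_j \hat\otimes \gamma^k$ contributes exactly one gamma matrix, while the resolvents $R_s(\lambda)$ are diagonal on the spinor factor. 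Hence $a_0 R_s[X,a_1]\cdots[X,a_m]R_s$ lies in the span of Clifford monomials of degree $m$, and the spinor trace on $\C^\nu$ vanishes on any $\gamma^{j_1}\cdots\gamma^{j_m}$ with $0<m<d$. Thus $\phi_m^r(\mathrm{Ch}^m(u))$ is identically zero for $\Re(r)$ large, hence everywhere by analytic continuation. This is the point at which the specific Dirac-type form $X=\sum X_j\hat\otimes\gamma^j$ is used, and it is not a consequence of summability alone.

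A smaller point: in Step 2 your displayed product $\pi(\partial_{j_1}u^*\,\partial_{j_2}u\cdots)$ has the wrong first factor. The zeroth slot of $\mathrm{Ch}^d(u)$ is $u^*$ with no commutator, so after moving resolvents to the right the operator is $u^*[X,u][X,u^*]\cdots[X,u]$. The paper handles this by first using $[X,u^*]=-u^*[X,u]u^*$ to rewrite the whole product as $(-1)^n(u^*[X,u])^d$ \emph{before} expanding in Clifford indices; your reorganisation in Step 3 via $u\,\partial_j u^* = -(\partial_j u)u^*$ after expansion is equivalent, but doing it first eliminates the sign ambiguity you flag as ``$(-1)^?$'' and makes Step 4 tractable.
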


We give the proof in the case $q=1$, since we  can extend to matrices by the 
standard extension of spectral triples over $\calA$ to $M_q(\calA)$. 
Except in cases where we need specific results about the spinor trace 
$\Tr_{\C^\nu}$, we will write the trace $\Tr_\tau \hat\otimes \Tr_{\C^\nu}$ 
as just $\Tr_\tau$.

 To compute the index pairing we make the following important observation.

\begin{lemma}[\cite{BCPRSW}, {\S}11.1] 
\label{prop:only_need_top_term_in_local_index_odd_case}
The only term in the sum $\sum\limits_{m=1,\mathrm{odd}}^{d}\! \phi_m^r(\mathrm{Ch}^m(u))$ 
that contributes to the index pairing is the term with $m=d$.
\end{lemma}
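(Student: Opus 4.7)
The plan is to show that for every odd $m$ with $1 \leq m < d$, the resolvent cocycle $\phi_m^r(\mathrm{Ch}^m(u))$ vanishes \emph{identically} as a function of $r$, so that its residue at $r = (1-d)/2$ is automatically zero and only the top-degree term $m = d$ survives.

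First, I would expand $\phi_m^r(\mathrm{Ch}^m(u))$ according to Definition \ref{def:resolvent_cocycle_defn}. After stripping off the scalar prefactors and the contour integral in the spectral parameter $s$, the integrand is a trace over $\calN \hat\otimes \End(\C^\nu)$ of an operator of the form
$$
\pi(u^*)\, R_s\, [X, \pi(u)]\, R_s\, [X, \pi(u^*)]\, R_s \cdots [X, \pi(u)]\, R_s,
$$
with exactly $m$ commutator factors and $R_s = (1+s^2+X^2)^{-1}$. Because $X = \sum_j X_j \otimes 1 \hat\otimes \gamma^j$ and $(\gamma^j)^2 = 1$, we have $X^2 = \sum_j X_j^2 \otimes 1 \hat\otimes 1$, so each resolvent acts as a scalar on the spinor factor. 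Using Equation \eqref{lemma:derivation_lines_up_with_commutator_with_dirac} to rewrite $[X, \pi(a)] = \sum_{j} \pi(\partial_j a) \hat\otimes \gamma^j$, the trace factorises as a trace over $\calN$ of a product involving only resolvents and the derivations $\partial_{j_i}$, multiplied by a single spinor trace $\Tr_{\C^\nu}(\gamma^{j_1} \gamma^{j_2} \cdots \gamma^{j_m})$.

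The crux is then to show that for every choice of indices $j_1, \ldots, j_m \in \{1, \ldots, d\}$, the spinor trace $\Tr_{\C^\nu}(\gamma^{j_1} \cdots \gamma^{j_m})$ vanishes whenever $m$ is odd and $m < d$. Using the Clifford relations $(\gamma^j)^2 = 1$ and $\gamma^i \gamma^j = -\gamma^j \gamma^i$ for $i \neq j$, the product reduces to $\pm\gamma^{k_1} \cdots \gamma^{k_\ell}$ for strictly ordered indices $k_1 < \cdots < k_\ell$, where $\ell \leq m$ and $\ell$ has the same parity as $m$; hence $\ell$ is odd and $\ell < d$. Choosing any $k \in \{1,\ldots,d\} \setminus \{k_1, \ldots, k_\ell\}$, cyclicity of $\Tr_{\C^\nu}$ together with $(\gamma^k)^2 = 1$ yields
$$
\Tr_{\C^\nu}(\gamma^{k_1}\cdots \gamma^{k_\ell})
= \Tr_{\C^\nu}(\gamma^k \gamma^{k_1} \cdots \gamma^{k_\ell} \gamma^k)
= (-1)^\ell \Tr_{\C^\nu}(\gamma^{k_1} \cdots \gamma^{k_\ell}),
$$
which forces the trace to be zero since $\ell$ is odd.

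The conclusion is immediate: $\phi_m^r(\mathrm{Ch}^m(u)) \equiv 0$ for all odd $m < d$, so taking the residue at $r = (1-d)/2$ kills all but the $m = d$ term. The only real obstacle is notational, namely keeping careful track of the parity reduction of the Clifford product under the anticommutation relations and justifying the factorisation of the trace across the tensor product $\calN \hat\otimes \End(\C^\nu)$; both are routine once the spin${}^c$ spectral triple of Equation \eqref{eq:spin_semfinite_spec_trip} is in hand, since its Clifford structure is by construction entirely separated from the $\calN$-factor.
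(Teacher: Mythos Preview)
Your proof is correct and follows essentially the same strategy as the paper: both exploit that $R_s(\lambda)$ is scalar on the spinor factor, expand the commutators as $\sum_j \partial_j(a)\hat\otimes\gamma^j$, and conclude from the vanishing of $\Tr_{\C^\nu}$ on products of fewer than $d$ Clifford generators. Your cyclicity argument for the Clifford trace is a nice explicit justification of what the paper merely asserts; note however two minor slips---the contour variable is $\lambda$, not $s$, and your $R_s$ should be $R_s(\lambda)=(\lambda-(1+s^2+X^2))^{-1}$ rather than the resolvent after the Cauchy integral---neither of which affects the argument since the operator is scalar on spinors in either form.
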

\begin{proof}
We first note that the spinor trace of the product of $d=2n+1$ Clifford generators is given by
\begin{equation} \label{eq:trace_of_grading}
 \Tr_{\C^\nu} (\gamma^1\cdots \gamma^d) = (-1)^{n+1}i^{-n} 2^{ n }
\end{equation} 
and will vanish on any product of $k$ Clifford generators with $0<k<d$. The resolvent cocycle involves the spinor trace of terms
$$  
a_0R_s(\lambda)[X,a_1]R_s(\lambda)\cdots [X,a_m]R_s(\lambda), \qquad R_s(\lambda) = (\lambda-(1+s^2+X^2))^{-1}, 
$$
for $a_0,\ldots,a_m\in\pi(\calA)$. We note that $[X,a_l] = \sum_{j=1}^d \partial_j(a_l)\hat\otimes\, \gamma^j$ 
and $R_s(\lambda)$ is diagonal 
in the spinor representation. Hence the product 
$a_0R_s(\lambda)[X,a_1]R_s(\lambda)\cdots [X,a_m]R_s(\lambda)$ 
will be in the span of $m$ Clifford generators for $0<m<d$ acting on 
$L^2(\R^d)\otimes L^2(B,\tau_B)\hat\otimes\C^\nu$. 
Furthermore, our trace estimates ensure that each spinor component
$$  
\int_\ell \lambda^{-d/2-r} a_0(\lambda-(1+s^2+X^2))^{-1}\partial_{j_1}(a_1) \cdots \partial_{j_m}(a_m)(\lambda-(1+s^2+X^2))^{-1}\,\mathrm{d}\lambda 
$$
is trace-class for $a_0,\ldots,a_m\in \calA$ and $\Re(r)$ sufficiently large. 
Hence for $0<m<d$, the spinor trace will vanish for $\Re(r)>0$ and so 
$\phi_m^r(\mathrm{Ch}^m(u))$ does 
not contribute to the index pairing for $0<m<d$.
\end{proof}

\begin{proof}[Proof of Theorem \ref{thm:higher_dim_chern_number_odd}]
Lemma \ref{prop:only_need_top_term_in_local_index_odd_case} 
simplifies the index computation 
substantially, where
the index is now given by the expression
\begin{align*}
  \langle [u],[(\calA,\calH,X_\mathrm{odd})]\rangle 
  &=  \frac{-1}{\sqrt{2\pi i}} \res_{r=(1-d)/2} \phi_d^r(\mathrm{Ch}^d(u)).
\end{align*}
Here the resolvent cocycle is formed using the trace 
$\Tr_{\C^q}\otimes \Tr_\tau\otimes\Tr_{\C^\nu}$
where $\Tr_{\C^\nu}$ is the trace on the spinor representation. 
We simplify the formulae below by taking $q=1$ and suppressing the spinor trace.
Therefore we need to compute the residue at $r=(1-d)/2$ of
$$  
\frac{(-1)^{n+1}n!\eta_d}{(2\pi i)^{3/2}} \int_0^\infty\! s^d\Tr_\tau\!\left( \int_\ell \lambda^{-d/2-r} u^* R_s(\lambda) [X,u] R_s(\lambda)[X,u^*]\cdots [X,u]R_s(\lambda) \,\mathrm{d}\lambda\right)\!\mathrm{d}s,  
$$
where $d=2n+1$ and 
$$
\eta_d= -\sqrt{2i}\, 2^{d+1} \frac{\Gamma(d/2+1)}{\Gamma(d+1)}.
$$ 
To compute this residue we move all terms $R_s(\lambda)$ to the right, 
which can be done up to a function holomorphic at $r=(1-d)/2$ by an argument similar 
to the proof of Proposition \ref{prop:residue_trace_formula}. 
This allows us to take the Cauchy integral. We then observe that 
$\underbrace{[X,u][X,u^*]\cdots[X,u]}_{d\text{ terms}} \in \pi(\calA) \,\hat\otimes\, 1_{\C^\nu}$, 
so Proposition \ref{prop:residue_trace_formula} implies that 
the zeta function
$$ 
\Tr_\tau\!\left(u^*{[X,u][X,u^*]\cdots[X,u]}(1+X^2)^{-z/2}\right) 
$$
has at worst a simple pole at $\Re(z)=d$. Therefore we can explicitly compute 
for $d=2n+1$,
\begin{align*}
  \frac{-1}{\sqrt{2\pi i}} \res_{r=(1-d)/2} \phi_d^r(\mathrm{Ch}^d(u)) 
   =  (-1)^{n+1}  \frac{n!}{d!}\frac{\Gamma(d/2)}{\sqrt{\pi}}\,\res_{z=d}\Tr_\tau\!\left(u^*[X,u][X,u^*]\cdots[X,u](1+X^2)^{-z/2}\right)
\end{align*}
and so our index pairing can be written as
\begin{align*}
   \Index_\tau(P\hat{u} P) &=  (-1)^{n+1} \frac{n! \Gamma(d/2)}{d!\sqrt{\pi}} \res_{z=d} \Tr_\tau\!\left(u^*[X,u][X,u^*]\cdots[X,u](1+X^2)^{-z/2}\right).
\end{align*}
We make use of the identity $[X,u^*] = -u^*[X,u]u^*$, which allows us to rewrite
\begin{align*}
   u^*\underbrace{[X,u][X,u^*]\cdots[X,u]}_{d=2n+1\text{ terms}} &= (-1)^n u^*[X,u]u^*[X,u]u^*\cdots u^*[X,u]  
     = (-1)^n \left( u^*[X,u]\right)^d.
\end{align*}
Recall that $[X,u] = \sum_{j=1}^d [X_j,u]\hat\otimes\gamma^j = \sum_{j=1}^d \partial_j(u)\hat\otimes\gamma^j$ 
so we have the relation 
$u^*[X,u] = \sum_{j=1}^d u^*\partial_j(u)\hat\otimes\gamma^j$. Taking the $d$-th 
power
$$  
\left(u^*[X,u]\right)^d =  \sum_{J=(j_1,\ldots,j_d)} u^*\partial_{j_1}(u)\cdots  u^*\partial_{j_d}(u) 
 \,\hat\otimes\, \gamma^{j_1}\cdots \gamma^{j_d} 
$$
where the sum is extended over all multi-indices $J$. Note that every term in 
the sum is a multiple of the identity on $\C^\nu$ and so has a non-zero spinor trace. 
Writing this product in terms of permutations,
$$  
  (-1)^n\left(u^*[X,u]\right)^d =  
  (-1)^n \sum_{\sigma\in S_d}(-1)^\sigma \prod_{j=1}^d (u^*\partial_{\sigma(j)}(u)\hat\otimes\, \gamma^{j}), 
$$
where $S_d$ is the permutation group of $d$ letters.
Combining these results yields
\begin{align*}
  \Index_\tau(P\hat{u} P) &= (-1)^{n+1} \frac{n! \Gamma(d/2)}{d!\sqrt{\pi}}\res_{z=d} 
       \Tr_\tau\!\left(u^*[X,u][X,u^*]\cdots[X,u](1+X^2)^{-z/2}\right) \\
    &\hspace{-0cm}= - \frac{n! \Gamma(d/2)}{d!\sqrt{\pi}}\res_{z=d} 
      \Tr_\tau \!\bigg( \bigg(\sum_{\sigma\in S_d}(-1)^\sigma \prod_{j=1}^d 
         (u^*\partial_{\sigma(j)}(u)\hat\otimes\, \gamma^{j})\bigg) \!(1+X^2)^{-z/2}\bigg) .
\end{align*}
Recalling the spinor degrees of freedom,
we can apply Equation \eqref{eq:trace_of_grading} and
Proposition \ref{prop:residue_trace_formula} to reduce the formula to
\begin{align*}
  \Index_\tau(P\hat{u} P) &= (-1)^n \frac{n! \Gamma(d/2)\mathrm{Vol}_{d-1}(S^{d-1}) 2^{n}}{i^{n} d!\sqrt{\pi}} 
   \sum_{\sigma\in S_d}(-1)^\sigma \, \calT\bigg( \prod_{j=1}^d u^* \partial_{\sigma(j)}(u)\bigg).
\end{align*}
Finally we use the equation $\mathrm{Vol}_{d-1}(S^{d-1}) = \frac{d\pi^{d/2}}{\Gamma(d/2+1)}$ to simplify our formula to
$$  
\Index_\tau(P\hat{u} P) = C_d \sum_{\sigma\in S_d}(-1)^\sigma\,  \calT \bigg(\prod_{j=1}^d u^* \partial_{\sigma(j)}(u) \bigg) 
$$
with $C_{2n+1} = \frac{ 2(-2\pi)^n n!}{i^{n}(2n+1)!}=\frac{ 2(2\pi i)^n n!}{(2n+1)!}$. 
\end{proof}

We see our result as analogous to the higher 
dimensional Chern numbers of discrete crossed products considered 
in~\cite{PSB14, PSBbook, PSBKK} for $C(\Omega)\rtimes_\theta\Z^d$ for $d$ odd. 
For $d=1$ and an untwisted crossed product, $B\rtimes \R$,
we recover the results studied in~\cite{Lesch91,PR93,CGPRS}.

\subsection{Even formula}
We now consider the case of even dimensions and recall 
the even local index formula,
\begin{align*}
  \langle [p]-[1_{p}],[(\calA,\calH,X_\mathrm{even})]\rangle 
  &= \res_{r=(1-d)/2} \sum_{m=0,\text{even}}^d 
  \! \phi_m^r(\mathrm{Ch}^m(p) - \mathrm{Ch}^m(1_{p})), \\
    \mathrm{Ch}^{2n}(p) = (-1)^n \frac{ (2n)!}{2(n!)}&(2p-1)\otimes p^{\otimes 2n},  
    \qquad \mathrm{Ch}^0(p) = p,
\end{align*}
where $\phi_m^r$ is the resolvent cocycle of Definition \ref{def:resolvent_cocycle_defn} 
and $1_{p} = \pi^q(p)$ for 
$\pi^q:M_q(\calA^\sim) \to M_q(\C)$ the quotient map. We will again use 
Proposition \ref{prop:semifinite_index_is_kas_prod} to write the pairing as a 
semifinite Fredholm index.

\begin{thm}[Even index formula] 
\label{thm:higher_dim_chern_number_even}
Let $p$ be a  projection in $M_q(\calA^\sim)$ with $d$ even.
If the trace $\tau_B$ on $B$ is invariant under the action of $\R^d$, then
the semifinite index pairing can be expressed by the formula
\begin{equation*} 
 \Index_{\tau\otimes \Tr_{\C^{2q}}}\!(  \hat{p}(F_X \otimes 1_{2q})_+ \hat{p}) =  \frac{(-2\pi i)^{d/2}}{(d/2)!} \sum_{\sigma\in S_d} 
  (-1)^\sigma\, (\Tr_{\C^q}\otimes\calT) \bigg(p \prod_{j=1}^d \partial_{\sigma(j)}p \bigg), 
\end{equation*}
where $S_d$ is the permutation group of $d$ letters.
\end{thm}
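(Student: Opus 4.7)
The plan is to mirror the proof of Theorem \ref{thm:higher_dim_chern_number_odd} (the odd case) using the even local index formula applied to the spectral triple of equation \eqref{eq:spin_semfinite_spec_trip}. That formula writes the pairing as
\[
  \Index_{\tau\otimes\Tr_{\C^{2q}}}\!\bigl(\hat{p}(F_X\otimes 1_{2q})_{+}\hat{p}\bigr)
  =\res_{r=(1-d)/2}\sum_{m=0,\,\mathrm{even}}^{d}\phi_m^{r}(\mathrm{Ch}^m(p)-\mathrm{Ch}^m(1_p)).
\]
I would evaluate this residue in three steps: (i) show only the top term $m=d$ contributes; (ii) reduce $\phi_d^{r}$ on the top Chern character to a zeta function of the type handled by Proposition \ref{prop:residue_trace_formula}; (iii) perform the combinatorial simplifications coming from the spinor supertrace to recover the stated formula, working with $q=1$ as in the odd case.

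\textbf{Reduction and evaluation.} For step (i), the argument of Lemma \ref{prop:only_need_top_term_in_local_index_odd_case} applies directly, with the spinor trace $\Tr_{\C^\nu}$ replaced by the supertrace $\mathrm{str}(\cdot)=\Tr_{\C^\nu}(\gamma\,\cdot\,)$ for $\gamma=(-i)^{d/2}\gamma^1\cdots\gamma^d$. Each commutator $[X,a_l]=\sum_{j}\partial_j(a_l)\,\hat\otimes\,\gamma^{j}$ contributes exactly one Clifford generator, and $\mathrm{str}(\gamma^{j_1}\cdots\gamma^{j_m})$ vanishes unless $m=d$ and $(j_1,\ldots,j_d)$ is a permutation of $(1,\ldots,d)$; meanwhile Lemma \ref{lem:square-smo} ensures the operator trace converges, so for $0<m<d$ the overall supertrace vanishes identically for $\Re(r)$ large. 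The $m=0$ term contributes nothing because $\mathrm{str}(1_{\C^\nu})=\Tr(\gamma)=0$ as $\gamma$ has equally many $\pm 1$ eigenvalues; and $\mathrm{Ch}^d(1_p)=0$ because $[X,1_p]=0$ on the scalar matrix $1_p$. Thus only $\phi_d^{r}(\mathrm{Ch}^d(p))$ survives. For step (ii), substitute $\mathrm{Ch}^d(p)=(-1)^{d/2}\frac{d!}{2(d/2)!}(2p-1)\otimes p^{\otimes d}$ into $\phi_d^{r}$ and expand $[X,p]^{d}$ in the Clifford multi-index basis. Following the proof of Theorem \ref{thm:higher_dim_chern_number_odd}, push every $R_s(\lambda)=(\lambda-(1+s^2+X^2))^{-1}$ to the right modulo a function holomorphic at $r=(1-d)/2$, perform the Cauchy $\lambda$-integral and the $s$-integral explicitly, and apply Proposition \ref{prop:residue_trace_formula} to evaluate each resulting residue $\res_{z=d}\Tr_\tau\bigl(a\,(1+X^2)^{-z/2}\bigr)=\mathrm{Vol}_{d-1}(S^{d-1})\,\calT(a)$.

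\textbf{Combinatorics and main obstacle.} After step (ii) the residue reduces, up to the accumulated constants, to an expression of the form
\[
  (2i)^{d/2}\sum_{\sigma\in S_d}(-1)^\sigma\,\calT\!\left((2p-1)\prod_{j=1}^{d}\partial_{\sigma(j)}(p)\right),
\]
where the prefactor $(2i)^{d/2}$ comes from $\mathrm{str}(\gamma^{\sigma(1)}\cdots\gamma^{\sigma(d)})=(-1)^\sigma(2i)^{d/2}$. To replace $(2p-1)$ by $p$, I would use the identity $p(\partial_j p)p=0$, which follows from differentiating $p^2=p$ and which says each $\partial_j p$ is off-diagonal in the decomposition $1=p+(1-p)$. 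For even $d$ the product $\prod_j\partial_{\sigma(j)}(p)$ is therefore block-diagonal, and a single cyclic rotation inside the trace interchanges its two diagonal entries while introducing a sign $(-1)^{d-1}=-1$; hence $\sum_\sigma(-1)^\sigma\calT\bigl(\prod_j\partial_{\sigma(j)}p\bigr)=0$ and under antisymmetrisation $(2p-1)$ can be replaced by $2p$. Collecting the Chern-character constant $(-1)^{d/2}d!/(2(d/2)!)$, the resolvent-cocycle normalisation $\eta_d/\sqrt{2\pi i}$, the combined Cauchy and $s$-integral Gamma factor, the sphere volume $\mathrm{Vol}_{d-1}(S^{d-1})=2\pi^{d/2}/\Gamma(d/2)$, the supertrace factor $(2i)^{d/2}$ and the final factor $2$ from $(2p-1)\mapsto p$ telescopes to $(-2\pi i)^{d/2}/(d/2)!$ as claimed. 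The hardest part will be exactly this constant-chasing: every individual step is standard local-index-formula bookkeeping, but the even case mixes the supertrace, the $(2p-1)$ factor and the cyclic-cancellation argument, so a single lost factor of two or sign would spoil the match with the stated formula.
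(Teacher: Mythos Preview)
Your proposal is correct and follows essentially the same route as the paper: reduce to the top term $m=d$ via the spinor (super)trace, push the resolvents to the right, apply Proposition~\ref{prop:residue_trace_formula}, and eliminate the ``$-1$'' in $(2p-1)$ via a cyclic-permutation sign argument. One small remark: the paper's treatment of the $(2p-1)\to 2p$ step is slightly more direct than yours---it simply cycles $[X_{\sigma(d)},p]$ to the front of the trace and observes that this amounts to composing $\sigma$ with an odd permutation, so the antisymmetrised sum equals its own negative; the off-diagonal identity $p(\partial_j p)p=0$ and the block-diagonal picture are true but not needed for that cancellation.
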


Like the setting with $d$ odd, our computation can be 
simplified with some preliminary results. We again focus on the case 
$q=1$.
\begin{lemma} 
\label{lemma:only_top_term_of_local_index_survives_even_case}
The index pairing reduces to the computation 
$\res\limits_{r=(1-d)/2}\phi_d^r(\mathrm{Ch}^d(p))$.
\end{lemma}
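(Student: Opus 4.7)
The plan is to mirror the argument of Lemma \ref{prop:only_need_top_term_in_local_index_odd_case}, replacing the ordinary spinor trace by the supertrace $\mathrm{Str}_{\C^\nu} = \Tr_{\C^\nu}(\gamma\,\cdot\,)$, where $\gamma = (-i)^{d/2}\gamma^1 \cdots \gamma^d$ is the grading on $\C^\nu$. As in the odd case, each graded commutator $[X,\pi(a_l)] = \sum_{j=1}^d \pi(\partial_j a_l)\hat\otimes\gamma^j$ introduces exactly one Clifford generator, while the resolvent $R_s(\lambda) = (\lambda - (1+s^2+X^2))^{-1}$ is diagonal on $\C^\nu$. Hence in each integrand of $\phi_m^r(\mathrm{Ch}^m(p))$ the spinor factor is a linear combination of $m$-fold products $\gamma^{j_1}\cdots\gamma^{j_m}$, and $\mathrm{Str}_{\C^\nu}(\gamma^{j_1}\cdots\gamma^{j_m})$ vanishes unless $m = d$ and the indices are all distinct.

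First I would dispose of the unital-representative contributions: since $1_p = \pi^q(p)$ lies in $M_q(\C)$, the commutator $[X,1_p]$ vanishes, and therefore $\phi_m^r(\mathrm{Ch}^m(1_p))$ is identically zero for every $m > 0$. Next, the $m = 0$ contribution $\phi_0^r(p - 1_p)$ contains no Clifford generators; after taking the Cauchy integral in $\lambda$ and the integral in $s$ as in the proof of Proposition \ref{prop:residue_trace_formula}, the spinor factor reduces to $\mathrm{Str}_{\C^\nu}(1_{\C^\nu}) = \Tr_{\C^\nu}(\gamma) = 0$ for $d \geq 2$, so this term vanishes as well. For intermediate even $m$ with $0 < m < d$, the spinor supertrace argument above forces $\phi_m^r(\mathrm{Ch}^m(p)) = 0$. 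Only the top term $m = d$ can contribute to the residue.

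The main technical obstacle, exactly as in the odd case, is justifying the factorisation of $\Tr_\tau\hat\otimes\Tr_{\C^\nu}(\gamma\,\cdot\,)$ through the Clifford degrees of freedom and the commutation of resolvents past the commutators modulo terms holomorphic at $r = (1-d)/2$. For $\Re(r)$ sufficiently large, the relevant operators are trace-class by the smooth summability established in Theorem \ref{prop:semifinite_spec_trip}, and the estimates used in Proposition \ref{prop:residue_trace_formula} permit the required manipulations. Analytic continuation back to $r = (1-d)/2$ then leaves only the $m = d$ contribution, reducing the index pairing to $\res_{r=(1-d)/2}\phi_d^r(\mathrm{Ch}^d(p))$ as claimed.
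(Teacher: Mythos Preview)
Your argument is correct and follows essentially the same approach as the paper: dispose of the $1_p$ terms via $[X,1_p]=0$, kill the $0<m<d$ terms by the vanishing of the spinor (super)trace on products of fewer than $d$ Clifford generators, and handle $m=0$ by noting the integrand is a scalar on $\C^\nu$ so the graded trace picks up $\Tr_{\C^\nu}(\gamma)=0$. Your phrasing of the $m=0$ step via $\mathrm{Str}_{\C^\nu}(1)=0$ is in fact slightly cleaner than the paper's ``symmetry between $\pm1$ eigenspaces'' language, though the content is identical. One small remark: the commutation of resolvents past commutators that you mention in your last paragraph is not needed for this lemma---that manoeuvre belongs to the subsequent evaluation of the surviving $m=d$ term in Theorem~\ref{thm:higher_dim_chern_number_even}, not to the vanishing argument here.
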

\begin{proof}
We first note that for $m>0$, $\phi_m^r(\mathrm{Ch}(1_{p})) = 0$ as 
these terms involve the commutators $[X,1_{p}] = 0$. The proof 
used in Lemma \ref{prop:only_need_top_term_in_local_index_odd_case} 
also holds here to show that $\phi_m^r(\mathrm{Ch}^m({p}))$ does 
not contribute to the index pairing for $0<m<d$. The $m=0$ term 
is of the form
$$  
\phi_0^r(p-1_{p}) = 2 \int_0^\infty\! \Tr_\tau\!\left( \gamma (p-1_{p})(1+s^2+X^2)^{-d/2-r}\right)\mathrm{d}s, 
$$
Because there is a symmetry of the operator $(p-1_{p})(1+s^2+X^2)^{-d/2-r}$ 
between the $\pm 1$ eigenspaces of the grading operator
$\gamma = (-i)^{d/2}\gamma^1\gamma^2\cdots\gamma^d$, the graded trace 
will vanish provided $\Re(r)$ is sufficiently large. Therefore 
$\phi_0^r(p-1_{p})$ analytically continues as a function 
holomorphic in a neighbourhood of $r=(1-d)/2$, hence the residue will vanish.
\end{proof}

\begin{proof}[Proof of Theorem \ref{thm:higher_dim_chern_number_even}]
Lemma \ref{lemma:only_top_term_of_local_index_survives_even_case} 
implies our index computation is reduced to
\begin{align*}
  \left\langle [p]-[1_{p}],\left[(\calA,\calH, X_\mathrm{even}) \right]\right\rangle &= \res_{r=(1-d)/2} \phi_d^r(\mathrm{Ch}^d(p)),
\end{align*} 
which is a residue at $r=(1-d)/2$ of the term
$$
 \frac{(-1)^{d/2}d!\eta_d}{(d/2)!2\pi i} \!\int_0^\infty\!\! s^d\, \Tr_\tau\!\left(\gamma\! \int_\ell 
 \lambda^{-d/2-r} (2p-1) R_s(\lambda) [X,p] R_s(\lambda)\cdots [X,p]R_s(\lambda) \,\mathrm{d}\lambda\right)\!\mathrm{d}s
$$
with $\eta_d =2^{d+1} \frac{\Gamma(d/2+1)}{\Gamma(d+1)}$.
Like the case of $d$ odd, we can move the resolvent terms to 
the right up to a holomorphic error in order to take the Cauchy integral. 
Proposition \ref{prop:residue_trace_formula}
also implies that the semifinite trace 
$\Tr_\tau\!\left(\gamma(2{p}-1)([X,{p}])^d(1+X^2)^{-s/2}\right)$ 
has at worst a simple pole at $s=d$. Computing the residue explicitly
using the formula of Definition \ref{def:resolvent_cocycle_defn}, we find
$$  
\res_{r=(1-d)/2} \phi_d^r(\mathrm{Ch}^d(p)) 
= \frac{(-1)^{d/2}}{2((d/2)!)} ((d/2)-1)!\,\res_{z=d} 
\Tr_\tau\!\left(\gamma (2{p}-1)([X,{p}])^d(1+X^2)^{-z/2}\right), 
$$
or
$$ 
\Index_\tau(\hat{p} (F_X)_+ \hat{p}) =  (-1)^{d/2}  \frac{1}{d} \res_{z=d} \Tr_\tau\!\left(\gamma (2{p}-1)([X,{p}])^d(1+X^2)^{-z/2}\right). 
$$

Next we compute 
\begin{align*}
   [X,{p}]^d &= \sum_{J=(j_1,\ldots,j_d)}[X_{j_1},p]\cdots [X_{j_d},{p}] \,\hat\otimes\, \gamma^{j_1}\cdots \gamma^{j_d} 
     =  i^{d/2} \sum_{\sigma\in S_d}(-1)^\sigma [X_{\sigma(1)},{p}]\cdots [X_{\sigma(d)},{p}] 
     \,\hat\otimes\, \gamma
\end{align*}
as $\gamma = (-i)^{d/2}\gamma^1\cdots\gamma^d$.
Since $[X,p]\in \calB_1^\infty(X,d)$ we can 
cycle the final term $[X_{\sigma(d)},{p}]$
in this product to the front when we apply the trace, to find that 
\begin{align*}
&\Tr_\tau\!\bigg(\gamma\sum_{\sigma\in S_d}(-1)^\sigma  \Big([X_{\sigma(1)},{p}]\cdots [X_{\sigma(d)},{p}] 
     \otimes \gamma \Big) (1+X^2)^{-z/2}\bigg)\\
 &    =\Tr_\tau\bigg( \sum_{\sigma\in S_d}(-1)^\sigma \Big([X_{\sigma(1)},{p}]\cdots [X_{\sigma(d)},{p}] 
     (1+|X|^2)^{-z/2}\Big) \hat\otimes\, {\rm Id}_{\C^\nu} \bigg)\\
&     =\Tr_\tau\bigg( \sum_{\sigma\in S_d}(-1)^\sigma \Big( [X_{\sigma(d)},{p}][X_{\sigma(1)},{p}]\cdots [X_{\sigma(d-1)},{p}] 
     (1+|X|^2)^{-z/2} \Big) \hat\otimes\, {\rm Id}_{\C^\nu} \bigg).
\end{align*}
Since the cyclic permutation exchanging the first and last term is odd, we see that this sum
runs over the same set of permutations twice, once with a plus sign and once with a minus sign.
Hence for the real part of $z$ greater than $d$ we have
$$
\Tr_\tau\!\left(\gamma ([X,{p}])^d(1+X^2)^{-z/2}\right)=0,
$$
and so we need only compute the remaining term with `integrand' $2p([X,p])^d$.
As above
$$ 
{p}([X,{p}])^d =  {p} \sum_{\sigma\in S_d}(-1)^\sigma 
\prod_{j=1}^d \partial_{\sigma(j)}({p})\,\hat\otimes\, \gamma^j. 
$$
Therefore, using the relation 
$\Tr_{\C^\nu}(\gamma \gamma^1\cdots\gamma^d) = i^{d/2} 2^{d/2-1}$ 
and Proposition \ref{prop:residue_trace_formula},
\begin{align*}
 \Index_{\tau} (  \hat{p}(F_X)_+ \hat{p}) &=   (-1)^{d/2}  \frac{1}{d} \res_{z=d} \Tr_\tau\!\left(\gamma \,2{p}([X,{p}])^d(1+X^2)^{-z/2} \right) \\
    &\hspace{0cm}= \frac{(-2i)^{d/2} \mathrm{Vol}_{d-1}(S^{d-1}) }{d}\, 
    \calT \bigg(p\sum_{\sigma\in S_d}(-1)^\sigma 
      \prod_{j=1}^d \partial_{\sigma(j)}({p}) \bigg)
\end{align*}
We use the 
equation $\mathrm{Vol}_{d-1}(S^{d-1}) = \frac{d\pi^{d/2}}{(d/2)!}$ for 
$d$ even to simplify
\begin{equation} \label{eq:even_dims_chern_number}
   \Index_\tau(\hat{p} (F_X)_+ \hat{p}) =   \frac{(-2\pi i)^{d/2}}{(d/2)!} 
   \sum_{\sigma\in S_d}(-1)^\sigma \, \calT \bigg({p} \prod_{j=1}^d \partial_{\sigma(j)}({p})\bigg).  
   \qedhere
\end{equation}
\end{proof}

We remark that Equation \eqref{eq:even_dims_chern_number} appears in the case $B=C(\Omega)$ and $d=2$ 
in~\cite{Xia88, NB90}.
To relate Equation \eqref{eq:even_dims_chern_number} to the results in~\cite{PLB13,PSBbook,PSBKK}, 
we note that we have used the derivation $\partial_j(a) = [X_j,a]$, whereas 
Prodan et al. use $\tilde{\partial}(a) = \pm i[X_j,a]$. Applying our argument 
with $\tilde{\partial}$ as our algebraic derivation will bring in an extra factor of 
$i^d = (-1)^{d/2}$ and, hence, we have that
$$
   \Index_\tau(\hat{p} (F_X)_+ \hat{p}) =   \frac{(2\pi i)^{d/2}}{(d/2)!} \sum_{\sigma\in S_d}
      (-1)^\sigma \, \calT \bigg({p} \prod_{j=1}^d \tilde{\partial}_{\sigma(j)}({p})\bigg).
$$
We compare this expression to~\cite[Equation (4)]{PLB13} and see that, in 
the case of $B=C(\Omega)$ with invariant probability measure, we have reproduced 
the expression for the higher-dimensional even Chern numbers 
in the continuous (non-unital) setting. Of course, Theorem \ref{thm:higher_dim_chern_number_odd} 
and \ref{thm:higher_dim_chern_number_even} are valid for a
wider range of examples by taking $B$ to be a more general $C^*$-algebra.


\section{Extending the index pairing} 
\label{sec:localisation_general}

In this section we exploit the `flatness' of the (possibly noncommutative) Euclidean spaces
which comprise our observable algebra. Of course there is also the disorder
space $\Omega$, or `base algebra' $B$ more generally, 
but our operator $X$ does not see this data. As a consequence of the
flatness,
all but one term of the local index formula is identically zero, and this allows us
to extend the index pairing to a larger algebra. This larger algebra will be determined by the continuity of the 
Chern--Kubo functional computing the index.

Let $\calM = \pi_{\text{GNS}}(B)''$ denote the weak closure of $B$ 
under the GNS representation $B\to \calB[L^2(B,\tau_B)]$. 
The twisted action $\alpha$ of $\R^d$ on 
$B$ extends to $\calM$ and we can consider the von Neumann 
crossed product $ \calM \rtimes_\theta \R^d$. We note the following 
equivalent presentations,
$$
   \calM \rtimes_\theta \R^d \cong (B\rtimes_\theta\R^d)^{''} \cong \End_{B}^{00}(L^2(\R^d,B))''
$$
and so $\calM\rtimes_\theta\R^d$ is the same as the semifinite von Neumann algebra $\calN$ 
considered in the previous section. 
While we have a presentation of $\calN$ as a von Neumann crossed product, we will generally 
interpret $\calN$ as the weak closure $\calM \cong (B\rtimes_\theta\R^d)^{''}$ in 
$\calB[L^2(\R^d)\otimes L^2(B,\tau_B)]$. We denote the  
representation of $\calN$ on $L^2(\R^d)\otimes L^2(B,\tau_B)$ by $\wt{\pi}$.

The operator $X = \sum_{j=1}^d X_j\hat\otimes\gamma^j$
is affiliated to $\calN \hat\otimes \End(\bigwedge^*\R^d)$ and is measurable with 
respect to the trace ${\Tr}_\tau \hat\otimes \Tr_{\bigwedge^*\R^d}$. 
To identify the larger algebra to which the Chern-Kubo formula extends we will
first consider the  Fr\'{e}chet $\ast$-algebras $\calB_2(X,d)$ and $\calB_1(X,d)$ 
introduced in Appendix 
\ref{subsec:cgrs2_preliminaries}. These algebras give us an arena to study summability,
but the topology on the algebras $\calB_2(X,d)$ and $\calB_1(X,d)$ 
will prove unsuitable and we will need to consider
subalgebras endowed with different topologies.
\begin{prop} 
\label{prop:qc_condition}
Suppose that $g_1,\,g_2 \in  \calN$ 
are such that $g_j(x)\in \Dom(\tau_B)^{1/2}$ 
for almost all $x$ and satisfy the bound
\begin{equation}  
\label{eq:quasicontinuous_condition}
  \int_{\R^d}\! (1+|x|^2)^{n} \tau_B( |g_j(x)|^2) \,\mathrm{d}x < \infty, \qquad 
  j=1,2, \,\,\, n\in\N_+.
\end{equation}
Then $g_1g_2 \in \calB_1^n(X,d)$. In particular, any projection in 
$\calN$ that satisfies Equation \eqref{eq:quasicontinuous_condition} 
is in $\calB_1^n(X,d)$.
\end{prop}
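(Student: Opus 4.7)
The plan is to show each individual $g_j$ lies in the Fr\'{e}chet $\ast$-algebra $\calB_2^n(X,d)$, and then deduce $g_1 g_2 \in \calB_1^n(X,d)$ via the operator H\"older-type relation $\calB_2^n(X,d)\cdot \calB_2^n(X,d)\subset \calB_1^n(X,d)$. This is the natural extension, at the level of von Neumann elements of $\calN$, of the route taken in Lemmas \ref{lemma:Hilbert_schmidt_estimate} and \ref{lem:square-smo} for compactly supported $f\in C_c(\R^d, B)$.

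First I would extend the kernel computation of Lemma \ref{lemma:Hilbert_schmidt_estimate} to any $g$ satisfying the decay hypothesis. The operator $\wt\pi(g)(1+X^2)^{-s/4}$ has the same formal integral kernel
$$
k_g(x,y) = \theta(-x,x-y)\,\alpha_{-y}(g(x-y))(1+|y|^2)^{-s/4}\,\hat\otimes\,\mathrm{Id}_{\bigwedge^*\R^d},
$$
so the same cocycle identity $\theta(-x,x-y)\theta(-y,y-x)=1$ together with the change of variables and invariance of $\tau_B$ used in Lemma \ref{lemma:Hilbert_schmidt_estimate} yield
$$
\bigl\|\wt\pi(g)(1+X^2)^{-s/4}\bigr\|_2^2 \le C_s \int_{\R^d} \tau_B\bigl(g(u)^*g(u)\bigr)\,du,
$$
finite for $s>d$ by the $n=0$ case of \eqref{eq:quasicontinuous_condition}. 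Iterating this computation in the style of Lemma \ref{lem:square-smo} (with the factor $(|x|-|y|)^m$ bounded by $|x-y|^m$) then produces
$$
\bigl\|\delta^m(\wt\pi(g))(1+X^2)^{-s/4}\bigr\|_2^2 \le C_s \int_{\R^d} (1+|u|^2)^{m}\,\tau_B\bigl(g(u)^*g(u)\bigr)\,du,
$$
which is finite for all $0\le m\le n$ by hypothesis. Hence $g_j\in \calB_2^n(X,d)$.

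With both factors in $\calB_2^n(X,d)$, the Leibniz expansion $\delta^m(g_1 g_2)=\sum_{k=0}^m \binom{m}{k}\delta^k(g_1)\delta^{m-k}(g_2)$ combined with the H\"older/Cauchy--Schwarz inequality for $\Tr_\tau$ applied termwise gives $g_1g_2\in \calB_1^n(X,d)$. The projection statement is the special case $g_1=g_2=p$, since then $g_1 g_2 = p^2 = p$.

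The main obstacle I anticipate is the rigorous interpretation of $\wt\pi(g)$ as an integral kernel and of $\delta^m(\wt\pi(g))$ when $g\in \calN$ is not in $C_c(\R^d,B)$: both the commutator $[|X|,\wt\pi(g)]$ and the closed formula for $k_g$ have to be justified beyond the algebraic crossed product. I would address this by approximation, truncating the measurable representative $x\mapsto g(x)$ by compactly supported cutoffs $\chi_R\uparrow 1$ on $\R^d$, applying Lemmas \ref{lemma:Hilbert_schmidt_estimate} and \ref{lem:square-smo} to $\chi_R\cdot g$, and passing to the limit by dominated convergence, with \eqref{eq:quasicontinuous_condition} supplying the dominating integral throughout.
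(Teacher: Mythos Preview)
Your approach is essentially the same as the paper's: show each $g_j\in\calB_2^n(X,d)$ via the kernel computation of Lemmas \ref{lemma:Hilbert_schmidt_estimate} and \ref{lem:square-smo}, then use $\calB_2^n(X,d)\cdot\calB_2^n(X,d)\subset\calB_1^n(X,d)$ together with the Leibniz rule. One point you skip but the paper makes explicit: the seminorms $\calQ_n$ on $\calB_2(X,d)$ also contain the operator norm $\|\delta^k(g_j)\|$, which the paper controls by noting that the $L^1$-norm dominates the crossed-product norm (so $\|\delta^k(g_j)\|$ is bounded by $\int |x|^k\|g_j(x)\|\,dx$, which in turn is dominated by the weighted $L^2$-integral in \eqref{eq:quasicontinuous_condition} after Cauchy--Schwarz). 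Your explicit attention to approximating $g\in\calN$ by compactly supported cutoffs is a point the paper leaves implicit.
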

The proof of Lemma \ref{lemma:Hilbert_schmidt_estimate} 
also shows that the ambiguity 
of the notation $|g(x)|$ (as convolution or 
pointwise product absolute value) disappears. 
\begin{proof}
Recall from the appendix, the norms $\varphi_s$ 
on $\calB_2^{n}(X,d)^2$. A short calculation shows that in our case
$$
\varphi_{d+1/m}( | \delta^k(g) |^2 )
=C_{d+1/m}\int_{\R^d} 2|x|^{2k}  \tau_B(|g(x)|^2) \,\mathrm{d}x  + \| \delta^k(g) \|^2,
$$
where we have used the 
cyclicity of $\tau_B$, $\tau_B(b^*b)=\tau_B(bb^*$). 
We  use this equality to 
estimate in $\calB_2^{n}(X,d)^2 \subset \calB_1^{n}(X,d)$,
\begin{align*}
\calP_{n,l}(g_1g_2)
&\leq \sum_{k=0}^l \calQ_n(\delta^k(g_1)) \calQ_n(\delta^{l-k}(g_2)) \\
&\hspace{-1cm}\leq \sum_{k=0}^l \Big( \big(\| \delta^k(g_1) \|^2 +2\int_{\R^d} |x|^{2k} \tau(|g_1(x)|^2)\,\mathrm{d}x \big) 
     \big( \| \delta^{l-k}(g_2) \|^2 +2\int_{\R^d} |z|^{2(l-k)} \tau(|g_2(z)|^2\,\mathrm{d}z \big) \Big)  \\
&\hspace{-1cm}\leq  \sum_{k=0}^l C_k \int_{\R^d} (1+3|x|^{2k}) \tau(|g_1(x)|^2)\,\mathrm{d}x  \times \int_{\R^d} (1+3|z|^{2(l-k)}) \tau(|g_2(z)|^2)\,\mathrm{d}z \\
&\hspace{-1cm}\leq \textrm{max}_{j}\, C \int_{\R^d} (1+|x|^2)^{l} \tau(|g_j(x)|^2)\,\mathrm{d}x
\end{align*}
The third inequality 
uses the fact that the $L^1$-norm dominates 
the crossed product norm. 
Hence the seminorms $P_{n,l}$ are finite for $l\leq n$.
\end{proof}

\begin{lemma}
\label{lem:tight-sum}
If $a \in \calB_1^{\lfloor d/2\rfloor + 1}(X,d)$, then
$a(1+X^2)^{-d/2-r}\in\calL^1(\calN, \Tr_\tau)$ for all $r>0$. 
\end{lemma}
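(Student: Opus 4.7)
The plan is to exploit the factorization $\calB_1^n(X,d) \supset \calB_2^n(X,d)\cdot\calB_2^n(X,d)$ together with a pseudo-differential expansion to split $a(1+X^2)^{-d/2-r}$ as a H\"older product of two Hilbert--Schmidt operators.

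First I would take a decomposition $a = a_1 a_2$ with $a_1,a_2\in\calB_2^n(X,d)$ for $n := \lfloor d/2\rfloor + 1$, so that $\delta^k(a_j)(1+X^2)^{-t/4}\in\calL^2(\calN,\Tr_\tau)$ for all $0 \leq k \leq n$ and all $t > d$. Using that $r > 0$, I fix an exponent $\alpha$ with $d/4 < \alpha < d/4 + r$ and write
\[
a(1+X^2)^{-d/2-r} = \bigl[a_1(1+X^2)^{-\alpha}\bigr] \cdot \bigl[(1+X^2)^{\alpha} a_2 (1+X^2)^{-d/2-r}\bigr].
\]
The first bracket is Hilbert--Schmidt since $4\alpha > d$ and $a_1 \in \calB_2(X,d)$, so by H\"older's inequality in $(\calN, \Tr_\tau)$ it suffices to show that the second bracket is Hilbert--Schmidt as well.

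To analyse the second bracket I would pull $(1+X^2)^\alpha$ through $a_2$ using the standard finite Taylor-type expansion coming from the Connes--Moscovici pseudo-differential calculus (as recalled in the appendix and used throughout \cite{CGRS2}):
\[
(1+X^2)^{\alpha}a_2 = \sum_{k=0}^{n-1} c_k(\alpha)\,\delta^k(a_2)\,(1+X^2)^{\alpha - k/2} + R_n,
\]
where the numerical constants $c_k(\alpha)$ come from the Cauchy integral representation of the fractional power and $R_n$ is an iterated-resolvent remainder built from $\delta^n(a_2)$. Each of the $n$ leading terms, composed with $(1+X^2)^{-d/2-r}$, equals a constant multiple of $\delta^k(a_2)(1+X^2)^{\alpha - k/2 - d/2 - r}$; since $\alpha < d/4 + r$ the resolvent exponent is strictly less than $-d/4$, so this operator lies in $\calL^2(\calN,\Tr_\tau)$ by the defining property of $\calB_2(X,d)$ applied to $\delta^k(a_2)$.

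The genuine obstacle is estimating the remainder $R_n(1+X^2)^{\alpha - d/2 - r}$ in Hilbert--Schmidt norm. The choice $n = \lfloor d/2\rfloor + 1$ is precisely the smallest integer with $n > d/2$, which is exactly what is needed for the additional decay factor $(1+X^2)^{-n/2}$ generated by the iterated Cauchy integral to beat the spectral dimension; once combined with $\delta^n(a_2) \in \calB_2(X,d)$ via another H\"older estimate and the uniform operator-norm bound $\|(\mu + 1 + X^2)^{-1}\| \leq (\mu+1)^{-1}$, the remainder is Hilbert--Schmidt. The required integral estimates are routine bookkeeping exercises that have been carried out in detail in the proof of the local index formula in \cite{CGRS2}, so I would simply invoke those lemmas rather than reprove them here.
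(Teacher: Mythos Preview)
Your approach is correct and essentially the same as the paper's: both factor $a$ into products of elements of $\calB_2^{\lfloor d/2\rfloor+1}(X,d)$ (strictly a finite sum $a=\sum_j b_jc_j$, as the paper obtains via \cite[Lemma~1.13, Proposition~1.14]{CGRS2}, not a single product), then commute a positive power of the resolvent through one factor using iterated applications of $\delta$ together with the integral formula for fractional powers, and close with the H\"older inequality. The only cosmetic difference is that the paper first symmetrises, writing $a(1+X^2)^{-d/2-r}=(1+X^2)^{d/4+r/2}\bigl[(1+X^2)^{-d/4-r/2}a(1+X^2)^{-d/4-r/2}\bigr](1+X^2)^{-d/4-r/2}$, and then handles the conjugation by $(1+|X|)^{d/2+r}$ in two explicit pieces---an $m=\lfloor d/2\rfloor$-fold iteration of $(1+|X|)T(1+|X|)^{-1}=T+\delta(T)(1+|X|)^{-1}$ followed by the fractional-power integral for the remaining exponent in $[0,1)$---whereas you use an asymmetric split with a free parameter $\alpha\in(d/4,d/4+r)$ and package the whole commutator expansion as a single pseudo-differential Taylor formula deferred to \cite{CGRS2}.
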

\begin{proof}
We start by writing
$$
a(1+X^2)^{-d/2-r}
=(1+X^2)^{d/4+r/2}\Big((1+X^2)^{-d/4-r/2}a(1+X^2)^{-d/4-r/2}\Big)(1+X^2)^{-d/4-r/2}.
$$
Now by \cite[Lemma 1.13, Proposition 1.14]{CGRS2} we can write 
$a=\sum_{j=1}^4b_jc_j$ with $b_j,\,c_j\in \calB_2(X,d)$. Since $a\in \calB_1^{{\lfloor d/2\rfloor + 1}}(X,d)$,
we can then show that each $b_j,\,c_j\in \calB_2^{{\lfloor d/2\rfloor + 1}}(X,d)$. 
For notational simplicity we write $a=bc$ with $b,\,c\in \calB_2^{{\lfloor d/2\rfloor + 1}}(X,d)$.
Then we know that for all $r>0$,
$$
(1+X^2)^{-d/4-r/2}a(1+X^2)^{-d/4-r/2}\in \calL^1(\calM\rtimes_\theta \R^d, \Tr_\tau).
$$
Thus we have reduced the problem to considering the behaviour of the one-parameter
group $T\mapsto \sigma^z(T)=(1+X^2)^{z/2}T(1+X^2)^{-z/2}$ 
as in \cite[Section 1.4]{CGRS2}. In particular, for sufficiently smooth elements 
$b,\,c\in \calB_2(X,d)$,
we wish to show that 
$$
(1+X^2)^{z/2-d/2-r}bc(1+X^2)^{-z/2-d/2-r}
$$
is trace class.
Since
$(1+X^2)^{1/2}(1+|X|)^{-1}$ is a bounded invertible element in $L^\infty(|X|)$, 
we can simplify the computations by removing the 
square roots.
It also suffices to consider $0<r<1/2$,
and so we let $m$ be the greatest integer less than or equal to $d/2$. Iterating
the identity $(1+|X|)T(1+|X|)^{-1}=T+\delta(T)(1+|X|)^{-1}$ 
for $T\in \calB^{\lfloor d/2\rfloor}_1(X,d)$ we have
$$
(1+|X|)^{d/2+r}T(1+|X|)^{-d/2-r}=(1+|X|)^{d/2+r-m}\sum_{j=0}^m\delta^j(T)(1+|X|)^{-d/2-r+m-j},
$$
and so we will be done if we can show that for $T_1,\,T_2\in \calB_2^1(X,d)$ and
$0\leq\alpha<1$
$$
(1+|X|)^{\alpha}T_1T_2(1+|X|)^{-\alpha}\,-\, T_1T_2\,\in\,\calB_1(X,d).
$$
For this we use the integral formula for fractional powers, \cite[p701]{CP1}, and write
\begin{align*}
(1+|X|)^{\alpha}T_1T_2(1+|X|)^{-\alpha}
&=(1+|X|)^{\alpha}\,T_1T_2\,\frac{\sin(\pi\alpha)}{\pi}\,
\int_0^\infty \lambda^{-\alpha}(1+\lambda+|X|)^{-1}\,d\lambda.
\end{align*}
Taking commutators yields
\begin{align*}
&(1+|X|)^{\alpha}T_1T_2(1+|X|)^{-\alpha}
=T_1T_2
+(1+|X|)^{\alpha}\,\frac{\sin(\pi\alpha)}{\pi}\,
\int_0^\infty \lambda^{-\alpha}[T_1T_2,(1+\lambda+|X|)^{-1}]\,d\lambda\\
&=T_1T_2
-(1+|X|)^{\alpha}\,\frac{\sin(\pi\alpha)}{\pi}\,
\int_0^\infty \lambda^{-\alpha}(1+\lambda+|X|)^{-1}
(\delta(T_1)T_2+T_1\delta(T_2))(1+\lambda+|X|)^{-1}\,d\lambda.
\end{align*}
Using $(1+|X|)^{\alpha}(1+\lambda+|X|)^{-1}\leq 1$ and 
$\Vert (1+\lambda+|X|)^{-1}\Vert\leq \frac{1}{1+\lambda}$ we find that
we can estimate the $n$-th seminorm $\calP_n$ on $\calB_1(X,d)$ by
the $n$-th seminorm $\calQ_n$ on $\calB_2(X,d)$ via
\begin{align*}
\calP_n\Big((1+|X|)^{\alpha}&\,T_1T_2\,(1+|X|)^{-\alpha}\,-\,T_1T_2\Big)\\
&\leq
\frac{\sin(\pi\alpha)}{\pi}\,
\int_0^\infty \lambda^{-\alpha} \frac{1}{1+\lambda}
\Big(\calQ_n(\delta(T_1))\calQ_n(T_2)+\calQ_n(T_1)\calQ_n(\delta(T_2))\Big)\,d\lambda
\end{align*}
and this is finite for every $\alpha>0$. In particular for $T_1,\,T_2\in \calB_2^1(X,d)$,
for all $r>0$ and $\alpha>0$, the operator
$$
(1+|X|)^{\alpha-d-r}\,T_1T_2\,(1+|X|)^{-\alpha-d-r}
$$
is trace class.

Lastly, we note that in the above proof, at no point do we need to apply $\delta$ 
to either of the factors $b,\,c$ more than $\lfloor d/2 \rfloor +1$ times.
\end{proof}

To extend our index pairing to a larger algebra, we use the Sobolev spaces and 
Sobolev algebra considered in~\cite{PLB13,PSBbook} for the discrete setting.

\begin{defn} \label{def:Sob_space}
The Sobolev spaces $\calW_{r,p}$ are defined as the Banach spaces
obtained as the completion of $C_c(\R^d,B)$ in the norms 
$$
  \| f\|_{r,p} = \sum_{|\alpha| \leq r} \Tr_\tau \Big( |\partial^\alpha f|^p \Big)^{1/p}, \qquad r\in \N, \,\, p\in [1,\infty),
$$
where we use multi-index notation, $\alpha\in \N^d$, 
$\partial^\alpha = \partial_1^{\alpha_1}\partial_2^{\alpha_2}\cdots \partial_d^{\alpha_d}$ 
and $|\alpha| = \alpha_1 +\cdots +\alpha_d$.
\end{defn}

The Sobolev spaces are not closed under multiplication, but if we employ the 
H\"{o}lder inequality of noncommutative $L^p$-spaces (cf.~\cite[Theorem 4.2]{FK}),
$$
  \| a_1 \cdots a_k \|_{r,p} \leq \|a_1 \|_{r,p_1} \cdots \|a_k\|_{r,p_k}, \qquad 
  \frac{1}{p_1} + \cdots + \frac{1}{p_k} = \frac{1}{p},
$$
then we see that the intersection $\cap \calW_{r,p}$ of all Sobolev spaces is a $*$-algebra.

\begin{defn} 
\label{def:Sob_alg}
The Sobolev algebra $\calA_\mathrm{Sob}$ is defined as the algebraic 
span of products $\calA_\mathrm{Sob}=\mathrm{span}\{ab\,:\,a,b\in \wt{\calA}_\mathrm{Sob}\}$ 
with $\wt{\calA}_\mathrm{Sob}$ the intersection
$$
  \wt{\calA}_\mathrm{Sob} =  \Big( \bigcap_{ r\in \N,\, p\in \N_+}\!\! \calW_{r,p}  \Big) \cap \calN.
$$
\end{defn}

\begin{remark}[The topology of $\calA_\mathrm{Sob}$]
Let us emphasise that while we restrict our Sobolev algebra to be contained within the von Neumann 
algebra $\calN$, the von Neumann norm does not enter the topology of $\calA_\mathrm{Sob}$, 
which is entirely determined by the Sobolev norms $\|\cdot\|_{r,p}$. This choice of topology 
means that $\calA_\mathrm{Sob}$ is locally convex $\ast$-algebra (but not a Banach nor Fr\'{e}chet algebra). Hence,
 the topology of $\calA_\mathrm{Sob}$ is quite different from the topology determined by the operator norm.
This difference is necessary in order to meaningfully extend our index pairing. While we can make 
sense of index pairings in $\calA_\mathrm{Sob}$, 
it is not easy in general to relate $\A_\mathrm{Sob}$ to the $C^*$-algebra we first considered.

Our Sobolev algebra $\A_\mathrm{Sob}$ is defined using the span of products rather than the algebra 
$\wt{\calA}_\mathrm{Sob}$ for largely technical reasons that appear in 
the non-unital setting. For applications to topological phases, this extra detail 
is not an issue as the $K$-theoretic phase of interest is constructed out of 
the Fermi projection $P_\mu= P_\mu^2$ or $1-2P_\mu$.
\end{remark}

While the $\ast$-algebra $\calA_\mathrm{Sob}$ and its topology is quite different from the 
Fr\'{e}chet algebras $\calB_2^n(X,d)$ and $\calB_1^n(X,d)$, there is still a relationship 
between the two constructions.
	
\begin{lemma} 
\label{lemma:Sob_in_qc}
If $a\in \calA_\mathrm{Sob}$, then $a\in \calB_1^n(X,d)$ for any $n\in\N_+$.
\end{lemma}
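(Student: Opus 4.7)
The strategy is to reduce the claim to Proposition \ref{prop:qc_condition} via the $p=2$ Sobolev norm. The key calculation, which is already implicit in the proof of Lemma \ref{lemma:Hilbert_schmidt_estimate}, is that under our conventions $\Tr_\tau(\pi(g)^*\pi(g)) = \int_{\R^d} \tau_B(|g(x)|^2)\,\mathrm{d}x$, where $|g(x)|^2 = g(x)^* g(x)$ is the pointwise square. Combined with the identity $\partial^\alpha a(x) = x^\alpha a(x)$, this gives
$$
\|a\|_{r,2}^{\,2} \;=\; \sum_{|\alpha|\le r} \Tr_\tau\!\left(|\partial^\alpha a|^2\right) \;=\; \sum_{|\alpha|\le r} \int_{\R^d} x^{2\alpha}\,\tau_B(|a(x)|^2)\,\mathrm{d}x.
$$

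For $a\in\widetilde{\calA}_{\mathrm{Sob}}$, this quantity is finite for every $r\in\N$, so every polynomial moment $\int x^{2\alpha}\tau_B(|a(x)|^2)\,\mathrm{d}x$ is finite. Expanding $(1+|x|^2)^n$ as a finite sum of monomials in the $x_j^2$ then yields
$$
\int_{\R^d} (1+|x|^2)^n\, \tau_B(|a(x)|^2)\,\mathrm{d}x \;\le\; C_n \sum_{|\alpha|\le 2n} \int_{\R^d} x^{2\alpha}\,\tau_B(|a(x)|^2)\,\mathrm{d}x \;<\;\infty
$$
for every $n\in\N_+$. In particular $\tau_B(|a(x)|^2)<\infty$ for almost every $x$, so $a(x)\in\Dom(\tau_B)^{1/2}$ almost everywhere. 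Thus every element of $\widetilde{\calA}_{\mathrm{Sob}}$ satisfies the hypothesis \eqref{eq:quasicontinuous_condition} of Proposition \ref{prop:qc_condition}.

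Now apply Proposition \ref{prop:qc_condition}: for any $b,c\in\widetilde{\calA}_{\mathrm{Sob}}$ and any $n\in\N_+$, the product $bc$ lies in $\calB_1^n(X,d)$. Since by definition $\calA_{\mathrm{Sob}}=\mathrm{span}\{bc\,:\,b,c\in\widetilde{\calA}_{\mathrm{Sob}}\}$ is a finite linear span of such products, and $\calB_1^n(X,d)$ is a vector space, we conclude that $\calA_{\mathrm{Sob}}\subset\calB_1^n(X,d)$.

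The only genuine subtlety is the identification of $\Tr_\tau(|\pi(a)|^2)$ with the pointwise integral $\int\tau_B(|a(x)|^2)\,\mathrm{d}x$; this reduces, via $\theta(y,-y)=1$ and the defining formula for the adjoint $f^*$, to the computation $(f^**f)(0)=\int f(u)^*f(u)\,\mathrm{d}u$ combined with $\Tr_\tau\circ\widetilde\pi=\calT$. Once that identification is in hand, the rest of the argument is a direct appeal to Proposition \ref{prop:qc_condition} and so no deeper obstacle is expected.
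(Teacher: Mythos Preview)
Your proof is correct and follows essentially the same route as the paper. The paper shows directly that $b\in\wt{\calA}_{\mathrm{Sob}}$ implies $b\in\calB_2^n(X,d)$ by bounding the weights $\varphi_s(\delta^k(b^*b))$ in terms of Sobolev norms, and then uses $\calB_2^n(X,d)^2\subset\calB_1^n(X,d)$; it also remarks that the same conclusion follows from the condition \eqref{eq:quasicontinuous_condition}, which is precisely the route you take via Proposition~\ref{prop:qc_condition}. One small cosmetic point: your displayed identity for $\|a\|_{r,2}^{\,2}$ is not literally the square of the Sobolev norm (which is a sum of $p=2$ norms, not the square root of a sum), but what you actually use---that each individual moment $\Tr_\tau(|\partial^\alpha a|^2)=\int x^{2\alpha}\tau_B(|a(x)|^2)\,\mathrm{d}x$ is finite---follows immediately from $a\in\calW_{r,2}$ for all $r$.
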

\begin{proof}
Because $\calB_2^n(X,d)^2 \subset \calB_1^n(X,d)$, 
the result follows if we can show that $b \in \calB_2^{n}(X,d)$ for $b\in\wt{\calA}_\mathrm{Sob}$. 
Recalling the weight $\varphi_s$ from Appendix \ref{subsec:cgrs2_preliminaries}, we note that for $s>d$,
$$
  \varphi_s(b^*b) = \Tr_\tau \!\big( (1+X^2)^{-s/4} b^*b (1+X^2)^{-s/4} \big)
$$
can easily be bound by $\|b^*b\|_{r,p}$ for some $r,p$. Similarly, 
$\varphi_s(\delta^k(b^*b))$ is bound by $\|b^*b\|_{r+k,p}$ for $\delta(T) = [|X|,T]$. 
The seminorms $\calQ_n$ on $\calB_2(X,d)$ also contain the (operator) norm of the von Neumann algebra $\calN$. While 
this norm does \emph{not} enter the topology of $\calA_\mathrm{Sob}$, because $\calA_\mathrm{Sob}$ 
is defined as an intersection with $\calN$, the norm is still finite.
Hence $b\in \calB_2^k(X,d)$, a result that also follows 
using the condition from Equation \eqref{eq:quasicontinuous_condition}.
\end{proof}	

By adapting arguments developed for $\calB_1^n(X,d)$, Lemma \ref{lemma:Sob_in_qc} can then be used to obtain the following.

\begin{prop}  \label{prop:quasicts_spectrip_fin_summ}
\label{prop:quasicts_semifinite_spec_trip}
The tuple
$\left(\calA_\mathrm{Sob}\hat\otimes C\ell_{0,d}, L^2(\R^d)\otimes L^2(B,\tau_B) 
 \hat\otimes\bigwedge\nolimits^{\!*}\R^d, \sum_j X_j \hat\otimes\gamma^j \right)$ 
is a finitely summable semifinite spectral triple  
with spectral dimension $d$.
\end{prop}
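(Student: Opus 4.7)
The plan is to verify the axioms of a finitely summable semifinite spectral triple term by term, leveraging the embedding of $\calA_\mathrm{Sob}$ into the Fr\'echet algebras $\calB_1^n(X,d)$ provided by Lemma \ref{lemma:Sob_in_qc}. The easy axioms come first: $\calA_\mathrm{Sob}$ is a $\ast$-subalgebra of $\calN$ by construction, so its representation $\widetilde{\pi}$ on $L^2(\R^d)\otimes L^2(B,\tau_B)$ is inherited from $\calN$, and the Clifford generators act by bounded operators on the finite-dimensional factor $\bigwedge^{\!*}\R^d$. Self-adjointness and regularity of $X=\sum_j X_j\hat\otimes\gamma^j$, together with its affiliation to $\calN\hat\otimes\End(\bigwedge^{\!*}\R^d)$, were established in Proposition \ref{prop:crossed_prod_Kas_mod} and Theorem \ref{prop:semifinite_spec_trip}, and transfer to this setting unchanged.

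For the bounded commutator condition I would argue as follows. Writing $[X,\widetilde{\pi}(a)]=\sum_{j=1}^d [X_j,\widetilde{\pi}(a)]\hat\otimes \gamma^j$, it suffices to bound each $[X_j,\widetilde{\pi}(a)]$. For a factor $a=bc$ with $b,c\in\wt{\calA}_\mathrm{Sob}$, Lemma \ref{lemma:Sob_in_qc} gives $b,c\in\calB_2^n(X,d)$ for all $n$, which (by the definition of $\calB_2^n(X,d)$ through the seminorms $\calQ_n$ involving $\delta^k(\cdot)$ and the operator norm) forces $\delta(b)=[|X|,b]$ and $\delta(c)=[|X|,c]$ to be bounded. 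Combining this with the integral kernel description used in Lemma \ref{lem:square-smo}, where commutators with coordinate operators produce factors $(x_j-y_j)$ controlled by $|x-y|$, gives boundedness of each $[X_j,\widetilde{\pi}(bc)]$ via the Leibniz rule. Clifford generators $\rho^k\in C\ell_{0,d}$ graded-commute with $X$, so commutators with the full algebra $\calA_\mathrm{Sob}\hat\otimes C\ell_{0,d}$ extend to bounded operators.

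The compactness and summability follow from the two key tools already in the paper. By Lemma \ref{lemma:Sob_in_qc}, $\calA_\mathrm{Sob}\subset \calB_1^{\lfloor d/2\rfloor+1}(X,d)$, so Lemma \ref{lem:tight-sum} applies to yield
\[
\widetilde{\pi}(a)(1+X^2)^{-d/2-r}\in \calL^1(\calN\hat\otimes\End(\bigwedge\nolimits^{\!*}\R^d),\,\Tr_\tau\hat\otimes\Tr_{\bigwedge^{\!*}\R^d})
\]
for every $r>0$ and every $a\in \calA_\mathrm{Sob}$. Equivalently $\widetilde{\pi}(a)(1+X^2)^{-s/2}$ is trace class for all $s>d$, which is the finite summability condition with spectral dimension at most $d$. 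Taking $s\to 1$ in operator norm then shows that $\widetilde{\pi}(a)(1+X^2)^{-1/2}$ is a norm limit of trace-class operators, hence compact in $(\calN\hat\otimes\End(\bigwedge^{\!*}\R^d),\Tr_\tau\hat\otimes\Tr_{\bigwedge^{\!*}\R^d})$. That the spectral dimension equals exactly $d$ (and is not smaller) is seen by choosing a non-trivial positive $a\in C_c(\R^d,\Dom(\tau_B))\subset \calA_\mathrm{Sob}$ and noting that Proposition \ref{prop:residue_trace_formula} produces a non-vanishing simple pole of the zeta function at $s=d$.

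The main obstacle is the bounded-commutator step, because the topology on $\calA_\mathrm{Sob}$ is controlled only by the Sobolev norms $\|\cdot\|_{r,p}$ and not by the operator norm of $\calN$, while boundedness of the commutator is a statement about operator norms. The resolution is precisely the reason $\wt{\calA}_\mathrm{Sob}$ is defined as an intersection with $\calN$: this intersection forces the operator norms of elements and their iterated $\delta$-derivatives to be finite, which is what Lemma \ref{lemma:Sob_in_qc} exploits to land inside $\calB_1^n(X,d)$ for every $n$. Once that embedding is in place, the remaining verifications follow by direct application of the preceding lemmas.
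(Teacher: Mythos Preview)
Your approach is essentially the paper's: invoke Lemma~\ref{lemma:Sob_in_qc} together with Lemma~\ref{lem:tight-sum} for finite summability and spectral dimension, and appeal to regularity for the bounded-commutator axiom. The paper's proof is the two-sentence version of what you have written, and your addition of the lower bound on the spectral dimension via Proposition~\ref{prop:residue_trace_formula} is a genuine improvement in detail.

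One step in your argument is not quite right as stated. You deduce boundedness of $[X_j,\widetilde{\pi}(b)]$ from boundedness of $\delta(b)=[|X|,b]$ by noting that the kernel factor $(x_j-y_j)$ is controlled by $|x-y|$. But what boundedness of $\delta(b)$ gives you is control over the factor $|x|-|y|$, and the reverse triangle inequality $\big||x|-|y|\big|\leq |x-y|$ points the wrong way: it does not let you bound $|x-y|$ (or $|x_j-y_j|$) in terms of $\big||x|-|y|\big|$. So this particular chain of implications does not close. The paper is just as terse here (``by the regularity of elements in the Sobolev spaces and algebra''), so you have not fallen below the level of the source; but the honest route is to observe directly that $\partial_j$ preserves $\bigcap_{r,p}\calW_{r,p}$, so the question reduces to whether $\partial_j b$ (which lies in every noncommutative $L^p$) is also in $\calN$---and that is the point which needs to be either argued or stipulated, rather than deduced from control of $\delta$.
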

\begin{proof}
The operators $[X,\tilde{\pi}(g)]$ are bounded by the regularity 
of elements in the Sobolev spaces and algebra. For finite summability 
we apply Lemmas \ref{lemma:Sob_in_qc} and \ref{lem:tight-sum}.
\end{proof}

Proposition \ref{prop:quasicts_spectrip_fin_summ} is valid for both real and complex 
Sobolev algebras. We now restrict to complex pairings and the extension of 
the Chern number formulas derived in Section \ref{sec:Chern_nos}.

\begin{lemma} \label{lem:sobolev_cocycle}
The multi-linear functional
$$
  \phi(a_0,a_1,\ldots,a_d) = {\rm res}_{s=d}\Tr_\tau\!\left( a_0 \partial_1(a_1)\cdots \partial_d(a_d)(1+X^2)^{-s/2} \right), \qquad 
    a_0,\ldots,a_d \in \calA_\mathrm{Sob}
$$
is well-defined and continuous with respect to the topology on $\calA_\mathrm{Sob}$.
Furthermore, 
$$
   \phi(a_0,a_1,\ldots,a_d) = K_d \sum_{\sigma\in S_d} (-1)^\sigma 
   \calT \big(a_0 \partial_{\sigma(1)}a_1 \cdots \partial_{\sigma(d)} a_d \big).
$$
\end{lemma}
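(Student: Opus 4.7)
The plan is to combine the Sobolev algebra structure developed in Section~\ref{sec:localisation_general} with the residue calculation of Proposition~\ref{prop:residue_trace_formula}. First I would show that the operator $T := a_0\partial_1(a_1)\cdots\partial_d(a_d)$ lies in $\calB_1^n(X,d)$ for every $n$. The derivations $\partial_j=[X_j,\,\cdot\,]$ map $\calA_\mathrm{Sob}$ into itself, since this is immediate from the definition of the Sobolev norms $\|\cdot\|_{r,p}$, which already encode all partial derivatives. Combined with $\calA_\mathrm{Sob}\subset \calB_1^n(X,d)$ from Lemma~\ref{lemma:Sob_in_qc} and the fact that $\calB_1^n(X,d)$ is an algebra, it follows that $T\in\bigcap_n \calB_1^n(X,d)$. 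Lemma~\ref{lem:tight-sum} then gives that $T(1+X^2)^{-d/2-r}$ is trace-class for every $r>0$, so the zeta function $s\mapsto \Tr_\tau(T(1+X^2)^{-s/2})$ is holomorphic on $\Re(s)>d$.

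Next I would compute the residue by adapting the integral-kernel argument of Proposition~\ref{prop:residue_trace_formula}. The operator $T$, viewed as an element of $\calN \cong (B\rtimes_\theta\R^d)''$, has an associated kernel of the form $k_T(x,y)=\theta(-x,x-y)\alpha_{-y}(T(x-y))$, and tracing with $\Tr_\tau=\Tr_{L^2(\R^d)}\otimes \tau_B$ together with the $\R^d$-invariance of $\tau_B$ gives
\begin{equation*}
\Tr_\tau\!\left(T(1+X^2)^{-s/2}\right)=\calT(T)\int_{\R^d}(1+|x|^2)^{-s/2}\,\mathrm{d}x,
\end{equation*}
which extends meromorphically with a simple pole of residue $\mathrm{Vol}_{d-1}(S^{d-1})\,\calT(T)$ at $s=d$. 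The appearance of the sum over $S_d$ then comes from the Clifford structure of $X=\sum_j X_j\hat\otimes\gamma^j$: expanding $a_0[X,a_1]\cdots[X,a_d]$ as a sum over multi-indices $(j_1,\ldots,j_d)$ with Clifford coefficient $\gamma^{j_1}\cdots\gamma^{j_d}$, the spinor trace (Equation~\eqref{eq:trace_of_grading}) annihilates everything except permutations of $(1,\ldots,d)$, producing the signs $(-1)^\sigma$ via $\gamma^{\sigma(1)}\cdots\gamma^{\sigma(d)}=(-1)^\sigma\gamma^1\cdots\gamma^d$. The constant $K_d$ then absorbs the spinor trace of $\gamma^1\cdots\gamma^d$ together with $\mathrm{Vol}_{d-1}(S^{d-1})$, matching the normalisations derived in Theorems~\ref{thm:higher_dim_chern_number_odd}--\ref{thm:higher_dim_chern_number_even}.

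For continuity of $\phi$ with respect to the locally convex topology of $\calA_\mathrm{Sob}$, I would apply the noncommutative H\"older inequality recorded after Definition~\ref{def:Sob_space}: each term $\calT(a_0\partial_{\sigma(1)}a_1\cdots\partial_{\sigma(d)}a_d)$ is bounded above by $\|a_0\|_{0,p_0}\prod_{j=1}^d \|a_j\|_{1,p_j}$ for any tuple of H\"older-conjugate exponents, and every factor on the right is a defining seminorm of $\calA_\mathrm{Sob}$. Summing over $\sigma\in S_d$ yields a continuous seminorm on $\calA_\mathrm{Sob}^{d+1}$.

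The main obstacle is the passage from the setting $C_c(\R^d,\Dom(\tau_B))$, where Proposition~\ref{prop:residue_trace_formula} is proved, to the much larger Sobolev algebra. This is where the chain Lemma~\ref{lemma:Sob_in_qc}~$\to$~Lemma~\ref{lem:tight-sum}~$\to$~trace-class estimate does the heavy lifting: it guarantees that the integral-kernel manipulations remain valid once the individual factors of $a_j$ are only assumed to lie in all $\calW_{r,p}\cap \calN$. The continuity estimate above, together with density of $C_c(\R^d,\Dom(\tau_B))$ in $\calA_\mathrm{Sob}$ in the Sobolev topology, then allows the identification $\phi=K_d\sum_\sigma (-1)^\sigma\calT(\cdots)$ to be transferred from the dense subalgebra to all of $\calA_\mathrm{Sob}$.
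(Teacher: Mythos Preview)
Your proposal is correct and follows essentially the same route as the paper: well-definedness and continuity via the noncommutative H\"older inequality on the Sobolev spaces (the paper also mentions Lemma~\ref{lem:tight-sum} as an alternative, which is exactly the $\calB_1^n$ argument you spell out), and the explicit formula by extending Proposition~\ref{prop:residue_trace_formula} to $\calA_{\mathrm{Sob}}$ together with the Clifford/spinor-trace manipulations of Section~\ref{sec:Chern_nos}. Your additional density argument is a clean way to justify what the paper only asserts, namely that Proposition~\ref{prop:residue_trace_formula} ``can also be applied to elements in $\calA_{\mathrm{Sob}}$''.
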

\begin{proof}
The functional 
is well-defined and continuous  by the H\"{o}lder inequality 
of the Sobolev spaces (or Lemma \ref{lem:tight-sum}).
Finally, the last equality follows by analogous algebraic arguments 
as was done in Section \ref{sec:Chern_nos} and the observation that 
Proposition \ref{prop:residue_trace_formula} can also be applied to 
elements in $\calA_\mathrm{Sob}$.
\end{proof}

\begin{thm} 
\label{thm:quasicts_index_extension}
The index formulas given in 
Theorems \ref{thm:higher_dim_chern_number_odd} and 
\ref{thm:higher_dim_chern_number_even} extend to any 
projection or unitary in $M_q(\calA_\mathrm{Sob}^\sim)$ (complex algebras).
\end{thm}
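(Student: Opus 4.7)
\emph{Plan.} I would prove the extension by showing that both sides of the identities in Theorems~\ref{thm:higher_dim_chern_number_odd} and~\ref{thm:higher_dim_chern_number_even} make sense on $M_q(\calA_\mathrm{Sob}^\sim)$ and agree via a density argument from $M_q(\calA^\sim)$. The Chern--Kubo right-hand side extends continuously in the Sobolev topology by Lemma~\ref{lem:sobolev_cocycle}. For the left-hand side, Proposition~\ref{prop:quasicts_spectrip_fin_summ} supplies a finitely summable semifinite spectral triple over $\calA_\mathrm{Sob}\hat\otimes C\ell_{0,d}$; combined with Lemmas~\ref{lemma:Sob_in_qc} and~\ref{lem:tight-sum}, this implies that $\hat{p}(F_X\otimes 1_{2q})_+\hat{p}$ (resp.\ $(P\otimes 1_{2q})\hat{u}(P\otimes 1_{2q})$) is semifinite Fredholm whenever $p$ is a projection (resp.\ $u$ is a unitary) in $M_q(\calA_\mathrm{Sob}^\sim)$, so the semifinite index pairing on the left is well-defined.

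For the equality, I would exploit that $C_c(\R^d,\Dom(\tau_B))$ is dense in every Sobolev space $\calW_{r,p}$ by construction, so $M_q(\calA^\sim)$ is Sobolev-dense in $M_q(\calA_\mathrm{Sob}^\sim)$. Given a projection $p\in M_q(\calA_\mathrm{Sob}^\sim)$, choose self-adjoint $a_n\in M_q(\calA^\sim)$ converging to $p$ in every Sobolev norm and apply a smooth functional calculus $\chi$ equal to $0$ near $0$ and $1$ near $1$; for $n$ large the spectrum of $a_n$ clusters near $\{0,1\}$, so $\chi(a_n)$ is a genuine projection in $M_q(\calA^\sim)$. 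A polar-decomposition argument handles the unitary case. Theorems~\ref{thm:higher_dim_chern_number_odd}--\ref{thm:higher_dim_chern_number_even} apply to these approximants, the RHS converges to the Sobolev value by Lemma~\ref{lem:sobolev_cocycle}, while the LHS is eventually constant because Sobolev convergence dominates the operator norm on the bounded set of projections (resp.\ unitaries) and the semifinite Fredholm index is locally constant under norm-small perturbations.

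The main obstacle will be verifying that the smooth functional-calculus adjustment $\chi(a_n)$ is itself Sobolev-close to $p$; since the Sobolev topology is strictly finer than the operator norm, quantitative continuity of $\chi(\cdot)$ in Sobolev norm is not automatic. A cleaner way around this is to observe that $\mathrm{LHS}(p)-\mathrm{RHS}(p)$ is a function on $M_q(\calA_\mathrm{Sob}^\sim)$ which is continuous in the Sobolev topology (the RHS by Lemma~\ref{lem:sobolev_cocycle}, the LHS by norm-continuity of the semifinite Fredholm index combined with the fact that Sobolev convergence implies operator-norm convergence on $\calN$), and which vanishes on the Sobolev-dense subset $M_q(\calA^\sim)$ by Theorems~\ref{thm:higher_dim_chern_number_odd}--\ref{thm:higher_dim_chern_number_even}; hence it vanishes identically, bypassing the need to produce genuine projection or unitary approximants in $M_q(\calA^\sim)$.
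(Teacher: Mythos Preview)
Your proposal has a genuine gap: the claim that ``Sobolev convergence implies operator-norm convergence on $\calN$'' is false, and the paper explicitly warns against it. In the remark following Definition~\ref{def:Sob_alg} (and again in the proof of Lemma~\ref{lemma:Sob_in_qc}) it is stressed that the von Neumann norm does \emph{not} enter the topology of $\calA_\mathrm{Sob}$, which is governed solely by the trace-type seminorms $\|\cdot\|_{r,p}$. Intersecting noncommutative $L^p$-spaces over all finite $p$ does not recover $L^\infty$, so convergence in every $\calW_{r,p}$ gives no control on $\|\cdot\|_\calN$. This kills your ``cleaner'' argument (the left-hand side is not known to be Sobolev-continuous via norm-continuity of the Fredholm index), and it equally undermines the functional-calculus approach: without operator-norm proximity of $a_n$ to $p$ you cannot conclude that $\sigma(a_n)$ clusters near $\{0,1\}$, so $\chi(a_n)$ need not be a projection at all.

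The paper avoids any continuity claim for the Fredholm index. It instead observes (via \cite[Proposition~2.14]{CGRS2}) that the Sobolev spectral triple of Proposition~\ref{prop:quasicts_semifinite_spec_trip} already yields a $(d+1)$-summable semifinite Fredholm module over $\calA_\mathrm{Sob}$, so the semifinite index is defined directly for each $p,u\in M_q(\calA_\mathrm{Sob}^\sim)$. The key point you are missing is that Lemma~\ref{lemma:Sob_in_qc} gives $\calA_\mathrm{Sob}\subset \calB_1^n(X,d)$ for \emph{every} $n$, and since $\partial_j(\calA_\mathrm{Sob})\subset \calA_\mathrm{Sob}$ the same holds for $[X,\calA_\mathrm{Sob}]$; hence the entire computation in the proofs of Theorems~\ref{thm:higher_dim_chern_number_odd} and~\ref{thm:higher_dim_chern_number_even} (moving resolvents, applying Proposition~\ref{prop:residue_trace_formula}, taking the spinor trace) goes through verbatim with $\calA$ replaced by $\calA_\mathrm{Sob}$. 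No approximation by projections or unitaries in $M_q(\calA^\sim)$ is needed: the identity is re-derived, not transported by density.
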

\begin{proof}
By Lemma \ref{lem:sobolev_cocycle}, we know that the tracial formula for the index
is well-defined
and so we just need to identify
the formula with the index pairing.

By \cite[Proposition 2.14]{CGRS2} our Sobolev spectral 
triple determines a semifinite Fredholm module 
with operator
$X(1+X^2)^{-1/2}$ and is
 $(d+1)$-summable over $\calA_\mathrm{Sob}$. 
 Therefore the operators 
\begin{align*}
    &(P_X \otimes 1_{2q})\hat{u}(P_X \otimes 1_{2q}), 
    &&\hat{p}(F_X \otimes 1_{2q})_+ \hat{p}
\end{align*}
 from the statement of Theorems \ref{thm:higher_dim_chern_number_odd} and 
\ref{thm:higher_dim_chern_number_even}
 are $\Tr_\tau$-Fredholm for $p,u \in M_q(\calA_\mathrm{Sob}^\sim)$. 

Because the left and right hand side of the index formulas from Theorems \ref{thm:higher_dim_chern_number_odd} and 
\ref{thm:higher_dim_chern_number_even} continue to be well-defined for $\calA_\mathrm{Sob}$, 
which is defined via a completion of $C_c(\R^d,B)$, the index formulas continuously extend. 
 \end{proof}

A difficulty that we encounter with extending the 
index pairing is that, as defined, there is no guarantee that the algebra
$\calA_{\mathrm{Sob}}$ is separable, 
and typically it will not be. 
For index pairings the lack of separability is not a problem:
given a projection or unitary over $\calA_{\mathrm{Sob}}$, we can restrict to the separable
algebra generated by this projection or unitary as in \cite{BCPRSW}, and so the 
formulae for the pairing are valid.
What is in question is homotopy invariance of the pairing 
for homotopies continuous in the topology of $\calA_\mathrm{Sob}$. 
 
Consider a separable 
subalgebra $\mathcal{C}$ of 
$\calA_\mathrm{Sob}$ and suppose it 
has a $C^*$-closure $C$.\footnote{We remark there may be no clear connection between 
$C$ and $B\rtimes_\theta \R^d$ in general.} We define a new Kasparov 
module. The semifinite spectral triple above 
has Hilbert space $\mathcal{H}=L^2(E,\tau_B)$. Here $E_B\cong L^2(\R^d,B)$ is the $B$-module of
the Kasparov $A$-$B$-module
$\lambda_d$ from Proposition \ref{prop:crossed_prod_Kas_mod}, and $\tau_B:\,B\to \C$ the trace. The inner product
on $E_B$ remains well-defined for elements of $\mathcal{C}\cdot E$, and we complete 
$(\mathcal{C}\cdot E\mid \mathcal{C}\cdot E)\subset \pi_{GNS}(B)''$ in norm
to obtain an algebra $D$. In turn we complete
$\mathcal{C}\cdot E$ in the resulting Hilbert module norm, and we 
obtain a 
Kasparov $C$-$D$-module.
So we obtain well-defined pairings
$$
K_\ast(C)\times KK^\ast(C,D)\to K_0(D).
$$
We can then use Proposition 
\ref{prop:semifinite_index_is_kas_prod} in the appendix to conclude that the semifinite index 
represents the composition 
$$
 K_\ast(C)\times KK^\ast(C,D)\to K_0(D) \xrightarrow{\tau_B} \R.
$$
Because $D$ can be taken to be separable, we therefore have that the range of the 
semifinite index is countably generated (though not necessarily discrete).

\subsection{The case $B=C(\Omega_0)$}

We consider the case of $B=C(\Omega)$ with $\Omega$ a compact Hausdorff space 
with a faithful measure probabililty measure $\bP$ that is invariant and under a twisted $\R^d$-action.

\subsubsection{Direct integral decomposition} \label{sec:chop-chop}

Before computing index pairings, we record some further details about the 
representation from Lemma \ref{lem:rep} in the special case when $B=C(\Omega)$ 
with the 
the trace $\tau_\bP$ on $C(\Omega)$ coming from the probability measure $\bP$.

For each $\omega\in \Omega$, the evaluation homomorphism ${\rm ev}_\omega:\,\Omega\to\C$ defines a Kasparov 
module $(C(\Omega),{}_{{\rm ev}_\omega}\C_\C,0)$ with class in 
$KK(C(\Omega),\C)$ (we similarly obtain a class in $KKO(C(\Omega),\R)$ for 
real-valued functions). The product of our Kasparov module from Proposition \ref{prop:crossed_prod_Kas_mod}
with the class of ${\rm ev}_\omega$ is a spectral triple
$$
\Big( \A\hat\otimes C\ell_{0,d},\, L^2(\R^d,C(\Omega))\otimes_{\mathrm{ev}_\omega}\C \, \hat\otimes \bigwedge\nolimits^{\! *}\R^d, 
    \, X\hat\otimes 1  \Big)
$$
since $L^2(\R^d,C(\Omega))\otimes_{\mathrm{ev}_\omega} \C\cong L^2(\R^d)$.

Thus for each $\omega\in\Omega$ we obtain a representation 
$\pi_\omega:\,\A\to \calB(L^2(\R^d))$ by setting
$$
\pi_\omega(T)=T\otimes1:L^2(\R^d,C(\Omega))\otimes_{\mathrm{ev}_\omega}\C\to L^2(\R^d,C(\Omega))\otimes_{\mathrm{ev}_\omega} \C.
$$
The definition of the Hilbert space completion of $L^2(\R^d,C(\Omega))$ uses the inner product
defined  
in Equation \eqref{eq:inn-prod}. For $f_1,\,f_2\in L^2(\R^d,C(\Omega))$ we have 
$$
\langle f_1,f_2\rangle=\tau_\bP \!\big( (f_1 \mid f_2)_{C(\Omega)} \big) =\int_{\Omega}\langle f_1(\cdot,\omega),f_2(\cdot,\omega)\rangle\,\mathrm{d}\bP(\omega),
$$
and so we have the direct integral decomposition (coming from the abelian subalgebra $C(\Omega)''$ of the commutant of $\A$, \cite[Section III.1.6]{Blackagain})
$$
L^2(\R^d)\otimes L^2(\Omega,\bP)\cong \int_\Omega^{\oplus}L^2(\R^d)_\omega\,\mathrm{d}\bP(\omega).
$$
The integral decomposition is compatible with the representations $\pi_\omega$
in that the action of $\A$ on $L^2(\R^d)\otimes L^2(\Omega,\bP)$ is the direct integral of the representations $\pi_\omega$.

As well as the representation, the operator $X$ (densely) defined on $L^2(\R^d)\otimes L^2(\Omega,\bP)$ is the direct integral of the 
operators $X$ (densely) defined on  $L^2(\R^d)$. Hence the semifinite spectral triple
$$
\bigg( \calA \hat\otimes C\ell_{0,d}, \, L^2(\R^d)\otimes L^2(\Omega,\bP) \hat\otimes 
   \bigwedge\nolimits^{\!*}\R^d, \, X=\sum_{j=1}^d X_j\otimes 1 \hat\otimes \gamma^j, \, 
   (\calN, \Tr_\tau) \bigg)
$$
can be regarded as the direct integral of the spectral triples
$$
\bigg( \calA \hat\otimes C\ell_{0,d}, \, {}_{\pi_\omega}L^2(\R^d) \hat\otimes 
   \bigwedge\nolimits^{\!*}\R^d, \, X=\sum_{j=1}^d X_j \hat\otimes \gamma^j, \, 
    \bigg).
$$
This decomposition remains valid (a.e.) for the algebra $\calA_\mathrm{Sob}$ since it is contained in $\calN$.
The trace $\Tr_\tau$ on $\calN$ is 
given by $\Tr_\tau(T)=\int_\Omega \Tr(T_\omega)\,\mathrm{d}\bP(\omega)$ for any measurable family 
$(T_\omega)\in \Dom(\Tr_\tau)\subset \calN$. 
Therefore, for any projection $p\in M_q(\calA_\mathrm{Sob}^\sim)$, the semifinite index of the Fredholm operator
$\hat{p}(F_{X}\otimes1_{2q})_+ \hat{p}$ is 
$$
\int_\Omega{\rm Index}(\wt{\pi}_\omega(\hat{p})(F_{X}\otimes1_{2q})_+ \wt{\pi}_\omega(\hat{p}))\,\mathrm{d}\bP(\omega).
$$
We can also consider pointwise defined torsion classes, e.g. 
$\dim\Ker((\wt{\pi}_\omega(\hat{p}) F_X \wt{\pi}_\omega(\hat{p}))_+)\,\mathrm{mod}\, 2$ arising from skew-adjoint Fredholm indices, 
but the integral of these quantities needs more care.

\subsubsection{Pairings with ergodic measures} \label{sec:ergodic_pairing}

We have used a direct integral decomposition of the semifinite spectral triple to obtain a  
concrete formula for the semifinite index pairing. We now consider the case where the probability 
measure $\bP$ is invariant and ergodic under the twisted $\R^d$-action (that is, 
the only functions $L^2(\Omega,\bP)$ invariant under the $\R^d$-action are constant functions), where we can further 
reduce our complex semifinite pairings to $\Z$-valued quantities.

\begin{thm} 
\label{thm:ergodic_loc_pairing}
If the trace $\tau_\bP$ on $C(\Omega)$ comes from a faithful measure $\bP$ that is invariant and ergodic 
under the twisted $\R^d$-action, then the index formulas given in 
Theorems \ref{thm:higher_dim_chern_number_odd} and 
\ref{thm:higher_dim_chern_number_even} extend to 
$\calA_\mathrm{Sob}$ and are  integer-valued. 
Furthermore, the index is invariant under continuous deformations in the 
Sobolev topology.
\end{thm}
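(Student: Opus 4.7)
The plan is to combine the extension established in Theorem~\ref{thm:quasicts_index_extension} with the direct integral decomposition from Section~\ref{sec:chop-chop}, and then to exploit ergodicity of the $\R^d$-action on $\Omega$ to prove the integrand is $\bP$-almost surely constant.

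Extension of the Chern--Kubo formulae to $\calA_\mathrm{Sob}$ is immediate from Theorem~\ref{thm:quasicts_index_extension}; the only new content of the statement is integer-valuedness and Sobolev-continuous homotopy invariance. Using the direct integral decomposition recalled at the end of Section~\ref{sec:chop-chop}, the semifinite Breuer--Fredholm index equals
$$
\int_\Omega \Index\!\big(\wt{\pi}_\omega(\hat{p})(F_X\otimes 1_{2q})_+ \wt{\pi}_\omega(\hat{p})\big)\,\mathrm{d}\bP(\omega),
$$
with an analogous expression for unitaries, and with each integrand a $\Z$-valued ordinary Fredholm index on $L^2(\R^d)$.

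The next step is to check $\R^d$-covariance of this fiberwise index. For each $y\in\R^d$ the magnetic translation $V_y$ on $L^2(\R^d)$, twisted by the cocycle $\theta$, provides the unitary equivalence $\pi_{\alpha_y(\omega)} \cong V_y \pi_\omega V_y^*$; conjugating by $V_y$ inside the Fredholm index and using $V_y^* X V_y = X-y$ yields $\Index(\pi_\omega(\hat{p})(F_{X-y})_+\pi_\omega(\hat{p}))$. Because the path $t\mapsto X+ty$ produces a family of smoothly summable semifinite spectral triples (by Proposition~\ref{prop:quasicts_semifinite_spec_trip}) representing a single Kasparov class, the associated Fredholm indices coincide. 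Hence $\omega\mapsto \Index(\wt{\pi}_\omega(\hat{p})(F_X)_+ \wt{\pi}_\omega(\hat{p}))$ is a measurable, $\R^d$-invariant function on $\Omega$; ergodicity of $\bP$ forces it to be $\bP$-almost surely equal to a single integer $N$, and the integral reduces to $N$.

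Homotopy invariance under Sobolev-continuous deformations then follows from Lemma~\ref{lem:sobolev_cocycle}: the Chern--Kubo cocycle is a jointly continuous multilinear functional on $\calA_\mathrm{Sob}^{d+1}$, so the pairing is a continuous $\R$-valued function along any such path, and, being integer-valued, must be locally constant. The main technical obstacle I anticipate is the covariance step---specifically, making the magnetic cocycle interact correctly with the intertwining unitary, and controlling the difference between the Fredholm operators built from $F_X$ and $F_{X+y}$ well enough (for instance by a compact perturbation on each fiber) to conclude equality of their indices; this uses crucially that $\hat{p}$ lies in the smoothly summable algebra and that $\theta(x,-x)=1$.
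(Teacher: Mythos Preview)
Your proposal is correct and follows essentially the same route as the paper: extension via Theorem~\ref{thm:quasicts_index_extension}, direct integral decomposition into fiberwise ordinary Fredholm indices, covariance under magnetic translation to get $\R^d$-invariance of the integer-valued integrand, then ergodicity and Sobolev-continuity of the cocycle. The only cosmetic difference is in the covariance step: the paper handles the passage from $F_X$ to $F_{X+y}$ by observing directly that $U_y[X,U_y^*]$ is bounded, so $X+y$ is a bounded perturbation of $X$ and hence $\hat{p}(F_{X+y})_+\hat{p}$ is a compact perturbation of $\hat{p}(F_X)_+\hat{p}$ on each fiber---which is exactly the argument you flag at the end as the likely mechanism, and is more elementary than invoking constancy of the Kasparov class along the path $t\mapsto X+ty$.
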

\begin{proof}
Theorem \ref{thm:quasicts_index_extension} gives the semifinite index
pairing of the semifinite spectral 
triple from Proposition \ref{prop:quasicts_semifinite_spec_trip}. 
The direct integral decomposition now shows that the semifinite index pairing is the 
integral of the family of index pairings with the spectral triples
$$
\bigg( \calA_\mathrm{Sob}, \, {}_{\wt{\pi}_\omega}L^2(\R^d) \hat\otimes 
   \C^\nu , \, X=\sum_{j=1}^d X_j \hat\otimes \gamma^j 
    \bigg),
$$
where we have changed to the spin$^c$ Clifford representation as these are the semifinite 
spectral triples used in Theorems \ref{thm:higher_dim_chern_number_odd} and 
\ref{thm:higher_dim_chern_number_even}.

Because the measure on $\Omega$ is ergodic, it suffices to check that the pairing is 
$\bP$-almost surely constant on any orbit (and so constant a.e.).
To show this constancy, we remark that if $\omega' = T_{-a} \omega$, then using the corresponding 
covariance relation, $F_X$ is unitarily equivalent to $F_{X+a}$, the bounded transform of 
$\sum_j (X_j + a_j) \hat\otimes \gamma^j$, via the unitary $U_a$ implementing $T_{-a}$.
Since $U_a[X,U_a^*]$ is bounded, we have a bounded perturbation of the unbounded operator $X$. This implies that 
$$
\bigg( \calA_\mathrm{Sob}, \, {}_{\wt{\pi}_\omega}L^2(\R^d) \hat\otimes \C^\nu, \, X=\sum_{j=1}^d X_j \hat\otimes \gamma^j 
    \bigg)
$$
is unitarily equivalent to 
$$
\bigg( \calA_\mathrm{Sob}, \, {}_{\wt{\pi}_{\omega'}}L^2(\R^d) \hat\otimes \C^\nu,
 \, X=\sum_{j=1}^d X_j \hat\otimes \gamma^j+K  \bigg)
    $$
    where $K$ is bounded. Hence the bounded operator 
    $\wt{\pi}_\omega(\hat{p})(F_{X+a})_+\wt{\pi}_\omega(\hat{p})$ will be a compact 
    perturbation of $\wt{\pi}_\omega(\hat{p})(F_{X})_+\wt{\pi}_\omega(\hat{p})$ and 
    so the index will not change. The same argument also applies for pairing with 
    unitaries in $\calA_\mathrm{Sob}^\sim$.

Next, we 
consider a continuous deformation in $\calA_\mathrm{Sob}$. Because 
the Hochschild cocycle is continuous in the Sobolev topology, 
the cyclic expression for the index will change continuously as we make this deformation. 
However, the equality of the cyclic formula with the Fredholm index for \emph{any} 
projection or unitary in $\calA_\mathrm{Sob}^\sim$ ensures 
that the cyclic pairing is always $\Z$-valued. Thus the index pairing along this path 
gives a continuous $\Z$-valued function, and so is constant.
\end{proof}

\begin{remark}
While taking the intersection over all Sobolev spaces appears to be 
  quite restrictive, we will see in Section \ref{subsec:qc_and_localisation} that 
dynamically localised observables often have, on average over the 
configuration space $\Omega$, \emph{exponentially 
decaying} integral kernels. As such, our index theory over $\calA_\mathrm{Sob}$ 
can be applied in this situation.
We will also show that 
(under extra restrictions), deformations within a region of 
dynamical localisation are continuous in the Sobolev
topology. 
\end{remark}



\section{The bulk-edge correspondence} \label{sec:bulkedge}

The bulk-edge correspondence is a key property of topological 
states of matter, where
non-trivial topological properties in the bulk (interior) of a physical 
system give rise to edge behaviour, e.g. the existence of stable 
edge states and edge conductivity. Driving the bulk-edge correspondence 
for the $C^*$-algebraic approach to condensed matter physics is 
a short exact sequence linking bulk and edge 
observable algebras~\cite{SBKR02,KSB04b,KR06, HMTBulkedge}.

The short exact sequence encodes the boundary map in $K$-theory or $K$-homology 
(or their extension $KK$-theory). Gapped topological phases are encoded in 
$K$-theory classes and one then studies the effect of the boundary map on 
this class. However, because the topological phases of interest 
arise as index \emph{pairings}, we need to understand how the invariants change under the boundary 
map in both $K$-theory and its dual theory. In earlier work on the quantum Hall effect, 
this was achieved using cyclic cohomology~\cite{KSB04b}, but this will not apply to 
torsion phases, which include some of the most interesting examples of topological insulators. 
Therefore we instead work with 
$K$-homology (actually, $KK$-theory) and study the boundary map in full 
generality. By expressing topological phases as index 
pairings, 
 our $K$-theoretic result on boundary maps 
immediately implies the bulk-edge correspondence for 
the numerical phase labels.

In this section we work with the $C^*$-algebra $B\rtimes_\theta\R^d$ and the 
unbounded Kasparov module from Proposition \ref{prop:crossed_prod_Kas_mod}. 
Hence the section is mostly independent from the extension of the index 
formulas in Section \ref{sec:localisation_general} (for a connection, 
see Section \ref{sec:deloc_edge}, 
 where the bulk-boundary correspondence and the Sobolev 
algebra can be used to prove delocalisation of complex edge states).

\subsection{The Wiener--Hopf extension}
By considering crossed product algebras by $\R$, there is a 
natural short exact sequence, namely the Wiener--Hopf extension 
(see for example~\cite{Rieffel82}). 
Recalling the discussion in Section \ref{Sec:twisted_systems_prelim}, 
we can decompose the twisted crossed product $B\rtimes_\theta\R^d$ 
as an iterated crossed product of a twisted 
crossed product by $\R^{d-1}$ and an untwisted $\R$-crossed 
product, $(B\rtimes_\theta\R^{d-1})\rtimes \R$. 
In the case $B=C(\Omega)$ with $\theta(x,-x) =1$ this can be 
done via an explicit isomorphism~\cite{KR06}. For general $B$, 
the decomposition is equivalent to our original twisted crossed product 
at the level of $KK$-theory.
We let $A_e = B\rtimes_\theta\R^{d-1}$, the 
observables on the edge of a system with boundary, and 
$A_b = A_e\rtimes\R$ the algebra on a boundaryless system.

Following~\cite{KR06,KSB04b} our bulk-edge short exact sequence is
\begin{equation} \label{eq:cts_bulkedge_SES}
  0 \to \calK \otimes A_e \to \left(C_0(\R\cup \{+\infty\})\otimes A_e \right)\rtimes \R 
     \to A_e \rtimes \R \to 0
\end{equation}
where the $\R$-action on $C_0(\R\cup \{+\infty\})\otimes A_e$ 
is by translation on 
$C_0(\R\cup \{+\infty\})$ (with fixed point at $+\infty$) 
and by the automorphism on $A_e$ such that $A_b = A_e\rtimes\R$. 
In order to compute boundary maps in $KK$-theory,
we first represent Equation \eqref{eq:cts_bulkedge_SES} as an unbounded 
Kasparov module by the isomorphism 
$KKO(A\hat\otimes C\ell_{0,1},C) \cong \mathrm{Ext}^{-1}(A,C)$ for 
separable $C^*$-algebras $A$ and $C$~\cite[{\S}7]{Kasparov80}.

\begin{prop} 
\label{prop:ext_is_crossed_prod_module}
The unbounded crossed-product Kasparov module
\begin{equation} 
\label{eq:ext_unbdd_kasmod}
 \left( C_c(\R,A_e) \hat\otimes C\ell_{0,1}, \, 
 L^2(\R,A_e)_{A_e} \hat\otimes \bigwedge\nolimits^{\!*}\R, \, 
    X_\mathrm{ext} \hat\otimes \gamma_\mathrm{ext} \right),
\end{equation}
represents the class of the extension of 
Equation \eqref{eq:cts_bulkedge_SES} in $KKO(A_b\hat\otimes C\ell_{0,1},A_e)$.
Here $\gamma_{\mathrm{ext}}$ is the generator of $C\ell_{1,0}$ and 
$X_{\mathrm{ext}}$ is the multiplication operator by the independent variable
in $\R$.
\end{prop}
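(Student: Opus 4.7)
The plan is as follows. The Kasparov module in question is the one-dimensional instance of the Kasparov module from Proposition \ref{prop:crossed_prod_Kas_mod} applied to the untwisted crossed product $A_b=A_e\rtimes\R$, so all the Kasparov axioms (self-adjointness and regularity of $X_{\mathrm{ext}}$, bounded graded commutators with $C_c(\R,A_e)\hat\otimes C\ell_{0,1}$, and compactness of $\pi(f)(1+X_{\mathrm{ext}}^2)^{-1/2}$) are already supplied. What remains is to identify the class of this module in $KKO^1(A_b,A_e)\cong\mathrm{Ext}^{-1}(A_b,A_e)$ with the class of the Wiener--Hopf extension \eqref{eq:cts_bulkedge_SES}.

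First I would pass to the bounded transform to represent the class by the cycle $(A_b,L^2(\R,A_e)_{A_e},F)$ with $F=X_{\mathrm{ext}}(1+X_{\mathrm{ext}}^2)^{-1/2}$, which is multiplication by $x/\sqrt{1+x^2}$. Set $P=\chi_{[0,\infty)}$. Since $(1+F)/2-P$ is multiplication by a bounded real-valued function vanishing at $\pm\infty$, and the kernel of $\pi(f)$ for $f\in C_c(\R,A_e)$ has compact support in the convolution variable, the difference $\pi(f)\bigl((1+F)/2-P\bigr)$ is a norm limit of rank-one module operators, hence compact in $\End_{A_e}(L^2(\R,A_e))$. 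A standard linear homotopy therefore shows that $F$ and $\wt F:=2P-1$ define the same class, so it suffices to work with $(A_b,L^2(\R,A_e)_{A_e},\wt F)$.

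Next I would invoke the Kasparov--Stinespring correspondence between $KK^1$-cycles and extensions. The Toeplitz-type algebra
\begin{equation*}
\calE \;:=\; C^*\bigl(\,P\pi(A_b)P\,\cup\,\calK(PL^2(\R,A_e))\,\bigr)\subset \End_{A_e}(L^2([0,\infty),A_e))
\end{equation*}
 fits into an extension $0\to\calK\otimes A_e\to\calE\to A_b\to 0$ whose class is exactly $[(A_b,L^2(\R,A_e),\wt F)]$. To conclude, one must identify $\calE$ and its Busby invariant with those of the Wiener--Hopf extension. Here the explicit form of the left regular representation from Equation \eqref{eq:Left_action_on_module}, restricted to one dimension, shows that $P\pi(g)P$ for $g\in C_c(\R,A_e)$ is a Wiener--Hopf operator with $A_e$-valued symbol. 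The classical Wiener--Hopf/Rieffel identification, in the form used by Kellendonk--Richard~\cite{KR06}, then realises the C*-algebra generated by these Wiener--Hopf operators and the compacts as $(C_0(\R\cup\{+\infty\})\otimes A_e)\rtimes\R$, the action being by translation on $C_0(\R\cup\{+\infty\})$ with fixed point $+\infty$ and the given automorphism of $A_e$. Matching the quotient maps on both sides yields equality of the Busby invariants, and hence of the extension classes.

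The main obstacle is the final identification: showing that the C*-algebra $\calE$ generated by compressed Wiener--Hopf operators coincides with the explicit iterated crossed product $(C_0(\R\cup\{+\infty\})\otimes A_e)\rtimes\R$, and that the quotient by $\calK\otimes A_e$ recovers the given surjection onto $A_b$. All other steps---the Kasparov axioms, the bounded transform, and the replacement of the smooth sign function by the sharp projection---are routine once this Rieffel-type identification is in place.
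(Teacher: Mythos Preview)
Your proposal is correct and follows essentially the same route as the paper. Both arguments recognise the module as the $d=1$ instance of Proposition~\ref{prop:crossed_prod_Kas_mod}, pass from the bounded transform to the half-line projection $P=\chi_{[0,\infty)}$, form the associated Toeplitz-type extension, and compare Busby invariants with the Wiener--Hopf extension. The differences are purely technical: the paper uses Connes' doubling trick to create a spectral gap before taking $P=\chi_{[0,\infty)}(X_m)$, whereas you replace $(1+F)/2$ directly by $P$ via a compact-perturbation argument; and for the final identification the paper compares two explicit completely positive splittings (the map $a\mapsto PaP$ versus multiplication by a smooth cut-off $g\in C_0(\R\cup\{+\infty\})$) and shows they differ by compacts, whereas you invoke the Rieffel/Kellendonk--Richard identification of the Toeplitz algebra wholesale. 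One small point: your justification that $\pi(f)\bigl((1+F)/2-P\bigr)$ is compact is slightly imprecise as stated, since the multiplier is discontinuous at $0$; the cleanest fix is to observe that this multiplier decays like $1/x^2$ at infinity, so $(1+X^2)^{1/2}\bigl((1+F)/2-P\bigr)$ is bounded and the compactness follows from that of $\pi(f)(1+X^2)^{-1/2}$.
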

\begin{proof}
Our Kasparov module is precisely the unbounded Kasparov module $\lambda_d$ we have already 
considered in
Proposition \ref{prop:crossed_prod_Kas_mod} for $d=1$. 
Our task, therefore, is to show that this unbounded module represents the 
Wiener--Hopf extension in Equation \eqref{eq:cts_bulkedge_SES}. 

Associated to the graded Kasparov module from the Equation \eqref{eq:ext_unbdd_kasmod} is the ungraded 
(odd) module $\left(C_c(\R,A_e), L^2(\R,A_e), X_\mathrm{ext}\right)$, from which we can 
construct an extension.
First we use Connes' trick~\cite{Connes85} to double our unbounded Kasparov module 
to the tuple
$$
 \bigg( \begin{pmatrix} a & 0 \\ 0 & 0 \end{pmatrix},\, L^2(\R,A_e)\oplus L^2(\R,A_e), \,
   X_m = \begin{pmatrix} X_\mathrm{ext} & m \\ m & -X_\mathrm{ext} \end{pmatrix} \bigg),
   \quad m>0,
$$
which does not change the class in $KKO^1(A_b,A_e)$ and has the advantage that 
$X_m$ has a spectral gap around $0$ (see also \cite[Section 2.7]{CGRS2} for
another method). Next we let $P = \chi_{[0,\infty)}(X_m)$, 
which up to a locally compact pertubation is exactly the projection $\Pi\oplus \Pi$, with
$\Pi: L^2(\R,A_e)\to L^2(\R_+,A_e)$ the projection onto the half-space Hilbert module. 
Therefore given the module $\left(C_c(\R,A_e), L^2(\R,A_e), X_\mathrm{ext}\right)$ we can 
associate the extension
$$
 0 \to \calK[L^2(\R_+,A_e)] \to C^*(PA_b P, \calK[L^2(\R_+,A_e)]) 
   \to A_b \to 0
$$
with positive semisplitting by $P$. Hence the Kasparov module gives rise to the 
Busby invariant
$$
   \phi: A_b \to \calQ(A_e),  \qquad \phi(a) = p(PaP),
$$ 
with $p:\calM(A_e\otimes \calK) \to \calQ(A_e \otimes \calK)$ the corona projection.
Next we consider the Wiener--Hopf extension
$$
  0 \to \calK \otimes A_e \to (C_0(\R\cup\{+\infty\})\otimes A_e)\rtimes\R \to A_b \to 0.
$$
We take a function $g\in C_0(\R\cup\{+\infty\})$ 
that is $0$ for $x\leq 0$, smoothly goes to $1$ for $0\leq x \leq m/2$ and is $1$ for 
all $x>m/2$. Then the map $f\mapsto gf$ for $f\in C_c(\R,A_e)$ gives rise to a 
map $A_e\rtimes\R \to (C_0(\R\cup\{+\infty\})\otimes A_e)\rtimes\R$ and the 
Busby invariant $\tilde{\phi}(f) = p((gf)(+\infty))$, where
$$
  \tilde{\phi}(f) = p((gf)(+\infty)) \in \calQ(C_0(\R)\rtimes \R \otimes A_e) \cong 
    \calQ(\calK\otimes A_e).
$$ 
The maps $a\mapsto PaP$ and $f\mapsto gf$ differ by a compact operator on $L^2(\R_+,A_e)$ 
and, so we have that $\phi = \tilde{\phi}$ and the extensions are equivalent.
\end{proof}

\begin{remark}[The Thom class]
We note that the unbounded Kasparov module coming from 
 an (untwisted) $\R$-action and representing the Wiener--Hopf extension 
is the inverse of the class in $KK$-theory implementing the Connes--Thom isomorphism. 
An explicit representative of the inverse to the class from Proposition \ref{prop:ext_is_crossed_prod_module} 
is constructed in~\cite{Andersson14, AnderssonThesis}, where it is shown that the 
class implements the Connes--Thom isomorphism. 
See also the work of Rieffel~\cite{Rieffel82}, who showed 
that the boundary map from the Wiener--Hopf extension implements the inverse of the 
Connes--Thom isomorphism. 
\end{remark}

\subsection{The edge Kasparov module and the product}
Given the edge algebra $A_e = B\rtimes_\theta\R^{d-1}$ with $d\geq 2$, we 
can construct an unbounded Kasparov module 
$$
 \lambda_{d-1} = \bigg( C_c(\R^{d-1},B)\hat\otimes C\ell_{0,d-1}, \, 
  L^2(\R^{d-1},B)\hat\otimes \bigwedge\nolimits^{\!*} \R^{d-1}, \, 
   \sum_{j=1}^{d-1} X_j\hat\otimes \gamma^j \bigg)
$$ 
by Proposition \ref{prop:crossed_prod_Kas_mod}. 
The internal Kasparov product of the extension 
class from Proposition \ref{prop:ext_is_crossed_prod_module} 
with $\lambda_{d-1}$ defines 
a map $ KKO^1(A_b,A_e)\times KKO^{d-1}(A_e,B) \to KKO^d(A_b,B)$. Our 
central result of this section is that the product at 
the unbounded level produces, up to 
a permutation of Clifford generators, the `bulk' Kasparov module 
$\lambda_d$. The result is a continuous analogue of~\cite{BCR14,BCR15,BKR1}, which 
studied crossed products by $\Z^d$.

\begin{thm} \label{thm:gen_bulkedge}
The Kasparov product $[\mathrm{ext}]\hat\otimes_{A_e}[\lambda_{d-1}]$ is 
represented by the unbounded Kasparov module,
$$
  \bigg( C_c(\R^d,B) \hat\otimes C\ell_{0,d},\, L^2(\R^d, B)
 \hat\otimes \bigwedge\nolimits^{\!*}\R^d,\, 
 X_d\hat\otimes \gamma^1 + \sum_{j=1}^{d-1} X_{j} \hat\otimes \gamma^{j+1} \bigg).
$$
Furthermore $[\mathrm{ext}]\hat\otimes_{A_e}[\lambda_{d-1}] = (-1)^{d-1}[\lambda_d]$, 
where $-[x]$ represents the inverse class in the $KK$-group.
\end{thm}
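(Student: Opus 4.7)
The plan is to compute the internal unbounded Kasparov product using the Kucerovsky--Kaad--Lesch criteria, and then to match the resulting representative with $\lambda_d$ via a reordering of Clifford generators which introduces the sign $(-1)^{d-1}$.

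First, I would fix a clean parametrisation of the two factors. Writing $\gamma^1$ for the Clifford generator of the ext module and $\gamma^2,\dots,\gamma^d$ for those of $\lambda_{d-1}$, the outer module is $(C_c(\R,A_e)\,\hat\otimes\, C\ell_{0,1},L^2(\R,A_e)_{A_e}\hat\otimes\bigwedge^{\!*}\R,X_d\hat\otimes\gamma^1)$ (interpreting the multiplication variable on $\R$ as the $d$-th coordinate of $\R^d$) and the inner module is $\lambda_{d-1}$. Since $A_e\cong B\rtimes_\theta\R^{d-1}$, the balanced tensor product can be identified isometrically as
\[
L^2(\R,A_e)\otimes_{A_e}L^2(\R^{d-1},B)\;\cong\;L^2(\R^d,B),
\]
using Lemma~\ref{lemma:Crossed_prod_module_is_std_module} iterated and the compatibility of the $A_e$-action as convolution in the first $d-1$ variables. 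The exterior algebras combine via $\bigwedge^{\!*}\R\,\hat\otimes\,\bigwedge^{\!*}\R^{d-1}\cong\bigwedge^{\!*}\R^d$.

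Next I would produce the candidate unbounded representative of the product. Because the outer operator $X_d$ is multiplication by the (extra) variable and hence commutes with both the $A_e$-action on $L^2(\R,A_e)$ and with the lift of the $\lambda_{d-1}$-operator to $L^2(\R^d,B)$, the Grassmann (trivial) connection serves as a correspondence-compatible connection in Mesland's sense, so the natural candidate is
\[
D\;=\;X_d\,\hat\otimes\,\gamma^1\;+\;\sum_{j=1}^{d-1}X_j\,\hat\otimes\,\gamma^{j+1},
\]
acting on $L^2(\R^d,B)\,\hat\otimes\,\bigwedge^{\!*}\R^d$. I would then verify Kucerovsky's criteria: (i) the domain condition is immediate since $D$ is a sum of commuting self-adjoint multiplication operators times anticommuting Clifford generators, so it is essentially self-adjoint and regular on $C_c(\R^d,B)\,\hat\otimes\,\bigwedge^{\!*}\R^d$; (ii) the connection condition reduces to checking that for $\xi$ in the dense submodule, the creation operators $T_\xi$ have bounded graded commutator with $D-1\hat\otimes D_{\lambda_{d-1}}=X_d\hat\otimes\gamma^1$, which is immediate because $X_d$ acts diagonally with respect to the $A_e$-valued inner product; and (iii) the positivity condition reduces, modulo locally compact terms, to the graded commutator $[X_d\hat\otimes\gamma^1,\,X_d\hat\otimes\gamma^1]=2X_d^2\hat\otimes 1\geq 0$, because $\gamma^1$ anticommutes with all the other $\gamma^{j+1}$'s so the cross terms vanish on the nose. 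This identifies $D$ as the unbounded product operator.

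Finally, I would identify this product with $\pm[\lambda_d]$. The operator $D$ differs from $\sum_{j=1}^d X_j\hat\otimes\gamma^j$ only by the cyclic relabelling $(\gamma^1,\gamma^2,\dots,\gamma^d)\mapsto(\gamma^d,\gamma^1,\dots,\gamma^{d-1})$ of Clifford generators accompanied by the same rotation of the exterior algebra representation. This relabelling is induced by an orthogonal transformation of $\R^d$, namely a $d$-cycle of the standard basis, whose determinant is $(-1)^{d-1}$. Under the usual correspondence between orientation-reversing isometries of $\R^d$ and automorphisms of $C\ell_{0,d}$ lifting to the exterior algebra Kasparov module, such a change of orientation flips the sign of the $KK$-class, yielding $[\mathrm{ext}]\hat\otimes_{A_e}[\lambda_{d-1}]=(-1)^{d-1}[\lambda_d]$. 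The main obstacle is the bookkeeping in the last step: one has to be careful that the cyclic permutation really does correspond to composing with the parity automorphism $(-1)^{d-1}$ of $KKO(A\hat\otimes C\ell_{0,d},B)$, which is best checked either by an explicit unitary intertwiner implementing the permutation inside $\mathrm{End}\bigwedge^{\!*}\R^d$ together with tracking its graded degree, or by reduction to the $d=2$ toy case and induction.
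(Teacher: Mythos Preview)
Your approach is essentially the paper's: identify the balanced tensor product with $L^2(\R^d,B)\hat\otimes\bigwedge^{\!*}\R^d$, verify Kucerovsky's criteria for the candidate $D$, then compare with $\lambda_d$ via the cyclic permutation of Clifford generators (the paper invokes \cite[{\S}5, Theorem 3]{Kasparov80} for precisely the orientation argument you sketch at the end).

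The one place your write-up is genuinely loose is the connection step. You justify the ``Grassmann (trivial) connection'' by commutation properties of $X_d$, but the role of the connection is to lift the \emph{inner} operators $X_j$ ($j=1,\dots,d-1$) to the balanced tensor product, and for that $X_d$ is irrelevant. Your claim that the connection condition ``reduces to checking that $T_\xi$ has bounded graded commutator with $D-1\hat\otimes D_{\lambda_{d-1}}=X_d\hat\otimes\gamma^1$'' presupposes that $1\hat\otimes D_{\lambda_{d-1}}$ is already well-defined on the product --- which is exactly what a connection is for. The paper handles this explicitly: it builds $\nabla_j$ from the derivations $\partial_j=[X_j,\cdot]$ on $\calA_e$, writes down the unitary $\varrho(f_1\otimes f_2\otimes_{\calA_e}f_3)=f_1\otimes(f_2\ast f_3)$, and checks via the Leibniz rule that $1\otimes_\nabla X_j\mapsto X_j$ under $\varrho$. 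In Kucerovsky's language, the boundedness of $DT_\xi-T_\xi D_2$ for the $X_j$ pieces comes from $[X_j,\pi(a_e)]=\pi(\partial_j a_e)$ being adjointable --- this is the actual content of the connection condition here, not a consequence of anything about $X_d$. Your positivity check (iii) and the sign bookkeeping are fine.
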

\begin{proof}
We will focus on the real setting as the case of complex algebras and 
spaces follows the same argument. We denote 
by $A_e = B\rtimes_{\theta}\R^{d-1}$ and 
$A_b = B\rtimes_{\theta}\R^d \cong A_e \rtimes \R$.
We are taking the internal product of 
an $A_b\hat\otimes C\ell_{0,1}$-$A_e$ module 
with an $A_e \hat\otimes C\ell_{0,d-1}$-$B$ module. To 
take this product, we first take the 
external product of the $A_b \hat\otimes C\ell_{0,1}$-$A_e$ module 
with the identity class in $KKO(C\ell_{0,d-1},C\ell_{0,d-1})$.
This class can be represented by the Kasparov module
$$ 
 \left( C\ell_{0,d-1},\, \left(C\ell_{0,d-1}\right)_{C\ell_{0,d-1}}, \,0\right) 
$$
with right and left actions given by right and left multiplication.
The external product gives the
$A_b\hat\otimes C\ell_{0,d}$-$A_e\hat\otimes C\ell_{0,d-1}$ module 
$$ 
\left(C_c(\R,A_e) \hat\otimes C\ell_{0,1} \hat\otimes C\ell_{0,d-1}, \, \left(L^2(\R,A_e)\hat\otimes 
    \bigwedge\nolimits^{\!*}\R \hat\otimes  C\ell_{0,d-1}\right)_{A_e \hat\otimes C\ell_{0,d-1}} , \,
    X_\mathrm{ext} \hat\otimes \gamma_{\mathrm{ext}}\hat\otimes 1  \right).
$$
We now take the internal product of this module with the edge 
module $\lambda_{d-1}$.
We start with the $C^*$-modules, where
\begin{align*}
 &\left( L^2(\R,A_e)\hat\otimes_\R \bigwedge\nolimits^{\!*}\R \,\hat\otimes_\R \, C\ell_{0,d-1}\right) 
    \hat\otimes_{A_e\hat\otimes C\ell_{0,d-1}} 
      \left(L^2(\R^{d-1},B) \hat\otimes_\R \bigwedge\nolimits^{\!*}\R^{d-1} \right) \\
 &\hspace{4cm}  \cong \left( L^2(\R, A_e)\otimes_{A_e} L^2(\R^{d-1},B)\right) 
   \hat\otimes_\R \bigwedge\nolimits^{\!*}\R \,\hat\otimes_\R 
      \left(C\ell_{0,d-1}\cdot \bigwedge\nolimits^{\!*}\R^{d-1} \right) \\
 &\hspace{4cm} \cong \left( L^2(\R, A_e)\otimes_{A_e} L^2(\R^{d-1},B)\right)
 \hat\otimes_\R \bigwedge\nolimits^{\!*}\R\, \hat\otimes_\R \bigwedge\nolimits^{\!*}\R^{d-1}
\end{align*}
as the action of $C\ell_{0,d-1}$ on $\bigwedge^*\R^{d-1}$ by left-multiplication is nondegenerate.

Next we define $1\otimes_\nabla X_j$ on the dense 
submodule $C_c(\R^{d-1},B)\otimes_{\calA_e} L^2(\R^{d-1},B)$ 
for all $j\in\{1,\ldots,d-1\}$ and $\calA_e = C_c(\R^{d-1},B)$. We consider the 
connection $\nabla_j:\calA_e\to \calA_e\otimes_{\calA_e^\sim}\Omega^1(\calA_e^\sim)$ 
defined from the derivation $\partial_j a_e = [X_j,a_e]$. 
From this connection we construct the unbounded operator
\begin{equation} \label{eq:twisted_dirac}
  (1\otimes_\nabla X_j)(\psi_1\otimes \psi_2) = \psi_1\otimes X_j \psi_2 + \nabla_j(\psi_1)\psi_2
\end{equation}
for $\psi_1\in C_c(\R^{d-1},B)$ and $\psi_2\in \Dom(X_j)\subset L^2(\R^{d-1},B)$.
We refer the reader to~\cite{MeslandMonster,KL13,MR15} for more details on 
connections and the construction of operators like $1\hat\otimes_\nabla X_j$. 
Then
\begin{align} \label{eq:prod_module_candidate}
  &\bigg( C_c(\R,\calA_e) \hat\otimes C\ell_{0,1}\hat\otimes C\ell_{0,d-1},\, 
     \left( L^2(\R, A_e)\otimes_{A_e} L^2(\R^{d-1},B)\right) \hat\otimes_\R
       \bigwedge\nolimits^{\!*}\R\, \hat\otimes_\R \bigwedge\nolimits^{\!*}\R^{d-1}, \\ 
    &\hspace{7.5cm}  X_\mathrm{ext} \otimes 1 \hat\otimes \gamma_{\mathrm{ext}}\hat\otimes 1 
     + \sum_{j=1}^{d-1}(1\otimes_\nabla X_j)\hat\otimes 1\hat\otimes \gamma^j \bigg) \nonumber
\end{align}
is a candidate for the unbounded product module, where the Clifford actions take the form
\begin{align*}
   \rho_{\mathrm{ext}}\hat\otimes 1(\omega_1\hat\otimes\omega_2) &= (e_1\wedge \omega_1 - \iota(e_1)\omega_1)\hat\otimes \omega_2 \\
   1\hat\otimes \rho^j(\omega_1\hat\otimes\omega_2) &= (-1)^{|\omega_1|} \omega_1\hat\otimes(e_j\wedge \omega_2 - \iota(e_j)\omega_2),
\end{align*}
for $j\in\{1,\ldots,d-1\}$ and $|\omega_1|$ is the degree of the form $\omega_1$. 
Analogous formulas exist for the representation of $\gamma_\mathrm{ext}\hat\otimes 1$ 
and $1\hat\otimes\gamma^j$. Arguments very 
similar to the proof of Proposition \ref{prop:crossed_prod_Kas_mod} show that Equation 
\eqref{eq:prod_module_candidate} is a real or complex Kasparov module depending on what 
setting we are in.
A simple check of
Kucerovsky's criterion \cite[Theorem 13]{Kucerovsky97}, as in \cite{BCR15, BKR1},
shows that the unbounded Kasparov module of Equation 
\eqref{eq:prod_module_candidate}
is an unbounded representative of the class $[\text{ext}]\hat\otimes_{A_e}[\lambda_{d-1}]$.

Our next task is to relate the module \eqref{eq:prod_module_candidate} to 
$\lambda_d$. 
We first identify  $\bigwedge^*\R\, \hat\otimes_\R \bigwedge^* \R^{d-1} \cong \bigwedge^*\R^d$ 
and use the graded isomorphism $C\ell_{p,q}\hat\otimes C\ell_{r,s} \cong C\ell_{p+r,q+s}$ 
from~\cite[{\S}2.16]{Kasparov80}
on the left and right Clifford generators by the mapping
\begin{align*}
   &\rho_{\mathrm{ext}}\hat\otimes 1 \mapsto \rho^1,  &&1\hat\otimes \rho^j \mapsto \rho^{j+1}, \\
   &\gamma_{\mathrm{ext}} \hat\otimes 1 \mapsto \gamma^1,  &&1\hat\otimes \gamma^j \mapsto \gamma^{j+1}.
\end{align*} 
Applying this isomorphism gives the unbounded Kasparov module 
representing the product,
$$ 
 \bigg( C_c(\R,\calA_e) \hat\otimes C\ell_{0,d}, \,(L^2(\R, A_e)\otimes_{A_e} L^2(\R^{d-1},B)) 
 \hat\otimes \bigwedge\nolimits^{\!*}\R^d,\, X_\mathrm{ext}\otimes 1 \hat\otimes \gamma^1 + 
  \sum_{j=1}^{d-1} (1\otimes_\nabla X_{j}) \hat\otimes \gamma^{j+1} \bigg),
$$
with $C\ell_{0,d}$-action generated by $\rho^j(\omega) = e_j\wedge\omega -\iota(e_j)\omega$ and 
$C\ell_{d,0}$-action generated by $\gamma^j(\omega) = e_j\wedge\omega +\iota(e_j)\omega$ for 
$\omega\in\bigwedge^*\R^d$ and $\{e_j\}_{j=1}^d$ the standard basis of $\R^d$.


Next we define a unitary map $L^2(\R, A_e)\otimes_{A_e} L^2(\R^{d-1},B) \to L^2(\R^d,B)$. 
To write this map, it is advantageous to use the isomorphisms 
from Lemma \ref{lemma:Crossed_prod_module_is_std_module},
\begin{align*}
  &L^2(\R,A_e) \cong \ol{C_c(\R,A_e)},  &&L^2(\R^{d-1},B) \cong \ol{C_c(\R^{d-1},B)} 
    &&L^2(\R^{d},B) \cong \ol{C_c(\R^{d},B)},
\end{align*}
with inner-products $(f_1\mid f_2) = (f_1^* \ast f_2)(0)$ and the left-action is the 
extension of left multiplication. We work with the dense submodule
$$
  C_c(\R,\calA_e) \cong C_c(\R,C_c(\R^{d-1},B)) \cong C_c(\R)\otimes C_c(\R^{d-1},B),
$$
which allows us to write down the map
\begin{align*}
   &\varrho: C_c(\R)\otimes C_c(\R^{d-1},B) \otimes_{C_c(\R^{d-1},B)} C_c(\R^{d-1},B) \to 
     C_c(\R^d,B), \\
    &\varrho( f_1\otimes f_2 \otimes_{C_c(\R^{d-1},B)} f_3) = f_1 \otimes f_2\ast f_3 
      \in C_c(\R) \otimes C_c(\R^{d-1},B) \cong C_c(\R^d,B).
\end{align*}
As the action of left-multiplication is non-degenerate and uniformly bounded, this map extends 
to a unitary map when we take the closure in the module norm. It is easy to 
check that 
$$
\varrho\big((X_\mathrm{ext} \otimes 1\otimes 1)(f_1\otimes f_2\otimes_{C_c(\R^{d-1},B)} f_3)\big) 
  = (X_d f_1)\otimes f_2\ast f_3
   = X_d\, \varrho(f_1\otimes f_2\otimes_{C_c(\R^{d-1},B)} f_3).
$$ 
For the left-action, we see that for $g_1\otimes g_2 \in C_c(\R)\otimes C_c(\R^{d-1},B)$,
\begin{align*}
  \varrho\!\left( \pi(g_1\otimes g_2)(f_1\otimes f_2\otimes_{C_c(\R^{d-1},B)}f_3 )\right)
     &= \varrho\!\left( (g_1\ast f_1)\otimes (g_2\ast f_2)\otimes_{C_c(\R^{d-1},B)}f_3 \right) \\
     &= (g_1\ast f_1)\otimes (g_2\ast f_2)\ast f_3 \\
     &= (g_1\ast f_1)\otimes g_2\ast (f_2\ast f_3) \\
     &= (g_1\otimes g_2)(f_1 \otimes f_2\ast f_3) \\
     &= \pi(g_1\otimes g_2) \varrho(f_1\otimes f_2\otimes_{C_c(\R^{d-1},B)}f_3 ).
\end{align*}
Again, as the left-action is uniformly bounded, the result extends to show that the 
left action is compatible with the unitary map on the whole module. 
Finally we note that on the dense submodule, the operator $1\otimes_\nabla X_j$ 
from Equation \eqref{eq:twisted_dirac} has 
the form
$$
  (1\otimes_\nabla X_j)(f_1\otimes f_2 \otimes_{C_c(\R^{d-1},B)}f_3) 
    = f_1\otimes X_jf_2 \otimes_{C_c(\R^{d-1},B)} f_3 + 
       f_1\otimes f_2 \otimes_{C_c(\R^{d-1},B)} X_jf_3.
$$
Therefore we compute
\begin{align*}
   \varrho\!\left( (1\otimes_\nabla X_j)(f_1\otimes f_2 \otimes_{C_c(\R^{d-1},B)}f_3) \right) 
    &= f_1 \otimes \left((X_j f_2)\ast f_3 + f_2\ast(X_j f_3) \right) \\
    &= f_1 \otimes X_j\left( f_2\ast f_3\right) \\
    &= X_j \,\varrho(f_1\otimes f_2 \otimes_{C_c(\R^{d-1},B)}f_3)
\end{align*}
as the operator $X_j$ is a derivation on $C_c(\R^{d-1},B)$. Taking closures, we have that 
$1\otimes_\nabla X_j \mapsto X_j$ under $\varrho$.
To summarise, the unbounded Kasparov module representing the product 
is unitarily equivalent to 
\begin{equation} \label{eq:prod_module_simplified}
  \bigg( C_c(\R^d,B)\hat\otimes C\ell_{0,d},\, L^2(\R^d, B)
 \hat\otimes \bigwedge\nolimits^{\!*}\R^d,\, X_d\hat\otimes \gamma^1 
 + \sum_{j=1}^{d-1} X_{j}  \hat\otimes \gamma^{j+1}  \bigg)
\end{equation}
with left and right Clifford actions as before. The only difference 
between our product module and the bulk module  $\lambda_d$ 
from Proposition \ref{prop:crossed_prod_Kas_mod}
is the labelling of the Clifford basis. 
The map $\eta(\gamma^j)= \gamma^{\sigma(j)}$ and 
$\eta(\rho^j) = \rho^{\sigma(j)}$ for $\sigma(j) = (j-1)\,\mathrm{mod}\,d$ 
is an isomorphism of Clifford algebras that may reverse the 
orientation of the algebra.
Taking the canonical orientation 
 $\omega_{C\ell_{0,d}}=\rho^1\cdots\rho^d$ of $C\ell_{0,d}$, 
$$
  \eta(\omega_{C\ell_{0,d}}) = \rho^d \rho^1\cdots\rho^{d-1} = (-1)^{d-1} \rho^1\cdots \rho^d 
   = (-1)^{d-1} \omega_{C\ell_{0,d}},
$$
similarly $\gamma^j$ and $C\ell_{d,0}$. Using~\cite[{\S}5: Theorem 3]{Kasparov80}, such 
a map on Clifford algebras will send the $KK$-class of the Kasparov module of Equation 
\eqref{eq:prod_module_simplified} to its inverse if $\eta(\omega)=-\omega$ or 
will leave the class invariant if $\eta(\omega)=\omega$.
Hence, at the level 
of $KK$-classes, $[\mathrm{ext}]\hat\otimes_{A_e} [\lambda_{d-1}] = (-1)^{d-1}[\lambda_d]$ 
as required.
\end{proof}

\subsection{Pairings and the bulk-edge correspondence}
Given a real or complex $K$-theory element $[x]\in KO_j(B\rtimes_\theta\R^d)$ 
(we will consider the real setting as the complex case is simpler),
the unbounded Kasparov module $\lambda_d$ 
from Propostion \ref{prop:crossed_prod_Kas_mod} 
gives a map
$$
  KO_j(B\rtimes_\theta\R^d) \times KKO^d(B\rtimes_\theta\R^d,B) \to 
    KO_{j-d}(B)
$$
via the internal product. Theorem \ref{thm:gen_bulkedge} implies that 
we may decompose this bulk pairing as the product
$(-1)^{d-1}[x] \hat\otimes_{A_b}(\left [\mathrm{ext}] \hat\otimes_{A_e} [\lambda_{d-1}] \right)$.
The associativity of the Kasparov product ensures that this product can 
be expressed as 
$(-1)^{d-1}\left([x] \hat\otimes_{A_b} [\mathrm{ext}]\right) \hat\otimes_{A_e} [\lambda_{d-1}]$,
which is now a pairing over the edge algebra
$$
  KO_{j-1}(B\rtimes_\theta \R^{d-1}) \times KKO^{d-1}(B\rtimes_\theta\R^{d-1},B) 
    \to KO_{j-d}(B).
$$
The equality of these bulk and boundary pairings is the bulk-edge correspondence. 

If $d=1$, then $A_b = B\rtimes \R$, $A_e = B$ and 
$[x]\hat\otimes_{A_b} [\lambda_1] = \partial[x] \in KO_{j-1}(B)$
as $\lambda_1$ represents the extension class of the Wiener--Hopf extension. 
As the 
boundary map of the Wiener--Hopf extension is an isomorphism in $K$-theory, 
the bulk pairing will be non-trivial only if the boundary $K$-theory class is 
non-trivial.

The bulk-edge correspondence is usually associated to topological states 
of matter though we note that there may be other applications of such 
a result. We may iterate the bulk-edge correspondence to say that 
pairings of continuous crossed products $B\rtimes_\theta\R^d$ can 
be expressed as (up to a sign) a pairing of a crossed product 
of any order $B\rtimes\R^k$ (though this will place
restrictions on the allowed twists). At the level of $KK$-classes, this result follows
from the Connes--Thom isomorphism, but our explicit formulae allow us 
to derive concrete formulas for index pairings in terms of the physical operators.

Theorem \ref{thm:gen_bulkedge} immediately implies that there 
is also a bulk-edge correspondence of the semifinite index pairing.
This remark is particularly useful for complex algebras as the bulk 
and edge invariants may be computed by the local formulas from 
Theorem \ref{thm:higher_dim_chern_number_odd} and 
\ref{thm:higher_dim_chern_number_even}. Hence we obtain an equality 
of cyclic pairings for twisted dynamical systems $(B,\R^d,\alpha,\theta)$ 
with a faithful and invariant tracial weight $\tau_B$ on $B$. Furthermore, 
we know that the range of these cyclic pairings is countably generated 
and often discrete. Hence the 
bulk-edge result extends to a wide range of potential examples 
and index pairings.


\section{Applications to disordered quantum systems and topological phases} \label{sec:top_phases_application}
Here we link our mathematical framework back to continuous models of 
free-fermionic quantum mechanical systems and their topological 
invariants. Our aim is to show how the framework for modelling such 
systems developed in~\cite{NB90, BelGapLabel,Bellissard94} naturally 
fits into our constructions and results from the previous sections. 
We also 
make some comments on localisation and the stability of the 
index pairings in the strong disorder regime.

\subsection{Review: Disordered Hamiltonians and twisted crossed products}
We model a particle in $\R^d$ subject to a uniform magnetic 
field perpendicular to the sample. There is a choice of magnetic 
potential $A$, where $B = \mathrm{d}A + A\wedge A$ is the magnetic 
field. We take $A=(A_1,\ldots,A_d)$ such that 
$A_j\in L^2_\text{loc.}(\R^d)$ and differentiable with
$$  
\frac{\partial}{\partial x_j} A_k - \frac{\partial}{\partial x_k} A_j = B_{j,k} = \text{const.} 
$$
for all $j,k\in\{1,\ldots,d\}$. The case of a magnetic field 
continuously depending on $x$ is also possible (cf. Example \ref{ex:mag_twist}), 
though we will consider 
constant field strength for simplicity. We direct the reader to 
\cite{BLM13, LMR10, MPR07} for a more detailed study on magnetic fields 
and twisted crossed products. The Schr\"{o}dinger operator 
is given by
$$  
H_0 = \frac{1}{2m^*} \sum_{j=1}^d \!\left(-i\hbar\frac{\partial}{\partial x_j} - eA_j \right)^2,   
$$
where $m^*$ is the effective mass of the particle. We choose 
units such that $m^* = \frac{\hbar}{2}$ and introduce the operators
$K_j = -i\frac{\partial}{\partial x_j} - \frac{e}{h}A_j$ for $j\in\{1,\ldots,d\}$. 
We choose the 
symmetric gauge and define 
$A_j = -\frac{1}{2}\sum\limits_{k=1}^d B_{j,k}x_k$ for $j=1,\ldots,d$, 
where $B_{j,k}$ is antisymmetric and real. We introduce the parameter 
$\theta_{j,k}$ so that we can rewrite 
$$  
K_j = -i\frac{\partial}{\partial x_j} - \sum_{k=1}^d \theta_{j,k}X_k \quad \text{and} \quad \sum_{j=1}^d K_j^2 = \frac{2m^*}{\hbar^2}H_0 = H_0. 
$$

\begin{example}[Quantum Hall Hamiltonian]
In the case where $d=2$, our Hamitonian is given in the symmetric gauge as
$$ 
 H_0 = \left(-i\frac{\partial}{\partial x_1} +\theta X_2\right)^2 + \left(-i\frac{\partial}{\partial x_2} -\theta X_1\right)^2, 
$$
where $\theta\in\R$ represents the magnetic flux through a unit cell. 
We recognise this Hamiltonian as the $2$-dimensional Landau 
Hamiltonian used to model the quantum Hall effect.
\end{example}

The presence of the magnetic field means that $H_0$ does not 
commute with ordinary translation operators $S_a$, where 
$(S_a\psi)(x) = \psi(x-a)$ for $\psi\in L^2(\R^d)$ and 
$x,a\in\R^d$. However, we may define the so-called 
\emph{magnetic translations} $U_a$ such that in the symmetric 
gauge $(U_a\psi)(x) = e^{-i\theta (x\wedge a)}(x-a)$ for 
$\psi\in L^2(\R^d)$, where $\theta(x\wedge a)=\sum_{j,k=1}^d\theta_{j,k}x_j a_k$. 
We note that $\theta(x\wedge x) = 0$ and $\theta(x\wedge y) = -\theta(y\wedge x)$.
One checks that $[U_a,K_j]=0$ on $\Dom(K_j)$ for any $a\in\R^d$ and $j\in\{1,\ldots,d\}$ 
(see~\cite{Zak64} for more details on magnetic translations for general gauge choices).

We wish to consider a system with disorder or impurities. Following 
Bellissard and co-authors~\cite{BelGapLabel,NB90}, such effects can be encoded by 
the hull. We let $H = H_0+V$ with $V$ a potential coming from 
an essentially bounded, real-valued and measurable 
function on $\R^d$. Consider the set 
$$
  \Omega = \ol{\{ U_a V U_{-a}\,:a\in\R^d\}},
$$
with compact closure in the weak operator topology. Clearly $\Omega$ is endowed 
with a twisted action of $\R^d$ by magnetic translations, denoted 
by $\{T_{a}\}_{a\in\R^d}$. By considering the possible translates 
of the potential $V$, we obtain a family of Hamiltonians 
$\{H_\omega\}_{\omega\in\Omega}$ representing a disordered or 
aperiodic system.

\begin{prop}[\cite{BelGapLabel}, {\S}2.4. Also see~\cite{BourneThesis}, Corollary 3.2.11] \label{cor:Hamiltonian_hull_is_potential_hull}
Denote by $V_\omega$ the bounded function representing the 
point $\omega\in\Omega$. Then there is a Borel function 
$v$ on $\Omega$ such that $V_\omega(x) = v(T_{-x}\omega)$ 
for almost all $x\in\R^d$ and all $\omega\in\Omega$. If in 
addition $V$ is uniformly continuous and bounded, $v$ is continuous.
\end{prop}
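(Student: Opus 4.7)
The plan is to first identify the points of $\Omega$ explicitly as multiplication operators, then produce the Borel section $v$ via a Lebesgue-differentiation trick, and finally upgrade to continuity in the regular case using equicontinuity.

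\emph{Step 1: Identify $\Omega$ with bounded measurable functions.} Using the explicit form of the magnetic translations and the identity $\theta(x \wedge x) = 0$, one checks that $U_a V U_{-a}$ is the multiplication operator by the translate $V(\cdot - a)$, with the phases cancelling. Since $L^\infty(\R^d)$ is the dual of $L^1(\R^d)$, the weak-operator closure of these multiplication operators consists again of multiplication operators by elements $V_\omega \in L^\infty(\R^d)$ with $\|V_\omega\|_\infty \le \|V\|_\infty$. The twisted action $T_a$ on $\Omega$ then satisfies the covariance $V_{T_a\omega}(x) = V_\omega(x + a)$ for almost every $x \in \R^d$, directly from the relation $U_a V_\omega U_{-a} = V_\omega(\cdot - a)$. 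One also verifies joint measurability: the map $(\omega,x) \mapsto V_\omega(x)$ is Borel, because it is the weak-$\ast$ limit of translates of the fixed function $V$ along the continuous action $T_a$.

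\emph{Step 2: Constructing $v$.} Define
\[
 v(\omega) \;:=\; \limsup_{r \to 0^+} \frac{1}{|B_r|}\int_{B_r(0)} V_\omega(y)\,\mathrm{d}y,
\]
which is a Borel function on $\Omega$ by joint measurability of $(\omega,x)\mapsto V_\omega(x)$ and the monotone/limsup stability of the Borel class. For each fixed $\omega$, the Lebesgue differentiation theorem gives
\[
 \lim_{r \to 0^+} \frac{1}{|B_r|}\int_{B_r(x)} V_\omega(y)\,\mathrm{d}y = V_\omega(x)
\]
for almost every $x \in \R^d$. Combining this with the covariance relation from Step 1,
\[
 v(T_{-x}\omega) = \limsup_{r\to 0^+} \frac{1}{|B_r|}\int_{B_r(0)} V_\omega(y+x)\,\mathrm{d}y
 = \limsup_{r\to 0^+} \frac{1}{|B_r|}\int_{B_r(x)} V_\omega(z)\,\mathrm{d}z = V_\omega(x)
\]
for almost every $x$, which is exactly the required identity.

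\emph{Step 3: Continuity under uniform continuity.} If $V$ is uniformly continuous and bounded, then the family $\{V(\cdot - a) : a \in \R^d\}$ shares a common modulus of continuity and a common sup bound, so it is equicontinuous. By a standard Arzel\`a--Ascoli argument, the weak-$\ast$ closure $\Omega$ is then contained in $C_b(\R^d)$ with the same modulus of continuity, and on this set the weak-$\ast$ topology inherited from $L^\infty$ agrees with uniform convergence on compacta. In particular the evaluation $\omega \mapsto V_\omega(0)$ is continuous, and Step 2 shows that $v(\omega) = V_\omega(0)$ on this set, giving continuity of $v$.

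The only real technical point is Step 2: one must be careful that the covariance $V_{T_a\omega}(x) = V_\omega(x+a)$ holds only almost everywhere in $x$, and so a naive definition like $v(\omega) := V_\omega(0)$ is ill-posed. The Lebesgue differentiation trick dodges this by producing a canonical Borel representative that automatically reconstructs the correct pointwise values on the full-measure set coming from Fubini applied to the jointly measurable map $(\omega, x) \mapsto V_\omega(x)$.
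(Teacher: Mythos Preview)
The paper does not supply its own proof of this proposition, citing instead \cite{BelGapLabel} and \cite{BourneThesis}. Your argument is correct and is essentially the standard one found in those references: identify the hull with a weak-$*$ compact subset of $L^\infty(\R^d)$, define $v$ via Lebesgue averaging to extract a canonical Borel section, and invoke an Arzel\`a--Ascoli/equicontinuity argument for the uniformly continuous case.

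One small comment: the joint measurability assertion at the end of Step~1 is slightly awkward as stated, since $V_\omega$ is only an $L^\infty$ equivalence class and ``$V_\omega(x)$'' has no pointwise meaning before you have chosen a representative. Fortunately you do not actually need it: the Borel measurability of $v$ follows directly because each averaged functional $\omega \mapsto |B_r|^{-1}\int_{B_r(0)} V_\omega(y)\,\mathrm{d}y$ is weak-$*$ continuous on $\Omega$ (it is pairing with the $L^1$ function $|B_r|^{-1}\chi_{B_r}$), so the countable $\limsup$ along $r\in\Q_{>0}$ is Borel. The rest of the argument then goes through exactly as you wrote.
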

The proposition shows that if our disordered potential $V$ 
is uniformly continuous and bounded, we 
can associate a continuous function $v\in C(\Omega)$, a unital $C^*$-algebra. 
Furthermore, $C(\Omega)$ comes with a twisted action of $\R^d$ via the action 
on $\Omega$. Hence we may consider the twisted crossed 
product $C(\Omega)\rtimes_{\theta}\R^d$. We start with the 
dense subalgebra $\calA = C_c(\R^d,C(\Omega))$ and denote functions 
$f\in C_c(\R^d,C(\Omega))$ by $f(x;\omega)$.
We use 
the symmetric gauge which determines the twist $\theta$ and gives rise 
to the $\ast$-algebra structure,
\begin{align*}
  &(f_1 \ast f_2)(x;\omega) = \int_{\R^d}\! e^{i\theta(x\wedge y)} f_1(y;\omega) 
  f_2(x-y;T_{-y}\omega)\,\mathrm{d}y, 
   &&f^{*}(x;\omega) = \overline{ f(-x;T_{-x}\omega) }.
\end{align*}
For a fixed $\omega\in\Omega$, we can represent $\calA$
on $L^2(\R^d)$ by the map $\pi_\omega$, where
$$  
[\pi_\omega(f)\psi](x) = \int_{\R^d}\!e^{-i\theta(x\wedge y)} f(y-x;T_{-x}\omega)\psi(y)\,\mathrm{d}y 
$$
for all $\psi\in L^2(\R^d)$. A computation shows that 
$\pi_\omega$ is a $\ast$-algebra homomorphism. 
Furthermore, the representation satisfies 
the covariance condition 
\begin{equation}  
U_a \pi_\omega(f) U_{-a} = \pi_{T_a\omega}(f) 
\label{eq:covariance}
\end{equation}
for all $\omega\in\Omega$ and $f\in C_c(\R^d,C(\Omega))$. 
Hence we obtain a family of representations of the resulting 
crossed product completion $A=C(\Omega)\rtimes_\theta\R^d$.
A key result of~\cite{BelGapLabel} is the following.

\begin{thm}[\cite{BelGapLabel}, Theorem 6] 
\label{thm:H_w_represented_by_algebra}
Take $H= \sum_j K_j^2 +V$ acting on $L^2(\R^d)$ with hull $\Omega$. 
For each $z$ in the resolvent set of $H$ 
and $x\in\R^d$ there is an element $R(z;x)\in A$ 
such that for all $\omega\in\Omega$, 
$\pi_\omega[R(z;x)] = (z-H_{T_{-x}\omega})^{-1}$.
\end{thm}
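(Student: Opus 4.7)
The plan is to exploit the magnetic-translation covariance of the family $(H_\omega)_{\omega\in\Omega}$ to reduce the construction to a single reference $\omega$, and then realise the resolvent inside $A$ (or its multiplier algebra) as a Dyson series of iterated twisted convolutions.

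First I would verify that $U_a H_\omega U_{-a} = H_{T_a\omega}$. Since every $K_j$ commutes with the magnetic translations, $U_a H_0 U_{-a} = H_0$. For the potential, a direct computation using $a\wedge a = 0$ yields $(U_a V_\omega U_{-a}\psi)(x) = V_\omega(x-a)\psi(x)$, and Proposition~\ref{cor:Hamiltonian_hull_is_potential_hull} identifies $V_\omega(x-a) = v(T_{a-x}\omega) = V_{T_a\omega}(x)$. Passing to resolvents gives $U_a(z-H_\omega)^{-1}U_{-a} = (z-H_{T_a\omega})^{-1}$, which matches the covariance~\eqref{eq:covariance} of the representations $\pi_\omega$. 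Consequently, once an element $R(z)$ is constructed so that $\pi_{\omega_0}(R(z)) = (z-H_{\omega_0})^{-1}$ for one reference $\omega_0$, the same identity holds for every $\omega$ along its orbit; the dependence $R(z;x)$ in the statement is then packaged by relabelling $\omega \mapsto T_{-x}\omega$.

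To construct $R(z)$ I would begin with $z$ in a half-plane where $\|V_\omega(z-H_0)^{-1}\| < 1$ uniformly in $\omega$ (such a region exists since $\|V_\omega\|_\infty = \|v\|_\infty$ is independent of $\omega$), and expand
\[
 (z-H_\omega)^{-1} = (z-H_0)^{-1}\sum_{n\ge 0}\bigl(V_\omega(z-H_0)^{-1}\bigr)^{n}.
\]
The unperturbed resolvent is computed from Mehler's kernel for $e^{-tH_0}$ via Laplace transform; its integral kernel has the form $e^{-i\theta(x\wedge y)}G_0(y-x;z)$ with $G_0(\,\cdot\,;z)$ smooth and rapidly decaying. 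Matching against the integral-kernel formula for $\pi_\omega$ identifies $(z-H_0)^{-1} = \pi_\omega(R_0(z))$ where $R_0(z)(u;\omega) = G_0(u;z)$ is constant in $\omega$ and so lies in $\calM(A)$. Each further insertion of $V_\omega$ amounts to multiplication by $v \in C(\Omega)$, and the $\ast$-algebra structure on $C_c(\R^d,C(\Omega))$ realises the $n$-th term as $\pi_\omega(R_n(z))$ with $R_n(z) = R_0(z)\ast(v\cdot R_0(z))^{\ast n}$. Summing yields $R(z)$ with $\pi_\omega(R(z)) = (z-H_\omega)^{-1}$, and the Neumann expansion $R(z) = R(z_0)\sum_k ((z-z_0)R(z_0))^k$ propagates the construction analytically through the connected components of the resolvent set of $H$.

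The main obstacle is controlling convergence in the $C^*$-norm (rather than merely in operator norm for one representation) and confirming that $R(z)$ lies in $A$ rather than only in $\calM(A)$. Because $\R^d$ is amenable, the full and reduced crossed product norms agree, so $\|R_n(z)\|_A = \sup_\omega \|\pi_\omega(R_n(z))\|$, and the rapid decay of Mehler's kernel together with uniform boundedness of $v$ delivers a summable geometric bound in the chosen half-plane. Membership in $A$ itself, as opposed to $\calM(A)$, is obtained by approximating each $R_n(z)$ by compactly supported elements of $C_c(\R^d,C(\Omega))$ using the decay of $G_0$ and passing to $C^*$-limits; openness of the resolvent set, combined with the Neumann expansion in $z$, preserves this membership under analytic continuation.
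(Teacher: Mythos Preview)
The paper does not give a proof of this theorem; it is quoted verbatim from \cite{BelGapLabel} and used as a black box. Your outline is essentially Bellissard's original argument: express the free magnetic resolvent as a twisted convolution kernel (via Mehler/heat-kernel methods), insert the potential through a norm-convergent Dyson series to land in $A$ for $z$ far from the spectrum, and then propagate to the full resolvent set by the resolvent identity. So there is nothing to compare against in this paper, but your approach matches the cited source.

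Two points worth tightening. First, your covariance paragraph suggests that equality at a single $\omega_0$ plus orbit-covariance suffices; it does not, because $\Omega$ is the \emph{closure} of the orbit and one must pass to limit points. In fact your Dyson construction already gives $\pi_\omega(R(z))=(z-H_\omega)^{-1}$ for \emph{every} $\omega$ simultaneously, since each term $R_n(z)$ is built from $v\in C(\Omega)$ and the $\omega$-independent $G_0$; you should state it that way and drop the reduction-to-one-$\omega$ framing. Second, the analytic continuation step needs the radius of convergence $\|R(z_0)\|_A^{-1}=\big(\sup_\omega\|(z_0-H_\omega)^{-1}\|\big)^{-1}$, so the region you reach is the complement of $\bigcup_\omega\sigma(H_\omega)$; identifying this with the resolvent set of the original $H$ uses the standard fact (also in \cite{BelGapLabel}) that for $\omega_0$ with dense orbit, $\sigma(H_{\omega_0})=\overline{\bigcup_\omega\sigma(H_\omega)}$. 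Once these are made explicit your sketch is complete.
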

Therefore we can take the algebra of observables of a disordered 
magnetic Hamiltonian to be $C(\Omega)\rtimes_\theta \R^d$. Hence we
may consider topological properties of the physical system by 
studying the topology of the crossed product algebra.

\subsubsection{Measures and traces}
We now suppose that the disorder space of configurations $\Omega$ 
has a probability measure $\bP$ that is invariant 
under the $\R^d$-action and $\mathrm{supp}(\bP)=\Omega$. 
The measure $\bP$ induces a trace $\tau_\bP$ on $C(\Omega)$ by 
integration that is 
semifinite, norm lower-semicontinuous (by Fatou's lemma applied to $\bP$) and 
faithful by the support assumption on $\bP$.

We now extend $\tau_\bP$ to an unbounded trace on the crossed 
product $C(\Omega)\rtimes_\theta\R^d$. For $f\in C_c(\R^d,C(\Omega))$ and 
$f\geq 0$, we define 
$$  \calT(f) = \tau_\bP[f(0)] = \int_{\Omega} f(0;\omega)\,\mathrm{d}\bP(\omega). $$

Using \cite{LN04}, and Propositions \ref{prop:adjointable_repn_of_crossed_prod}, \ref{prop:crossed_prod_Kas_mod}, we deduce that $\calT$ satisfies our
technical hypotheses. 
\begin{lemma} \label{lemma:trace_on_algebra_is_faithful}  \label{lemma:trace_is_semifinite_and_lsc}
The functional $\calT$ is a faithful semifinite norm lower-semicontinous 
trace with $\calA\subset \Dom(\calT)$. Furthermore, $\calT$ extends to a semifinite 
trace on $(C(\Omega)\rtimes_\theta \R^d)'' \subset \calB[L^2(\R^d)\otimes L^2(\Omega, \bP)]$.
\end{lemma}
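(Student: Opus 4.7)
The plan is to reduce the lemma to Theorem 1.1 of \cite{LN04}, which has already been invoked above, after an explicit identification of $\calT$ with the module trace $\Tr_\tau$ built from $\tau_\bP$ on $L^2(\R^d, C(\Omega))$.

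First I would check the tracial properties of $\calT$ on $\calA$ by direct computation. Under the standing hypothesis $\theta(y,-y) = 1$ the convolution at $0$ reduces to
$$
(f_1 \ast f_2)(0) = \int_{\R^d} \alpha_y(f_1(y))\, f_2(-y)\,\mathrm{d}y,
$$
and the cancellation $\alpha_y \circ \alpha_{-y} = \mathrm{id}$ furnished by Equation \eqref{eq:twist} yields $(f^* \ast f)(0) = \int_{\R^d} |f(z)|^2\,\mathrm{d}z$. Applying $\tau_\bP$ then gives positivity and, using the faithfulness of $\tau_\bP$, faithfulness of $\calT$. The cyclicity $\calT(f_1 \ast f_2) = \calT(f_2 \ast f_1)$ drops out from a change of variable $y \mapsto -y$, the $\R^d$-invariance of $\tau_\bP$, and the commutativity of $C(\Omega)$; the inclusion $\calA \subset \Dom(\calT)$ is immediate.

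Next I would identify $\calT$ on $\calA$ with $\Tr_\tau$ pulled back along the representation $\pi$ of Proposition \ref{prop:adjointable_repn_of_crossed_prod}. The isomorphism of Lemma \ref{lemma:Crossed_prod_module_is_std_module} yields $(f^* \ast f)(0) = (f \mid f)_{C(\Omega)}$, and combining with the rank-one formula $\Tr_\tau(\Theta_{\xi_1, \xi_2}) = \tau_\bP((\xi_2 \mid \xi_1)_{C(\Omega)})$ from the Laca--Neshveyev theorem gives
$$
\calT(f^* \ast f) = \tau_\bP\!\left((f \mid f)_{C(\Omega)}\right) = \Tr_\tau(\Theta_{f, f}).
$$
Equivalently, the GNS representation of $(A, \calT)$ is canonically unitarily equivalent, via $L^2(\R^d, C(\Omega)) \otimes_{C(\Omega)} L^2(C(\Omega), \tau_\bP) \cong L^2(\R^d) \otimes L^2(\Omega, \bP)$, to the representation appearing in Lemma \ref{lem:rep}.

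With that identification in hand, all remaining claims follow from part (2) of the quoted Laca--Neshveyev theorem, which supplies a faithful semifinite trace (norm lower-semicontinuous by normality together with the density of the compact endomorphisms) on
$$
\calN = \End_{C(\Omega)}^{00}(L^2(\R^d, C(\Omega)))'' \cong (C(\Omega) \rtimes_\theta \R^d)'' \subset \calB[L^2(\R^d) \otimes L^2(\Omega, \bP)].
$$
This is exactly the extension demanded in the second claim of the lemma, and pulling this trace back along the GNS embedding established above recovers $\calT$ on $\calA$, transferring norm lower-semicontinuity and semifiniteness; the latter is also visible directly since $\calA \subset \Dom(\calT)$ is norm-dense in $A$. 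The main obstacle is precisely the identification $\calT(f^*\ast f) = \Tr_\tau(\Theta_{f,f})$: once this is in place, every other assertion of the lemma is a formal consequence of the general theory already developed in Sections \ref{sec:Crossed_prod_Kas_mod} and \ref{sec:semifinite_spec_trip}.
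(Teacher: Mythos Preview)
Your proposal is correct and follows precisely the route the paper indicates: the paper gives no proof beyond the sentence ``Using \cite{LN04}, and Propositions \ref{prop:adjointable_repn_of_crossed_prod}, \ref{prop:crossed_prod_Kas_mod}, we deduce that $\calT$ satisfies our technical hypotheses,'' and you have simply expanded this reference into the intended argument, with the key identification $\calT(f^*\ast f)=\tau_\bP((f\mid f)_{C(\Omega)})$ via Lemma \ref{lemma:Crossed_prod_module_is_std_module} playing exactly the role you say it does.
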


\begin{prop}[\cite{Bellissard94}, Proposition 1]  \label{prop:traces_equal}
Let $f\in\calA_+$. If $\bP$ is an ergodic measure, then 
for almost all $\omega\in\Omega$,
$$   
\calT(f) = \Tr_{\mathrm{Vol}}[\pi_\omega(f)], 
$$
with $\Tr_\mathrm{Vol}$ the trace per unit volume on $L^2(\R^d)$.
\end{prop}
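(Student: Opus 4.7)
The plan is to compute the trace per unit volume of $\pi_\omega(f)$ explicitly via the integral-kernel description of $\pi_\omega(f)$ and then identify the spatial average with a phase-space average using an ergodic theorem. From the formula
\begin{equation*}
 [\pi_\omega(f)\psi](x) = \int_{\R^d}\! e^{-i\theta(x\wedge y)} f(y-x;T_{-x}\omega)\psi(y)\,\mathrm{d}y
\end{equation*}
one reads off the kernel $k_\omega^f(x,y) = e^{-i\theta(x\wedge y)} f(y-x;T_{-x}\omega)$. Because $f\in C_c(\R^d,C(\Omega))$, this kernel is continuous and supported in $\{(x,y):|y-x|\leq R\}$ for some $R>0$, and since $\theta(x\wedge x)=0$ the magnetic phase is trivial on the diagonal, giving $k_\omega^f(x,x)=f(0;T_{-x}\omega)$.

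Next, take boxes $\Lambda_L=[-L/2,L/2]^d$. The operator $\chi_{\Lambda_L}\pi_\omega(f)\chi_{\Lambda_L}$ has continuous kernel of compact support, so is Hilbert--Schmidt; as $f\in\calA_+$ and $\pi_\omega$ is a $\ast$-homomorphism, $\pi_\omega(f)$ is positive, hence so is $\chi_{\Lambda_L}\pi_\omega(f)\chi_{\Lambda_L}$, and for positive operators with continuous compactly supported kernels the trace equals the integral of the diagonal. Therefore
\begin{equation*}
\frac{1}{|\Lambda_L|}\Tr\!\big[\chi_{\Lambda_L}\pi_\omega(f)\chi_{\Lambda_L}\big]
 = \frac{1}{|\Lambda_L|}\int_{\Lambda_L} f(0;T_{-x}\omega)\,\mathrm{d}x.
\end{equation*}
I would then apply the pointwise ergodic theorem of Birkhoff--Wiener for measure-preserving $\R^d$-actions. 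Since $\bP$ is $T$-invariant and ergodic and $\omega\mapsto f(0;\omega)$ is continuous on the compact space $\Omega$ (hence bounded and $\bP$-integrable), the theorem yields
\begin{equation*}
\lim_{L\to\infty}\frac{1}{|\Lambda_L|}\int_{\Lambda_L} f(0;T_{-x}\omega)\,\mathrm{d}x
 = \int_\Omega f(0;\omega')\,\mathrm{d}\bP(\omega')
 = \calT(f)
\end{equation*}
for $\bP$-almost every $\omega\in\Omega$, which combined with the preceding display proves the claim.

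The main delicate point is the identification of the trace of $\chi_{\Lambda_L}\pi_\omega(f)\chi_{\Lambda_L}$ with $\int_{\Lambda_L} k_\omega^f(x,x)\,\mathrm{d}x$. This requires more than Hilbert--Schmidt regularity, and rather than invoking a Mercer/Brislawn-type kernel theorem directly, a cleaner route is to write the positive element as $f=g^{\ast}\ast g$ for some $g\in\calA$, use $\chi_{\Lambda_L}\pi_\omega(f)\chi_{\Lambda_L}=(\pi_\omega(g)\chi_{\Lambda_L})^{\ast}(\pi_\omega(g)\chi_{\Lambda_L})$, compute its trace as $\|\pi_\omega(g)\chi_{\Lambda_L}\|_{HS}^{2}$, and unfold the Hilbert--Schmidt norm via the convolution formula for $g^{\ast}\ast g$, recovering the same diagonal integral while sidestepping any abstract kernel-theoretic subtlety. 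The other verification needed is that the magnetic-translation action on $\Omega$ is measurable, $\bP$-preserving, and ergodic, which is all in the hypotheses, so the multidimensional Birkhoff--Wiener theorem applies unchanged.
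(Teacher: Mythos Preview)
Your proposal is correct and follows essentially the same route as the paper: identify the integral kernel, observe that the diagonal is $f(0;T_{-x}\omega)$, restrict to an exhausting family of bounded boxes, compute the trace by integrating the diagonal, and conclude via the multidimensional Birkhoff ergodic theorem. The paper handles the trace-equals-diagonal-integral step in exactly the way you suggest at the end, namely by writing $f=g^*g$ so that the restricted operator is manifestly trace class as a product of Hilbert--Schmidt factors (citing \cite{ReedSimon1} and \cite{SimonTraceIdeals}), rather than invoking positivity plus a Mercer-type argument directly.
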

\begin{proof}
Given $g\in\calA$, we know that 
$$ 
[\pi_\omega(g)\psi](x) = \int_{\R^d}\! e^{-i\theta(x\wedge y)}g(y-x;T_{-x}\omega)\psi(y)\,\mathrm{d}y 
$$
so $\pi_\omega(g)$ is an integral operator with kernel 
$k_\omega(x,y) = e^{-i\theta(x\wedge y)}g(y-x;T_{-x}\omega)$.
We let $\Lambda_j$ be a sequence of increasing sets that converge 
in the appropriate sense to $\R^d$, e.g. $\Lambda_j = [-j,j]^d$. 
Because $\Lambda_j$ is bounded and $k_\omega(x,y)$ is continuous, 
$\pi_\omega(g)$ is Hilbert-Schmidt on $L^2(\Lambda_j)$ 
by~\cite[Theorem VI.23]{ReedSimon1} for any 
$g\in C_c(\R^d,C(\Omega))$. Therefore we can say that the product 
$\pi_\omega(g^*g)$ is trace-class 
by~\cite[Theorem VI.22, part (h)]{ReedSimon1} 
for $g\in C_c(\R^d,C(\Omega))$. We can take the trace $\Tr_{\Lambda_j}$ 
by integrating along the diagonal~\cite[Theorem 3.9]{SimonTraceIdeals}. 
Computing the trace of $f= g^*g$,
\begin{align*}
  \Tr_{\Lambda_j}[\pi_\omega(f)] = \int_{\Lambda_j}k_\omega(x,x)\,\mathrm{d}x 
   &= \int_{\Lambda_j}e^{-i\theta(x\wedge x)}f(x-x;T_{-x}\omega)\,\mathrm{d}x 
   = \int_{\Lambda_j}f(0;T_{-x}\omega)\,\mathrm{d}x.
\end{align*}
As the action of $\R^d$ by $T$ on $\Omega$ is $\bP$-measure 
preserving, a continuous version of Birkhoff's Ergodic Theorem 
in higher dimensions~\cite[Section 4]{XZ79} gives that
\begin{align*}
  \Tr_{\mathrm{Vol}}[\pi_\omega(f)] 
  &= \lim_{j\to\infty} \frac{1}{|\Lambda_j|} \Tr_{\Lambda_j}[\pi_\omega(f)] 
  = \lim_{j\to\infty} \frac{1}{|\Lambda_j|}\int_{\Lambda_j}\!f(0;T_{-x}\omega)\,\mathrm{d}x 
  = \int_{\Omega}f(0;\omega)\,\mathrm{d}\bP(\omega) = \calT(f)   
\end{align*} 
for almost all $\omega$.
\end{proof}

\subsection{Invariants of topological systems} \label{subsec:disordered_pairings}
Because the hull $\Omega$ is compact and Hausdorff, $C(\Omega)$ is a separable 
$C^*$-algebra and our general theory applies to the example. We note that a disordered 
family of Hamiltonians $\{H_\omega\}_{\omega\in\Omega}$ affiliated to $C(\Omega)\rtimes_\theta \R^d$ 
can be considered as a single element $H$ affiliated to $C(\Omega)\rtimes_\theta \R^d$. Often 
we will work with $H$ considered as a family of Hamiltonians, though occasionally it will be 
useful to consider a particular configuration $H_\omega$ acting on 
$L^2(\R^d)$ for $\omega\in\Omega$.

We may also be interested in the case when the Hamiltonian 
$H$ satisfies a $CT$-type symmetry. That is, $H$ is time reversal and/or 
particle-hole and/or chiral symmetric. If we are interested in 
anti-linear symmetries on $H$, then this introduces a Real structure 
on the crossed product algebra $C(\Omega)\rtimes_\theta \R^d$ via 
complex conjugation. As such we are interested in the topological invariants 
of the real subalgebra of $C(\Omega)\rtimes_\theta \R^d$ that is 
compatible with the symmetries under consideration. Such an algebra is still a real crossed 
product $C(\Omega)\rtimes_\theta \R^d$, but the anti-linear symmetries 
present may put limitations on the type of magnetic 
fields and twists $\theta$ that are possible.

\begin{prop}[\cite{FM13, Thiang14, Kellendonk15, Kubota15b, BCR15}]
\label{prop:symmetries_give_k_theory}
If the disordered family of Hamiltonians $\{H_\omega\}_{\omega\in\Omega}$ satisfies a $CT$-type symmetry and retains 
a spectral gap for all $\omega\in\Omega$, then we can associate a class in $KO_n(C(\Omega)\rtimes_\theta\R^d)$ or 
$K_n(C(\Omega)\rtimes_\theta\R^d)$, where $n$ is determined by the symmetry.
\end{prop}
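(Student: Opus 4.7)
The plan is to extract a $K$-theory class from the spectral data of $H$ using the fact that its resolvents lie in $C(\Omega)\rtimes_\theta \R^d$ (Theorem \ref{thm:H_w_represented_by_algebra}), then use the $CT$-symmetries to refine this class into the appropriate degree of real or complex $K$-theory.

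First, because the spectral gap at $\mu$ is uniform in $\omega\in\Omega$, one can choose a function $f\in C(\R)$ with $f\equiv 1$ below the gap and $f\equiv 0$ above, whose restriction to $\mathrm{spec}(H_\omega)$ agrees with $\chi_{(-\infty,\mu]}$ for every $\omega$. Because $C(\Omega)\rtimes_\theta\R^d$ is closed under continuous functional calculus of self-adjoint affiliated operators whose resolvents lie in the algebra, the Fermi projection $P_\mu=f(H)$ defines a self-adjoint projection in the multiplier algebra (or, equivalently, in the minimal unitisation). Together with a reference projection $\pi^n(P_\mu)$ at infinity, this yields a well-defined class $[P_\mu]-[\pi^n(P_\mu)]\in K_0(C(\Omega)\rtimes_\theta\R^d)$. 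Equivalently, the flattened Hamiltonian $Q=1-2P_\mu$ is a self-adjoint unitary in the unitisation.

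Second, one accounts for the $CT$-symmetries. Each symmetry forces $H$ to commute or anticommute with a (possibly antilinear) involution on $L^2(\R^d)$, which descends to a Real/graded structure on the symmetry-compatible real subalgebra $A_{\mathrm{sym}}$ of $C(\Omega)\rtimes_\theta\R^d$ and on $Q$ itself. In the Van Daele / Karoubi picture, a self-adjoint unitary $Q$ anticommuting with generators $e_1,\ldots,e_n$ of a Clifford algebra of signature dictated by the symmetry type represents a class in $KO_n(A_{\mathrm{sym}})$ (respectively $K_n$ in the complex chiral case). The assignment of $n\in\{0,1,\ldots,7\}$ to each of the ten Altland--Zirnbauer classes is the content of the tenfold way as formulated in \cite{FM13,Thiang14,Kellendonk15,Kubota15b,BCR15}, and one simply invokes this dictionary.

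The main technical point is ensuring that $P_\mu$ really lies in (the multiplier of) the crossed product and that its symmetry properties transfer to the Clifford structure on $A_{\mathrm{sym}}$. The first is routine from Theorem \ref{thm:H_w_represented_by_algebra} once one notes that continuous functional calculus on the spectrum of $H$ is controlled by the norm-closure of polynomials in the resolvents; the second is bookkeeping but requires care because the magnetic twist $\theta$ interacts with complex conjugation, which (as noted in Example \ref{ex:mag_twist}) restricts the admissible magnetic fields for each antilinear symmetry class. Beyond this, the construction of the $K$-theory class is functorial and the theorem reduces to citing the established translation between Altland--Zirnbauer symmetry types and Clifford degrees.
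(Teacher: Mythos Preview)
The paper does not supply its own proof of this proposition: it is stated with attribution to \cite{FM13, Thiang14, Kellendonk15, Kubota15b, BCR15} and used as a black box. Your sketch is a reasonable summary of how those references (especially the van Daele/Karoubi approach of \cite{Kellendonk15, BCR15}) actually construct the class, so there is nothing substantive to compare against in the paper itself.

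That said, one point in your outline deserves tightening. You assert that $P_\mu$ lies in the multiplier algebra (or unitisation) of $C(\Omega)\rtimes_\theta\R^d$, but for a Schr\"odinger operator bounded from below with a spectral gap at $\mu$, the Fermi projection is in fact an element of the crossed product itself, not merely its multiplier: one writes $P_\mu = g(H)(z-H)^{-1}$ for a suitable $g\in C_0(\R)$ and then invokes Theorem~\ref{thm:H_w_represented_by_algebra} directly. The distinction matters because $K$-theory classes built from projections in the multiplier algebra need not lie in $K_0(A)$ without the ``difference with a reference projection'' step you mention, whereas here that step is unnecessary. The rest of your sketch---passing to the flattened Hamiltonian $Q=1-2P_\mu$ and reading off the Clifford degree from the symmetry type via the tenfold-way dictionary---is exactly what the cited references do.
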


We can pair the $K$-theory class from Proposition 
\ref{prop:symmetries_give_k_theory} with our complex or real 
unbounded Kasparov module
\begin{equation} \label{eq:disorder_Kas_mod}
\bigg( \calA\hat\otimes C\ell_{0,d}, \, L^2(\R^d,C(\Omega))_{C(\Omega)} \hat\otimes 
  \bigwedge\nolimits^{\!*}\R^d, \, \sum_{j=1}^d X_j \hat\otimes \gamma^j \bigg).
\end{equation}
Furthermore, the invariant trace $\tau_\bP$ on $C(\Omega)$ allows us to 
construct the 
semifinite spectral triple
\begin{equation*} \label{eq:disorder_semifinite_spec_trip}
 \bigg( \calA \hat\otimes C\ell_{0,d}, \, L^2(\R^d)\otimes L^2(\Omega,\bP) 
  \hat\otimes \bigwedge\nolimits^{\!*}\R^d, \, \sum_{j=1}^d X_j \hat\otimes \gamma^j \bigg)
\end{equation*}
relative to the trace $\calT$ and von Neumann algebra $(C(\Omega)\rtimes_\theta \R^d)''$. This spectral triple
 is smoothly summable with spectral dimension $d$.
 
\begin{remark}
Let us briefly comment on the possible $K$-theoretic phases encoded by the group 
$KO_n(C(\Omega)\rtimes_\theta\R^d)$ or $K_n(C(\Omega)\rtimes_\theta\R^d)$ 
from Proposition \ref{prop:symmetries_give_k_theory}. We note that the (twisted) $d$-fold 
Connes--Thom isomorphism can be implemented by the Kasparov product with 
the Kasparov module from Equation \eqref{eq:disorder_Kas_mod}. Hence we have the 
explicit isomorphism $KO_n(C(\Omega)\rtimes_\theta \R^d) \cong KO_{n-d}(C(\Omega))$ 
(similarly complex).

Let us consider the computation of $KO_{n-d}(C(\Omega))$ in a few simple cases.
If $\Omega$ is contractible, then $KO_{n-d}(C(\Omega))\cong KO_{n-d}(\R)$. 
If $\Omega$ a totally disconnected space, then by the continuity of the $K$-functor 
$KO_{n-d}(C(\Omega)) \cong C(\Omega, KO_{n-d}(\R))$.
Outside of these examples, the computation of the $K$-theory of the configuration 
space is much more involved. For continuous hulls $\Omega$ that come 
from certain tilings, Savinien and Bellissard construct a spectral sequence that
converges to $K_{n-d}(C(\Omega))$ and whose page-2 is isomorphic to the 
integer \v{C}ech cohomology of $\Omega$~\cite{BelSav}. Hence the computation of 
$KO_n(C(\Omega)\rtimes_\theta \R^d)$ (or complex) in general is a highly non-trivial problem 
which we will not consider here.
\end{remark}

\subsubsection{Complex invariants}
For a disordered Hamiltonian $H=\{H_\omega\}_{\omega\in\Omega}$ without additional symmetries, 
the $K$-theory 
class of interest is the Fermi projection $P_\mu = \chi_{(-\infty,\mu]}(H)$. 
If $P_\mu \in \calA_\mathrm{Sob}$, the Sobolev algebra 
from Section \ref{sec:localisation_general}, then the index pairing with $P_\mu$ is 
well-defined. We will show in Section \ref{subsec:qc_and_localisation} that 
$P_\mu \in \calA_\mathrm{Sob}$ when the Fermi energy $\mu$ is in a region of 
dynamical localisation (using results from~\cite{AENSS}). If $\mu$ is in a gap 
in the spectrum, then $P_\mu \in\calA$, the smooth subalgebra of $C(\Omega)\rtimes_\theta\R^d$ 
and $[P_\mu] \in K_0(\calA)$.

If the Hamiltonian has a chiral symmetry, then $H$ is invertible and there is a self-adjoint 
complex unitary $R_c$ such that $R_c H R_c^* = -H$. Diagonalising 
$R_c$ if necessary, this implies that $H$ can be written as 
an off-diagonal matrix. We can use the chiral symmetry to define 
the so-called Fermi unitary $U_\mu$, where
$$
   1-2P_\mu = \mathrm{sgn}(H)= \begin{pmatrix} 0 & U_\mu^* \\ U_\mu & 0 \end{pmatrix}.
$$
If $H$ is invertible, then 
$U_\mu = \frac{1}{2}(1-R_c)(1-2P_\mu)\frac{1}{2}(1+R_c) \in \calA^\sim$ 
and we obtain a $K$-theory class $[U_\mu] \in K_1(\calA)$
 provided $R_c \in \calA^\sim$. We can also consider 
a more general setting where $P_\mu\in \calA_\mathrm{Sob}$ and 
$R_c \in \calA_\mathrm{Sob}^\sim$; then 
$U_\mu = \frac{1}{2}(1-R_c)(1-2P_\mu)\frac{1}{2}(1+R_c) \in \calA_\mathrm{Sob}^\sim$. 
Other assumptions are also possible to ensure that $U_\mu$ is a 
well-defined unitary in $\calA_\mathrm{Sob}^\sim$.

If $P_\mu \in M_q(\calA_\mathrm{Sob})$ and $U_\mu \in M_q(\calA_{\mathrm{Sob}}^\sim)$, then our semifinite 
pairing gives the cyclic expressions
\begin{align*}  
\langle [U_\mu], [X] \rangle &= C_d \sum_{\sigma\in S_d}(-1)^\sigma\, 
 (\Tr_{\C^q}\otimes\calT) \bigg(\prod_{i=1}^d U_\mu^* \,\partial_{\sigma(i)}U_\mu \bigg), 
    \quad d \text{ odd}, \\
  \langle [P_\mu], [X]\rangle &=  \frac{(-2\pi i)^{d/2}}{(d/2)!} \sum_{\sigma\in S_d} 
  (-1)^\sigma\, (\Tr_{\C^q}\otimes\calT) \bigg(P_\mu \prod_{i=1}^d 
  \partial_{\sigma(i)}P_\mu \bigg),
    \quad d \text{ even}
\end{align*}
where, we recall, $C_{2n+1} = \frac{ 2(2\pi i)^n n!}{(2n+1)!}$ and 
$(\partial_jf)(x;\omega) = x_j f(x;\omega)$ with the property that
$\pi_\omega(\partial_j f)=[X_j,\pi_\omega(f)]$ for all $\omega\in\Omega$ 
and $f\in C_c(\R^d,C(\Omega))$ (and then extended to $\calA_{\mathrm{Sob}}$).

We require that the probability measure $\bP$ on $\Omega$ is invariant under the 
action of magnetic translations. If we also assume that the measure $\bP$ is ergodic under 
the twisted $\R^d$-action, then by Proposition \ref{prop:traces_equal} we can write
\begin{align*}  
\langle [U_\mu], [X] \rangle &= C_d \sum_{\sigma\in S_d}(-1)^\sigma\, 
 (\Tr_{\C^q}\otimes \Tr_{\mathrm{Vol}}) \bigg(\prod_{j=1}^d 
       \pi_\omega(U_\mu)^* [X_{\sigma(j)},\pi_\omega(U_\mu)] \bigg), \quad d \text{ odd}, \\
  \langle [P_\mu], [X]\rangle &=  \frac{(-2\pi i)^{d/2}}{(d/2)!} \sum_{\sigma\in S_d} 
  (-1)^\sigma\, (\Tr_{\C^q}\otimes \Tr_{\mathrm{Vol}}) \bigg(\pi_\omega(P_\mu) 
     \prod_{j=1}^d [X_{\sigma(j)},\pi_\omega(P_\mu)] \bigg),
    \quad d \text{ even}
\end{align*}
for almost all $\omega\in\Omega$. By Theorem \ref{thm:ergodic_loc_pairing}, 
the pairings for ergodic measures are almost surely integer valued and constant in $\omega$. 
Furthermore, the formulas for the pairing now involve a specific 
configuration $H_\omega$ acting on the concrete Hilbert space $L^2(\R^d)$ with 
physical trace $\Tr_\mathrm{Vol}$. Therefore our cyclic formulas can be 
linked to physical phenomema more easily. 
It is shown in~\cite[Chapter 5, 7]{PSBbook} 
that, for $A=C(\Omega)\rtimes_\theta\Z^d$, 
the Chern number formulas can be linked to transport 
coefficients of the linear conductivity tensor of solid state systems. 
The link between lower order coefficients of the conductivity tensor 
and our cyclic pairings in the continuous case has been studied in~\cite{DNLeinBook}.

\subsubsection{Real invariants}  \label{sec:Real_smooth_pairings}
Semifinite index pairings play an important role in characterising complex 
topological phases, but their application to topological phases with anti-linear 
symmetries is somewhat limited. Instead, we will show that the 
Kasparov product and Clifford module valued indices are a more natural tool 
for characterising topological phases. 
We assume for the time being that the Hamiltonian 
has a spectral gap so we can work in the smooth algebra 
$\calA \subset C(\Omega)\rtimes_\theta \R^d$ (the case of the Sobolev 
algebra and the link to dynamical localisation will be considered in Section \ref{Sec:real_loc_pairings}).

We can pair the semifinite spectral triple from 
Equation \eqref{eq:disorder_semifinite_spec_trip} with 
a class in $KO_d(A)$ to obtain
numerical phase labels that take value in $(\tau_\bP)_\ast [KO_{0}(C(\Omega))]$. 
Because we take an 
expectation over the $K$-theory class, the semifinite pairing 
gives `disorder-averaged' numerical invariants.

For non-torsion elements in $KO_{d+4}(A)$, we take the  
product with the Kasparov module from Equation \eqref{eq:disorder_Kas_mod}, 
which gives a class in $KO_4(C(\Omega))$. By the identification 
$KO_4(C(\Omega)) \cong KO_0(\mathbb{H}\otimes C(\Omega))$, 
we can again take an average over the configuration space $\Omega$ provided the 
measure on $\Omega$ is compatible with the quaternionic structure 
that comes from classes in $KO_4$.

For other $K$-theoretic phases that are non-torsion 
and whose product with the bulk Kasparov module lands in 
$KO_k(C(\Omega))$ with $k\neq 0,4$, we can also 
obtain disorder-averaged invariants by first noting that
$$
  KO_{k}(C(\Omega)) \cong  KKO(C\ell_{k,0},C(\Omega)) \cong KKO(\R, C(\Omega)\hat\otimes C\ell_{0,k}) 
$$
and then applying the graded trace on $C(\Omega)\hat\otimes C\ell_{0,k}$. If 
we wish to avoid graded traces, then we can use Bott periodicity in 
$KKO$-theory 
to relate $KKO(\R, C(\Omega)\hat\otimes C\ell_{0,k}) \cong KKO(\R, C(\Omega) \otimes C_0(\R^k))$,
which comes via the product with the Bott class
$$
  \bigg( C\ell_{0,k}, \, C_0(\R^k)_{C_0(\R^k)} \hat\otimes \bigwedge\nolimits^{\! *} \R^k, \, 
    \sum_{j=1}^k x_j \hat\otimes \gamma^j \bigg),
$$
see~\cite[{\S}5]{Kasparov80}. The trace on $C(\Omega) \otimes C_0(\R^k)$ is 
easier to 
understand, but using this trace requires that we have to take another product. 
A concrete expression for this pairing would depend on the specific symmetries
of the Hamiltonian that feed into the construction of the $K$-theory class 
in Proposition \ref{prop:symmetries_give_k_theory}.

Let us now consider phases that come from torsion classes in 
$KO_n(C(\Omega)\rtimes_\theta \R^d)$. Because the semifinite index pairing 
involves taking a trace, it is not well-suited to detecting torsion indices. 
In such circumstances, we instead take the Kasparov product 
of the $K$-theory class of the Hamiltonian with the Kasparov module from Equation 
\eqref{eq:disorder_Kas_mod},
$$
  KO_n(A) \times KKO^d(A,C(\Omega)) \to KKO(C\ell_{n,d},C(\Omega)) \cong
    KO_{n-d}(C(\Omega)).
$$
The class in $KO_{n-d}(C(\Omega))$ is encoded via a Clifford index, analogous to the
approach of \cite{ABS64} and extended in~\cite[Section 2.2]{Schroder}. Using the unbounded representative 
of the Kasparov product, $(C\ell_{n,d},E_{C(\Omega)}, \tilde{X})$, this 
Clifford index is given by the equivalence class of the $C^*$-module $\Ker(\tilde{X})_{C(\Omega)}$ as 
a graded $C\ell_{n,d}$-module (when this makes sense). Analytic formulas for this index 
can be written down in concrete examples of Hamiltonians, 
see~\cite[Section 4.1]{BCR15} for example.

Because the $K$-theory of $C(\Omega)$ is often hard to compute, we can simplify our 
Clifford module valued index by composing with the evaluation map 
$\mathrm{ev}_\omega: C(\Omega)\to \R$, which gives pairings of the form
$$
   KO_n(A) \times KKO^d(A,C(\Omega)) \to KKO(C\ell_{n,d},C(\Omega)) 
   \xrightarrow{ \mathrm{ev}_\omega} KO_{n-d}(\R).
$$
This pairing can more simply be described as the product of 
$[H] \in KO_n(\calA)$ (from Proposition \ref{prop:symmetries_give_k_theory}) 
with the class of the spectral triple $[\lambda_d(\omega)]\in KO^d(\calA)$ that 
comes from the evaluation map (or the direct integral decomposition of the 
semifinite spectral triple from Section \ref{sec:chop-chop}).

Suppose the spectral triple 
$( C\ell_{n,d}, \, \calH, \, \wt{X}_\omega )$ represents the pairing 
$[H] \hat\otimes_A [\lambda_d(\omega)]$ and that $\wt{X}_\omega$ 
graded-commutes with the generators of $C\ell_{n,d}$ (which can always be 
guaranteed without changing the $K$-homology class). The equivalence class of the Clifford 
module $\Ker(\wt{X}_\omega)$  can be written as an analytic 
formula using the index map for  
skew-adjoint Fredholm operators considered in~\cite{AS69}. Suppose that 
$T$ is an odd Fredholm operator on a graded Hilbert space 
$\calH\cong \calH_+ \oplus \calH_-$ and $T$ anti-commutes (graded-commutes) with a 
representation of $C\ell_{n,d}$ on $\calH$. Then $\Ker(T)$ has the 
structure of a graded left $C\ell_{n,d}$-module to which we can be associate the 
analytic index
\begin{equation}  \label{eq:skew_Fred_index}
  \Index_{n,d}(T) = 
  \begin{cases}  \mathrm{dim}_\R \Ker(T_+) - \mathrm{dim}_\R \Ker(T_+^*), & n-d = 0\,\mathrm{mod}\, 8, \\
     \mathrm{dim}_\R \Ker(T_+)\,\,\mathrm{mod}\,2, & n-d = 1\,\mathrm{mod}\, 8, \\
     \mathrm{dim}_\C \Ker(T_+)\, \,\mathrm{mod}\, 2, & n-d= 2\,\mathrm{mod}\, 8, \\
     \mathrm{dim}_\mathbb{H} \Ker(T_+) - \mathrm{dim}_\mathbb{H} \Ker(T_+^*), & n-d = 4\,\mathrm{mod}\, 8, \\
     0, & \text{otherwise}, \end{cases}   \qquad 
     T_\pm :\calH_\pm \to \calH_\mp.
\end{equation}
Considering the example of $\wt{X}_\omega$,  the Clifford module valued index 
$[\Ker(\wt{X}_\omega) ]$ will be non-trivial (where the class is zero 
if $\Ker(\wt{X}_\omega)$ comes from the restriction of a left $C\ell_{n+1,d}$-module) 
if and only if $\Index_{n,d}\!\big( (\wt{X}_\omega)_+ \big)$ is non-zero.  
See~\cite{GSB15} for concrete examples.

Let us also remark that we can also define 
real indices with values in $KO_{n-d}(C(\Omega))$ using the 
local-global principle, 
provided that the $\R^d$ action is ergodic. 
The most useful way to view this index is as the class of a right $C(\Omega)$-module 
(so sections of a real vector bundle over $\Omega$) with a left $C\ell_{n,d}$ action.

\subsubsection{Example: The disordered Kane--Mele model}
We consider the famous Kane--Mele model for two-dimensional topological insulators with 
fermionic time-reversal symmetry. The Hamiltonian of interest is
$$
   H_{KM}^\omega = \begin{pmatrix} h_\omega & g^* \\ g & \calC h_\omega \calC \end{pmatrix},
$$
where $h_\omega$ is a self-adjoint operator acting on $L^2(\R^2,\C^n)$, $g$ is playing the 
role of the Rashba 
coupling in the continuous setting and $\calC$ is (point-wise) complex conjugation on $L^2(\R^2,\C^n)$. The time 
reversal involution we consider is given by $R_T = \begin{pmatrix} 0 & -\calC \\ \calC & 0 \end{pmatrix}$. 
We see that for $H_{KM}^\omega$ to be time-reversal symmetric, then we  require 
$g^* = -\calC g \calC$. In fact for $K$-theoretic purposes, 
it would also be sufficient for $g$ to be sufficiently 
bounded by $h_\omega$ so that there is a 
homotopy (in the resolvent topology) 
to a Hamiltonian with $g=0$.

We assume that the disorder on $h_\omega$ is such that $\{h_\omega\}_{\omega\in\Omega}$ 
is affiliated to $C(\Omega, M_{n}(\C))\rtimes \R^2$ (e.g. $\Omega$ is the hull of a non-periodic 
but absolutely continuous potential $V$) and so $H_{KM}$ is affiliated to $C(\Omega, M_{2n}(\C))\rtimes \R^2$. 
We restrict to the time-reversal invariant and real subalgebra $A_{KM} = C(\Omega, M_{2n}(\R))\rtimes \R^2$. 

Let us assume for the time being that $H_{KM}^\omega$ has a gap for all $\omega\in\Omega$. 
Then the Fermi projection $P_\mu \in C(\Omega, M_{2n}(\R))\rtimes \R^2$. If we incorporate 
the extra structure $R_T P_\mu R_T^* = P_\mu$ and $R_T^2=-1$, we obtain a projection 
compatible with a quaternionic structure and
so a  class $[P_\mu] \in KO_4(A_{KM})$. Following~\cite{BCR15} 
this class is represented by the Kasparov module 
$$
   \left( C\ell_{4,0}, \,  P_\mu (A_{KM})^{\oplus 2}_{A_{KM}}, \, 0 \right),
$$
where the $C\ell_{4,0}$-action comes from the equivalence between 
quaternionic spaces and $C\ell_{4,0}$-modules~\cite[Appendix B]{FM13}.
We pair this class with our unbounded Kasparov module
$$
  \lambda_2 =  \bigg( C_c(\R^2, C(\Omega,M_{2n}))\hat\otimes C\ell_{0,2}, \, 
    L^2(\R^2, C(\Omega, M_{2n}))_{C(\Omega, M_{2n})}  \hat\otimes \bigwedge\nolimits^{\!*}\R^2, \, 
   X= \sum_{j=1}^2 X_j  \hat\otimes \gamma^j \bigg),
$$
where, if we complexify this module, then we have that 
$$
   (R_T\otimes 1_{\bigwedge^* \C^d})X (R_T \otimes 1_{\bigwedge^* \C^d})^* = \sum_{j=1}^2  R_T(X_j\otimes 1_{\C^{2n}})R_T^* \hat\otimes \gamma^j 
    = \sum_{j=1}^2 \calC X_j \calC \otimes 1_{\C^{2n}} \hat\otimes \gamma^j = X.
$$
The product of the two Kasparov modules gives the following,
$$
  \bigg( C\ell_{4,2}, \, P_\mu (L^2(\R^2, C(\Omega, M_{2n}))_{C(\Omega,M_{2n})} )^{\oplus 2} 
   \hat\otimes \bigwedge\nolimits^{\!*}\R^2, \, 
   \sum_{j=1}^2 P_\mu(X_j \otimes 1)P_\mu \hat\otimes \gamma^j \bigg).
$$
The topological information of interest is obtained in the class of $\Ker(P_\mu X P_\mu )$ considered 
as a Clifford module over $C\ell_{4,2}$.

By composing our pairing with the evaluation map 
at  $\omega\in\Omega$, we can use the indices 
for skew-adoint Fredholm operators considered in~\cite{AS69} to obtain analytic formulas for the map
$$
  KO_4(A_{KM}) \times KKO^2(A_{KM}, C(\Omega, M_{2n})) \to KO_2(C(\Omega,M_{2n})) 
   \xrightarrow{\mathrm{ev}_\omega} KO_2(\R) \cong \Z_2.
$$
Taking this composition, we have that 
\begin{align*}
   \langle [P_\mu], [\lambda_2] \rangle (\omega) &= 
     \mathrm{dim}_\C \Ker \big(\pi_\omega(P_\mu)( X_1 \otimes 1 \hat\otimes \gamma^1 + X_2 \otimes 1 \hat\otimes \gamma_2)_+ 
        \pi_\omega(P_\mu) \big)  \,\mathrm{mod}\, 2 \\
     &=  \mathrm{dim}_\C \Ker\big(\pi_\omega(P_\mu)( X_1 \otimes 1  + iX_2 \otimes 1) \pi_\omega(P_\mu) \big)\, \mathrm{mod}\, 2,
\end{align*}
where in the last line we have chosen particular Clifford generators. We recognise this index as 
analogous to the indices considered in~\cite{SchulzBaldes13b, KK15}. 

It is shown in~\cite{KK15} that for the 
Kane--Mele model (without disorder), the defined analytic index 
agrees with the Kane--Mele invariant and is non-trivial. 
This result is proved in the discrete setting, but 
can be linked to our framework via the Bloch--Floquet transform for periodic potentials, see~\cite[Chapter XIII.16]{ReedSimon4}. 
We also remark that the operator $\pi_\omega(P_\mu)( X_1 \otimes 1  + iX_2 \otimes 1) \pi_\omega(P_\mu)$ 
continues to be Fredholm if $P_\mu\in\calA_\mathrm{Sob}$. Hence, our $\Z_2$-valued index 
is still well-defined as a pairing over the Sobolev algebra. Then applying results from 
Section \ref{subsec:qc_and_localisation} and \ref{Sec:real_loc_pairings}, our formula is 
still well-defined  in regions of dynamical localisation.

We note that if the Rashba coupling is zero, we have that $[\sigma_z, H^\omega_{KM}] = 0$ for all 
$\omega\in\Omega$. Therefore there is a decomposition of $P_\mu$ into $P_\mu^\pm$ corresponding 
to the $+1$ and $-1$ eigenspaces of the spin operator $\sigma_z$. From this point we can directly 
adapt results from~\cite{SchulzBaldes13b, Kellendonk16} to simplify the computation of the mod 2 index. Namely, 
provided that $P_\mu^\pm \in \calA_\mathrm{Sob}$, then using that the time-reversal involution $R_T$
is such that $R_T{\pi_\omega(P_\mu^\pm)}R_T^* = \pi_\omega(P_\mu^\mp)$, we compute with $F$ the phase 
of $X_1\otimes 1 + i X_2\otimes 1$ and such that $R_TF R_T^* = F^*$,
\begin{align*}
  \mathrm{dim}_\C \Ker(\pi_\omega(P_\mu) F \pi_\omega(P_\mu)) &= \mathrm{dim}_\C \Ker(\pi_\omega(P_\mu^+) F \pi_\omega(P_\mu^+)) 
            + \mathrm{dim}_\C \Ker(\pi_\omega(P_\mu^-) F \pi_\omega(P_\mu^-))  \\
    &\hspace{-0cm}= \Index(\pi_\omega(P_\mu^+) F \pi_\omega(P_\mu^+)) +  
      \mathrm{dim}_\C \Ker(\pi_\omega(P_\mu^+) F^* \pi_\omega(P_\mu^+)) \\ 
       &\hspace{5.5cm}+ \mathrm{dim}_\C \Ker(\pi_\omega(P_\mu^-) F \pi_\omega(P_\mu^-))  \\
    &\hspace{-0cm}= \Index(\pi_\omega(P_\mu^+) F \pi_\omega(P_\mu^+)) + 
       \mathrm{dim}_\C \Ker(R_T \pi_\omega(P_\mu^+) F^* \pi_\omega(P_\mu^+) R_T^*) \\
         &\hspace{6cm}+ \mathrm{dim}_\C \Ker(\pi_\omega(P_\mu^-) F \pi_\omega(P_\mu^-)) \\
    &\hspace{-0cm}= \Index(\pi_\omega(P_\mu^+) F \pi_\omega(P_\mu^+)) + 
      2 \mathrm{dim}_\C \Ker(\pi_\omega(P_\mu^-) F \pi_\omega(P_\mu^-))
\end{align*}
and so $\mathrm{dim}_\C \Ker(P_\mu F P_\mu)\,\mathrm{mod}\,2 = \Index(\pi_\omega(P_\mu^+) F \pi_\omega(P_\mu^+))\,\mathrm{mod}\,2$ 
(the same formula is also true for $\pi_\omega(P_\mu^-)$). 
Because we have assumed $P_\mu^\pm \in \calA_\mathrm{Sob}$, we can use the cyclic formula for the index pairing to conclude that,
$$
  \langle [P_\mu], [\lambda_2 ] \rangle (\omega) = 
    -2\pi i \, \calT\!\left( \pi_\omega(P_\mu^\pm) \big[ [X_1,\pi_\omega(P_\mu^\pm)], [X_2, \pi_\omega(P_\mu^\pm)] \big] \right) 
    \mathrm{mod}\, 2.
$$
Similar results with less restrictive assumptions can be found in~\cite[Section 6]{Kellendonk16}.

We now consider the relation of our pairing to (real) Poincar\'{e} duality when 
there is no disorder. If there is no disorder and the potential is periodic, then using the 
Bloch--Floquet transform, the relevant observable algebra is (up to stabilisation) 
the real $C^*$-algebra 
$C(\{\mathrm{pt}\},\R) \rtimes\Z^2 \cong C^*(\Z^2) \cong C(i\T^2)$ with 
$$
  C(i\T^2) = \left\{ f\in C(\T^2,\C)\,:\, \ol{f(k)} = f(-k) \right\},  \qquad KO_\ast (C(i\T^2)) \cong KR^{-\ast}(\T^2,\zeta)
$$
with $\zeta$ the involution $k\mapsto -k$.
The Kasparov module $\lambda_2$ is now just a real spectral triple 
for $C(i\T^2)\otimes Cl_{0,2}$ and can be extended to a spectral triple $\Lambda_2$ for 
$ C(i\T^2)\otimes C(i\T^2) \hat\otimes Cl_{0,2}$ (via the diagonal map
$\Delta:\T^2\to\T^2\times\T^2$).
Hence we have a representative of Kasparov's fundamental class~\cite{KasparovNovikov} and 
obtain a graded group isomorphism
$$
\cdot\otimes_{C(i\T^2)}[\Lambda_2]:\,KO_*(C(i\T^2))
\to KO^{*-2}(C(i\T^2)).
$$
Thus for every non-zero
element $[x]\in KO_*(C(i\T^2))$ we have $[x]\otimes_{C(i\T^2)}[\Lambda_2]\neq 0$.
Specialising to the case $[z]\in KO_4(C(i\T^2))$ and 
pairing the $K$-homology class $[z]\otimes_{C(i\T^2)}[\Lambda_2]$ 
with $[1]\in KO_0(C(i\T^2))$ gives
$$
[1]\otimes_{C(i\T^2)}([z]\otimes_{C(i\T^2)}[\Lambda_2])=[z]\otimes_{C(i\T^2)}[\lambda_2].
$$
Using the Atiyah--Bott--Shapiro framework, 
the pairing on the right hand side is also computable as follows. One finds
a representative $(Cl_{2,0},\mathcal{H},T)$ of the pairing and then regards
the kernel of  the operator $T$ in $KO_{2}(\R)$
as a graded left $C\ell_{2,0}$-module~\cite{ABS64}. Actually, given the structure of the product module, 
we actually start with $C\ell_{4,2}$-modules, and have simply removed a $Cl_{2,2}$ module using the Morita equivalence $Cl_{2,2}\sim \R$.
As noted above, this index lies in a group isomorphic to $\Z_2$, and is computable as a complex index computed $\bmod\,2$.

It remains to see that we surject onto $KO_2(\R)$. 
Recall that $KO_4(C(i\T^2)) \cong KR^{-4}(\T^2,\zeta)\cong \Z\oplus\Z_2$.
Here $\Z$ is the quaternionic rank of the Bloch 
bundle $\{P_\mu(k)\}_{k\in\T^d}$ and $\Z_2$ is the Kane--Mele 
invariant.
Since the Bloch bundle is quaternionic, it has even complex dimension, and thus the $\bmod\,2$ 
kernel dimension is zero.

For the torsion generator, suppose that
$\mathrm{dim}_\C \Ker\big(\pi_\omega(P_\mu)( X_1 \otimes 1  + iX_2 \otimes 1) \pi_\omega(P_\mu) \big)\, \mathrm{mod}\, 2=0$. 
If this index is trivial, then our $Cl_{4,2}$ module is the restriction of a $C\ell_{5,2}$-module. 
Using that $C\ell_{4,2} \cong C\ell_{4,0}\hat\otimes C\ell_{0,2}$, which 
is our decomposition of the index pairing, then the restriction  of a 
$C\ell_{5,2}$-module implies that the
original class $[P_\mu] \in KKO(C\ell_{4,0},C(i\T^2))$ is  the 
restriction of a $C\ell_{5,0}$ module (where we now refer to a Clifford module over 
a finitely generated and projective $C(i\T^2)$-module). 
We  remark that the $C\ell_{5,2}$-module cannot come from a $Cl_{1,2}$ structure 
on $\lambda_2$ as $\End_\R(\bigwedge^*\R^2) \cong C\ell_{0,2}\hat\otimes C\ell_{2,0}$ 
and the entirety of $\End_\R(\bigwedge^*\R^2)$ is used in the construction of 
$\lambda_2$.
If $[P_\mu]$ is the restriction of a $C\ell_{5,0}$-module,
 this implies that the class $[P_\mu] \in KO_4(C(i\T^2))$ is zero. 
The contrapositive of this argument then implies that a non-trivial 
$\Z_2$ component of the class 
$[P_\mu]$ will then give a non-trivial index pairing with $\lambda_2$. That is, 
the torsion part of $KO_4(A_{KM})$ is detected by
$\lambda_2$, at least in the absence of disorder.

\subsection{The bulk-edge correspondence}
Kellendonk and Richard consider disordered systems with boundary 
using the short exact sequence coming from the Wiener--Hopf extension,
$$
 0 \to (C(\Omega)\rtimes_\theta\R^{d-1}) \otimes \calK \to 
   \left(C_0(\R\cup \{+\infty\})\otimes C(\Omega)\rtimes_\theta\R^{d-1} \right)\rtimes \R 
     \to (C(\Omega)\rtimes_\theta\R^{d-1}) \rtimes \R \to 0,
$$
see~\cite{KR06}. It is proved in the case $d=2$ in~\cite{KSB04a} that 
a disordered Hamiltonian $H_\omega$ acting on $L^2(\R^{d-1}\times(-\infty,s])$ 
is affiliated to the algebra $\left(C_0(\R\cup \{+\infty\})\otimes 
C(\Omega)\rtimes_\theta\R^{d-1} \right)\rtimes \R$ for any 
$s\in\R$. Hence we can think of the Wiener--Hopf algebra as 
representing the half-infinite system with boundary.

Recall the unbounded Kasparov module $[\lambda_d] \in KKO^d(C(\Omega)\rtimes_\theta \R^d,C(\Omega))$
that is used to derive the noncommutative Chern numbers and disordered Clifford indices 
from Section \ref{subsec:disordered_pairings}. Analogous to the discrete setting in~\cite{BKR1}, 
factorisation of this Kasparov module via the Wiener--Hopf extension (Theorem \ref{thm:gen_bulkedge}) 
means that our analytic indices can be written as pairings over the bulk or edge 
algebra. Up to the sign $(-1)^{d-1}$, the bulk and edge pairings 
coincide. In particular, non-trivial topological effects 
are present on the boundary if and only if non-trivial effects 
are present in the bulk.

For complex invariants, our Chern number formulas apply for both bulk and edge 
invariants. For dimensions $1$, $2$ and $3$, work by~\cite{KSB04b,PSBbook} explicitly 
links our edge pairings, $\langle \partial[U_\mu],[X_{d-1}] \rangle$ and 
 $\langle \partial[P_\mu],[X_{d-1}]\rangle$, to the edge states, 
edge conductance or surface quantum Hall-like effect of a disordered Hamiltonian 
acting on a system with boundary (here $\partial$ is the boundary map in complex $K$-theory of the 
Wiener--Hopf extension). 

For real invariants, while we have an 
explicit equivalence of the analytic bulk and edge pairings, the link to 
the physical system is much harder to interpret. This is particularly true for torsion 
invariants which often cannot be detected by any local formula (see for example 
the discussion in~\cite[p148]{AS5}).

\subsection{Localisation of complex bulk invariants} \label{subsec:qc_and_localisation}
As a final step in our study of continuous models of disordered quantum systems, 
we connect the Sobolev index pairings considered
in Section \ref{sec:localisation_general} to dynamically localised observables. 

Recall the von Neumann algebra 
$(C(\Omega)\rtimes_\theta \R^d)'' \subset \calB[L^2(\R^d)\otimes L^2(\Omega,\bP)]$, 
which we denote $L^\infty(\Omega,\bP)\rtimes_\theta \R^d$. 
In the case of an ergodic measure $\bP$, we can also characterise this 
algebra as the $\ast$-algebra of weakly measurable families 
$\Omega \ni \omega\mapsto B_\omega\in\calB[L^2(\R^2)]$ which 
satisfy the covariance condition $U_a B_\omega U_{-a} = B_{T_a\omega}$ 
for all $a\in\R^d$ and with the norm 
$\| B\|_\infty = \bP\text{-ess}\sup_\omega \|B_\omega\|_{\calB[L^2(\R^d)]}$, 
see~\cite[Section 2]{LenzCrossedProd} or~\cite[Section 6]{ConnesMeasure}.

The study of localisation of observables in continuous models is 
considerably more complicated than its discrete counterpart. 
We will simply quote a result from~\cite{AENSS} and apply 
it to our models of interest. Similar results can also be found 
in~\cite{GK13, GT13}. We first note some notation. 
In the following theorem, we let $\chi_I(A)$ be the spectral projection of 
an operator $A$ associated to the interval $I\subset \R$. 
The kinds of random potentials treated in \cite{AENSS}
are very general, but not completely so. The 
characterisation
of the class of potentials requires a
fixed radius $r$ as described in~\cite[Section 1.7]{AENSS}). 
Given this $r$, we
denote by $\chi_{B^r_x}$ the characteristic function 
of a ball $B^r_x$ centred at $x$ of radius $r$.

\begin{thm}[\cite{AENSS}, Theorem 1.1] 
\label{thm:big_localisation_thm}
Let $\{H_\omega\}_{\omega\in\Omega}$ be a family of 
random magnetic Schr\"{o}dinger operators on $L^2(\R^d)$ satisfying the regularity
assumptions outlined in~\cite[Section 1.7]{AENSS}.
Let $\Xi$ be an open subset
of $\R^d$, and $\Lambda_n$ an increasing 
sequence of bounded open subsets of $\Xi$ with
$\cup \Lambda_n = \Xi$. Suppose that for some 
$0 < s < 1$ and an open bounded interval
$J$ there are constants $C < \infty$ and $m > 0$ such that
\begin{equation} \label{eq:localisation_hypoth_bound}
  \int_J \mathbb{E}\left(\left\| \chi_{B^r_x} (H_\omega^{(\Lambda_n)} - E)^{-1} \chi_{B^r_y} \right\|^s 
    \right) \mathrm{d}E \leq C Ae^{-m\, \mathrm{dist}_{\Lambda_n}(x,y)}
\end{equation}
for all $n \in \N$, $x, y \in \Lambda_n$. 
Then for every $v < 1/(2s)$ there exists $C_v < \infty$
such that, for all $x, y \in \Xi$,
\begin{equation}  \label{eq:localisation_bound}
  \mathbb{E}\bigg( \sup_{g:|g|\leq 1} \big\| \chi_{B^r_x} g(H_\omega^{(\Xi)}) \chi_J(H_\omega^{(\Xi)}) \chi_{B^r_y} \big\| \bigg) 
    \leq C_v e^{-v m \,\mathrm{dist}_{\Xi}(x,y)},
\end{equation}
where the supremum is taken over all Borel measurable functions $g$ which
satisfy $|g| \leq 1$ pointwise.
\end{thm}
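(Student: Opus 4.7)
The plan is to follow the eigenfunction correlator method of Aizenman and collaborators: the theorem is a transfer statement, upgrading a finite-volume fractional moment bound on the resolvent to an infinite-volume bound on the supremum of the Borel functional calculus. I would proceed in four stages.

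First, I would introduce the finite-volume eigenfunction correlator
$$
Q^{(\Lambda)}_J(x,y) := \sup_{g:\,|g|\leq 1} \bigl\| \chi_{B^r_x}\, g(H_\omega^{(\Lambda)})\, \chi_J(H_\omega^{(\Lambda)})\, \chi_{B^r_y} \bigr\|,
$$
where the supremum runs over Borel functions bounded by one on $\R$. This is precisely the quantity whose expectation we must estimate. Via the spectral theorem one checks that $Q^{(\Lambda)}_J(x,y)$ is controlled by the total variation of the $\chi_{B^r_x}$--$\chi_{B^r_y}$ matrix element of the spectral measure of $H^{(\Lambda)}_\omega$ on $J$, which puts it in a form amenable to contour-integral representation.

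Second, I would relate $Q^{(\Lambda)}_J$ to fractional moments of the resolvent. Using Stone's formula
$$
\chi_J(H) = \frac{1}{2\pi i}\lim_{\eta\downarrow 0}\int_J \bigl[(H-E-i\eta)^{-1} - (H-E+i\eta)^{-1}\bigr]\,\mathrm{d}E,
$$
together with an interpolation between the $s$-th power bound and a pointwise norm bound, one obtains an inequality of the schematic form
$$
Q^{(\Lambda_n)}_J(x,y) \leq \Bigl(\int_J \bigl\|\chi_{B^r_x}(H^{(\Lambda_n)}_\omega-E)^{-1}\chi_{B^r_y}\bigr\|^s\,\mathrm{d}E\Bigr)^{\!\alpha} \cdot W_n(x,y)^{1-\alpha},
$$
where $W_n$ is a Wegner-type factor controllable in expectation under the regularity assumptions of \cite[Section 1.7]{AENSS}. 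Taking expectation and applying Hölder's inequality with exponents $1/\alpha$ and $1/(1-\alpha)$, then invoking the hypothesis \eqref{eq:localisation_hypoth_bound} on the first factor and the Wegner estimate on the second, yields
$$
\mathbb{E}\bigl[Q^{(\Lambda_n)}_J(x,y)\bigr] \leq C_\alpha\, e^{-\alpha m\,\mathrm{dist}_{\Lambda_n}(x,y)}.
$$
Optimising $\alpha$ subject to the constraint that the Wegner term remain integrable forces $\alpha < 1/(2s)$, producing the rate $v m$ with $v < 1/(2s)$ asserted in \eqref{eq:localisation_bound}.

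Third, I would pass from the bounded exhausting sets $\Lambda_n$ to $\Xi$. Under the hypotheses on the potentials, the Dirichlet restrictions $H^{(\Lambda_n)}_\omega$ converge in the strong resolvent sense to $H^{(\Xi)}_\omega$, and continuous functional calculus respects this limit almost everywhere with respect to the spectral measure. Since $\mathrm{dist}_{\Lambda_n}(x,y) \to \mathrm{dist}_\Xi(x,y)$ for $x,y \in \Xi$ fixed, Fatou's lemma transfers the exponential bound from the finite volume to $\Xi$. The supremum over Borel $g$ with $|g|\leq 1$ survives the limit because it is a lower-semicontinuous functional of the spectral measure.

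The main obstacle is the interpolation in the second step: one must balance the fractional moment exponent $s$ against the Wegner-type regularity with a Hölder argument that is sharp enough to give $1/(2s)$, and this requires careful use of the specific regularity assumptions from \cite[Section 1.7]{AENSS} (in particular the Lipschitz behaviour of the disorder distribution, which enters the Wegner bound). This step is substantially more delicate than in the discrete setting, where resolvent kernels are bounded matrix entries and no smoothing by the characteristic functions $\chi_{B^r_x}$ is needed.
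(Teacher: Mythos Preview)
The paper does not prove this theorem at all: it is quoted verbatim from \cite[Theorem~1.1]{AENSS}, introduced with the sentence ``We will simply quote a result from~\cite{AENSS} and apply it to our models of interest.'' There is therefore no proof in the paper to compare your proposal against.

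Your sketch is a reasonable outline of the eigenfunction-correlator strategy that \cite{AENSS} actually uses, and the overall architecture (relate the correlator to fractional moments via a spectral-measure representation, interpolate, then pass to the infinite volume) is correct in spirit. But be aware that the interpolation step you describe is more heuristic than what is really needed: in \cite{AENSS} the passage from the fractional-moment hypothesis to the correlator bound goes through a delicate spectral-averaging argument specific to the continuum (their Lemma~3.2 and Theorem~4.1), not a direct H\"older interpolation of the form you wrote. The factor $1/(2s)$ arises from that spectral-averaging machinery rather than from optimising a H\"older exponent against a Wegner bound. If you intend to reproduce the proof rather than cite it, that is the step requiring the most work.
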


Applying Equation \eqref{eq:localisation_bound} to the operators $g_t(H_\omega) = e^{-itH_\omega}$ yields
$$
  \mathbb{E}\left( \sup_t \big\| \chi_{B^r_x} e^{-itH_\omega^{(\Xi)}} \chi_J(H_\omega^{(\Xi)}) \chi_{B^r_y} \big\| \right) 
   \leq C_v e^{-vm\, \mathrm{dist}_{\Xi}(x,y)},
$$
a strong version of dynamical localisation.

The Hamiltonians we consider for applications to topological phases will always be 
bounded from below and affiliated to $L^\infty(\Omega,\bP)\rtimes_\theta \R^d$. 
Though we will occasionally require stricter assumptions.

\begin{cor} \label{cor:Fermi_proj_quasicts}
Let $H$ be 
affiliated to $L^\infty (\Omega)\rtimes_\theta\R^d$ and representing 
a family $\{H_\omega\}_{\omega\in\Omega}$ of disordered Hamiltonians.
 If $\{H_{\omega}\}_{\omega\in\Omega}$
 satisfies the hypothesis of 
Theorem \ref{thm:big_localisation_thm}, then 
$\chi_{(-\infty, E]}(H) \in \calA_\mathrm{Sob}$ for any $E$ in the 
localised region $J \subset\R$. In particular, if the Fermi energy $\mu$ 
is in $J$, then
the Fermi projection $P_\mu \in \calA_\mathrm{Sob}$.
\end{cor}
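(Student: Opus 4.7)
The strategy is to use the AENSS localisation bound of Theorem \ref{thm:big_localisation_thm} to prove exponential decay of the kernel of $P_E=\chi_{(-\infty,E]}(H)$ in its crossed-product presentation, and then to observe that such decay is enough to place $P_E$ in all Sobolev spaces $\calW_{r,p}$ simultaneously. Since $P_E\in\calN$ is immediate from the functional calculus and $P_E=P_E\cdot P_E$, membership in $\wt{\calA}_\mathrm{Sob}$ is all that needs to be established to conclude $P_E\in\calA_\mathrm{Sob}$.

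For the first step, recall that each $a\in\calN$ is determined by a measurable function $a(\cdot;\cdot):\R^d\times\Omega\to\C$ via the integral-kernel formula $k_\omega(x,y)=e^{-i\theta(x\wedge y)}a(y-x;T_{-x}\omega)$ used in Proposition \ref{prop:traces_equal}. Applying \eqref{eq:localisation_bound} to the admissible Borel function $g=\chi_{(-\infty,E]}$ (which satisfies $|g|\leq 1$) yields
$$
\mathbb{E}\!\Big(\big\|\chi_{B^r_x}\,\chi_{(-\infty,E]}(H_\omega)\,\chi_J(H_\omega)\,\chi_{B^r_y}\big\|\Big)\leq C_v\,e^{-vm\,\mathrm{dist}(x,y)}.
$$
Decomposing $\chi_{(-\infty,E]}(H)=\chi_{(-\infty,E]}(H)\chi_J(H)+\chi_{(-\infty,\inf J)}(H)$, the first summand is controlled directly by the estimate above, while the second is handled by a Combes--Thomas bound on the resolvent below $J$ (and is vacuous when $\inf\sigma(H)\in J$, as in the motivating examples). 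Translating through the covariance identity for $\pi_\omega$, averaging in the $x$-variable against a compactly supported bump, and invoking the $\R^d$-invariance of $\bP$, one arrives at the pointwise kernel estimate
$$
\tau_\bP\!\left(|P_E(u;\cdot)|^{2}\right)\ \leq\ C\,e^{-c|u|},\qquad u\in\R^d.
$$

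From this exponential decay, every $L^2$-Sobolev norm of $P_E$ is finite: in the function picture $\partial^\alpha P_E$ is the multiplier $u\mapsto u^\alpha P_E(u;\cdot)$, and a computation as in Lemma \ref{lemma:Hilbert_schmidt_estimate} gives
$$
\Tr_\tau\!\left(|\partial^\alpha P_E|^{2}\right)=\int_{\R^d}|u|^{2|\alpha|}\,\tau_\bP\!\left(|P_E(u;\cdot)|^{2}\right)\,\mathrm{d}u<\infty
$$
for every multi-index $\alpha$, so $P_E\in\calW_{r,2}$ for all $r$. To extend this to every $p\in\N_+$, I would invoke the noncommutative H\"older inequality for $L^p(\calN,\Tr_\tau)$ \cite[Theorem 4.2]{FK} together with $\|P_E\|_\infty=1$ and the Leibniz expansion
$$
\partial^\alpha(P_E\cdot P_E)=\sum_{\beta\leq\alpha}\binom{\alpha}{\beta}\,\partial^\beta P_E\,\cdot\,\partial^{\alpha-\beta}P_E
$$
applied to the idempotent identity $P_E=P_E\cdot P_E$: the resulting relations dominate $\|\partial^\alpha P_E\|_p$ by polynomial combinations of the $\|\partial^\beta P_E\|_{2}$ with $|\beta|\leq|\alpha|$, all of which are finite by the preceding step. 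Hence $P_E\in\bigcap_{r,p}\calW_{r,p}$ and therefore $P_E\in\wt{\calA}_\mathrm{Sob}\subset\calA_\mathrm{Sob}$.

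The hardest step in practice is the conversion of the operator-norm inequality \eqref{eq:localisation_bound}, which controls the norm of $P_E$ compressed between two localised balls in $L^2(\R^d)$, into the averaged pointwise trace bound $\tau_\bP(|P_E(u;\cdot)|^2)\leq Ce^{-c|u|}$, which is genuinely a statement about the crossed-product ``kernel function'' of $P_E$. This passage rests on the covariance of the representations $\pi_\omega$ and on the translation-invariance of $\bP$; the only other subtlety is the Combes--Thomas handling of any spectral mass of $H$ lying strictly below the localisation interval $J$, which is harmless in the standard topological-phase setting.
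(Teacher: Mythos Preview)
Your proposal is correct and follows essentially the same route as the paper: use the AENSS bound to obtain exponential decay of the crossed-product kernel of $P_E$, verify that this forces finiteness of the Sobolev norms, and then invoke the idempotent relation $P_E=P_E^2$ to pass from one $L^p$-scale to all of them. The paper is in fact terser than you are---it checks the $\calW_{r,1}$ norm directly (rather than $\calW_{r,2}$), cites \cite[Equation~(1.10)]{AENSS} for the kernel decay without isolating the Combes--Thomas contribution below $J$, and simply asserts that the projection property yields all $\calW_{r,p}$; your Leibniz/H\"older discussion and the explicit splitting $\chi_{(-\infty,E]}=\chi_{(-\infty,E]}\chi_J+\chi_{(-\infty,\inf J)}$ make the same points with more detail.
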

\begin{proof}
Applying the theorem, the operator $P_\mu$ 
has, on average over $\Omega$, an exponentially decaying integral kernel 
(in particular, see~\cite[Equation (1.10)]{AENSS}). 
We note that $P_\mu = \tilde{\pi}(p_\mu)$ with 
$p_\mu \in L^\infty(\Omega,\bP)\rtimes_\theta\R^d$. 
Therefore we check the Sobolev condition where by Theorem \ref{thm:big_localisation_thm} 
there are strictly positive constants $C_1$ and $C_2$ such that
\begin{align*}
  \| p_\mu \|_{r,1} &\leq C_r \int_{\R^d}\!\int_\Omega  (1+|x|)^{r}p_\mu(x;\omega)\,\mathrm{d}\bP(\omega) \,\mathrm{d}x 
    \leq C_1 C_r \int_{\R^d}\! (1+|x|^2)^r e^{-C_2 |x|}\,\mathrm{d}x  < \infty,
\end{align*}
and so is finite for any $r\in \N$. Because $p_\mu$ is a projection, we then 
obtain that $p_\mu \in \calW_{r,p}$ for 
any $r,p$.
\end{proof}

A key property of a dynamically-localised region of the spectrum $J\subset\sigma(H_\omega)$ (also 
called a mobility gap) is that 
the pure point spectrum of $H_\omega$ in $J$ is $\bP$-almost surely dense in $J$~\cite{AENSS}. 
Considering the element $H$ affiliated to $\calA_{\mathrm{Sob}}$ represented by the family $\{H_\omega\}_{\omega\in\Omega}$, 
then with probability $1$ the 
pure point spectrum of $H$ in $J$ is also dense in $J$. Therefore if $\mu\in J$, 
then $\bP$-almost surely $\mu$ is a limit point of 
eigenvalues and $\mu \in \sigma_\mathrm{ess}(H)$.

A key success of the noncommutative geometry approach to the quantum Hall 
effect is the proof that the Hall conductance is constant within a mobility gap, 
proved in the discrete case in~\cite[Section 5]{Bellissard94}. 
The continuous analogue of this result is quite involved. We have not been able 
to resolve this question fully, but present a result for a random family of 
Hamiltonians $H = \{H_\omega\}_{\omega\in\Omega}$ bounded from 
below and affiliated to the {$C^*$}-algebra $C(\Omega)\rtimes_\theta\R^d$ 
(\emph{not} the von Neumann 
closure $L^\infty(\Omega,\bP)\rtimes_\theta \R^d$).

\begin{prop}[\cite{ProdanBook}, Proposition 3.31] \label{prop:Borel_mobility}
Let $H  = \{H_\omega\}_{\omega\in\Omega}$ be a self-adjoint element 
that is bounded from below and affiliated to the 
\emph{$C^*$-algebra $C(\Omega)\rtimes_\theta\R^d$} with mobility gap $J$. 
Then $G(h) \in \calA_\mathrm{Sob}$ for every Borel function $G$ with 
support in $J$.
\end{prop}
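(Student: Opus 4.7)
The plan is to combine the dynamical localisation estimate of Theorem~\ref{thm:big_localisation_thm}, in its pointwise-kernel form (cf.~\cite[Equation (1.10)]{AENSS}), with a density argument bridging continuous and Borel functional calculus. The hypothesis that $H$ be affiliated to the \emph{$C^*$-algebra} $C(\Omega)\rtimes_\theta\R^d$ (rather than only to its von Neumann closure) is used to guarantee that continuous functions of $H$ with compact support genuinely lie in $C(\Omega)\rtimes_\theta\R^d$ and therefore possess measurable convolution kernels $f_g:\R^d\to L^\infty(\Omega,\bP)$.

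First I would apply Theorem~\ref{thm:big_localisation_thm} to $g\chi_J/\|g\|_\infty$ for $g\in C_c(J)$, which yields exponential decay of the $\bP$-averaged operator-valued kernel,
\begin{equation*}
\int_\Omega \|\chi_{B^r_x}\,\pi_\omega(g(H))\,\chi_{B^r_y}\|\,\mathrm{d}\bP(\omega) \leq C_v\|g\|_\infty e^{-vm\,\mathrm{dist}(x,y)}.
\end{equation*}
For Schr\"odinger operators satisfying the regularity assumptions of Theorem~\ref{thm:big_localisation_thm}, standard arguments from~\cite{AENSS} promote this to a pointwise estimate
\begin{equation*}
\int_\Omega |f_g(x;\omega)|\,\mathrm{d}\bP(\omega) \leq C\|g\|_\infty e^{-\gamma|x|}
\end{equation*}
on the convolution kernel, after using the covariance relation~\eqref{eq:covariance} to shift the base point to the origin.

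Next I would translate this pointwise decay into finiteness of every Sobolev seminorm. Since the derivation $\partial^\alpha$ acts on convolution kernels as multiplication by $x^\alpha$ and $(1+|x|)^{r}e^{-\gamma|x|}$ is integrable for all $r$, the quantities $\calT(|\partial^\alpha f_g|^p)$ are finite for every $r\in\N$, $p\in[1,\infty)$, and $|\alpha|\leq r$: for $p=1$ this is immediate, and for $p>1$ one factorises $|f_g|^p$ through products with the same exponential decay and invokes the noncommutative H\"older inequality for the $L^p$-spaces associated to $\calT$. This gives $g(H)\in\wt{\calA}_\mathrm{Sob}$ for every $g\in C_c(J)$.

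For a general bounded Borel $G$ supported in $J$, I would approximate $G$ pointwise by $g_n\in C_c(J)$ with $\|g_n\|_\infty\leq\|G\|_\infty$. The uniformity over all Borel $g$ with $|g|\leq 1$ in Theorem~\ref{thm:big_localisation_thm} ensures that the same exponential constants control both $G(H)$ and every tail $(G-g_n)(H)$, so $\{g_n(H)\}$ is Cauchy in each Sobolev seminorm. The limit is $G(H)$ in the Borel functional calculus, lies in $\calN$, and inherits the Sobolev bounds, hence belongs to $\wt{\calA}_\mathrm{Sob}\cap\calN$. Writing $G(H)=G_1(H)G_2(H)$ with $G_1(\lambda)=(G(\lambda)/|G(\lambda)|)|G(\lambda)|^{1/2}$ and $G_2=|G|^{1/2}$, both Borel and supported in $J$, places $G(H)\in\calA_\mathrm{Sob}$ by Definition~\ref{def:Sob_alg}. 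The main obstacle is the first step, the passage from the operator-norm bound of Theorem~\ref{thm:big_localisation_thm} to a pointwise estimate on $f_g$; this uses both the $C^*$-algebraic affiliation hypothesis and the specific Schr\"odinger form of $H_\omega$, and once it is in hand the Sobolev finiteness and Borel extension are essentially bookkeeping.
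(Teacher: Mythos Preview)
The paper does not give a self-contained proof; it invokes \cite[Proposition~3.31]{ProdanBook} and remarks that the argument there---which works from the fractional-moment resolvent bound~\eqref{eq:localisation_hypoth_bound} (the \emph{hypothesis} of Theorem~\ref{thm:big_localisation_thm}) and reaches Borel functions of $H$ through resolvent integrals---carries over once discrete sums are replaced by integrals. Your route is different: you work from the \emph{conclusion}~\eqref{eq:localisation_bound}, which already ranges over all Borel $g$ with $|g|\le 1$.

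There is, however, a gap in your Step~3. Pointwise convergence $g_n\to G$ with a uniform sup bound gives $g_n(H)\to G(H)$ only in the strong operator topology, and the uniform exponential envelope on the kernels of the tails $(G-g_n)(H)$ only says those kernels are \emph{bounded}, not that they tend to zero. To get Cauchy behaviour in the Sobolev seminorms you would need pointwise (or $L^1$-in-$x$) convergence of the convolution kernels $f_{g_n}\to f_G$ so that dominated convergence applies, and strong operator convergence does not supply this. As written, the approximation step does not close.

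In fact the detour through $C_c(J)$ is unnecessary. Since the supremum in~\eqref{eq:localisation_bound} is already over all bounded Borel $g$, you can feed $G$ itself directly into Theorem~\ref{thm:big_localisation_thm}, obtain the averaged kernel decay exactly as in Corollary~\ref{cor:Fermi_proj_quasicts} (which already treats the Borel function $\chi_{(-\infty,\mu]}$), and read off $G(H)\in\calA_\mathrm{Sob}$ without any limiting procedure. The genuine content---and the place where the $C^*$-affiliation and the specific Schr\"odinger structure enter---is your Step~1, the passage from the localised operator-norm estimate to a pointwise kernel bound; you correctly flag this as the main obstacle, and it is precisely what the resolvent-based argument of \cite{ProdanBook} handles by working with $(H-E)^{-1}$ from the outset.
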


The proof in~\cite[Proposition 3.31]{ProdanBook} is for a different setting with 
a slightly different localisation bound, but we observe that the key argument 
continues to hold here. The main difference is the replacement of a sum with 
an integral in~\cite[Equation (3.63), (3.69)]{ProdanBook} and 
we use the bound from Equation \eqref{eq:localisation_hypoth_bound} 
rather than~\cite[Equation (3.55)]{ProdanBook}.

A key motivation for considering $\calA_\mathrm{Sob}$ was to find a topology such 
that the cyclic cocycles used in our index formulas are continuous, but deformations in a fixed 
mobility gap are also continuous. Because $\calA_\mathrm{Sob}$ is defined using  
a tracial norm (which is weaker than the operator norm) but with strong regularity 
under the algebraic derivations, it is able to manage these two roles. 
The following result demonstrates this property.

\begin{prop} \label{prop:Mobility_homotopy_P}
Let $H=\{H_\omega\}_{\omega\in\Omega}$ be a random family 
of Hamiltonians affiliated to $C(\Omega)\rtimes_\theta \R^d$ with 
mobility gap $J$ and 
bounded from below. Then 
the map $J \ni E \mapsto \chi_{(-\infty,E]}(H) \in \calA_\mathrm{Sob}$ 
is continuous.
\end{prop}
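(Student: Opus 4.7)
The plan is to prove continuity at $E_0\in J$ by showing that, for any $E\in J$ close to $E_0$, the spectral projection
\[
q_{E_0,E}:=\chi_{(-\infty,E]}(H)-\chi_{(-\infty,E_0]}(H)=\pm\chi_{I_{E_0,E}}(H),\qquad I_{E_0,E}=(E_0\wedge E,\,E_0\vee E],
\]
tends to zero in the Sobolev topology as $E\to E_0$. Proposition \ref{prop:Borel_mobility} applied to the Borel function $\chi_{I_{E_0,E}}$ (supported in $J$) immediately places $q_{E_0,E}\in\calA_\mathrm{Sob}$, so the task reduces to proving that $\|q_{E_0,E}\|_{r,p}\to 0$ for every $r\in\N$ and $p\in[1,\infty)$.

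The central analytic input is Theorem \ref{thm:big_localisation_thm} applied to $g=\chi_{I_{E_0,E}}$. Via the covariance relation \eqref{eq:covariance} and an argument in the spirit of Lemma \ref{lemma:Hilbert_schmidt_estimate} (identifying $\Tr_\tau$ of convolution squares with $L^2$-integrals of kernels over $\R^d\times\Omega$), this yields exponential decay for the averaged integral kernel,
\[
\int_\Omega\bigl|q_{E_0,E}(x;\omega)\bigr|^s\,\mathrm{d}\bP(\omega)\leq C_s\,e^{-c|x|},\qquad s=1,2,
\]
\emph{uniformly} in $E,E_0\in J$. Feeding this into the Plancherel-type identity
\[
\|q_{E_0,E}\|_{r,2}^2=\sum_{|\alpha|\leq r}\int_{\R^d}x^{2\alpha}\int_\Omega|q_{E_0,E}(x;\omega)|^2\,\mathrm{d}\bP(\omega)\,\mathrm{d}x
\]
produces a dominating integrable function $|x|^{2r}C_2\,e^{-c|x|}\in L^1(\R^d)$ for the integrand, and dominated convergence reduces the $p=2$ case to the pointwise-in-$x$ vanishing $\int_\Omega|q_{E_0,E}(x;\omega)|^2\,\mathrm{d}\bP\to 0$ as $E\to E_0$. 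General $p\in[1,\infty)$ then follows by noncommutative interpolation, for instance $\|T\|_p\leq\|T\|_\infty^{1-2/p}\|T\|_2^{2/p}$ for $p\geq 2$, together with the trivial bound $\|q_{E_0,E}\|_\infty\leq 1$; higher derivatives are handled by absorbing the polynomial weight into the exponential.

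The main obstacle is precisely this pointwise-in-$x$ vanishing step, since the localisation bound is designed to be uniform in the interval $I_{E_0,E}\subset J$ and so does not itself produce a limit. The natural way to overcome it is to invoke the Wegner-type absolute continuity of the integrated density of states that accompanies dynamical localisation: this guarantees that for $\bP$-almost every $\omega$ the energy $E_0$ is not an eigenvalue of $H_\omega$, so $\chi_{I_{E_0,E}}(H_\omega)\to 0$ in the strong operator topology as $E\to E_0$, whence matrix elements vanish, and combining with the uniform kernel bound yields the required integrated convergence. A more self-contained alternative is to approximate $\chi_{I_{E_0,E}}$ by smooth compactly supported functions on $J$, exploit the continuity of the Helffer--Sj\"{o}strand functional calculus in the spectral parameter, and pass to the limit using the fractional-moment hypothesis \eqref{eq:localisation_hypoth_bound}.
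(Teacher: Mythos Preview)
Your proposal is correct and runs parallel to the paper's argument, though the technical route differs in emphasis. Both proofs reduce to showing that the small-interval projection $q_{E_0,E}=\chi_{I_{E_0,E}}(H)$ tends to zero in every $\calW_{r,p}$, invoke Proposition \ref{prop:Borel_mobility} to place $q_{E_0,E}\in\calA_\mathrm{Sob}$, and use the localisation bound (Theorem \ref{thm:big_localisation_thm}) as the source of uniform-in-$E$ control. Where you use the uniform exponential kernel decay as a dominating function in a dominated-convergence argument for $\|\cdot\|_{r,2}$ and then interpolate, the paper instead uses the operator-algebraic fact that a monotone sequence of trace-class projections converging strongly to a trace-class limit converges in trace norm; this gives $\calW_{0,1}$-convergence directly, from which $\calW_{0,p}$-convergence is immediate for projections, and the localisation bound is then used only to pass to higher $r$. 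Your presentation has the virtue of being explicit about the remaining analytic input---that $E_0$ is $\bP$-almost surely not an eigenvalue (the Wegner-type continuity of the integrated density of states)---which the paper's argument uses just as essentially but leaves implicit in the strong-limit step. One small loose end: your interpolation inequality only handles $p\geq 2$, whereas the Sobolev topology requires all $p\in\N_+$; the case $p=1$ is immediate for projections since $\|q\|_{0,1}=\Tr_\tau(q)$ and the $s=1$ kernel bound together with dominated convergence (or the paper's monotone-projection argument) closes this.
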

\begin{proof}
We use the notation $P_E = \chi_{(-\infty,E]}(H)$ 
and $P_{[E,E']} = \chi_{[E,E']}(H)$.
By Proposition \ref{prop:Borel_mobility}, 
the spectral projections $P_{[a_n, b_n]} \in \calA_\mathrm{Sob}$ 
for any $a_n,b_n\in J$. In particular, by 
the Borel functional calculus (see~\cite[Theorem VIII.5]{ReedSimon1} for example)
if $a_n\to a\in J$ and 
$b_n\to b \in J$, then $P_{[a_n,b_n]}\to P_{[a,b]}$ in the  
strong operator topology with $P_{[a,b]}\in \calA_\mathrm{Sob}$.

Next we note that if $p_n$ is a sequence of trace-class 
projections with $p_n \to p$ strongly with $p$ trace-class, 
then $\Tr(p_n)\to \Tr(p)$.
Because the Sobolev algebra contains trace-class elements 
(under the dual trace), we can say that 
$\Tr_\tau(P_{[a_n,b_n]}) \to \Tr_\tau(P_{[a,b]})$ 
and in particular  $P_{[a_n,b_n]} \to P_{[a,b]}$ in $\calW_{0,1}$ (the Sobolev space). 
Assuming that $P_{[a_n,b_n]}  \preceq P_{[a,b]}$ 
(e.g. $[a_n,b_n] \subset [a,b]$ for all $n$), then we also have 
that $P_{[a_n,b_n]} \to P_{[a,b]}$ in $\calW_{0,p}$.
Because 
$P_{[a_n,b_n]},P_{[a,b]} \in \calW_{r,p}$ for any 
$r,p$ and all $n$, the localisation bound 
and the convergence of $P_{[a_n,b_n]}$ in the trace norm 
ensures that 
$P_{[a_n,b_n]} \to P_{[a,b]}$ in $\calW_{r,p}$ for all 
$r,p$. That is, deformations of the spectral projection 
of $h$ within a fixed mobility gap $J$ are continuous 
in the Sobolev topology. 
Hence, for $E,E'\in J$ with $E<E'$, then 
$\|P_{E'} - P_{E}\|_\mathrm{Sob} = \|P_{[E,E']}\|_\mathrm{Sob}$ 
can be controlled by $|E'-E|$.
\end{proof}

Let us put together our results for complex pairings.

\begin{cor}
Let $H=\{H_\omega\}_{\omega\in\Omega}$ be a random family of 
Hamiltonians satisfying the assumptions of 
Proposition \ref{prop:Mobility_homotopy_P} and fix an ergodic measure 
$\bP$ on $\Omega$. If $H$ has a chiral symmetry, we also assume the chiral involution 
$R_c\in \calA_\mathrm{Sob}^\sim$ is uniform in the mobility gap $J$. Then $\bP$-almost surely, the complex topological pairings from 
Section \ref{subsec:disordered_pairings} are well defined, 
$\Z$-valued, constant in $\Omega$ and 
 constant in a region of 
dynamical localisation. 
\end{cor}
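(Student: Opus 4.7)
The plan is to assemble the pieces already established in the excerpt rather than develop anything new: the corollary is essentially a collation of Corollary \ref{cor:Fermi_proj_quasicts}, Theorem \ref{thm:quasicts_index_extension}, Theorem \ref{thm:ergodic_loc_pairing}, and Proposition \ref{prop:Mobility_homotopy_P}. The main task is to check that the $K$-theoretic representatives used in Section \ref{subsec:disordered_pairings}, namely the Fermi projection $P_\mu$ (for systems without chiral symmetry) and the Fermi unitary $U_\mu$ (for chiral systems), land in $M_q(\calA_{\mathrm{Sob}}^\sim)$ under the stated hypotheses.

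First I would deal with well-definedness. Since $H = \{H_\omega\}_{\omega\in\Omega}$ satisfies the regularity assumptions of Proposition \ref{prop:Mobility_homotopy_P}, which feed directly into Theorem \ref{thm:big_localisation_thm}, Corollary \ref{cor:Fermi_proj_quasicts} gives $P_\mu \in \calA_\mathrm{Sob}$ whenever $\mu\in J$. In the chiral case, the assumption $R_c \in \calA_\mathrm{Sob}^\sim$ (uniformly in $J$) together with the fact that $\calA_\mathrm{Sob}$ is a $*$-algebra ensures that $U_\mu = \tfrac{1}{2}(1-R_c)(1-2P_\mu)\tfrac{1}{2}(1+R_c)$ is a unitary in $\calA_\mathrm{Sob}^\sim$, as discussed in the paragraph preceding the cyclic formulas in Section \ref{subsec:disordered_pairings}. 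Theorem \ref{thm:quasicts_index_extension} then guarantees that the odd and even cyclic formulas of Theorems \ref{thm:higher_dim_chern_number_odd} and \ref{thm:higher_dim_chern_number_even} extend to these representatives and compute the semifinite index.

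Next I would invoke ergodicity. Since $\bP$ is an ergodic invariant measure, Theorem \ref{thm:ergodic_loc_pairing} applies directly: the Sobolev-level pairings are integer-valued and $\bP$-almost surely constant on orbits (so $\bP$-almost surely constant on $\Omega$), via the direct integral decomposition of Section \ref{sec:chop-chop} and the covariance argument which shows that $F_X$ is a compact perturbation of $F_{X+a}$ under the magnetic translation $U_a$.

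Finally I would address constancy across the mobility gap. For the non-chiral case, Proposition \ref{prop:Mobility_homotopy_P} shows that $E\mapsto P_E$ is continuous from $J$ into $\calA_\mathrm{Sob}$; since the cyclic cocycles that compute the pairings are continuous on $\calA_\mathrm{Sob}$ by Lemma \ref{lem:sobolev_cocycle}, the map $E\mapsto \langle [P_E],[X]\rangle$ is continuous and $\Z$-valued, hence locally constant on $J$. For the chiral case the map $\mu \mapsto U_\mu$ is also continuous in the Sobolev topology (using the assumed uniformity of $R_c$ in $\calA_\mathrm{Sob}^\sim$ across $J$ together with the continuity of $\mu\mapsto P_\mu$), and the same local-constancy argument applies to $\langle [U_\mu],[X]\rangle$. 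The main obstacle, and the only non-bookkeeping point, is precisely the chiral step: one must verify that the product $(1-R_c)(1-2P_\mu)(1+R_c)$ really does define a unitary in a Sobolev-matrix algebra depending continuously on $\mu\in J$, which is where the uniformity hypothesis on $R_c$ is used.
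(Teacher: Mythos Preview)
Your proposal is correct and follows essentially the same approach as the paper: invoke Theorem \ref{thm:quasicts_index_extension} for well-definedness, Theorem \ref{thm:ergodic_loc_pairing} for integrality and almost-sure constancy in $\Omega$, and Proposition \ref{prop:Mobility_homotopy_P} for Sobolev-continuity of $\mu\mapsto P_\mu$ (hence of $\mu\mapsto U_\mu$ under the uniformity assumption on $R_c$), concluding constancy on $J$ via the same $\Z$-valued-plus-continuous argument. You have in fact spelled out slightly more than the paper does, e.g.\ explicitly citing Corollary \ref{cor:Fermi_proj_quasicts} to place $P_\mu$ in $\calA_{\mathrm{Sob}}$.
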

\begin{proof}
The pairings extend to $\calA_\mathrm{Sob}$ by Theorem \ref{thm:quasicts_index_extension} 
and are $\Z$-valued for ergodic measures by Theorem \ref{thm:ergodic_loc_pairing}. 
If we make a 
deformation within a mobility gap, then for the Fermi projection, this is continuous in the 
Sobolev topology by Proposition \ref{prop:Mobility_homotopy_P}. For the 
case of the Fermi unitary, as $R_c$ is uniform in $J$ we can write the deformation 
$U_\mu(t) = \frac{1}{2}(1-R_c)(1-2P_\mu(t))\frac{1}{2}(1+R_c)$ and  the deformation is 
again continuous in $\calA_\mathrm{Sob}$ by Proposition \ref{prop:Mobility_homotopy_P}. 
Therefore the index pairing will be constant over either deformation by Theorem \ref{thm:ergodic_loc_pairing}.
\end{proof}
Hence we are able to obtain analogous results to those in~\cite{NB90,Bellissard94, PLB13,PSB14}.

\subsection{Delocalisation of complex edge states} \label{sec:deloc_edge}
Our argument follows~\cite[Section 6.6]{PSBbook}.
Let us briefly review our basic setup as well as some additional assumptions we will require. 
We have the short-exact sequence
$$
  0 \to C(\Omega)\rtimes_\theta \R^{d-1} \otimes \calK \to \calE
    \to C(\Omega) \rtimes_\theta \R^d \to 0,
$$
where $H_s$ is a disordered magnetic Schr\"{o}dinger operator acting on $L^2(\R^{d-1}\times (-\infty,s])$ with 
Dirichlet boundary conditions and 
affiliated to $\calE = \big( C_0(\R\cup\{+\infty\}) \otimes C(\Omega)\rtimes_\theta\R^{d-1}\big) \rtimes \R$ 
for every $s\in \R$. If $H_s$ is chiral symmetric, we assume that the chiral involution 
$R_c$ is sufficiently local so that $R_c$ is in the minimal unitisation of (matrices of) $C(\Omega)\rtimes_\theta\R^d$, $\calE$ 
and $C(\Omega)\rtimes_\theta\R^{d-1}$. Often $R_c = \begin{pmatrix} 1 & 0 \\ 0 & -1 \end{pmatrix}$ and so this 
criterion is trivially satisfied here.

\begin{lemma}[See~\cite{PSBbook}, Section 4.3, or \cite{HigsonRoe}, Chapter 4] \label{lem:k_theory_bdry_map}
If the Fermi level is in a gap in the spectrum, then
$$
   \partial[P_\mu] = [ \exp( 2 \pi i f_\mathrm{exp}(H_s) ) ] \in K_1( C(\Omega)\rtimes_\theta \R^{d-1}\otimes\calK)
$$
with $f_\mathrm{exp}(H_s)$ a smooth non-decreasing function that is $0$ below the spectral gap 
and $1$ above the spectral gap.

If $H$ has a chiral symmetry, let $U_\mu = \frac{1}{2}(1-R_c)(1-2P_\mu)\frac{1}{2}(1+R_c)$ be the 
chiral unitary. Then
$$
  \partial [ U_\mu ] = 
     \big[ e^{-i\frac{\pi}{2} f_\mathrm{ind}(H_s)} \frac{1}{2}(1+R_c) e^{-i\frac{\pi}{2} f_\mathrm{ind}(H_s)} \big] 
       - \big[ \frac{1}{2}(1+R_c) \big] \in K_0(C(\Omega)\rtimes_\theta \R^{d-1}\otimes\calK).
$$
with $f_\mathrm{ind}$ an odd and smooth non-decreasing function that is $-1$ below the spectral gap and 
$+1$ above the spectral gap.
\end{lemma}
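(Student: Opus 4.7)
The plan is to apply the standard definitions of the exponential and index boundary maps of the Wiener--Hopf extension and to verify the displayed formulas via functional calculus applied to the Hamiltonian $H_s \in \calE$.

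For the first formula, I would invoke the definition of the exponential map $\partial\colon K_0(A/I) \to K_1(I)$ for an extension $0 \to I \to B \to A \to 0$: if $p \in A$ is a projection and $b \in B$ is any self-adjoint lift, then $\partial[p] = [\exp(2\pi i b)]$, and the class is independent of the chosen lift. I would take $b = f_{\mathrm{exp}}(H_s) \in \calE$, which is self-adjoint by the functional calculus since $H_s$ is self-adjoint and affiliated to $\calE$. Its image under the quotient map $\calE \to C(\Omega)\rtimes_\theta\R^d$ is $f_{\mathrm{exp}}(H)$, and the spectral gap at $\mu$ combined with the prescribed boundary values of $f_{\mathrm{exp}}$ forces $f_{\mathrm{exp}}(H)$ to coincide with the appropriate Fermi projection on $\sigma(H)$. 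Moreover, since $f_{\mathrm{exp}}(H_s)$ takes integer values outside a compact interval straddling the gap, $\exp(2\pi i f_{\mathrm{exp}}(H_s))$ is trivial (equal to $1$) away from the boundary region and hence lies in the unitisation of the ideal $C(\Omega)\rtimes_\theta\R^{d-1} \otimes \calK$.

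For the second formula, I would follow the construction of the index map $\partial\colon K_1(A/I) \to K_0(I)$ as presented in \cite[Section 4.3]{PSBbook}. The chiral symmetry $R_c H R_c = -H$ gives the off-diagonal decomposition $1-2P_\mu = \mathrm{sgn}(H) = \bigl(\begin{smallmatrix} 0 & U_\mu^* \\ U_\mu & 0 \end{smallmatrix}\bigr)$ in the $R_c$-eigenbasis. Since $f_{\mathrm{ind}}$ is an odd smoothing of the sign function which equals $\mathrm{sgn}$ on $\sigma(H)$, the unitary $w := \exp(-i\pi f_{\mathrm{ind}}(H_s)/2) \in \calE^\sim$ has quotient image $\cos(\pi/2) - i\sin(\pi/2)\mathrm{sgn}(H) = -i\,\mathrm{sgn}(H)$, and this in turn is (up to a fixed chiral swap and scalar phase) a lift of $\mathrm{diag}(U_\mu, U_\mu^*)$. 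Forming the displayed difference with the chiral projection $\tfrac{1}{2}(1+R_c)$ and verifying that this difference lies in $I$ then gives the claimed representative of $\partial[U_\mu]$ in $K_0(I)$.

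The main obstacle lies in the second formula: one must verify carefully that the sandwich expression $\exp(-i\pi f_{\mathrm{ind}}(H_s)/2)\,\tfrac{1}{2}(1+R_c)\,\exp(-i\pi f_{\mathrm{ind}}(H_s)/2)$ differs from a genuine projection in $\calE^\sim$ only by elements of $I$. This reduces to a careful bookkeeping of two relations in the quotient lifted to $\calE$: the graded anticommutation $R_c f_{\mathrm{ind}}(H_s) + f_{\mathrm{ind}}(H_s) R_c \in I$, expressing the chiral symmetry up to boundary corrections, and $f_{\mathrm{ind}}(H_s)^2 - 1 \in I$, since $|f_{\mathrm{ind}}| = 1$ outside the gap and only the spectrum in the gap contributes a compactly-supported ideal term. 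With these two identities in hand, the algebraic manipulations in \cite[Section 4.3]{PSBbook} and the abstract framework of \cite[Chapter 4]{HigsonRoe} apply verbatim, the only adaptation being that the functional calculus is performed in the continuum setting rather than on $\ell^2(\Z^d)$.
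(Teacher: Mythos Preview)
The paper does not give its own proof of this lemma; it simply cites \cite[Section 4.3]{PSBbook} and \cite[Chapter 4]{HigsonRoe}. Your proposal reproduces exactly the standard argument from those references: lift $P_\mu$ (respectively $U_\mu$) via the smooth functional calculus applied to $H_s$, then apply the defining formula for the exponential (respectively index) boundary map of the six-term sequence. The two ideal-membership checks you flag, namely $R_c f_{\mathrm{ind}}(H_s) + f_{\mathrm{ind}}(H_s) R_c \in I$ and $f_{\mathrm{ind}}(H_s)^2 - 1 \in I$, are precisely the ingredients used in \cite[Section 4.3]{PSBbook}, so your sketch is correct and there is nothing to compare.

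One small point of care: with the convention that $f_{\mathrm{exp}}$ is $0$ below the gap and $1$ above, $f_{\mathrm{exp}}(H)$ equals $1-P_\mu$ rather than $P_\mu$, so strictly speaking $f_{\mathrm{exp}}(H_s)$ lifts $1-P_\mu$. This only affects the overall sign of the $K_1$-class (since $\partial[1]=0$ and $[\exp(-2\pi i b)]=-[\exp(2\pi i b)]$), and the lemma as stated is using the orientation convention of \cite{PSBbook}; you may wish to make that explicit when writing out the details.
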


We wish to consider our Chern number formulas for the edge pairings with $\partial[P_\mu]$ 
and $\partial[U_\mu]$ for a particular disorder space. Namely we take  
$\Omega\times [-L,L]$ for $L$ large but finite. This space still has an $\R^{d-1}$-action 
and given the invariant measure $\bP$ on $\Omega$, we can extend $\bP$ 
to $\tilde{\bP}$, the product of $\bP$ and (normalised) integration. The new measure $\tilde{\bP}$ is still invariant 
under the $\R^{d-1}$-action on $\Omega$ and so defines an unbounded trace $\tilde{\calT}$ on 
$C(\Omega\times[-L,L]) \rtimes_\theta \R^{d-1}$. 
As our general theorems only require an invariant tracial weight, our key Chern number 
results still apply for $\Omega\times[-L,L]$ and $\tilde{\bP}$.

\begin{thm} \label{thm:delocalised_edge}
Suppose that the Fermi level $\mu$ is in a spectral gap $J\subset \R$ of the bulk Hamiltonian $H$, 
and adopt the notation of Theorem \ref{thm:big_localisation_thm}.
\begin{enumerate}
  \item Suppose that $d$ is even and $\mathrm{Ch}_d(P_\mu)$ is non-zero.
Let $H_s$ be the Hamiltonian for the system with boundary. The 
  localisation bound with respect to $\Omega\times [-L,L]$ of $H_s$, 
  $$
    \int_J \mathbb{E}_{\tilde{\bP}}\left(\left\| \chi_{B^r_x} \big((H_\omega)_s^{(\Lambda_n)} - E \big)^{-1} \chi_{B^r_y} \right\|^s 
    \right) \mathrm{d}E \leq C Ae^{-m\, \mathrm{dist}_{\Lambda_n}(x,y)},
  $$
 cannot hold for large but finite $L$. 
  \item If $d$ is odd and $\mathrm{Ch}_d(U_\mu)$ is non-zero, then a localisation 
  bound of $H_s$ cannot hold for large but finite $L$.
\end{enumerate}
\end{thm}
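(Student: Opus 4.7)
The plan is a proof by contradiction that combines the bulk-edge factorisation from Theorem \ref{thm:gen_bulkedge} with the Sobolev-localisation machinery developed in Section \ref{sec:localisation_general} and Section \ref{subsec:qc_and_localisation}. Suppose the stated localisation bound holds for $H_s$ on the strip $\R^{d-1}\times[-L,L]$. The parameter space $\Omega\times[-L,L]$ with the product measure $\tilde{\bP}$ carries an invariant $\R^{d-1}$-action, and $H_s$ is affiliated to $C(\Omega\times[-L,L])\rtimes_\theta \R^{d-1}$. Applying Corollary \ref{cor:Fermi_proj_quasicts} and Proposition \ref{prop:Borel_mobility} with respect to $\tilde{\bP}$ places any Borel function of $H_s$ with support in the gap $J$ of the bulk Hamiltonian into the edge Sobolev algebra $\calA_{\mathrm{Sob}}^{\mathrm{edge}}$ associated with the strip.

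Next, by Lemma \ref{lem:k_theory_bdry_map}, the boundary images are represented, in the even case, by the unitary $\exp(2\pi i f_{\mathrm{exp}}(H_s))$ and, in the odd case, by $e^{-i\frac{\pi}{2}f_{\mathrm{ind}}(H_s)}\tfrac{1}{2}(1+R_c)e^{-i\frac{\pi}{2}f_{\mathrm{ind}}(H_s)}$ paired with $[\tfrac{1}{2}(1+R_c)]$. Since $f_{\mathrm{exp}}$ and $f_{\mathrm{ind}}$ are smooth, take constant values outside $J$, and have their non-trivial content confined to $J$, these representatives differ from trivial by elements built from spectral projections of $H_s$ in $J$; by the previous paragraph they lie in $M_q(\calA_{\mathrm{Sob}}^{\mathrm{edge},\sim})$. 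The extended cyclic formulas of Theorem \ref{thm:quasicts_index_extension} then compute the edge Chern pairing $\langle\partial[P_\mu],[\lambda_{d-1}]\rangle$ or $\langle\partial[U_\mu],[\lambda_{d-1}]\rangle$, and by Theorem \ref{thm:gen_bulkedge} this edge pairing equals $(-1)^{d-1}$ times the bulk Chern pairing $\mathrm{Ch}_d(P_\mu)$ or $\mathrm{Ch}_d(U_\mu)$, which by assumption is non-zero.

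The heart of the argument is to show that the edge pairing must actually vanish under the localisation hypothesis. Using the strip analogue of Proposition \ref{prop:Mobility_homotopy_P}, the spectral resolution of $H_s$ depends Sobolev-continuously on the spectral parameter within the mobility gap, so $f_{\mathrm{exp}}(H_s)$ and $f_{\mathrm{ind}}(H_s)$ can be deformed, through a continuous path in $\calA_{\mathrm{Sob}}^{\mathrm{edge}}$, to constant-valued functions with the same asymptotic boundary values. Along such a deformation, $\exp(2\pi i f_{\mathrm{exp}}(H_s))$ is homotoped to the identity, and $e^{-i\frac{\pi}{2}f_{\mathrm{ind}}(H_s)}\tfrac{1}{2}(1+R_c)e^{-i\frac{\pi}{2}f_{\mathrm{ind}}(H_s)}$ to $\tfrac{1}{2}(1+R_c)$. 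By Theorem \ref{thm:ergodic_loc_pairing} the index pairing is constant along Sobolev-continuous homotopies; hence it evaluates to the pairing at the trivial endpoint, which is $0$. This contradicts non-vanishing of the bulk Chern number and forces the localisation bound to fail for all sufficiently large finite $L$.

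The main obstacle is the edge Sobolev regularity step: one must verify that the strip localisation bound on $H_s$, stated in the form \eqref{eq:localisation_hypoth_bound}, yields uniform exponential decay of the integral kernels of $G(H_s)$ for all Borel $G$ supported in $J$, with continuous dependence on $G$ in the $\calW_{r,p}$-norms used to build $\calA_{\mathrm{Sob}}^{\mathrm{edge}}$. This is a direct adaptation of \cite[Proposition 3.31]{ProdanBook} and of Proposition \ref{prop:Borel_mobility} to the half-space geometry; the only non-routine point is handling the extra $\R$-direction with Dirichlet boundary, but since the strip is only of finite width $L$ this factor contributes only bounded geometric constants to the kernel estimates and does not disturb the decay in the $\R^{d-1}$ directions.
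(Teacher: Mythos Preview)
Your approach is the same as the paper's: contradiction via the bulk-edge factorisation, using localisation of $H_s$ to deform the boundary $K$-theory representative to a trivial one in the edge Sobolev algebra, whence the edge Chern pairing vanishes. Two points of imprecision are worth flagging. First, the phrase ``constant-valued functions with the same asymptotic boundary values'' is self-contradictory; what is needed (and what the paper does) is to deform $f_{\mathrm{exp}}(H_s)$ to the step function $1-\chi_{(-\infty,\mu]}(H_s)$, so that $e^{2\pi i(\cdot)}$ becomes the identity, and $f_{\mathrm{ind}}(H_s)$ to $\mathrm{sgn}(H_s)$, after which the chiral anti-commutation $R_c\,\mathrm{sgn}(H_s)=-\mathrm{sgn}(H_s)\,R_c$ gives $e^{-i\frac{\pi}{2}\mathrm{sgn}(H_s)}\tfrac12(1+R_c)e^{-i\frac{\pi}{2}\mathrm{sgn}(H_s)}=\tfrac12(1+R_c)$; you should make this step explicit. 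Second, your appeal to Theorem~\ref{thm:ergodic_loc_pairing} for Sobolev-homotopy invariance is technically off: the product measure $\tilde{\bP}$ on $\Omega\times[-L,L]$ is \emph{not} ergodic under the $\R^{d-1}$-action (any function of the $[-L,L]$-variable alone is invariant), so that theorem does not apply directly. The paper instead cites Theorem~\ref{thm:quasicts_index_extension}, relying on the Sobolev-continuity of the cyclic cocycle together with its equality with the semifinite Fredholm index at every point of the path.
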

The above result says that if our bulk-invariants are non-trivial, then the boundary spectrum 
of $H_s$ cannot become localised by the addition of an arbitrarily thick surface layer.
\begin{proof}
Using the measure $\tilde{\bP}$, Lemma \ref{lem:k_theory_bdry_map} 
and the bulk-edge correspondence, we have the equality for $d$ even
\begin{align} \label{eq:localised_bulkboundary}
  \mathrm{Ch}_d(P_\mu) = -C_{d-1} \sum_{\sigma\in S_{d-1}} (-1)^\sigma 
      (\tilde{\calT} \otimes \Tr_{L^2(\R)}) \Big(\prod_{j=1}^{d-1} (e^{ 2 \pi i f_\mathrm{exp}(H_s) })^* \partial_{\sigma(j)} e^{ 2 \pi i f_\mathrm{exp}(H_s) } \Big),
\end{align}
where we also take the trace over $L^2(\R)$ as the image of the boundary 
map is represented by 
elements in $C(\Omega \times [-L,L])\rtimes_\theta \R^{d-1} \otimes \calK[L^2(\R)]$.
If the localisation bound were  to hold for $H_s$, then there is a homotopy in 
$\calA_\mathrm{Sob}$ from $f_\mathrm{exp}(H_s)$ to $1-\chi_{(-\infty,\mu]}(H_s)$ which 
will not  change the right-hand side of Equation \eqref{eq:localised_bulkboundary} 
by Theorem \ref{thm:quasicts_index_extension}. 
However, the real part of $e^{2 \pi i (1-\chi_{(-\infty,\mu]}(H_s))}$ 
is precisely $\chi_{(-\infty,\mu]}(H_s)$ 
and so the Fermi projection lifts to another projection, which implies that the image of $P_\mu$ under 
the boundary map will be trivial. Hence our boundary pairing must be trivial, which contradicts
 that $\mathrm{Ch}_d(P_\mu)$ is non-zero.
 
Similarly for $d$ odd and non-trivial bulk pairing, if the localisation bound holds, then there is a 
homotopy from $f_\mathrm{ind}(H_s)$ to $\mathrm{sgn}(H_s)$ in the topology of $\calA_\mathrm{Sob}$.  
Then, because $H_s$ is chiral symmetric, $\mathrm{sgn}(H_s)$ anti-commutes with $R_c$ and so
$e^{-i\frac{\pi}{2} \mathrm{sgn}(H_s)} \frac{1}{2}(1+R_c) e^{-i\frac{\pi}{2} \mathrm{sgn}(H_s)} =  \frac{1}{2}(1+R_c)$ 
Hence $\partial[U_\mu]$ will be trivial and the edge pairing will vanish, again contradicting
the assumption on the bulk pairing.
\end{proof}

We remark that the hypotheses of Theorem \ref{thm:delocalised_edge} still require
a spectral gap of the bulk Hamiltonian $H$ affiliated to $C(\Omega)\rtimes_\theta\R^d$. This is because we 
used the boundary map in $K$-theory associated to the Wiener--Hopf extension, and we do not have 
an analogous extension for the Sobolev algebra $\calA_\mathrm{Sob}$. A weakening of this 
assumption as in~\cite{EGS05, Taarabt14,  GrafShapiro}, while desirable, is beyond the scope of this manuscript.

\subsection{Real pairings and localisation} \label{Sec:real_loc_pairings}

Our aim is to extend the analytic $\Z$ or $\Z_2$-valued skew-adjoint Fredholm indices 
from Equation \eqref{eq:skew_Fred_index} to the Sobolev algebra 
and topological phases in strong disorder. 

We assume that 
$H$ is affiliated to $L^\infty(\Omega,\bP)\rtimes_\theta \R^d$, satisfies the 
hypotheses of Theorem \ref{thm:big_localisation_thm}, the Fermi energy $\mu$ is in a mobility gap $J$ 
and $\bP$ is ergodic under the group action.
We also assume that all $CT$-symmetries 
$R_T$, $R_P$ and $R_c$ are self-adjoint unitaries in the real locally convex 
algebra $\calA_\mathrm{Sob}^\sim$ (recall that $R_T^2 = \pm 1$, $R_P^2=\pm 1$). 
Following the procedure in \cite[Section 3.3]{BCR15}, we can construct a finitely generated 
and projective right-$\calA_\mathrm{Sob}$ module $p{\calA_\mathrm{Sob}}^{\oplus N}_{\calA_\mathrm{Sob}}$ 
equipped with a left $C\ell_{n,0}$-action that is constructed from the symmetry operators. 
We remark that the module and the Clifford action changes depending on the symmetry 
type of the Hamiltonian (in particular, $p$ need not be the Fermi projection).

We now construct $\Z$ or $\Z_2$-valued pairings of the Clifford module 
$p{\calA_\mathrm{Sob}}^{\oplus N}_{\calA_\mathrm{Sob}}$ via a skew-adjoint 
Fredholm index. We first note that the decomposition of the semifinite spectral triple 
into a direct integral 
of spectral triples in Section \ref{sec:chop-chop} is valid for both complex and real semifinite 
spectral triples, where we have the fibre
$$
\bigg( \calA_\mathrm{Sob} \hat\otimes C\ell_{0,d}, \, {}_{\wt{\pi}_\omega}L^2(\R^d) \hat\otimes 
   \bigwedge\nolimits^{\!*}\R^d, \, X=\sum_{j=1}^d X_j \hat\otimes \gamma^j
    \bigg).
$$

\begin{prop} 
Suppose that the operator $H$ is affiliated to $L^\infty(\Omega,\bP)\rtimes_\theta \R^d$, the  
Fermi energy lies in a mobility gap $J$ and the measure $\bP$ 
is invariant and ergodic under the $\R^d$-action.
Let $(C\ell_{n,0}, p {\calA_\mathrm{Sob}}^{\oplus N}_{\calA_\mathrm{Sob}})$ be a finitely-generated 
and projective $\calA_\mathrm{Sob}$ module with a left $C\ell_{n,0}$-action 
 (see \cite[Section 3.3]{BCR15} for a
construction of this class). 
Let $p_\omega F_X p_\omega = \wt{\pi}_\omega(p) (X(1+X^2)^{-1/2}\otimes 1_N) \wt{\pi}_\omega(p)$.
Then the skew-adjoint Fredholm index 
$\Index_{n,d}\!\big( (p_\omega F_X p_\omega)_+ \big)$ 
from Equation \eqref{eq:skew_Fred_index} 
is $\bP$-almost surely well-defined and constant over $\Omega$. 
\end{prop}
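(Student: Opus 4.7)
The plan is to show two things: first that $(p_\omega F_X p_\omega)_+$ is genuinely skew-Fredholm in the appropriate Clifford-graded sense for $\bP$-almost every $\omega$, and second that the resulting numerical index is translation-invariant and hence a.s.\ constant by ergodicity.

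First I would establish Fredholmness fibrewise. By Proposition \ref{prop:quasicts_semifinite_spec_trip} the Sobolev spectral triple is finitely summable with spectral dimension $d$, and by \cite[Proposition 2.14]{CGRS2} the associated bounded transform $F_X$ together with $\wt\pi(\calA_\mathrm{Sob})$ yields a (semifinite) Fredholm module. Thus for $p\in M_N(\calA_\mathrm{Sob}^\sim)$ the operator $\wt\pi(p)(F_X\otimes 1_N)\wt\pi(p)$ is $\Tr_\tau$-Fredholm. Using the direct integral decomposition of Section \ref{sec:chop-chop}, this semifinite Fredholm operator decomposes as $\int^\oplus p_\omega F_X p_\omega\,\mathrm{d}\bP(\omega)$; combined with the localisation/compactness estimates from the proof of Proposition \ref{prop:crossed_prod_Kas_mod} (adapted to $\calA_\mathrm{Sob}$ via Lemma \ref{lemma:Sob_in_qc}) one sees that $[F_X,\wt\pi_\omega(p)]$ is compact in $\calB(L^2(\R^d)\hat\otimes\bigwedge^*\R^d)$ for a.e.\ $\omega$, so $p_\omega F_X p_\omega$ is an honest Fredholm operator on the fibre.

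Next I would install the Clifford structure. The left $C\ell_{n,0}$-action on $p\calA_\mathrm{Sob}^{\oplus N}$ commutes with $\wt\pi_\omega(\calA_\mathrm{Sob})$ and with the functional calculus of $X$, while the $C\ell_{0,d}$-action built into $X$ anticommutes with $F_X$ in the graded sense. Together these realise $p_\omega F_X p_\omega$ as an odd, $C\ell_{n,d}$-linear Fredholm operator on the $\Z_2$-graded Hilbert space $\wt\pi_\omega(p)(L^2(\R^d)\hat\otimes \bigwedge^*\R^d)^{\oplus N}$; its kernel is therefore a graded left $C\ell_{n,d}$-module whose class in $KO_{n-d}(\R)$ (modulo $C\ell_{n+1,d}$-restrictions) is computed by the analytic formula of Equation \eqref{eq:skew_Fred_index} applied to $(p_\omega F_X p_\omega)_+$.

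Finally I would prove $\Omega$-constancy via ergodicity, following the argument in Theorem \ref{thm:ergodic_loc_pairing}. If $\omega'=T_{-a}\omega$, the covariance relation for $\wt\pi_\omega$ (coming from the magnetic translations $U_a$ of Equation \eqref{eq:covariance}) gives $U_a\wt\pi_\omega(p)U_a^*=\wt\pi_{\omega'}(p)$, while $U_a X U_a^*$ differs from $X$ by the bounded operator $\sum_j a_j\hat\otimes\gamma^j$. Hence $U_a(p_\omega F_X p_\omega)U_a^*$ is a compact (indeed bounded-resolvent) perturbation of $p_{\omega'} F_X p_{\omega'}$ that is $C\ell_{n,d}$-equivariant, so the skew-adjoint Fredholm index is unchanged. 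The function $\omega\mapsto \Index_{n,d}((p_\omega F_X p_\omega)_+)$ is therefore $\R^d$-invariant and measurable; ergodicity forces it to be $\bP$-a.s.\ constant.

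The main obstacle will be the fibrewise Fredholmness step: one must ensure that the compactness of $[F_X,\wt\pi_\omega(p)]$ passes from the semifinite setting (where it holds essentially by construction) to the individual fibres, which requires carefully combining Lemma \ref{lemma:Sob_in_qc}, the decay of the integral kernel of $p$ granted by $p\in\calA_\mathrm{Sob}$, and measurability arguments for the direct integral. Once this is in place, the Clifford-equivariance of the translation intertwiner and the homotopy invariance of the Clifford index make the remaining steps routine.
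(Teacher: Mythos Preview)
Your proposal is correct and follows essentially the same strategy as the paper's proof: fibrewise Fredholmness of $p_\omega F_X p_\omega$, followed by the covariance relation $U_a \wt\pi_\omega(p) U_a^* = \wt\pi_{T_a\omega}(p)$ together with the fact that conjugation by $U_a$ sends $F_X$ to a compact perturbation of itself, and then ergodicity. The paper's own argument is considerably terser --- it simply asserts almost-sure Fredholmness and does not spell out the Clifford-equivariance of the intertwiner --- so your more detailed treatment of the passage from semifinite to fibrewise compactness and of the $C\ell_{n,d}$-structure is a genuine filling-in of steps the paper leaves implicit, not a different route.
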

\begin{proof}
The operator $p_\omega F_X p_\omega$ is $\bP$-almost surely Fredholm so, 
as in Theorem \ref{thm:ergodic_loc_pairing}, the ergodic 
assumption means that 
we only have to check constancy 
of the Clifford indices on an orbit in $\Omega$.
The covariance described
in Equation \eqref{eq:covariance}
gives that  $F_X=X(1+X^2)^{-1/2}$ is unitarily equivalent to 
$F_{X+a}$ (modulo compacts) via the unitary $U_a$ implementing translation on the product 
module $p \calH$ (where, to be precise 
$\calH = L^2(\R^d, \R^N)\hat\otimes \bigwedge^*\R^d$).
As in the complex case, the transformation will be a compact perturbation of the original 
Fredholm operator.
By the stability of the index pairings, the defined index is $\bP$-almost surely constant.
\end{proof}

We have defined $\Z$ or $\Z_2$-valued strong topological phase indices whenever the 
Fermi energy is in a mobility gap $J$ and the symmetries are uniform in $J$.  
Because of the ergodicity assumption and almost-sure constancy of the index pairings, 
in this special case one can take an average of $\Index_{n,d}\!\big( (p_\omega F_X p_\omega)_+ \big)$ 
over the disorder space $\Omega$ and the result will not change (similar to 
the case of complex pairings in Section \ref{sec:ergodic_pairing}).

\begin{remarks}
\begin{enumerate}
  \item We do not currently have a proof that the defined $\Z$ and $\Z_2$-valued indices 
  for dynamically localised Hamiltonians with anti-linear symmetries are constant in a mobility gap. 
  In the case of complex pairings, we used the fact that the analytic index was equal to a cyclic 
  cocycle that was continuous in the topology of $\calA_\mathrm{Sob}$. Because there is no general 
  cyclic formula for our real indices, we cannot use this argument. 
  We leave open this question for the time being (which to the best of our knowledge 
  is also unresolved in the discrete case).
  \item Our argument for delocalisation of boundary states for complex topological phases 
in Section \ref{sec:deloc_edge} relies on both an explicit computation of the boundary map in $K$-theory of the 
  symmetry class of the Hamiltonian and the stability of index pairings under deformations continuous 
  in $\calA_\mathrm{Sob}$. We have constructed an explicit representative of the 
  extension class in Proposition \ref{prop:ext_is_crossed_prod_module}, but a representative of 
  $[H]\hat\otimes [\mathrm{ext}]$ in $KO$-theory expressed in terms of the Hamiltonian with boundary $H_s$ is, 
  in general, a much harder task than the complex case. Explicit formulas for
   boundary maps in real $K$-theory are known, see~\cite{BL15}, but
  the passage from a disordered Schr\"{o}dinger operator to a unitary with symmetries that is required for the 
  Boersema--Loring picture is non-trivial. Similarly, because the argument in Section \ref{sec:deloc_edge} 
  relies on the stability of indices within the mobility gap $J$,  we leave the delocalisation of edge states for 
  non-trivial bulk pairings anti-linear symmetries as an open problem.
\end{enumerate}
\end{remarks}


\section{Concluding remarks}
We finish by making some brief comments on our results, their possible extensions and 
current limitations.

Throughout the paper, whenever the semifinite local index theorem is employed, we 
are restricted to complex algebras and spectral triples only. This is a large limitation 
as there are many materials of interest which have invariants arising from
a non-torsion  real pairing 
(see \cite{GSB15} for example). A local and cyclic expression for 
real pairings of the form
$$
  KO_d(B\rtimes_\theta\R^d) \times KKO^d(B\rtimes_\theta\R^d,B) \to KO_0(B) 
    \xrightarrow{(\tau_B)_\ast}  \R
$$ 
should be possible, though the details of the proof of the even local index formula 
need to be carefully checked to see if they extend. Pairings that take value in 
$KO_j(B)$ for $j\neq 0$ could also be studied in this way, though this would require using the
(graded) trace on the Clifford algebra or working with suspensions.

The lack of a local formula for torsion invariants, while unsurprising, 
means that an explicit link between torsion-valued pairings and 
physical phenomena is a key challenge for mathematical physicists interested 
in topological insulators. 
See \cite{Kellendonk16} for recent progress.

In order to study the stability of complex topological phases under perturbations within a fixed 
mobility gap, we had to increase the hypothesis on our Hamiltonian so that 
it is affiliated to the $C^*$-algebra $C(\Omega)\rtimes_\theta \R^d$ and not 
the von Neumann closure $L^\infty(\Omega, \bP)\rtimes_\theta \R^d$. While a 
complete resolution of this somewhat technical problem may be quite difficult, 
perhaps analogous results to ours can be obtained for specific model 
Hamiltonians of interest to condensed matter physicists and with weaker 
affiliation hypotheses. 

The question of stability of $\Z$ or $\Z_2$ strong phases with anti-linear symmetries under perturbations 
within a mobility gap remains a difficult but physically pertinent question. The lack of 
a cyclic formula means that a careful analysis of the stability of the skew-adjoint Fredholm indices under 
perturbations in the  Sobolev topology is required. 
As previously mentioned, such questions do not appear to be resolved even in the 
discrete case.

We have only considered the extension of bulk phases to strong disorder. The 
question of edge indices and the 
bulk-boundary correspondence in a mobility gap as in~\cite{EGS05, Taarabt14,  GrafShapiro} 
requires extra study.

\subsubsection*{Acknowledgements}
CB thanks Hermann Schulz-Baldes and Giuseppe De Nittis for 
posing the question of continuous Chern numbers to him, which 
eventually turned into the present manuscript, and for helpful discussions 
on the topic. 
The authors also thank Andreas Andersson, Alan Carey, Johannes Kellendonk 
and Emil Prodan 
for useful discussions. 
We would also like to thank an anonymous referee who pointed out an important error 
in an earlier version of this work.
CB is supported by a postdoctoral fellowship for overseas researchers from The Japan Society for the Promotion of Science (No. P16728)
and a KAKENHI Grant-in-Aid for JSPS fellows (No. 16F16728). CB also acknowledges 
support from the Australian Mathematical Society and the Australian Research Council during the production of this work.
This work was supported by World Premier International Research Center Initiative (WPI), MEXT, Japan.
Lastly, both authors thank the mathematical research institute MATRIX in Australia 
where part of this work was carried out.


\appendix

\section{Summary of non-unital index theory} \label{Sec:Non_unital_prelims}
In what follows we will assume that the algebras we deal with are separable. 
 A useful exposition of $KK$-theory and its 
applications can be found in~\cite{Blackadar, PSBKK} and~\cite{BMvS13} 
for the unbounded setting.

Given a real or complex $C^*$-module $E_B$ over a $\Z_2$-graded $C^*$-algebra $B$, 
we denote by $\End_B(E)$  the algebra of adjointable endomorphisms of 
$E$ subject to the $B$-valued inner-product $(\cdot\mid\cdot)_B$. 
The algebra of finite rank endomorphisms $\End_B^{00}(E)$ is the algebraic 
span of the operators 
$\Theta_{e_1,e_2}$ for $e_1,e_2\in E$ such that
$$  
  \Theta_{e_1,e_2}(e_3) = e_1\cdot (e_2\mid e_3)_B
$$
with $e\cdot b$ the (possibly graded) right-action of $B$ on $E_B$. 
The algebra of compact endomorphisms $\End_B^0(E)$ is the 
$C^*$-closure of $\End^{00}_B(E)$.

\begin{defn}
Let $A$ and $B$ be real $\Z_2$-graded $C^*$-algebras.
A real unbounded Kasparov module $(\calA, {}_\pi{E}_B, D)$ is a
$\Z_2$-graded real $C^*$-module ${E}_B$, a graded 
representation of $A$ on $E_B$, $\pi:A \to \End_B({E})$, and an
unbounded self-adjoint, regular and odd operator $D$ such that for all $a\in \calA\subset A$, 
a dense $\ast$-subalgebra,
\begin{align*}
  &[D,\pi(a)]_\pm \in \End_B({E}),   &&\pi(a)(1+D^2)^{-1/2}\in \End_B^0({E}).
\end{align*}
\end{defn}
For a complex Kasparov module, one simply replaces all spaces and algebras 
with complex ones.
Where unambiguous, we will omit the representation $\pi$ 
and write unbounded Kasparov modules as $(\calA, {E}_B, D)$.
The results of Baaj and Julg~\cite{BJ83} continue to hold for real 
Kasparov modules, so given an unbounded module 
$(\calA,{E}_B,D)$ we apply the bounded transformation 
to obtain the real Kasparov module $(A,{E}_B,D(1+D^2)^{-1/2})$.

\subsection{Semifinite theory}

An unbounded $A$-$\C$ or $A$-$\R$ Kasparov module is precisely a complex or 
real spectral triple as defined by Connes. Complex spectral triples satisfying 
additional regularity properties have the advantage 
that the local index formula by Connes and Moscovici~\cite{CoM} gives computable expressions 
for the index pairing with $K$-theory, a special case of the Kasparov product
$$
  K_j(A) \times KK^j(A,\C) \to K_0(\C)\cong \Z.
$$
We can extend this general framework by working with semifinite spectral triples.

Let $\tau$ be a fixed faithful, normal, semifinite trace on a von Neumann algebra 
$\calN$. We let $\calK_\calN$ be the $\tau$-compact
operators in $\calN$ (that is, the norm closed ideal generated by the 
projections $P\in\calN$ with $\tau(P)<\infty$).
\begin{defn}
A semifinite spectral triple $(\calA,\calH,D)$ relative to $(\calN,\tau)$ is given by a $\Z_2$-graded 
Hilbert space $\calH$, 
a graded $\ast$-algebra $\calA\subset\calN$ with (graded) representation on 
$\calH$ and a densely defined odd 
unbounded self-adjoint operator $D$ affiliated to $\calN$ such that
\begin{enumerate}
  \item $[D,a]_\pm$ is well-defined on $\Dom(D)$ and extends to a bounded operator on $\calH$ for 
  all $a\in\calA$,
  \item $a(1+D^2)^{-1/2}\in\calK_\calN$ for all $a\in\calA$.
\end{enumerate}
For complex algebras and spaces, we can also remove the gradings, in 
which case the semifinite spectral triple is called odd (otherwise even).
\end{defn}

If we take $\calN=\calB(\calH)$ and $\tau = \Tr$, then we recover the 
usual definition of a spectral triple.

\begin{thm}[\cite{KNR,CGRS2}] \label{thm:semifinite_to_KK}
Let $(\calA,\calH,D)$ be a complex semifinite spectral triple associated to 
$(\calN,\tau)$ with $A$ the $C^*$-completion of $\calA$. Then $(\calA,\calH,D)$ 
determines a class in $KK(A,C)$ with $C$ a subalgebra of $\calK_\calN$. If $A$ 
is separable, we can take $C$ to be separable. If 
$(\calA,\calH,D)$ is odd, then the triple determines a class in $KK^1(A,C)$.
\end{thm}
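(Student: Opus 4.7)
The plan is to build a Hilbert $C^*$-module representative of the Kasparov class from the semifinite data, following the strategy of~\cite{KNR, CGRS2}. First I would identify the target $C^*$-algebra $C$. Choose a countable dense $\ast$-subalgebra $\calA_0 \subset \calA$ (which exists because $A$ is separable), and let $C$ be the norm-closure in $\calN$ of the $\ast$-algebra generated by the collection
\[
\{a(1+D^2)^{-1/2} : a \in \calA_0\} \cup \{[D,a](1+D^2)^{-1/2} : a \in \calA_0\}
\]
together with all products of such elements. By the spectral triple hypotheses, every generator lies in $\calK_\calN$, so $C \subset \calK_\calN$, and $C$ is separable by construction.

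Next I would construct the Hilbert $C$-module $E_C$. The idea is to run a Paschke/GNS-type construction: consider the right ideal $\calH \cdot C \subset \calH$ (equivalently, the algebraic span of vectors $\xi c$ for $\xi \in \calH$ and $c \in C$, using the left action of $\calN \supset C$), and equip it with the $C$-valued pre-inner product $\langle \xi c_1, \eta c_2\rangle_C = c_1^* (\xi \mid \eta)_\tau c_2$ where the middle factor is constructed from the trace-class operators obtained by suitable cut-offs of $(\xi \mid \eta) = \Theta_{\xi,\eta} \in \calN$. Completion gives a Hilbert $C^*$-module $E_C$, and a standard argument identifies $\End_C^0(E) \cong \overline{C \calN C} \cap \calK_\calN$ (or a corner thereof), which contains all rank-one operators we need.

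Then I would transfer the triple's data to $E_C$. The left action of $\calN$ on $\calH$ restricts to an adjointable action of $A$ on $E_C$ because elements of $\calA$ commute (modulo bounded operators) with the generators of $C$. The operator $D$, being affiliated with $\calN$ and odd, descends to a densely defined odd operator on $E_C$; self-adjointness and regularity on the module follow from self-adjointness on $\calH$ together with the local-global principle (as used in the proof of Proposition~\ref{prop:crossed_prod_Kas_mod}), since $D$ is a direct integral of multiplication operators in a suitable sense on $E_C$. Bounded commutators $[D,a]$ on $\calH$ immediately give bounded (adjointable) commutators on $E_C$. The key point is that $a(1+D^2)^{-1/2}$, which is $\tau$-compact in $\calN$ by hypothesis and lies in $C$ by construction, acts as a $C$-compact endomorphism of $E_C$: this is essentially a rank-one operator on $E_C$ relative to the $C$-valued inner product, since it is itself (up to a multiplier) an element of $C$.

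The main obstacle I expect is the module construction: producing a faithful, well-defined $C$-valued inner product on an appropriate dense subspace of $\calH$ so that the completion $E_C$ carries both the left $A$-action and the operator $D$ with the correct compactness properties. Care is needed because vectors in $\calH$ are not directly multiplied by elements of $C$ in the module sense; one has to work with the appropriate cut-off procedure (e.g.\ intersecting with the domain of $\tau^{1/2}$, using spectral projections of $D$ with finite trace) to get a norm-dense pre-$C^*$-module inside $\calH$. Once $E_C$ is in place, verification of the Kasparov axioms and the grading (for the even case) or ungradedness (for the odd case, giving $KK^1$) is routine, with the bounded transform $D(1+D^2)^{-1/2}$ producing the bounded Kasparov module. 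The separability of $A$ is used only to guarantee that the chosen generating set for $C$ is countable.
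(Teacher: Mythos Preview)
The paper does not supply a proof of this theorem; it is stated in the appendix as a cited result from \cite{KNR,CGRS2}. So there is no in-paper argument to compare against, and one has to assess your sketch on its own terms and against the construction in those references.

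Your overall strategy---identify $C$ as the $C^*$-algebra generated by operators like $a(1+D^2)^{-1/2}$ and $[D,a](1+D^2)^{-1/2}$, build a Hilbert $C$-module, then transfer $A$, $D$ and the grading---is the right shape. But the module construction you describe has a genuine type error. You propose to form $E_C$ from vectors in the Hilbert space $\calH$, writing a $C$-valued inner product $\langle \xi c_1,\eta c_2\rangle_C = c_1^*(\xi\mid\eta)_\tau c_2$ with ``$(\xi\mid\eta)=\Theta_{\xi,\eta}\in\calN$''. The symbol $\Theta_{\xi,\eta}$ is Hilbert-\emph{module} notation; for vectors in an honest Hilbert space it gives a rank-one operator in $\calB(\calH)$, which has no reason to lie in $\calN$, let alone in $C$. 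There is no canonical $C$-valued pairing on $\calH$ coming from $\tau$ alone, and ``suitable cut-offs'' does not repair this: cutting off a rank-one operator on $\calH$ still leaves you outside $C\subset\calK_\calN$ in general.

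The construction in \cite{KNR} works at the \emph{operator} level rather than the vector level. One takes a right ideal $X\subset\calN$ (built from the generators you wrote down, so that $x^*y\in C$ for $x,y\in X$), equips it with the inner product $\langle x,y\rangle_C = x^*y$, and completes. The left $A$-action and the operator $D$ are then given by left multiplication in $\calN$; compactness of $a(1+D^2)^{-1/2}$ as a $C$-module endomorphism follows because it is itself an element of (a multiplier of) $C$ acting on the left. The Hilbert space $\calH$ appears only as the representation space carrying $\calN$, not as the underlying set of the module. Once you move the construction from $\calH$ to $\calN$, the remaining verifications (regularity, bounded commutators, the odd/even distinction) go through essentially as you outlined.
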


\subsubsection{The semifinite index pairing} \label{subsec:Index_pairing_defn}
Semifinite spectral triples $(\calA,\calH,D)$ with $\calA$ separable 
and ungraded can be paired with $K$-theory elements 
by the following composition
\begin{equation} \label{eq:complex_semifinite_index_pairing}
  K_j(A) \times KK^j(A, C) \to K_0(C) \xrightarrow{\tau_\ast} \R,
\end{equation}
with the class in $KK^j(A,C)$ coming from Theorem \ref{thm:semifinite_to_KK}. 
The image of the semifinite index pairing is a countably generated subset of 
$\R$ and, as such, can 
potentially detect finer invariants than the usual pairing of $K$-theory 
with $K$-homology. 
We call the map from Equation \eqref{eq:complex_semifinite_index_pairing} 
the \emph{semifinite index pairing} of a $K$-theory class with a 
semifinite spectral triple and use the notation $\langle [e],[(\calA,\calH,D)]\rangle$ 
for $[e]\in K_j(A)$ to represent this pairing.

We can also describe the semifinite index pairing analytically
using the semifinite Fredholm index.
Given a semifinite von Neumann algebra $(\calN,\tau)$, an operator 
$T\in \calN$ that is invertible modulo $\calK_\calN$ has semifinite Fredholm 
index
$$
   \Index_\tau(T) = \tau(P_{\Ker(T)}) - \tau(P_{\Ker(T^*)}).
$$

We can use the semifinite index to write down an analytic formula for the 
semifinite pairing. 
Let $\calA^{\sim}=\calA\oplus\C$ be the minimal unitisation of $\calA$. 
Given $b\in M_n(\calA^\sim)$ 
we let
\begin{equation*} 
\hat{b} = \begin{pmatrix} b & 0 \\ 0 & 1_b \end{pmatrix},
\end{equation*}
where $1_b = \pi^n(b)$ and $\pi^n : M_n(\calA^\sim) \to M_n(\C)$ is the 
quotient map coming from the unitisation.

\begin{prop}[\cite{CGRS2}, Proposition 2.13] \label{prop:semifinite_index_is_kas_prod}
Let $(\calA,\calH,D)$ be a complex semifinite spectral triple relative to 
$(\calN,\tau)$ with $\calA$ separable and $D$ invertible. Let $e$ be a projector in $M_n(\calA^\sim)$, which 
represents $[e]\in K_0(\calA)$ and $u$ a unitary in $M_n(\calA^\sim)$ 
representing $[u]\in K_1(\calA)$. In the even case, 
define $T_\pm = \frac{1}{2}(1\mp \gamma)T\frac{1}{2}(1\pm \gamma)$ 
with $\gamma$ the grading on $\calH$.
Then with $F = D|D|^{-1}$
and $P=(1+F)/2$, the semifinite index pairing is 
represented by
\begin{align*}
  \langle [e] - [1_e], (\calA,\calH,D)\rangle &= \Index_{\tau\otimes \Tr_{\C^{2n}}}\!(  \hat{e}(F \otimes 1_{2n})_+ \hat{e}), \qquad \textrm{even case}, \\
  \langle [u], (\calA,\calH,D) \rangle &= \Index_{\tau\otimes \Tr_{\C^{2n}}}\!\left( (P \otimes 1_{2n})\hat{u}(P \otimes 1_{2n}) \right), \qquad \textrm{odd case}.
\end{align*}
\end{prop}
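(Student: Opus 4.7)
Since the statement is attributed to [CGRS2, Proposition 2.13], my proof proposal is to outline the underlying construction rather than rederive it from scratch; the goal is to exhibit the semifinite Fredholm index as an explicit representative of the Kasparov product composed with the trace.

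The starting point is Theorem \ref{thm:semifinite_to_KK}, which produces a separable $C^*$-subalgebra $C \subset \calK_\calN$ and a Kasparov module $(\calA, \calE_C, F_D)$ whose bounded transform represents the $KK^j$-class of the spectral triple. Concretely, $\calE_C$ is obtained by completing an appropriate subspace of $\calN$ (generated by the action of $\calA$ on $\calH$ together with the spectral projections of $D$) in the inner product coming from $\tau$, so that $\calE_C \otimes_C L^2(C,\tau) \cong \calH$ and the action of $\calA$ and $F = D|D|^{-1}$ (using invertibility of $D$) descend to adjointable operators on $\calE_C$. The construction is engineered so that $\tau$-compactness of $a(1+D^2)^{-1/2}$ in $\calN$ becomes $C$-compactness of $\pi(a)(\text{something})$ on the module side, producing an honest Kasparov $A$-$C$-module.

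Once we have this representative, the Kasparov product of a class $[u] \in K_1(A) \cong KK(\C,A)$ with $[(\calA,\calE_C,F_D)] \in KK^1(A,C)$ is given by the standard ``compressed Fredholm operator'' construction: with $P = \tfrac{1}{2}(1+F)$, the operator $P\hat u P$ (tensored with matrices and with $1_u = \pi^n(u)$ added to invert the unitisation) is Fredholm on the $C$-module $P\calE_C^n$, and its index lives in $K_0(C)$. Applying $\tau_\ast : K_0(C) \to \R$ to this index class gives, by construction of the dual trace and the standard identification of $\tau_\ast$ with the Breuer--Fredholm index on $\calN$, precisely $\Index_{\tau\otimes\Tr_{\C^{2n}}}((P\otimes 1_{2n})\hat u (P\otimes 1_{2n}))$. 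The even case is analogous, using that the grading $\gamma$ splits $\calE_C = \calE_C^+ \oplus \calE_C^-$ and that $\hat e F_+ \hat e : \hat e \calE_C^+ \to \hat e \calE_C^-$ is the compressed Fredholm operator whose $K_0(C)$-index pairs with $\tau$ to give $\Index_\tau(\hat e (F\otimes 1_{2n})_+ \hat e)$.

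The main obstacle, and the reason this requires nontrivial work in [CGRS2], is the last identification: that the $K_0(C)$-class of a $C$-module Fredholm operator, once pushed forward by the dual trace, agrees with the Breuer--Fredholm index in $(\calN,\tau)$. Two technical subtleties drive this. First, $C$ need not be unital, and one must verify that the formal difference $[\ker] - [\coker]$ defines a genuine $K_0(C)$-class, using a parametrix argument to produce the compact perturbations into $C$. Second, the trace $\tau$ may not be finite on the projections $P_{\ker(P\hat u P)}$ and $P_{\ker((P\hat u P)^*)}$ individually; one must instead work with $\tau$-finite perturbations and invoke the invariance of the Breuer index under compact (i.e.\ $\calK_\calN$-valued) perturbations. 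The invertibility of $D$ in the hypothesis removes the kernel of $D$ from consideration, which is what makes $F = D|D|^{-1}$ a genuine self-adjoint unitary in $\calN$ and bypasses the small-eigenvalue bookkeeping that would otherwise appear.
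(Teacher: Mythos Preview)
The paper does not give its own proof of this proposition: it is stated in the appendix with the citation \cite{CGRS2}, Proposition 2.13, and no argument is supplied. So there is nothing in the paper to compare your proposal against.

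That said, your outline is a reasonable summary of the construction underlying the cited result. You correctly identify the chain: build the Kasparov $A$--$C$ module from Theorem~\ref{thm:semifinite_to_KK}, take the product with the $K$-theory class via the compressed operator $P\hat{u}P$ or $\hat{e}F_+\hat{e}$, and then identify $\tau_*$ of the resulting $K_0(C)$-class with the Breuer--Fredholm index in $(\calN,\tau)$. You also flag the two genuine technical issues (non-unitality of $C$ and the possible $\tau$-infiniteness of individual kernel projections), and correctly note that invertibility of $D$ is used to make $F$ a true symmetry. If anything, your sketch is slightly more detailed than what the paper requires, since the paper treats this as a black-box input from \cite{CGRS2}.
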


If $D$ is not invertible, we take $m>0$ and define the double spectral triple
$(\calA,\calH\oplus\calH,D_m)$ 
relative to $(M_2(\calN),\tau\otimes\Tr_{\C^2})$, 
where the operator $D_m$ and the action of $\calA$ is given by
\begin{align*}
  &D_m = \begin{pmatrix} D & m \\ m & -D \end{pmatrix},  &&a\mapsto  \begin{pmatrix} a & 0 \\ 0 & 0 \end{pmatrix}
\end{align*}
for all $a\in\calA$. If $(\calA,\calH,D)$ is graded by $\gamma$, 
then the double is graded by $\hat{\gamma} = \gamma \oplus (-\gamma)$. 
Doubling the spectral triple does not change the $K$-homology class 
and ensures that the unbounded operator $D_m$ is 
invertible~\cite{Connes85}. 

\begin{remark}[Semifinite spectral triples and torsion invariants]
The pairing of Equation \eqref{eq:complex_semifinite_index_pairing} is 
valid in both the complex and real setting. The pairing is, however, 
unhelpful in the case of torsion invariants as if $[x]\in K_0(C)$ has 
finite order, then $\tau_\ast([x]) = 0$. In particular, torsion 
invariants are common in real $K$-theory and play 
an important role in, for example, characterising the topological 
phase of a free-fermionic system~\cite{Kitaev09}. In order to 
access torsion invariants we do not take the induced trace and 
consider the more general product
\begin{equation} \label{eq:torsion-semifinite-pairing}
  K_j(A) \times KK^d(A, C) \to K_{j-d}(C)
\end{equation}
for real or complex algebras. Hence 
we work with Kasparov modules directly. Unbounded Kasparov theory is 
a  useful method for computing internal Kasparov products,
as one must when 
the product represents a torsion class.

The image of the $K$-theoretic pairing from 
Equation \eqref{eq:torsion-semifinite-pairing} can be 
interpreted as a Clifford module, a finitely generated subspace of a 
countably generated $C^*$-module $E_C$ with graded Clifford action.
We can associate an analytic index to elements in 
$KO_{j-d}(C)$ or $K_{j-d}(C)$ via an analogue of Atiyah--Bott--Shapiro theory of Clifford 
modules, 
see \cite{ABS64, BCR15}. 
\end{remark}

\subsubsection{Kasparov modules to semifinite spectral triples}
We can associate a Kasparov module to a semifinite 
spectral triple by Theorem \ref{thm:semifinite_to_KK}.
One may ask if the converse is true. Given an unbounded 
Kasparov $A$-$B$ module with $B$ containing a faithful semifinite 
norm lower-semicontinuous trace (or tracial weight), 
we can often construct semifinite spectral 
triples using the dual 
trace construction 
(see Section \ref{sec:semifinite_spec_trip} or~\cite{PR06} for a simple example). 

The dual-trace method of constructing semifinite spectral triples has the advantage that 
the algebra $B$ is often more closely related to the problem under consideration 
than the algebra $C$ from Theorem \ref{thm:semifinite_to_KK}. In particular,
the semifinite index pairing from Equation \eqref{eq:complex_semifinite_index_pairing} 
can be rewritten with $B$ in the place of $C$.

Given a sufficiently regular (complex) semifinite 
spectral triple from an unbounded Kasparov module, we may use the semifinite 
local index formula to compute the $K$-theoretic semifinite index pairing. As the local index 
formula is a cyclic expression involving traces and derivations, semifinite 
spectral triples and index theory can be employed in order to 
more easily
compute pairings of $K$-theory classes with unbounded Kasparov modules 
as in Equation \eqref{eq:complex_semifinite_index_pairing}.

\subsection{Summability of non-unital spectral triples} \label{subsec:cgrs2_preliminaries}
Spectral triples often contain more than just $K$-homological data. 
Hence we introduce extra structure on spectral triples that have the 
interpretation of a differential structure and measure theory. 
If the algebra is non-unital and non-local in the sense of~\cite{RennieSmoothness}, 
then we require the 
noncommutative measure theory developed in~\cite{CGRS1, CGRS2}.
Our brief exposition follows~\cite[Section 2]{vdDPR13}. In order to 
discuss smoothness and summability for non-unital spectral triples, 
we need to introduce an analogue of $L^p$-spaces for operators and weights 
over a semifinite von Neumann algebra $(\calN, \tau)$. 

\begin{defn}
Let $D$ be a densely defined self-adjoint operator affiliated to $\calN$. 
Then for each $p\geq 1$ and $s>p$ we define a weight $\varphi_s$ 
on $\calN$ by
$$  
\varphi_s(T) = \tau\!\left((1+D^2)^{-s/4} T (1+D^2)^{-s/4}\right) 
$$
for $T$ a positive element in $\calN$. We define the subspace 
$\calB_2(D,p)$ of $\calN$ by
$$  
\calB_2(D,p) = \bigcap_{s>p}\left(\Dom(\varphi_s)^{1/2}\cap (\Dom(\varphi_s)^{1/2})^*\right), \quad
  \Dom(\varphi_s)^{1/2} = \big\{ T\in \calN\,:\, \varphi_s(T^*T) < \infty \big\}.
$$
\end{defn}
Take $T\in \calB_2(D,p)$. The norms
$$  
\calQ_n(T) = \left( \|T\|^2 + \varphi_{p+1/n}(|T|^2) + \varphi_{p+1/n}(|T^*|^2)\right)^{1/2}  
$$
for $n=1,2,\ldots$ take finite values on $\calB_2(D,p)$ and provide a 
topology on $\calB_2(D,p)$ stronger than the norm topology.
The space $\calB_2(D,p)$ is a Fr\'{e}chet algebra~\cite[Proposition 1.6]{CGRS2} 
and can be interpreted as the bounded square integrable operators.

To introduce the bounded integrable operators, first take 
the span of products, $\calB_2(D,p)^2$, 
and define the norms
$$  
\calP_{n}(T) = \inf\left\{\sum_{i=1}^k \calQ_n(T_{1,i})\calQ_n(T_{2,i})\,:\,
  T= \sum_{i=1}^k T_{1,i}T_{2,i},\ T_{1,i},\,T_{2,i}\in\calB_2(D,p) \right\}, 
$$
where the sums are finite and the infimum is over all possible such 
representations of $T$. It is shown in~\cite[p12--13]{CGRS2} that $\calP_n$ 
are norms on $\calB_2(D,p)^2$.

\begin{defn}
Let $D$ be a densely defined and self-adjoint operator and $p\geq 1$. 
We define $\calB_1(D,p)$ to be the completion of $\calB_2(D,p)^2$ with 
respect to the family of norms $\{\calP_n\,:\, n=1,2,\ldots\}$.
\end{defn}

\begin{defn} \label{def:finitely_summ_spec_trip_nonunital}
A semifinite spectral triple $(\calA,\calH,D)$ relative to 
$(\calN, \tau)$ is said to be finitely summable if 
there exists $s>0$ such that for all $a\in\calA$, $a(1+D^2)^{-s/2}\in\calL^1(\calN,\tau)$. 
In such a case we let 
$$  
p = \inf\{s>0\,:\, \forall a\in\calA, \,\tau(|a|(1+D^2)^{-s/2}) < \infty \} 
$$
and call $p$ the spectral dimension of $(\calA,\calH,D)$.
\end{defn}

Note that $|a|(1+D^2)^{-s/2}\in\calL^1(\calN,\tau)$ by the polar decomposition 
$a = v|a|$, which does not require $|a|$ to be in $\calA$. For the definition of 
spectral dimension to have meaning, we require that $\tau(a(1+D^2)^{-s/2})\geq 0$ 
for $a\geq 0$, a fact that follows from~\cite[Theorem 3]{Bikchentaev98}. 
For a semifinite spectral triple $(\calA,\calH,D)$ to 
be finitely summable with 
spectral dimension $p$, it is a necessary condition that 
$\calA\subset \calB_1(D,p)$~\cite[Proposition 2.17]{CGRS2}.

\begin{defn}
Given a densely-defined self-adjoint operator $D$, 
set $\calH_\infty = \bigcap_{k\geq 0}\Dom(D^k)$. For an operator 
$T:\calH_\infty\to\calH_\infty$, we define
\begin{align*}
   &\delta(T) = [|D|,T],  &&L(T) = (1+D^2)^{-1/2}[D^2,T],  &&R(T) = [D^2,T](1+D^2)^{-1/2}.
\end{align*}
\end{defn}

One has that (cf.~\cite{CoM, CPRS2})
$$  
\bigcap_{n\geq 0} \Dom(L^n) = \bigcap_{n\geq 0} \Dom(R^n) = \bigcap_{k,l\geq 0}\Dom(L^k\circ R^l) = \bigcap_{n\geq 0} \Dom(\delta^n). 
$$
We see that to define $\delta^k(T)$, we require that $T: \calH_k \to \calH_k$ 
for $\calH_k = \bigcap_{l=0}^k \Dom(D^l)$.
\begin{defn}
Let $D$ be a densely defined self-adjoint operator affiliated to $\calN$ and $p\geq 1$. 
Then define for $k=0,1,\ldots$
$$  
\calB_1^k(D,p) = \left\{ T\in\calN \,\left\vert \, T:\calH_l\to\calH_l 
  \text{ and }\delta^l(T)\in\calB_1(D,p)\,\, \forall l=0,\ldots,k \right. \right\}, 
$$
$$  
\calB_2^k(D,p) = \left\{ T\in\calN \,\left\vert \, T:\calH_l\to\calH_l 
  \text{ and }\delta^l(T)\in\calB_2(D,p)\,\, \forall l=0,\ldots,k \right. \right\} 
$$
as well as
$$  
\calB_1^\infty(D,p) = \bigcap_{k=0}^\infty \calB_1^k(D,p), \qquad 
  \calB_2^\infty(D,p) = \bigcap_{k=0}^\infty \calB_2^k(D,p). 
$$
\end{defn}
For any $k$ (including $\infty$), we equip $\calB_1^k(D,p)$ with 
the topology induced by the seminorms 
$$ 
\calP_{n,l}(T) = \sum_{j=0}^l \calP_n (\delta^j(T)) 
$$
for $T\in\calN$, $l= 0,\ldots,k$ and $n\in\N$.

If we are interested in index theory in the non-compact setting, we need 
to control the integrability of both functions and their derivatives. 
The noncommutative analogue of this regularity turns out to be a finitely 
summable spectral triple but with additional smoothness properties.

\begin{defn} \label{def:smoothly_summable}
Let $(\calA,\calH,D)$ be a semifinite spectral triple relative to $(\calN,\tau)$. 
We say that $(\calA,\calH,D)$ 
is $QC^k$-summable if it is finitely summable with spectral dimension $p$ and 
$$ 
 \calA \cup [D,\calA] \subset \calB_1^k(D,p). 
$$
We say that $(\calA,\calH,D)$ is smoothly summable if it is $QC^k$-summable 
for all $k\in\N$, that is 
$$  
\calA\cup [D,\calA] \subset \calB_1^\infty(D,p). 
$$
\end{defn}

For a smoothly summable spectral triple $(\calA,\calH,D)$, we can 
introduce the $\delta$-$\varphi$ topology on $\calA$ by the seminorms
\begin{equation}  \label{eq:delta_phi_topology_seminorms}
   \calA \ni a \mapsto \calP_{n,k}(a) + \calP_{n,k}([D,a])   
\end{equation}
for $n,k\in\N$. The completion of $\calA$ in the $\delta$-$\varphi$ 
topology is Fr\'{e}chet and closed under the holomorphic functional 
calculus~\cite[Proposition 2.20]{CGRS2}.
We finish this section with a sufficient and checkable condition 
of finite summability of spectral triples.

\begin{prop}[\cite{CGRS2}, Proposition 2.16] 
\label{prop:finite_summ_condition}
Let $(\calA,\calH,D)$ be a semifinite spectral triple. If $\calA \subset \calB_1^\infty(D, p)$ 
for some $p\geq 1$, then $(\calA,\calH,D)$ is finitely 
summable with spectral dimension given by the infimum of such $p$'s. 
More generally, if $\calA\subset \calB_2(D,p)\calB_2^{\lfloor p\rfloor+1}(D,p)$ for $p\geq 1$, 
then $(\calA,\calH,D)$ is finitely
summable with spectral dimension given by the infimum of such $p$'s.
\end{prop}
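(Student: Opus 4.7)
The plan is to follow the strategy of Lemma \ref{lem:tight-sum}, which is precisely the prototype case of this proposition (with $D$ being the specific operator $X$ and $p=d$). I must show that under either summability hypothesis on $\calA$, the operator $a(1+D^2)^{-s/2}$ lies in $\calL^1(\calN,\tau)$ for every $a\in\calA$ and every $s>p$. Once this is established, the identification of the spectral dimension with the infimum of admissible $p$ follows immediately from Definition \ref{def:finitely_summ_spec_trip_nonunital}.

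First I would reduce to products. Since $\calB_1(D,p)$ is the completion of $\calB_2(D,p)^2$ in the $\calP_n$-seminorms, and since $\calL^1(\calN,\tau)$ is closed in operator norm on bounded subsets controlled by the seminorms, it suffices to treat $a=bc$ with $b\in\calB_2(D,p)$ and $c\in\calB_2^{\lfloor p\rfloor+1}(D,p)$ (this is exactly the content of the weaker second hypothesis; the stronger $\calB_1^\infty$ hypothesis is a fortiori covered). Then I would rewrite
\[
a(1+D^2)^{-s/2}=(1+D^2)^{s/4}\,M\,(1+D^2)^{-s/4},\qquad M=(1+D^2)^{-s/4}\,bc\,(1+D^2)^{-s/4}.
\]
The middle operator $M$ factors as $[(1+D^2)^{-s/4}b]\,[c(1+D^2)^{-s/4}]$, and each factor is Hilbert--Schmidt with respect to $\tau$ for $s>p$ by definition of $\calB_2(D,p)$ (using that $\calB_2(D,p)$ is $\ast$-closed so that $(1+D^2)^{-s/4}b=(b^*(1+D^2)^{-s/4})^*$ is HS). Hence $M$ is trace class.

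The second step is to absorb the outer fractional powers at the cost of commutator errors. Replacing $(1+D^2)^{1/2}$ by $(1+|D|)$ (which differs by a bounded invertible factor), I would iterate the identity $(1+|D|)T(1+|D|)^{-1}=T+\delta(T)(1+|D|)^{-1}$ exactly $m=\lfloor p\rfloor$ times to peel off $m$ integer powers, reducing the problem to the fractional exponent $\alpha=s/2-m\in[0,1)$. For this residual piece I would invoke the integral representation
\[
(1+|D|)^{-\alpha}=\frac{\sin(\pi\alpha)}{\pi}\int_0^\infty\lambda^{-\alpha}(1+\lambda+|D|)^{-1}\,d\lambda,
\]
and commute it across the remaining operator to produce terms of the schematic form $(1+|D|)^{\alpha}\int_0^\infty\lambda^{-\alpha}(1+\lambda+|D|)^{-1}\bigl(\delta(T_1)T_2+T_1\delta(T_2)\bigr)(1+\lambda+|D|)^{-1}\,d\lambda$ with $T_1,T_2\in\calB_2^{\lfloor p\rfloor+1}(D,p)$. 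Each such term is trace class, because the resolvent sandwich gives Hilbert--Schmidt factors while $\lambda^{-\alpha}/(1+\lambda)$ is integrable on $(0,\infty)$ for $\alpha\in[0,1)$, and the $\calQ_n$-seminorms of $b,c$ control the integrand uniformly in $\lambda$. Summing all commutator contributions together with the trace-class bound on $M$ gives $a(1+D^2)^{-s/2}\in\calL^1(\calN,\tau)$, as required.

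The main obstacle is the precise bookkeeping in the commutator expansion: one must check that no more than $\lfloor p\rfloor+1$ derivatives in $\delta$ ever need to be applied to $b$ or $c$, so that the weaker hypothesis $\calA\subset\calB_2(D,p)\calB_2^{\lfloor p\rfloor+1}(D,p)$ is actually enough, and one must verify that every resulting $\lambda$-integral converges in trace norm uniformly. This is the purely technical part, but it carries over verbatim from the proof of Lemma \ref{lem:tight-sum}, replacing the ad hoc dimension $d$ and operator $X$ by the abstract $p$ and $D$. The final assertion that the spectral dimension coincides with the infimum of admissible $p$ is then a tautology: what we have shown gives the upper bound, while the lower bound is the definition.
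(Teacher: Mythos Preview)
The paper does not actually prove this proposition: it is quoted verbatim from \cite[Proposition 2.16]{CGRS2} and stated without proof in the appendix. So there is no ``paper's own proof'' to compare against; the closest in-paper analogue is Lemma \ref{lem:tight-sum}, whose argument you are explicitly importing. Your overall strategy---factor $a=bc$, sandwich to produce a manifestly trace-class core $M$, then undo the conjugation by $(1+|D|)^{s/2}$ via integer iterations of $T\mapsto T+\delta(T)(1+|D|)^{-1}$ followed by the fractional-power integral---is exactly the mechanism behind Lemma \ref{lem:tight-sum} and is sound.

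One point of care: for the weaker hypothesis $\calA\subset \calB_2(D,p)\,\calB_2^{\lfloor p\rfloor+1}(D,p)$ you have $b\in\calB_2(D,p)$ with \emph{no} assumed $\delta$-smoothness, so you cannot afford to let any $\delta$ land on $b$. Your symmetric rewriting $bc\,\Lambda^{-s}=\Lambda^{s/2}(\Lambda^{-s/2}bc\,\Lambda^{-s/2})\Lambda^{-s/2}$ forces the Leibniz expansion to hit both factors. The fix is to use the asymmetric rewriting
\[
bc\,(1+D^2)^{-s/2}=\big(b(1+D^2)^{-s/4}\big)\cdot\big((1+D^2)^{s/4}c(1+D^2)^{-s/2}\big),
\]
so that the conjugation $\sigma^{s/2}(c)=(1+D^2)^{s/4}c(1+D^2)^{-s/4}$ acts only on $c$, and then all $\delta$'s in the iteration and fractional step fall on $c$ alone. (This is also why the degree count is governed by $s/2$, hence by roughly $\lfloor p/2\rfloor+1$ derivatives as in Lemma \ref{lem:tight-sum}, comfortably within the stated $\lfloor p\rfloor+1$.) With that adjustment your sketch is correct.
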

Lemma \ref{lem:tight-sum} offers a slight variation of this result in order to get sharp results on localisation.
Lemma \ref{lem:tight-sum}
has essentially the same conclusion, though different statement and proof, as \cite[Proposition 6.6]{CGRS1}.

\subsection{The local index formula} \label{subsec:local_index_formula}
We now briefly recall the semifinite local index formula from~\cite{CGRS2}, which is an 
extension of previous formulas, 
\cite{CoM, RennieSummability, CPRS2,CPRS3}, to non-unital and non-local 
semifinite (complex) spectral triples. We note that the local index formula 
requires a smoothly summable semifinite spectral triple of finite spectral 
dimension. This may seem restrictive, 
but turns out to be satisfied in our examples.

To define the resolvent cocycle, we first establish the notation 
$R_s(\lambda) = (\lambda-(1+s^2+D^2))^{-1}$.
\begin{defn}[\cite{CGRS2,CPRS2,CPRS3}] \label{def:resolvent_cocycle_defn}
Let $(\calA,\calH,D)$ be a smoothly summable complex semifinite spectral triple 
relative to $(\calN,\tau)$
with spectral dimension $p$. For $a\in(0,1/2)$, let $\ell$ 
be the vertical line $\ell=\{a+iv\,:\,v\in\R\}$. We define the 
resolvent cocycle $(\phi_m^r)_{m=0}^M$ for $\Re(r)>(1-m)/2$ as 
\begin{align*} 
   \phi_m^r(a_0,\ldots,a_m)  = \frac{\eta_m}{2\pi i} \int_0^\infty\! s^m\,
     \tau\!\left( \gamma \int_\ell \lambda^{-p/2-r} a_0 R_s(\lambda) [D,a_1] R_s(\lambda)
       \cdots [D,a_m]R_s(\lambda) \,\mathrm{d}\lambda\right)\mathrm{d}s, 
\end{align*}
where
$$ \eta_m = \left(-\sqrt{2i}\right)^{\bullet} 2^{m+1} \frac{\Gamma(m/2+1)}{\Gamma(m+1)} $$
with $\bullet =0,1$ depending on whether the spectral triple is even or odd.
\end{defn}

The integral over $\ell$ is well-defined by~\cite[Lemma 3.3]{CGRS2}. 
The index formula is a pairing of a cocycle with an algebraic chain. 
If $e\in\calA^\sim$ is a projection, we define $\mathrm{Ch}^0(e) =e$ 
and for $k\geq 1$,
$$ 
\mathrm{Ch}^{2k}(e) = (-1)^k\frac{(2k)!}{k!}\,(e-1/2)\otimes e\otimes\cdots\otimes e\in(\calA^\sim)^{\otimes (2k+1)}. 
$$
If $u\in\calA^\sim$ is a unitary, then we define for $k\geq 0$
$$ 
\mathrm{Ch}^{2k+1}(u) = (-1)^k k!\, u^*\otimes u\otimes \cdots\otimes u^*\otimes u \in (\calA^\sim)^{\otimes (2k+2)}. 
$$
We split up the theorem into odd and even cases.

\begin{thm}[\cite{CoM, RennieSummability, CGRS2}] \label{thm:local_index_formula_odd}
Let $(\calA, \calH, D)$ be an odd smoothly summable complex semifinite 
spectral triple relative to $(\calN,\tau)$ and with spectral dimension $p$. 
Let $N = \lfloor\frac{p}{2}\rfloor+1$, where $\lfloor\cdot\rfloor$ 
is the floor function, and let $u$ be a unitary in the unitisation 
of $\calA$. The semifinite index pairing can be computed with the resolvent cocycle
$$  
\langle [u],[(\calA,\calH,D)]\rangle = \frac{-1}{\sqrt{2\pi i}} \res_{r=(1-p)/2} \sum_{m=1,\mathrm{odd}}^{2N-1} \phi_m^r(\mathrm{Ch}^m(u)) 
$$
and the function
$r\mapsto \sum\limits_{m=1,\mathrm{odd}}^{2N-1} \!\phi_m^r(\mathrm{Ch}^m(u))$ 
analytically continues to a deleted neighbourhood of $r=(1-p)/2$.
\end{thm}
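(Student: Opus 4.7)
The strategy is to follow the scheme developed in the papers [CoM, RennieSummability, CGRS2], adapted to the non-unital smoothly summable setting. I would begin from the analytic side, writing the pairing $\langle [u],[(\calA,\calH,D)]\rangle$ as the semifinite Fredholm index $\Index_\tau(PuP)$ with $P = \tfrac12(1+F_D)$ as in Proposition \ref{prop:semifinite_index_is_kas_prod}, first reducing to invertible $D$ via the doubling construction (so that $F_D = D|D|^{-1}$ makes sense and $P$ is a genuine projection). Since $\calA$ is not unital, one works with $u$ in the unitisation and $\hat u$, and all trace arguments must be done against the semifinite trace $\tau$ on $\calN$, not an ambient $\Tr$.

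The plan is then to identify $\Index_\tau(PuP)$ with a spectral flow along a straight-line path of operators from $D$ to $u^*Du$, and use an integral formula (in the style of Getzler and of Carey--Phillips) to produce a first cocycle representative — most naturally a JLO-type representative built from $e^{-tD^2}$. The smooth summability hypothesis is exactly what is needed to justify the trace-class estimates appearing in this formula, since the iterated commutators $[D,a_j]$ lie in $\calB_1^\infty(D,p)$ and hence their products with fractional powers of $(1+D^2)^{-1}$ are trace-class for sufficiently large $\Re(r)$. Next I would pass from the exponential/heat-kernel representative to the resolvent cocycle $\phi_m^r$ of Definition \ref{def:resolvent_cocycle_defn}: using the contour integral presentation
\[
(1+s^2+D^2)^{-p/2-r} = \frac{1}{2\pi i}\int_\ell \lambda^{-p/2-r}(\lambda - (1+s^2+D^2))^{-1}\,d\lambda,
\]
together with the Laplace-type identity $\int_0^\infty s^m\,(1+s^2+D^2)^{-p/2-r}\,ds = c_m\,(1+D^2)^{-p/2-r+(m+1)/2}$, one converts the JLO expression into the resolvent cocycle up to boundary terms that are holomorphic at $r=(1-p)/2$.

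The heart of the argument is then the meromorphic continuation: I would show, via a Duhamel-type expansion moving the resolvents $R_s(\lambda)$ past the commutators $[D,a_j]$ and collecting remainders, that $r\mapsto \phi_m^r(a_0,\dots,a_m)$ extends meromorphically to a half-plane containing $r=(1-p)/2$, and that the only singularity of the \emph{sum} $\sum_{m\text{ odd}}^{2N-1}\phi_m^r(\mathrm{Ch}^m(u))$ in a neighbourhood of that point is at worst a simple pole. The cocycle identities for $\phi_m^r$ (a $(b,B)$-cocycle modulo holomorphic error terms) are verified by algebraic manipulation and used to cancel contributions from non-leading commutator orders; this is where the completeness of $\calA \cup [D,\calA]$ inside $\calB_1^\infty(D,p)$ is repeatedly invoked to ensure every intermediate trace converges. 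The final step is to match the residue with the analytic index: one checks that the transformation from JLO to $\phi_m^r$ preserves the pairing with the Chern character $\mathrm{Ch}^{2k+1}(u)$, and that the normalisation constants $\eta_m$ and the prefactor $-1/\sqrt{2\pi i}$ are exactly those arising from the contour integral and the residue calculus.

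The principal obstacle is controlling the meromorphic continuation and the precise order of the poles of $\phi_m^r$ in the non-unital smoothly summable framework. Unlike the unital case, one cannot simply interpret $\phi_m^r(a_0,\dots,a_m)$ as the trace of a pseudodifferential operator of explicit order; instead one must work directly with the Fr\'echet spaces $\calB_1^k(D,p)$ and exploit the Lipschitz-type estimates for $\delta$, $L$, and $R$ proved in the preliminaries of Section \ref{subsec:cgrs2_preliminaries}. Handling the estimates uniformly in $s\in(0,\infty)$ while performing the Duhamel expansion — and then justifying the exchange of the residue with the $s$-integral and the $\lambda$-contour — is the main analytic burden, and the whole strategy depends on $\calA$ having been enlarged (as in Theorem \ref{prop:semifinite_spec_trip}) so that this summability machinery applies.
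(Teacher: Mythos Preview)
The paper does not prove this theorem: it is stated in the appendix as a quoted result from \cite{CoM, RennieSummability, CGRS2} with no proof given. Your outline is a reasonable sketch of the strategy actually used in \cite{CGRS2} (spectral flow, resolvent cocycle via contour integrals, meromorphic continuation through pseudodifferential-type expansions controlled by the $\calB_1^\infty(D,p)$ estimates), so there is nothing to compare against here beyond noting that you have correctly identified the architecture of the cited proof.
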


\begin{thm}[\cite{CoM, RennieSummability, CGRS2}] \label{thm:local_index_even}
Let $(\calA, \calH, D)$ be an even smoothly summable complex semifinite 
spectral triple relative to $(\calN,\tau)$ and with spectral dimension $p$. 
Let $N = \lfloor\frac{p+1}{2}\rfloor$ and 
$e\in\calA^\sim$ be a self-adjoint projection. The semifinite index pairing 
can be computed by the resolvent cocycle
$$  
\langle [e]-[1_e], [(\calA,\calH,D)]\rangle = \res_{r=(1-p)/2} \sum_{m=0,\mathrm{even}}^{2N}\! \phi_m^r(\mathrm{Ch}^m(e) - \mathrm{Ch}^m(1_e)) 
$$
and the function 
$r\mapsto \sum\limits_{m=0,\mathrm{even}}^{2N}\! \phi_m(\mathrm{Ch}^m(e))$ 
analytically continues to a deleted neighbourhood of $r=(1-p)/2$.
\end{thm}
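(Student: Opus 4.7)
The plan is to follow the resolvent cocycle approach of \cite{CGRS2}, adapting the Connes--Moscovici argument to the smoothly summable non-unital setting, and I would structure the argument in four movements.

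First, I would verify that for each $m\in\{0,2,\ldots,2N\}$ and $\Re(r)>(1-p)/2$, the functional $\phi_m^r$ is a well-defined $(m+1)$-multilinear functional on $\calA^\sim$. The point is that the integrand appearing in Definition \ref{def:resolvent_cocycle_defn} is trace-class: using $a_j,[D,a_j]\in\calB_1^\infty(D,p)$ together with the H\"{o}lder-type estimates for $\calB_1(D,p)$ from \cite[Section 1]{CGRS2}, and holomorphic functional calculus along the contour $\ell$, one shows that the $\lambda$-integral converges in trace norm and the subsequent $s$-integral converges for $\Re(r)$ large. This also justifies moving the trace inside the integrals.

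Second, I would prove that $(\phi_m^r)_{m=0,\mathrm{even}}^{2N}$ is a $(b,B)$-cocycle modulo functions holomorphic at $r=(1-p)/2$. The algebraic identities $bB+Bb=0$ for the cocycle at the level of the integrand follow from the pseudodifferential calculus of $R_s(\lambda)$: commutators $[R_s(\lambda),a]=-R_s(\lambda)[1+s^2+D^2,a]R_s(\lambda)$ and $[1+s^2+D^2,a]=\{D,[D,a]\}$, together with integration by parts in $s$ and a Cauchy integral evaluation for $\lambda$, reproduce the Hochschild and cyclic boundary operations up to terms whose $s$-$\lambda$ integrals are holomorphic at $r=(1-p)/2$. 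The smoothness assumption $\calA\cup[D,\calA]\subset\calB_1^\infty(D,p)$ is what lets us apply $\delta$ repeatedly to generate the error terms and control them.

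Third, I would identify the residue $\res_{r=(1-p)/2}\sum_m\phi_m^r(\mathrm{Ch}^m(e)-\mathrm{Ch}^m(1_e))$ with the semifinite index $\Index_{\tau\otimes\Tr}(\hat{e}(F\otimes 1)_+\hat{e})$ from Proposition \ref{prop:semifinite_index_is_kas_prod}. The strategy is to interpolate between the resolvent cocycle and the Chern character of the Fredholm module via a one-parameter deformation: for large $\Re(r)$ the pairing $\langle\mathrm{Ch}(e),\phi^r\rangle$ is computable directly and equals (up to holomorphic-at-$(1-p)/2$ terms) the McKean--Singer-type expression $\tau(\gamma e(1+D^2)^{-r-p/2})$ whose residue gives the index by standard arguments (or by reducing to the unital local index formula fibre by fibre using the $\calB_1$-approximation argument of \cite[Section 3]{CGRS2}). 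The subtraction of $\mathrm{Ch}^m(1_e)$ at degree zero removes the non-unital divergence while leaving higher degree terms unchanged.

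The main obstacle will be step three, where one must reconcile the non-unital resolvent cocycle pairing with the Fredholm index. In the unital case a finite sum of trace class computations suffices, but here the trace-class estimates only hold after the regularising factor $(1+D^2)^{-s/2}$ is inserted, and the passage to the residue requires uniform control of the analytic continuation of $r\mapsto\phi_m^r(\mathrm{Ch}^m(e)-\mathrm{Ch}^m(1_e))$ across all $m\le 2N$. Managing these estimates --- in particular the $m=0$ term $\phi_0^r(e-1_e)$ --- is the technical heart of \cite{CGRS2} and would be the same in our setting; all other steps are algebraic or follow from smooth summability.
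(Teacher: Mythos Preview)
The paper does not supply a proof of this theorem: it is stated in the appendix purely as a cited result from \cite{CoM, RennieSummability, CGRS2}, with no accompanying argument. Your sketch is therefore not being compared to anything in the paper itself, but it does accurately mirror the architecture of the proof in \cite{CGRS2}, which is the canonical reference for the non-unital semifinite case --- in particular your identification of the $m=0$ non-unital subtraction and the uniform analytic continuation as the technical crux matches the emphasis there.
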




\begin{thebibliography}{9}
\bibliographystyle{alpha}

\bibitem{AENSS}
M. Aizenman, A. Elgart, S. Naboko, J. H. Schenker and G. Stolz. Moment analysis for localization in random Schr\"{o}dinger operators. \emph{Invent. Math.}, \textbf{163}:343--413, 2006.


\bibitem{Andersson14}
A. Andersson. Index pairings for $\R^n$-actions and Rieffel deformations. 
 arXiv:1406.4078, 2014.

\bibitem{AnderssonThesis}
A. Andersson. The noncommutative Gohberg--Krein theorem. PhD Thesis, The University 
of Wollongong, 2015.

\bibitem{ABS64}
  M. F. Atiyah, R. Bott and A. Shapiro. Clifford modules. \emph{Topology}, \textbf{3}(suppl. 1):13--38, 1964.

\bibitem{AS69}
  M. F. Atiyah and I. M. Singer. Index theory for skew-adjoint Fredholm operators. \emph{Inst. Hautes \'{E}tudes Sci. Publ. Math.}, 
  \textbf{37}:5--26, 1969.

  
\bibitem{AS5}
M. F. Atiyah and I. M. Singer. The index of elliptic operators. V. \emph{Ann. of Math. (2)} 
\textbf{93}:139--149, 1971.



\bibitem{BJ83}
 S. Baaj and P. Julg. {Th\'{e}orie bivariante de Kasparov et op\'{e}rateurs non born\'{e}s dans les $C^*$-modules hilbertiens}. 
 \emph{C. R. Acad. Sci. Paris S\'{e}r. I Math}, \textbf{296}(21):875--878, 1983.

\bibitem{BelGapLabel}
J. Bellissard. Gap labelling theorems for Schr\"{o}dinger operators. In M. Waldschmidt, P. Moussa, J. Luck, and C. Itzykson, editors, \emph{From Number Theory to Physics}, chapter 12. Springer-Verlag, 1992.


\bibitem{Bellissard94}
J. Bellissard, A. van Elst and H. {Schulz-Baldes}. {The noncommutative geometry of the quantum Hall effect}.  \emph{J. Math. Phys.}, \textbf{35}(10):5373--5451, 1994.


\bibitem{BLM13}
F. Belmonte, M. Lein, and M. M{\u{a}}ntoiu. Magnetic twisted actions on general abelian $C^*$-algebras. 
\emph{J. Operator Theory}, \textbf{69}(1):33--58, 2013.

\bibitem{BCPRSW}
M. Benameur, A. L. Carey, J. Phillips, A. Rennie, F. A. Sukochev, and K. P. Wojciechowski. 
An analytic approach to spectral flow in von Neumann algebras. 
In B. Boo{\ss}-Bavnbek, S. Klimek, M. Lesch, and W. Zhang, editors, 
\emph{Analysis, Geometry and Topology of Elliptic Operators}, pages 297--352. World Scientific Publishing, 2006.



\bibitem{Bikchentaev98}
A. M. Bikchentaev. On a property of $L_p$-spaces on semifinite 
von Neumann algebras. \emph{Mat. Zametki}, \textbf{64}(2):185--190, 1998.

\bibitem{Blackadar}
  B. Blackadar. \emph{$K$-Theory for Operator Algebras}. Volume 5 of {Mathematical Sciences Research Institute Publications}, Cambridge Univ. Press, 1998.
  
  \bibitem{Blackagain} B. Blackadar. \emph{Operator Algebras: Theory of $C^*$-Algebras and von Neumann Algebras}. Volume 122 
  of Encyclopedia of Mathematical Sciences, Springer, 2006.

\bibitem{BL15} J. Boersema and T. A. Loring. {$K$}-theory for real {$C^*$}-algebras via unitary elements with symmetries. 
\emph{New York J. Math.}, \textbf{22}:1139--1220, 2016.

\bibitem{BourneThesis}
C. Bourne. Topological states of matter and noncommutative geometry. PhD Thesis, Australian National University, 2015.

\bibitem{BCR14}
 C. Bourne, A. L. Carey and A. Rennie. The bulk-edge correspondence for the quantum Hall effect in Kasparov theory. \emph{Lett. Math. Phys.}, \textbf{105}(9):1253--1273, 2015.   

\bibitem{BCR15}
 C. Bourne, A. L. Carey, and A. Rennie. A non-commutative framework for topological insulators. \emph{Rev. Math. Phys.}, \textbf{28}:1650004, 2016.

\bibitem{BKR1}
C. Bourne, J. Kellendonk and A. Rennie. The $K$-theoretic bulk-edge correspondence for topological insulators. \emph{Ann. Henri Poincar\'e}, 
\textbf{18}(5):1833--1866, 2017.



\bibitem{BMvS13}
 S. Brain, B. Mesland and W. D. van Suijlekom. {Gauge theory for spectral triples 
 and the unbounded Kasparov product}. \emph{J. Noncommut. Geom.} 
 \textbf{10}(1):135--206, 2016.

\bibitem{CGPRS}
A. L. Carey, V. Gayral, J. Phillips, A. Rennie and F. A. Sukochev. 
Spectral flow for nonunital spectral triples. \emph{Canad. J. Math.}, \textbf{67}(4):759--794, 2015.

\bibitem{CGRS1}
A. L. Carey, V. Gayral, A. Rennie and F. A. Sukochev. Integration on locally compact noncommutative spaces. \emph{J. Funct. Anal.}, \textbf{263}(2):383--414, 2012.

\bibitem{CGRS2}
A. L. Carey, V. Gayral, A. Rennie and F. A. Sukochev. Index theory for locally compact noncommutative geometries. \emph{Mem. Amer. Math. Soc.}, \textbf{231}(2), 2014.


\bibitem{CP1}
A. L. Carey and J. Phillips. Unbounded Fredholm modules 
and spectral flow. \emph{Canad. J. Math.}, \textbf{50}(4), 1998.

\bibitem{CPRS1} A. L. Carey, J. Phillips, A. Rennie and F. A. Sukochev. 
The Hochschild class of the Chern character of semifinite spectral triples.
\emph{J. Funct. Anal.}, {\bf 213}:111--153,  2004.

\bibitem{CPRS2}
A. L. Carey, J. Phillips, A. Rennie and F. A. Sukochev. The local index formula in semifinite von Neumann algebras i: Spectral flow. \emph{Adv. Math.}, \textbf{202}(2):451--516, 2006.

\bibitem{CPRS3}
A. L. Carey, J. Phillips, A. Rennie and F. A. Sukochev. The local index formula in semifinite von Neumann algebras ii: The even case. \emph{Adv. Math.}, \textbf{202}(2):517--554, 2006.

\bibitem{ConnesMeasure}
A. Connes. Sur la th\'{e}orie non commutative de l'int\'{e}gration. In 
\emph{Alg\`ebres d'op\'erateurs ({S}\'em., {L}es {P}lans-sur-{B}ex, 1978)}, volume 725 of 
Lecture Notes in Math, pages 19--143. Springer, Berlin, 1979.

\bibitem{Connes85}
A. Connes. Non-commutative differential geometry. \emph{Inst. Hautes \'{E}tudes Sci. Publ. Math.}, \textbf{62}:41--144, 1985.



\bibitem{CoM}
A. Connes and H. Moscovici. The local index formula in noncommutative geometry. \emph{Geom. Funct. Anal.}, \textbf{5}:174--243, 1995.




\bibitem{DNLeinBook}
G. De Nittis and M. Lein. \emph{Linear Response Theory: An Analytic-Algebraic Approach}. Volume 21 of SpringerBriefs in 
Mathematical Physics, Springer, 2017. 

\bibitem{vdDPR13}
K. van den Dungen, M. Paschke, and A. Rennie. Pseudo-Riemannian spectral triples and the harmonic oscillator. \emph{J. Geom. Phys.}, \textbf{73}:37--55, 2013.

\bibitem{EGS05}
A. Elgart, G. M. Graf and J. H. Schenker. Equality of the bulk and edge {H}all conductances in a mobility gap. 
\emph{Comm. Math. Phys.}, \textbf{259}(1):185--221, 2005.

\bibitem{FK}
T. Fack, H. Kosaki. Generalised $s$-numbers of 
$\tau$-measurable operators. \emph{Pac. J. Math.},
{\bf 123}(2):269--300, 1986.

\bibitem{FM13}
D. S. Freed and G. W. Moore. Twisted equivariant matter. 
\emph{Ann. Henri Poincar\'{e}}, \textbf{14}(8):1927--2023, 2013.



\bibitem{GK13}
F. Germinet and A. Klein. A comprehensive proof of localization for continuous Anderson models with singular random potentials. \emph{J. Eur. Math. Soc. (JEMS)}, \textbf{15}(1):53--143, 2013.

\bibitem{GT13}
F. Germinet and A. Taarabt. Spectral properties of dynamical localization for Schr\"{o}dinger operators.
\emph{Rev. Math. Phys.}, \textbf{25}:1350016, 2013.

\bibitem{GrafShapiro}
G. M. Graf and J. Shapiro. The bulk-edge correspondence for disordered chiral chains. arXiv:1801.09487, 2018.


\bibitem{GSB15}
J. Grossmann and H. Schulz-Baldes. Index pairings in presence of 
symmetries with applications to topological insulators. 
\emph{Comm. Math. Phys.}, \textbf{343}(2):477--513, 2016.

\bibitem{HMTBulkedge}
K. Hannabuss, V. Mathai and G. C. Thiang. T-duality simplifies bulk--boundary correspondence: the noncommutative case.
\emph{Lett. Math. Phys.}, \textbf{108}(5):1163--1201, 2018.

\bibitem{HigsonRoe} N. Higson and J. Roe. \emph{Analytic $K$-Homology}. Oxford Univ. Press, Oxford, 2000.

\bibitem{KL-LG} J. Kaad and M. Lesch. {A local-global principle for 
regular operators on Hilbert modules}. \emph{J. Funct. Anal.}, {\bf 262}(10):4540--4569, 2012.

\bibitem{KL13}
  J. Kaad and M. Lesch. {Spectral flow and the unbounded Kasparov product}. 
  \emph{Adv. Math.}, \textbf{298}:495--530, 2013.

\bibitem{KNR}
J. Kaad, R. Nest and A. Rennie. $KK$-theory and spectral flow in von 
Neumann algebras. \emph{J. $K$-Theory}, \textbf{10}:241--277, 2012.

\bibitem{Kasparov80}
  G. G. Kasparov. {The operator $K$-functor and extensions of $C^*$-algebras}. 
  \emph{Math. USSR Izv.}, \textbf{16}:513--572, 1981.

\bibitem{KasparovNovikov}
G. G. Kasparov. Equivariant $KK$-theory and the Novikov conjecture. \emph{Invent. Math.}, \textbf{91}(1):147--201, 1988.

\bibitem{KK15}
  H. Katsura and T. Koma. The $\Z_2$ index of disordered topological insulators with time reversal symmetry. \emph{J. Math. Phys.}, \textbf{57}:021903, 2016.
  
\bibitem{Kellendonk15}
J. Kellendonk. On the $C^*$-algebraic approach to topological 
phases for insulators. \emph{Ann. Henri Poincar\'e}, 
\textbf{18}(7):2251--2300, 2017.

\bibitem{Kellendonk16}
J. Kellendonk. Cyclic cohomology for graded $C^{\ast, r}$-algebras and its pairings with van Daele $K$-theory. 
arXiv:1607.08465, 2016.

\bibitem{KR06}
 J. Kellendonk and S. Richard. Topological boundary maps in physics. In F. Boca, R. Purice and \c{S}. Str\u{a}til\u{a}, editors, \emph{Perspectives in operator algebras and mathematical physics}. Theta Ser. Adv. Math., volume 8, Theta, Bucharest, pages 105--121, arXiv:math-ph/0605048, 2008.

\bibitem{SBKR02}
 J. Kellendonk, T. Richter, and H. Schulz-Baldes. {Edge current channels 
 and Chern numbers in the integer quantum Hall effect}. 
 \emph{Rev. Math. Phys.}, \textbf{14}(01):87--119, 2002.

\bibitem{KSB04a}
  J. Kellendonk and H. Schulz-Baldes. {Quantization of edge currents for continuous magnetic operators}. \emph{J. Funct. Anal.}, \textbf{209}(2):388--413, 2004.
  

\bibitem{KSB04b}
  J. Kellendonk and H. Schulz-Baldes. {Boundary maps for $C^*$-crossed products with with an 
  application to the quantum Hall effect}. \emph{Comm. Math. Phys.}, \textbf{249}(3):611--637, 2004.



\bibitem{Kitaev09}
A. Kitaev. Periodic table for topological insulators and superconductors. In V. Lebedev \& M. FeigelMan, editor, \emph{American Institute of Physics Conference Series}, volume 1134 of \emph{American Institute of Physics Conference Series}, pages 22--30, 2009.

 \bibitem{Kubota15b}
Y. Kubota. Controlled topological phases and bulk-edge correspondence. \emph{Comm. Math. Phys.}, \textbf{349}(2):493--525, 2017.

\bibitem{Kucerovsky97}
  D. Kucerovsky. {The $KK$-product of unbounded modules}. \emph{$K$-Theory}, \textbf{11}:17--34, 1997.



\bibitem{LN04}
M. Laca and S. Neshveyev. KMS states of quasi-free dynamics on Pimsner algebras. \emph{J. Funct. Anal.}, \textbf{211}(2):457--482, 2004.

\bibitem{LMR10}
M. Lein, M. M{\u{a}}ntoiu and S. Richard. Magnetic pseudodifferential operators
with coefficients in $C^*$-algebras. \emph{Publ. RIMS Kyoto Univ.},
\textbf{46}:755--788, 2010.

\bibitem{LenzCrossedProd}
D. H. Lenz. Random Operators and Crossed Products. \emph{Math. Phys. Anal. Geom.}, 
\textbf{2}(2):197--220, 1999.

\bibitem{Lesch91}
M. Lesch. On the index of the infinitesimal generator of a flow. \emph{J. Operator Theory}, \textbf{26}:73--92, 1991.

\bibitem{LRV}  S. Lord, A. Rennie and J. V\'{a}rilly.
Riemannian manifolds in noncommutative geometry.  {\em J. Geom. Phys}, {\bf 62}:1611--1638, 2012.

\bibitem{MPR07}
M. M{\u{a}}ntoiu, R. Purice and S. Richard. Spectral and propagation results for 
magnetic Schr\"{o}dinger operators; A $C^*$-algebraic framework. \emph{J. Funct. Anal.}, 
\textbf{250}(1):42--67, 2007.

\bibitem{MeslandMonster}
  B. Mesland. {Unbounded bivariant $K$-Theory and correspondences 
  in noncommutative geometry}. \emph{J. Reine Angew. Math.}, \textbf{691}:101--172, 2014.

\bibitem{MR15}
B. Mesland and A. Rennie. Nonunital spectral triples and metric completeness in unbounded $KK$-theory. 
\emph{J. Funct. Anal.}, \textbf{271}(9):2460--2538, 2016.


\bibitem{NB90}
S. Nakamura and J. Bellissard. Low energy bands do not contribute to quantum Hall effect. \emph{Comm. Math. Phys.}, \textbf{131}:283--305, 1990.

\bibitem{XZ79}
X. Nguyen and H. Zessin. Ergodic theorems for spatial processes. \emph{FZ. Wahrsch. Verw. Gebiete}, \textbf{48}(2):133--158, 1979.

\bibitem{PR89}
J. A. Packer and I. Raeburn. Twisted crossed products of $C^*$-algebras. \emph{Math. Proc. Cambridge Philos. Soc.}, \textbf{106}:293--311, 1989. 

\bibitem{PR06}
  D. Pask and A. Rennie. {The noncommutative geometry of graph $C^*$-algebras I: The index theorem}. \emph{J. Funct. Anal.}, \textbf{233}(1):92--134, 2006.

\bibitem{PR93}
J. Phillips and I. Raeburn. An index theorem for Toeplitz operators with noncommutative symbol space. \emph{J. Funct. Anal.}, \textbf{120}:239--263, 1993.

\bibitem{P-LG} F. Pierrot. {Op\'erateurs r\'eguliers dans les $C^*$-modules et struture des
$C^*$-alg\'ebres des groupes de Lie semisimple complexes simplement connexes}.
\emph{J. Lie Theory},
{\bf 16}:651--689, 2006.

\bibitem{ProdanBook}E. Prodan. \emph{A Computational Non-Commutative Geometry  
Program for Disordered Topological Insulators}. Springer, Berlin, 2017. 


\bibitem{PLB13}
E. Prodan, B. Leung and J. Bellissard. The non-commutative $n$th-Chern number ($n\geq 1$). \emph{J. Phys. A}, \textbf{46}:5202, 2013. 

\bibitem{PSB14}
E. Prodan and H. Schulz-Baldes. Non-commutative odd Chern numbers and topological phases of 
disordered chiral systems. \emph{J. Funct. Anal.}, \textbf{271}(5):1150--1176, 2016.

\bibitem{PSBbook}
E. Prodan and H. Schulz-Baldes. \emph{Bulk and Boundary Invariants for Complex Topological Insulators: From $K$-Theory to Physics}. Springer, Berlin, 2016.

\bibitem{PSBKK}
E. Prodan and H. Schulz-Baldes. Generalized Connes-Chern characters in $KK$-theory with an 
application to weak invariants of topological insulators. 
\emph{Rev. Math. Phys.}, \textbf{28}:1650024, 2016.

\bibitem{ReedSimon1}
M. Reed and B. Simon. \emph{Methods of Mathematical Physics I: Functional Analysis}. Academic Press, London, 1972.

\bibitem{ReedSimon4}
M. Reed and B. Simon. \emph{Methods of Mathematical Physics IV: Analysis of Operators}. Academic Press, New York, 1978.



\bibitem{RennieSmoothness}
A. Rennie. Smoothness and locality for nonunital spectral triples. \emph{$K$-Theory}, \textbf{28}:127--165, 2003.

\bibitem{RennieSummability}
A. Rennie. Summability for nonunital spectral triples. \emph{$K$-Theory}, \textbf{31}:77--100, 2004.

\bibitem{Rieffel82}
M. Rieffel. Connes' analogue for  crossed products of the Thom isomorphism. 
\emph{Contemp. Math.}, \textbf{10}:143--154, 1982.

\bibitem{BelSav}
J. Savinien and J. Bellissard. A spectral sequence for the {$K$}-theory of tiling spaces. 
\emph{Ergodic Theory Dynam. Systems}, \textbf{29}(3):997--1031, 2009.

\bibitem{Schroder}
H. Schr\"{o}der. \emph{$K$-theory for real $C^*$-algebras and applications}. Longman Scientific \& Technical, Harlow; copublished in the
              United States with John Wiley \& Sons, Inc., New York, 1993.
              
\bibitem{SchulzBaldes13b}
H. Schulz-Baldes. $\mathbb{Z}_2$ indices of odd symmetric Fredholm operators. \emph{Documenta Math.} 
\textbf{20}:1481--1500, 2015.
              
\bibitem{SimonMultipliers}
B. Simon. Quantum dynamics: from automorphism to Hamiltonian. In E. H. Lieb, B. Simon and 
A. S. Wightman, editors, \emph{Studies in Mathematical Physics: Essays in Honor of Valentine Bargmann}, 
pages 327--350.
Princeton University Press, 1976.

\bibitem{SimonTraceIdeals}
B. Simon. \emph{Trace Ideals and Their Applications}, volume 120 of \emph{Mathematical Surveys and Monographs}. American Mathematical Society, Rhode Island, second edition, 2005.

\bibitem{Taarabt14}
A. Taarabt. Equality of bulk and edge Hall conductances for continuous magnetic random Schr\"{o}dinger operators.
arXiv:1403.7767, 2014.

\bibitem{Thiang14}
G. C. Thiang. On the K-theoretic classification of topological phases of matter. \emph{Ann. Henri Poincar\'{e}}, 
\textbf{17}(4):757--794, 2016.

\bibitem{Xia88}
J. Xia. Geometric invariants of the quantum Hall effect. \emph{Comm. Math. Phys.}, \textbf{119}:29--50, 1988.

\bibitem{Zak64}
J. Zak. Magnetic translation group. \emph{Phys. Rev.}, \textbf{134}:A1602--A1606, 1964.



\end{thebibliography}
\end{document}